\newcommand{\cmark}{\ding{51}}
\newcommand{\xmark}{\ding{55}}
\newcommand\reallywidehat[1]{%
	\savestack{\tmpbox}{\stretchto{%
			\scaleto{%
				\scalerel*[\widthof{\ensuremath{#1}}]{\kern-.6pt\bigwedge\kern-.6pt}%
				{\rule[-\textheight/2]{1ex}{\textheight}}
			}{\textheight}%
		}{0.5ex}}%
	\stackon[1pt]{#1}{\tmpbox}%
}
\definecolor{gblue}{rgb}{0.1, 0.6, 3.0}
\providecommand{\U}[1]{\protect\rule{.1in}{.1in}}
\providecommand{\keywords}[1]{\textsc{Keywords}: #1}
\newtheorem{theorem}{Theorem}[section]
\newtheorem{assumption}{Assumption}[section]
\newtheorem{corollary}{Corollary}[section]
\newtheorem{definition}{Definition}[section]
\newtheorem{example}{Example}[section]
\newtheorem{lemma}{Lemma}[section]
\newtheorem{proposition}{Proposition}[section]
\newtheorem{remark}{Remark}[section]
\newenvironment{proof}[1][Proof]{\noindent\textbf{#1.} }{\ \rule{0.5em}{0.5em}}
\newcommand{\nci}{\not\!\perp\!\!\!\perp}
\newcommand{\ci}{\perp\!\!\!\perp}
\numberwithin{equation}{section}
\numberwithin{table}{section}
\numberwithin{figure}{section}
\pgfplotsset{compat=1.18}
\begin{document}

\begin{bibunit}

\title{Pairwise Valid Instruments}
\author{Zhenting Sun\thanks{Correspondence to: Department of Economics, University of Melbourne, 
Grattan Street, Parkville Victoria 3010,
Australia. 
E-mail addresses: zhentingsun@gmail.com (Z. Sun), kasparwu@umich.edu (K. W{\"u}thrich).}\\Department of Economics\\ University of Melbourne\and Kaspar W{\"u}thrich\\Department of Economics\\ University of Michigan\\} 
\maketitle

\onehalfspacing
\begin{abstract}
Finding valid instruments is difficult. We propose Validity Set Instrumental Variable (VSIV) estimation, a method for estimating local average treatment effects (LATEs) in heterogeneous causal effect models when the instruments are partially invalid. We consider settings with pairwise valid instruments, that is, instruments that are valid for a subset of instrument value pairs. VSIV estimation exploits testable implications of instrument validity to remove invalid pairs and provides estimates of the LATEs for all remaining pairs, which can be aggregated into a single parameter of interest using researcher-specified weights. We show that the proposed VSIV estimators are asymptotically normal under weak conditions and remove or reduce the asymptotic bias relative to standard LATE estimators (that is, LATE estimators that do not use testable implications to remove invalid variation). 
We evaluate the finite sample properties of VSIV estimation in application-based simulations and apply our method to estimate the returns to college education using parental education as an instrument.

\bigskip

\keywords{Invalid instruments, local average treatment effects, identification, instrumental variable estimation, asymptotic bias reduction}
\end{abstract}

\bigskip

\newpage

\section{Introduction}

Instrumental variable (IV) methods based on the local average treatment effect (LATE) framework \citep{imbens1994identification,angrist1995two,angrist1996identification} rely on three assumptions:\footnote{See, for example, \citet{imbens2014instrumental}, \citet{melly2017local}, and \citet{huber2018local} for recent reviews, and \citet{angrist2008mostly}, \citet{angrist2014mastering}, and \citet{imbens2015causal} for textbook treatments.} (i) \emph{exclusion} (the instrument does not have a direct effect on the outcome), (ii) \emph{random assignment} (the instrument is independent of potential outcomes and treatments), and (iii) \emph{monotonicity} (the instrument has a monotonic impact on treatment take-up).\footnote{Some papers also include the instrument first-stage assumption as part of the LATE assumptions. We will maintain suitable first-stage assumptions.} In many applications, some of these assumptions are likely to be violated or at least questionable. This has motivated the derivation of testable restrictions and tests for IV validity in various settings \citep[e.g.,][]{balke1997bounds,imbens1997estimating,heckman2005structural,huber2015testing,kitagawa2015test,mourifie2016testing,kedagni2020generalized,carr2021testing,farbmacher2022instrument,frandsen2023judging,jiang2023testing,sun2021ivvalidity}.\footnote{There is a related literature on inference with invalid instruments in linear IV models \citep[e.g.,][]{conley2012plausibly,nevo2012identification,armstrong2021sensitivity,goh2022causal}.} The main contribution of this paper is to propose a method for exploiting the information available in the testable restrictions of IV validity to remove or reduce the asymptotic bias when estimating LATE parameters.\footnote{We define the asymptotic bias as the probability limit of the $\ell^2$ difference between an estimator and the true value.}

We consider settings where the available instruments are partially invalid. A leading example of such a setting is when there is a multivalued instrument for which only some pairs of instrument values satisfy the IV assumptions. In Section \ref{sec.application}, we revisit the analysis of the causal effect of college education on earnings using parental education as an instrument. This instrument is likely partially invalid due to parental education having a positive effect on future earnings, at least up to a certain education level \citep{kedagni2020discordant}. 
Another example is the quarter of birth (QOB) instrument of \citet{angrist1991does}. A potential concern with this instrument is that the seasonality in birth patterns renders the QOB instrument partially invalid \citep[e.g.,][]{bound1995problems,buckles2013season}, motivating some studies to only consider a subset of QOBs as instruments \citep[e.g.,][]{dahl2017s}. Another empirically relevant setting where partially invalid instruments may arise is when there are multiple instruments.\footnote{Settings with multiple instruments are common in empirical research \citep[][Section I]{mogstad2021causal}.} In applications with multiple instruments, the validity of a subset of the instruments may be questionable, or the instruments may be partially invalid because the heterogeneity in individual choice behavior renders standard monotonicity assumptions invalid \citep{mogstad2021causal}.  As an example of the latter, consider the study by \citet{thornton2008demand}, who estimates the causal effect of knowing HIV status on the likelihood of buying condoms using two randomly assigned instruments: Monetary incentives and distance to results centers. In this application, monotonicity is likely to fail due to differences in individual preferences over monetary incentives and distance \citep[][Online Appendix B.1]{mogstad2021causal}.\footnote{\citet{mogstad2021causal} propose a weaker version of monotonicity, referred to as partial monotonicity, that they argue is more plausible in this application.  We discuss the connection between our assumptions and partial monotonicity in Section \ref{sec.partial monotnicity}. \citet{jiang2023testing} develop formal tests for partial monotonicity and apply these tests in the context of the \citet{thornton2008demand} application.}

The proposed method, which we refer to as \emph{Validity Set IV (VSIV) estimation}, uses testable implications of IV validity to remove invalid variation in the instruments and provides LATE estimates based on the remaining variation in the instruments. We establish the asymptotic normality of the proposed VSIV estimators and show that they always remove or reduce the asymptotic bias relative to standard LATE estimators, that is, LATE estimators that do not exploit testable implications of IV validity to remove invalid pairs. Thus, VSIV estimation constitutes a data-driven approach for removing or reducing the asymptotic bias of LATE estimators, given all the information about IV validity available in the data. 

The use of the testable implications of IV validity in VSIV estimation is more constructive than the standard practice where researchers first test for IV validity, discard the instruments if they reject IV validity, and proceed with standard IV analyses if they do not reject IV validity. VSIV estimation uses the testable implications to remove invalid information in the instruments. Consequently, it can be used to estimate causal effects in settings where the instruments are only partially invalid so that existing tests reject the null of full IV validity.\footnote{See Appendix \ref{app:pretest} for a comparison between VSIV estimation and pairwise pretests based on existing tests for IV validity.} VSIV estimation salvages falsified instruments by exploiting the variation in the instruments not refuted by the data and thereby contributes to the literature on salvaging falsified models \citep[e.g.,][]{masten2021salvaging,kedagni2020discordant}.

Our goal is to estimate the causal effect of an endogenous treatment $D$ on an outcome of interest $Y$, using a potentially vector-valued discrete instrument $Z$. We consider binary treatments in the main text and multivalued ordered or unordered treatments in the Appendix. In the ideal case, $Z$ is fully valid, that is, the LATE assumptions hold for all instrument values (the instrument is valid for the whole population). However, full IV validity is questionable in many applications, especially when there are many instruments or instrument values. To this end, we introduce the notion of \emph{pairwise valid instruments}. Pairwise valid instruments are only valid for a subset of all pairs of instrument values, which we refer to as \emph{validity pair set}. Intuitively, the instruments are valid for some subpopulations but invalid for the others. 

Pairwise validity separates the instrument value pairs into two groups: Valid pairs for which all LATE assumptions hold and invalid pairs for which at least one of the LATE assumptions is violated. Pairwise validity does not require researchers to specify which LATE assumptions are violated for the invalid pairs and to which extent; it allows for failures of exclusion, random assignment, monotonicity, or combinations thereof. As a result, there is no information about the LATE for the invalid instrument value pairs absent additional restrictions (see Appendix \ref{sec.no information}). Pairwise validity is motivated by the fact that it is often difficult to determine why exactly specific instrument pairs are invalid based on contextual knowledge (that is, which combinations of assumptions are violated and how), especially when there are many potentially invalid pairs. If additional information is available on which LATE assumptions fail and how, we can exploit such information for partial identification and sensitivity analysis \citep[e.g.,][]{huber2014sensitivity,noack2021sensitivity,kedagni2023identifying,cui2024robust} or focus on target parameters that are identified under relaxations of the LATE assumptions \citep[e.g.,][]{de2017tolerating,frandsen2023judging}. Even in settings where such information is available, pairwise validity provides a useful benchmark and starting point.

VSIV estimation provides estimates of the LATEs for all pairs of instrument values that satisfy the testable restrictions for IV validity. Specifically, we obtain an estimator, $\widehat{\mathscr{Z}_0}$, of the set of pairs of instrument values that satisfy the testable restrictions in \citet{kitagawa2015test}, \citet{mourifie2016testing}, \citet{kedagni2020generalized}, and \citet{sun2021ivvalidity} and estimate LATEs for all pairs of instrument values in  $\widehat{\mathscr{Z}_0}$. These LATEs can then be aggregated into a single parameter of interest based on user-specified weights.

We study the theoretical properties of VSIV estimation under two scenarios. First, we assume that the estimated validity pair set, $\widehat{\mathscr{Z}_0}$, is consistent for the largest validity pair set $\mathscr{Z}_{\bar{M}}$ (that is, the union of all validity pair sets) in the sense that $\mathbb{P}(\widehat{\mathscr{Z}_0}=\mathscr{Z}_{\bar{M}})\to 1$. In this case, VSIV estimation is asymptotically unbiased and normal under standard conditions. Second, since the estimator of the validity pair set,  $\widehat{\mathscr{Z}_0}$, is typically constructed based on necessary (but not necessarily sufficient) conditions for IV validity, it could converge to a \emph{pseudo-validity pair set} $\mathscr{Z}_{0}$ that is larger than $\mathscr{Z}_{\bar{M}}$, that is, $\mathbb{P}(\widehat{\mathscr{Z}_0}=\mathscr{Z}_{0})\to 1$.\footnote{\citet[][Proposition 1.1]{kitagawa2015test} shows that there exist no sufficient conditions for IV validity when $D$ and $Z$ are both binary.} Let $\mathscr{Z}_P$ be a presumed set of valid pairs of instrument values, incorporating prior information about instrument validity ($\mathscr{Z}_P$ is equal to the set of all pairs if no such prior information is available). We prove that VSIV estimation based on $\widehat{\mathscr{Z}_0}\cap \mathscr{Z}_P$ leads to a smaller asymptotic bias than standard LATE estimators based on $\mathscr{Z}_P$. Taken together, our theoretical results show that, irrespective of whether the largest validity pair set can be estimated consistently or not, VSIV estimation leads to asymptotically normal LATE estimators with reduced asymptotic bias.

Finally, we use VSIV estimation to revisit the estimation of the causal effect of college education on earnings using parental education as an instrument. We evaluate the finite sample performance of VSIV estimation in a simulation study calibrated to this application and use these simulations to determine the choice of the tuning parameter required for VSIV estimation. Based on this choice of tuning parameter, VSIV estimation screens out the pairs of instrument values corresponding to low levels of parental education. This is consistent with the above discussion and the findings in \citet{kedagni2020discordant} in that for low levels of parental education the exclusion restriction may fail. The LATEs for the pairs of instrument values that are not screened out are positive and significant.

\paragraph{Notation:} We introduce some standard notation, following \citet{sun2021ivvalidity}. All random elements are defined on a probability space $(\Omega,\mathcal{A},\mathbb{P})$. For all $m\in\mathbb{N}$,  $\mathcal{B}_{\mathbb{R}^m}$ is the Borel $\sigma$-algebra on $\mathbb{R}^m$. 
We denote by $\mathcal{P}$ the set of probability measures such that if the data $\{(Y_i,D_i,Z_i)\}_{i=1}^n$ are i.i.d.\ and distributed according to some probability measure $Q\in\mathcal{P}$, then $Q(G)=\mathbb{P}((Y_i,D_i,Z_i)\in G)$ for all measurable sets $G$.
For every $Q\in\mathcal{P}$ and every measurable function $v$, with some abuse of notation, we define $Q(v)=\int v\, \mathrm{d}Q$.
The symbol $\leadsto$ denotes weak convergence in a metric space in the Hoffmann--J\o rgensen sense. 
	For every set $B$, let $1_B$ denote the indicator function for $B$.
Finally, to simplify the exposition of the theoretical results, 
we adopt the convention \citep[e.g.,][p.~45]{folland2013real},
that 
\begin{equation}\label{eq.0timesinfinity}
	0\cdot \infty =0.
\end{equation}

\section{Identification with Pairwise Valid Instruments}\label{sec.pairwise valid instrument binary D}

\subsection{Weakening Instrument Validity to Pairwise Validity}

Consider a setting with an outcome variable $Y\in \mathbb{R}$, a treatment $D\in \mathcal{D}$, and an instrument (vector) $Z\in \mathcal{Z}$. In the main text, we focus on the leading case where the treatment is binary with $\mathcal{D}=\left\{0,1\right\}$. The extensions to multivalued ordered and unordered treatments can be found in the Appendix. The instrument is discrete with $\mathcal{Z}=\left\{z_{1},\ldots,z_K\right\} $, and can be ordered or unordered. Let $Y_{dz}\in\mathbb{R}$ for $(d,z)\in\mathcal{D}\times \mathcal{Z}$ denote the potential outcomes and
let $D_{z}$ for $z\in \mathcal{Z}$ denote the  potential treatments. The following assumption generalizes the standard LATE assumptions with binary instruments to multivalued instruments. 

\begin{assumption}\label{ass.IV validity binary D}
IV validity for	LATEs with binary treatments and multivalued instruments: 
	\begin{enumerate}[label=(\roman*)]
\item Exclusion: For each $d\in\{0,1\}$, $Y_{dz_{1}}=\cdots=Y_{dz_{K}}$ almost surely (a.s.). \label{ass.IV validity binary D exclusion}		
  
  \item Random Assignment: $Z$ is jointly independent of $\left( Y_{0z_{1}},\ldots,Y_{0z_K}, Y_{1z_{1}
			},\ldots,Y_{1z_K}\right)$ and \\$\left(D_{z_{1}},\ldots,D_{z_{K}}\right)$. \label{ass.IV random assignment}

		\item Monotonicity: For all  $
		k\in\{1,\ldots, K-1\}$, $D_{z_{k+1}}\geq D_{z_k}$ a.s. \label{ass.IV validity binary D monotonicity}
	\end{enumerate}
	
\end{assumption}

Assumption \ref{ass.IV validity binary D} is similar to the LATE assumptions in, for example, \citet{imbens1994identification}, \citet{angrist1995two}, \citet{frolich2007nonparametric}, \citet{kitagawa2015test}, and \citet{sun2021ivvalidity}. It imposes the IV validity assumptions with respect to all possible values of the instrument $z\in \mathcal{Z}$. This assumption has a lot of identifying power: It identifies LATEs with respect to every pair of IV values $(z_{k},z_{k+1})$ with $\mathbb{P}(D_{z_{k+1}}>D_{z_{k}})>0$. However, Assumption \ref{ass.IV validity binary D} can be restrictive in applications. We therefore introduce the notion of \emph{pairwise instrument validity}, which weakens the conditions in Assumption \ref{ass.IV validity binary D}. Define the set of all possible pairs of values of $Z$ as
\begin{align*}
    \mathscr{Z}=\left\{  \left(  z_{1},z_{2}\right)  ,\ldots,\left(z_{1},z_{K}\right),\left(  z_{2},z_{3}\right)  ,\ldots,\left(z_{2},z_{K}\right)  ,\ldots,(z_{K-1},z_K),(z_{2},z_1),\ldots,\left(
z_{K},z_{K-1}\right)  \right\}.
\end{align*} 
The number of the elements in $\mathscr{Z}$
is $K\cdot\left(  K-1\right)  $. We use $\mathcal{Z}_{\left(  k,k^{\prime	}\right)  }$ to denote a pair $\left(  z_{k},z_{k^{\prime}}\right)  \in
\mathscr{Z}$. Note that we include both $(z_k,z_{k'})$ and $(z_{k'},z_k)$ in $\mathscr{Z}$ so that we do not restrict the direction of the monotonicity (Assumption \ref{ass.IV validity binary D}\ref{ass.IV validity binary D monotonicity}) a priori. 

\begin{definition}
	\label{def.partial validity pairwise binary D} The instrument $Z$ is \textbf{pairwise valid} for
	the treatment $D\in\{0,1\}$ if there is a set $\mathscr{Z}_M=\{(z_{k_1},z_{k_1^{\prime}}),\ldots,(z_{k_M},z_{k_M^{\prime}})\}\subseteq\mathscr{Z}$ such that the following conditions hold for every $(z,z')\in\mathscr{Z}_M$:
	\begin{enumerate}[label=(\roman*)]
		
		\item Exclusion: For each $d\in\{0,1\}$, $Y_{dz}=Y_{dz^{\prime}}$ a.s.\label{def.pairwise exclusion}
		
		\item Random Assignment: $Z$ is jointly independent of $(Y_{0z},Y_{0z^{\prime}},Y_{1z},Y_{1z^{\prime}},D_{z},D_{z^{\prime}}) $.\footnote{This condition can be further weakened: The conditional distribution of  $(Y_{0z},Y_{0z^{\prime}},Y_{1z},Y_{1z^{\prime}},D_{z},D_{z^{\prime}}) $ given $Z=z$ or $Z=z'$ is the same as the unconditional distribution.} 
		\label{def.pairwise random assignment}
		\item Monotonicity: $D_{z^{\prime}}\geq D_{z}$ a.s. \label{def.pairwise monotonicity}
	\end{enumerate}
	The set $\mathscr{Z}_M$ is called a \textbf{validity pair set} of $Z$.\footnote{We use $\mathscr{Z}_M$ to denote an arbitrary validity pair set throughout the paper. To simplify the notation, we therefore only index $\mathscr{Z}$ by $M$ and not by the full index set $\{(k_1,k_1'),\dots,(k_M,k_M')\}$.} The union of all validity pair sets is the largest validity pair set, denoted by $\mathscr{Z}_{\bar{M}}$. A pair of instrument values $(z,z')$ is called a \textbf{valid pair} if $(z,z')\in\mathscr{Z}_{\bar{M}}$. A pair $(z,z')$ is called an \textbf{invalid pair} if $(z,z')\notin\mathscr{Z}_{\bar{M}}$.
\end{definition}

Definition \ref{def.partial validity pairwise binary D} separates the instrument value pairs into two groups: Valid pairs for which all the LATE assumptions hold and invalid pairs for which the LATE assumptions fail due to failures of exclusion, independence, monotonicity, or combinations thereof. We show in Appendix \ref{sec.no information} that absent additional restrictions on how Definition \ref{def.partial validity pairwise binary D} can be violated, the sharp identified set for the LATEs for the invalid pairs is the entire real line; that is, there is no information about the LATEs for the invalid pairs in the data. 

Pairwise validity does not require researchers to specify which LATE assumptions are violated and to which extent for the invalid pairs. The motivation for this is that it can be difficult to determine why exactly instrument pairs are invalid in applications. While pairwise validity does not require researchers to impose additional assumptions for the invalid pairs, it allows for the possibility that valid pairs restrict which LATE assumptions are violated for invalid pairs.\footnote{For example, suppose that $\mathcal{Z}= \{z_1,z_2,z_3\}$, where the pairs $(z_{1},z_3)$ and $(z_{2},z_3)$ are valid. This configuration implies that exclusion (Definition \ref{def.partial validity pairwise binary D}\ref{def.pairwise exclusion}) cannot be violated for the pair $(z_{1},z_2)$.}

Definition \ref{def.partial validity pairwise binary D} complements the existing approaches for relaxing the LATE assumptions. These approaches typically impose more structure on which LATE assumptions are violated and how exactly the LATE assumptions are violated. They provide partial identification results and methods for performing sensitivity analyses \citep[e.g.,][]{huber2014sensitivity,noack2021sensitivity,kedagni2023identifying,cui2024robust} or focus on target parameters that are identified under weaker assumptions \citep[e.g.,][]{de2017tolerating,frandsen2023judging}. 

To illustrate Definition \ref{def.partial validity pairwise binary D}, consider a simple example where $Z\in\mathcal{Z}=\{z_1,z_2,z_3\}$. If $Z$ is fully valid as in Assumption \ref{ass.IV validity binary D} such that $D_{z_3}\ge D_{z_2}\ge D_{z_1}$ a.s., then $\mathscr{Z}_{\bar{M}}=\{(z_1,z_2),(z_1,z_3),(z_2,z_3)\}$. The orange solid lines in Figure \ref{fig:pairwise IV}(a) indicate that two instrument values, $\{z_k,z_{k'}\}$, form a valid pair: Either $(z_k,z_{k'})$ or $(z_{k'},z_k)$ satisfies the conditions in Definition \ref{def.partial validity pairwise binary D}. The full validity Assumption \ref{ass.IV validity binary D} requires that every pair of instrument values forms a valid pair. Definition \ref{def.partial validity pairwise binary D} relaxes Assumption \ref{ass.IV validity binary D} as it does not require every pair to form a valid pair. For example, it could be that only $(z_1,z_3)$ satisfies the conditions in Definition \ref{def.partial validity pairwise binary D}. The teal dashed lines in Figure \ref{fig:pairwise IV}(b) indicate that $\{z_1,z_2\}$ and $\{z_2,z_3\}$ do not form valid pairs. In this case, the instrument $Z$ is pairwise but not fully valid.  

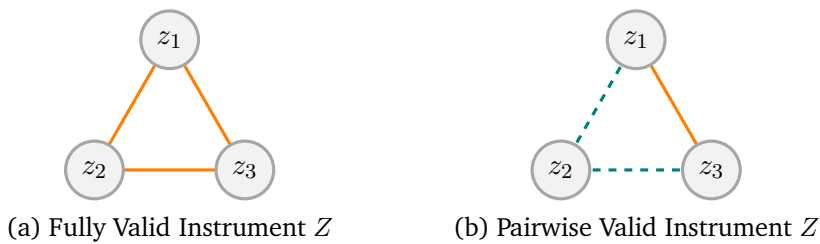
\begin{figure} [H]
	\caption{Full IV Validity vs.\ Pairwise IV Validity}
	\label{fig:pairwise IV}
	\centering
	\begin{subfigure}[b]{0.4\textwidth}
		\centering
		\begin{tikzpicture}
           \draw [orange, very thick, -] (0,0) -- (1,1.732); 
	       \draw [orange, very thick, -] (2,0) -- (1,1.732);
           \draw [orange, very thick, -] (0,0) -- (2,0);
           \node[circle, draw=gray!70, fill=gray!10, very thick, minimum size=7mm] at (0,0) {$z_2$};
           \node[circle, draw=gray!70, fill=gray!10, very thick, minimum size=7mm] at (1,1.732) {$z_1$};
	       \node[circle, draw=gray!70, fill=gray!10, very thick, minimum size=7mm] at (2,0) {$z_3$};
       \end{tikzpicture}
		\subcaption{Fully Valid Instrument $Z$}
	\end{subfigure}
	\begin{subfigure}[b]{0.4\textwidth}
		\centering
		\begin{tikzpicture}
           \draw [teal, very thick, dashed,-] (0,0) -- (1,1.732); 
           \draw [orange, very thick, -] (2,0) -- (1,1.732);
           \draw [teal, very thick, dashed, -] (0,0) -- (2,0);
           \node[circle, draw=gray!70, fill=gray!10, very thick, minimum size=7mm] at (0,0) {$z_2$};
           \node[circle, draw=gray!70, fill=gray!10, very thick, minimum size=7mm] at (1,1.732) {$z_1$};
	       \node[circle, draw=gray!70, fill=gray!10, very thick, minimum size=7mm] at (2,0) {$z_3$};
    \end{tikzpicture}
		\subcaption{Pairwise Valid Instrument $Z$}
	\end{subfigure}
\end{figure}

In applications where the instrument $Z$ is randomly assigned (e.g., in experiments with imperfect compliance), joint independence (Assumption \ref{ass.IV validity binary D}\ref{ass.IV random assignment}) holds by design. In such applications, Definition \ref{def.partial validity pairwise binary D} captures violations of exclusion and monotonicity. Such violations are easy to interpret. The pairwise exclusion assumption in Definition \ref{def.partial validity pairwise binary D}\ref{def.pairwise exclusion} requires that $Y_{dz}$, viewed as a function of $z$, is constant over some regions of $\mathcal{Z}$ and varies over others. This nests, for example, Condition E.3 in \citet{kedagni2020discordant}, which requires that $Y_{dt}\le Y_{dt'}$ for all $t\le t'$ and $Y_{dt}=Y_{dz}$ for all $t\ge z$. The pairwise monotonicity assumption (Definition \ref{def.partial validity pairwise binary D}\ref{def.pairwise monotonicity}) requires that $D_z$, viewed as a function of $z$, is monotonic over some regions of $\mathcal{Z}$ and non-monotonic over others. We discuss the relationship to existing relaxations of LATE monotonicity in more detail in Section \ref{sec.partial monotnicity}.

In many quasi-experimental applications, the instrument $Z$ is not randomly assigned, and joint independence (Assumption \ref{ass.IV validity binary D}\ref{ass.IV random assignment}) may fail. A leading and practically relevant case where joint independence fails but pairwise independence (Definition \ref{def.partial validity pairwise binary D}\ref{def.pairwise random assignment}) holds is when there are multiple instruments, and some of them are not independent of all potential variables.\footnote{It is possible that joint independence fails but pairwise independence holds even if there is only one original instrument. To illustrate, let $D$ indicate college enrollment, and let $Z\in \{1,2,3\}$ measure distance to the closest college \citep[e.g.,][]{kane1993labor}, where $Z=1$  indicates \texttt{close}, $Z=2$ indicates \texttt{far}, and $Z=3$ indicates \texttt{very far}. Consider the selection mechanism $D_z=1\left\{B_0+B_11\{z=1\}+f(z)1\{z>1\}\le 0\right\}$, where $B_0$ and $B_1$ are random coefficients, $f(2)<f(3)$, and $B_0\ci Z$. The coefficient $B_1$ captures the taste for living \texttt{close} to college relative to living farther away. If $B_1$ is correlated with actual distance $Z$, then $Z\nci D_1$ and $Z\ci (D_2,D_3)$.} To illustrate, consider the following example based on \citet[][Section II.C]{mogstad2021causal} and the empirical application in \citet{carneiro2011estimating}. Let $D$ be an indicator for college attendance. There are two binary instruments, $Z=(Z_1,Z_2)$, where $Z_1$ is an indicator for college proximity \citep[e.g.,][]{card1993geographic,kane1993labor} and $Z_2$ is an indicator for tuition subsidy.\footnote{We swap the order of the instruments relative to \citet[][Section II.C]{mogstad2021causal} for the purpose of illustration.} Individuals decide whether to attend college based on the following selection mechanism,
\begin{equation}
D_z=1\left\{B_0+B_1z_1+z_2\ge 0\right\},
\end{equation}
where $B_0$ and $B_1$ are random coefficients. For simplicity, we assume that $Z\ci B_0$. The coefficient $B_1$ measures the ``taste'' for proximity relative to tuition subsidy. If the taste for proximity, $B_1$, is correlated with actual proximity, $Z_1$, for example, due to spatial sorting, then the pair $(D_{(0,0)},D_{(0,1)})$ is independent of $Z$ but the pair $(D_{(1,0)},D_{(1,1)})$ is not.\footnote{See, for example, \citet{card1993geographic}, \citet{carneiro2011estimating}, \citet{slichter2014testing}, and \citet{kitagawa2015test} for discussions of the validity of the college proximity instrument.}

\begin{remark}[Weakening Definition \ref{def.partial validity pairwise binary D} with Multiple Instruments]
In Appendix \ref{sec.selectively valid instruments}, we introduce a weaker notion of pairwise validity (Definition \ref{def.partial validity pairwise binary D}) for settings where $Z$ contains multiple instruments:  $Z=(Z_1,\ldots,Z_L)^T$, where $Z_l$ is a scalar instrument for $l\in\{1,\ldots,L\}$.
\end{remark}

\subsection{Relationship to Other Variants and Relaxations of Monotonicity} 
\label{sec.partial monotnicity}
Here we discuss the connection between our pairwise monotonicity assumption (Definition \ref{def.partial validity pairwise binary D}\ref{def.pairwise monotonicity}) and three recently proposed variants and relaxations of the LATE monotonicity assumption.

First, \citet{mogstad2021causal} propose a partial monotonicity (PM) condition for settings with multiple instruments, which is a special case of Condition \ref{def.pairwise monotonicity} in Definition \ref{def.partial validity pairwise binary D}; see also \citet{goff2020vector} for vector monotonicity assumption.\footnote{\citet{mogstad2021causal} motivate the PM condition by showing that full monotonicity imposes strong restrictions on the heterogeneity in individual choice behavior and is therefore likely violated in many applications.}
For example, suppose that $Z=(Z_1,Z_2)\in\mathbb{R}^2$ and each element of $Z$ is binary so that $\mathcal{Z}=\{(0,0),(0,1),(1,0),(1,1)\}$. Suppose that Assumption PM of \citet{mogstad2021causal} holds with $D_{(0,0)}\ge D_{(0,1)}$, $D_{(0,0)}\ge D_{(1,0)}$, $D_{(1,1)}\ge D_{(0,1)}$, and $D_{(1,1)}\ge D_{(1,0)}$ a.s. (the sex composition instrument in \citet{angrist1998children} discussed in \citet{mogstad2021causal}), and that Conditions \ref{def.pairwise exclusion} and \ref{def.pairwise random assignment} of Definition \ref{def.partial validity pairwise binary D} hold. Then a validity pair set is $$\{((0,1),(0,0)),((1,0),(0,0)),((0,1),(1,1)),((1,0),(1,1))\}.$$

Second, \citet[][Section IV]{frandsen2023judging} study the interpretation of 2SLS under relaxations of monotonicity and exclusion. The relaxation of monotonicity, referred to as average monotonicity, requires $D_z$ to be positively correlated with the instrument propensity. Average monotonicity is fundamentally different from  pairwise monotonicity (Definition \ref{def.partial validity pairwise binary D}\ref{def.pairwise monotonicity}). Pairwise montonicity operates at the level of pairs of instrument values, whereas average monotonicity implies restrictions across all instrument values. Also, \citet{frandsen2023judging} show that average monotonicity can be used to identify averages of treatment effects. By contrast, pairwise validity identifies the LATE for $(z,z')\in\mathscr{Z}_{\bar{M}}$, but does not identify the LATE for $(z,z')\notin \mathscr{Z}_{\bar{M}}$ (Corollary \ref{cor:no_information} in Appendix \ref{sec.no information}).

Finally, \citet{noack2021sensitivity} considers a continuous relaxation of monotonicity when $Z$ is binary, parameterized by the fraction of defiers. We do not consider continuous relaxations and make no assumptions on the degree of violation. Combining VSIV estimation with continuous relaxations as in \citet{noack2021sensitivity} is an interesting direction for future research, as we discuss in Section \ref{sec:conclusion}.

\subsection{Identification under Pairwise Validity}
The following lemma establishes identification under pairwise validity.
\begin{lemma}\label{lemma.pairwise beta binary D}
	Suppose that the instrument $Z$ is pairwise valid according to Definition \ref{def.partial validity pairwise binary D} with a known validity pair set $\mathscr{Z}_M=\{(z_{k_1},z_{k_1^{\prime}}),\ldots,(z_{k_M},z_{k_M^{\prime}})\}$.\footnote{Note that mathematically we do not need to impose a first-stage assumption here due to the convention \eqref{eq.0timesinfinity}.} Then we can define a random variable $Y_d(z_{k_m},z_{k_m^{\prime}})=Y_{dz_{k_m}}=Y_{dz_{k'_m}}$ a.s. for each $d\in\{0,1\}$ and every $(z_{k_m},z_{k_m^{\prime}})\in\mathscr{Z}_M$, and the following quantity can be identified for every $(z_{k_m},z_{k_m^{\prime}})\in\mathscr{Z}_M$:
	\begin{align}\label{eq.beta binary D}
		\beta _{k_{m}^{\prime},k_{m}}&\equiv E\left[Y_{1}(z_{k_m},z_{k_m^{\prime}})-Y_{0}(z_{k_m},z_{k_m^{\prime}})\big|D_{z_{k_m'}}>D_{z_{k_m}}\right]\notag \\
		&=\frac{E\left[ Y|Z=z_{k_{m}^{\prime}}\right] -E\left[ Y|Z=z_{k_{m}}\right] }{E\left[
			D|Z=z_{k_{m}^{\prime}}\right] -E\left[ D|Z=z_{k_{m}}\right] }.
	\end{align}

\end{lemma}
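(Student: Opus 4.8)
The plan is to show that the Wald-type ratio in \eqref{eq.beta binary D} equals the stated LATE by applying the standard LATE identification argument pairwise, using only the restrictions in Definition \ref{def.partial validity pairwise binary D} that hold on the pair $(z_{k_m},z_{k_m'})$. First I would fix an arbitrary pair $(z_{k_m},z_{k_m'})\in\mathscr{Z}_M$ and write $z\equiv z_{k_m}$, $z'\equiv z_{k_m'}$ for brevity. By the exclusion condition (i), for each $d\in\{0,1\}$ we have $Y_{dz}=Y_{dz'}$ a.s., so the random variable $Y_d(z,z')\equiv Y_{dz}=Y_{dz'}$ is well defined; this justifies the first display in the statement. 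I would also write the observed outcome in switching-regression form: on the event $\{Z=z\}$, $Y=D_z Y_1(z,z')+(1-D_z)Y_0(z,z')$, and similarly with $z'$ on $\{Z=z'\}$, which follows from the exclusion restriction plus the definition of potential outcomes/treatments.

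Next I would compute the numerator. Using random assignment (ii), the conditional distribution of $(Y_0(z,z'),Y_1(z,z'),D_z,D_{z'})$ given $Z=z$ equals its unconditional distribution, and likewise given $Z=z'$; hence
\[
E[Y\mid Z=z']-E[Y\mid Z=z]
= E\bigl[D_{z'}Y_1(z,z')+(1-D_{z'})Y_0(z,z')\bigr]-E\bigl[D_{z}Y_1(z,z')+(1-D_{z})Y_0(z,z')\bigr].
\]
Collecting terms, the right-hand side equals $E\bigl[(D_{z'}-D_z)(Y_1(z,z')-Y_0(z,z'))\bigr]$. By monotonicity (iii), $D_{z'}-D_z\in\{0,1\}$ a.s., so $D_{z'}-D_z=\mathbbm{1}\{D_{z'}>D_z\}$ and the numerator becomes $E[Y_1(z,z')-Y_0(z,z')\mid D_{z'}>D_z]\cdot\mathbb{P}(D_{z'}>D_z)$. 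For the denominator, again by random assignment, $E[D\mid Z=z']-E[D\mid Z=z]=E[D_{z'}-D_z]=E[\mathbbm{1}\{D_{z'}>D_z\}]=\mathbb{P}(D_{z'}>D_z)$. Dividing the two expressions yields the claimed identity, provided $\mathbb{P}(D_{z'}>D_z)>0$; when $\mathbb{P}(D_{z'}>D_z)=0$, both numerator and denominator are zero and the convention \eqref{eq.0timesinfinity} together with the interpretation of $0/0$ makes the displayed equality hold trivially (the footnote about not needing a first-stage assumption refers exactly to this).

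The only real subtlety — and the step I expect to need the most care — is the handling of the degenerate case $\mathbb{P}(D_{z'}>D_z)=0$ and, relatedly, making sure the ``identification'' claim is interpreted correctly: the equality in \eqref{eq.beta binary D} is an equality of the Wald ratio with a conditional expectation that is itself only defined up to the conditioning event having positive probability. I would state explicitly that when the first stage is nondegenerate the argument above gives genuine point identification of a LATE, and when it is degenerate the equation holds as $0=0/0$ under convention \eqref{eq.0timesinfinity}. Everything else is the textbook Imbens–Angrist manipulation restricted to a single pair, so no further machinery is required.
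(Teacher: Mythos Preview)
Your proposal is correct and follows the standard Imbens--Angrist (1994) argument applied pairwise: write $Y$ in switching-regression form, use random assignment to drop the conditioning on $Z$, collect terms, and invoke monotonicity to factor. The paper proves this lemma as a special case of its Lemma~\ref{lemma.partial beta} for multivalued ordered treatments, which follows the telescoping decomposition of \citet{angrist1995two}: it writes $E[Y\mid Z=z']-E[Y\mid Z=z]$ as $\sum_j E[(Y_{d_j}-Y_{d_{j-1}})(1\{D_{z'}\ge d_j\}-1\{D_z\ge d_j\})]$ and then uses monotonicity termwise. For binary $D$ this telescoping sum collapses to exactly your expression $E[(D_{z'}-D_z)(Y_1-Y_0)]$, so the two arguments coincide; yours is simply the more direct route when $J=2$, while the paper's buys generality to multivalued $D$ at the cost of slightly heavier notation.
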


Lemma \ref{lemma.pairwise beta binary D} is a direct extension of Theorem 1 of \citet{imbens1994identification} for the case where $Z$ is pairwise valid. We follow \citet{imbens1994identification} and refer to $\beta_{k_{m}^{\prime},k_m}$ as a LATE. Lemma \ref{lemma.pairwise beta binary D} shows that if a validity pair set $\mathscr{Z}_{{M}}$ is known, we can identify every $\beta_{k_{m}^{\prime},k_m}$ with $(z_{k_m},z_{k_m^{\prime}})\in\mathscr{Z}_M$.\footnote{Note that if $(z_{k_m},z_{k_m^{\prime}})\in\mathscr{Z}_M$ with $D_{z_{k_m}}=D_{z_{k'_m}}$ a.s., then $\beta_{k_{m}^{\prime},k_m}=0$ by \eqref{eq.0timesinfinity}.
Moreover, if $(z_{k_m},z_{k_m^{\prime}})\in\mathscr{Z}_{{M}}$ and $(z_{k'_m},z_{k_m})\in\mathscr{Z}_{{M}}$, then by Definition \ref{def.partial validity pairwise binary D}, $D_{z_{k_m}}=D_{z_{k'_m}}$ a.s.} In practice, however, $\mathscr{Z}_M$ is usually unknown. In this paper, we use testable implications of IV validity to estimate a pseudo-validity pair set $\mathscr{Z}_{0}$ containing $\mathscr{Z}_M$, and show how to use this estimated set to reduce the asymptotic bias in LATE estimation.

We focus on the vector of LATEs $\{\beta_{k_{m}^{\prime},k_m}\}$ as our object of interest. Traditional IV estimators estimate weighted averages of LATEs \citep[e.g.,][]{imbens1994identification} and, thus, are strictly less informative (we can always compute linear IV estimands based on the LATEs). Moreover, VSIV estimation estimates LATEs that do not enter such weighted averages (Theorem 2 of \citet{imbens1994identification}).  To illustrate, suppose $\mathcal{Z}=\{z_1,z_2,z_3\}$ and $\mathscr{Z}_{\bar{M}}=\{(z_1,z_2),(z_1,z_3),(z_2,z_3)\}$. The traditional IV estimator estimates a weighted average of $\beta_{2,1}$ and $\beta_{3,2}$, whereas our method estimates $(\beta_{2,1},\beta_{3,1},\beta_{3,2})^T$.

Importantly, VSIV estimation allows for assigning researcher-specified weights to $\{\beta_{k_{m}^{\prime},k_m}\}$. In the traditional IV estimation, the weights are determined by the estimation procedure. \citet{mogstad2021causal} show that the weights assigned to the LATEs by 2SLS could be negative under partial monotonicity. Negative weights are not an issue for VSIV estimation because the weights can be chosen by researchers, instead of being determined by the estimation procedure. For example, we may define the weighted average as
\begin{align}\label{eq.weighted average}
    \beta_w=\frac{p_{12}}{p_{123}}\beta_{2,1}+\frac{p_{13}}{p_{123}}\beta_{3,1}+\frac{p_{23}}{p_{123}}\beta_{3,2},
\end{align}
where $p_{ij}=\mathbb{P}(Z\in\{z_i,z_j\})$ and $p_{123}=\mathbb{P}(Z\in\{z_1,z_2\})+\mathbb{P}(Z\in\{z_2,z_3\})+\mathbb{P}(Z\in\{z_1,z_3\})$. The asymptotic properties of the estimated weighted averages of LATEs follow straightforwardly from the asymptotic theory in Section \ref{sec: validity set iv estimation}. See Corollary \ref{corollary.weighted average weak convergence}.

\begin{remark}[Extrapolation] The focus of VSIV estimation is on estimating the LATE parameters $\beta _{k^{\prime},k}$ for all pairs of instrument values $(z_{k},z_{k'})$ satisfying the testable restrictions of IV validity. This is because absent additional restrictions, there is no information in the data about the LATE for the invalid pairs (see Appendix \ref{sec.no information}). 

The local and DGP-dependent nature of LATE parameters has motivated the development of a variety of methods for assessing and restoring external validity \citep[e.g.,][]{heckman2003simple,angrist2013extrapolate,brinch2017beyond,mogstad2018using,wuthrich2020comparison,kowalski2023how}. The use of invalid instrument pairs will result in these approaches being biased and inconsistent. VSIV estimation constitutes a natural complement to the existing approaches to external validity. For settings where researchers are interested in externally valid effects, we recommend a two-step procedure: (i) Use VSIV estimation to eliminate invalid pairs. (ii) Apply a suitable approach to external validity based on the estimated validity pair set. In step (i), we recommend also reporting the VSIV LATE estimates because they summarize the available information about pairwise LATEs and are important inputs for approaches to external validity.
\end{remark}

\section{Validity Set IV Estimation}
\label{sec: validity set iv estimation}
\subsection{Overview}

The goal of VSIV estimation is to exclude invalid instrument pairs. Specifically, we seek to exclude $(z_k,z_{k'})\notin{\mathscr{Z}}_{\bar{M}}$ from $\mathscr{Z}$, since if $(z_k,z_{k'})\notin{\mathscr{Z}}_{\bar{M}}$, then $\beta_{k',k}$ in \eqref{eq.beta binary D} is not identified absent additional restrictions (Appendix \ref{sec.no information}).
Suppose that there is a set $\mathscr{Z}_0\subseteq\mathscr{Z}$ that satisfies the testable implications in \citet{kitagawa2015test}, \citet{mourifie2016testing}, \citet{kedagni2020generalized}, and \citet{sun2021ivvalidity}. Then we construct an estimator  $\widehat{\mathscr{Z}_0}$ for  $\mathscr{Z}_0$ and construct IV estimators based on $(z_{k},z_{k^{\prime}})\in \widehat{\mathscr{Z}_0}$. We refer to these estimators as \emph{VSIV estimators}. In the following, we assume that we have access to an estimator  $\widehat{\mathscr{Z}_0}$, which is consistent for $\mathscr{Z}_{0}$ in the sense that $\mathbb{P}(\widehat{\mathscr{Z}_0}=\mathscr{Z}_{0})\to 1$. We describe the testable implications and the construction of the proposed estimator satisfying $\mathbb{P}(\widehat{\mathscr{Z}_0}=\mathscr{Z}_{0})\to 1$ in detail in Section \ref{sec.estimation validity set}.

In Section \ref{sec.VSIV under consistency}, we study VSIV estimation under the assumption that $\mathscr{Z}_0=\mathscr{Z}_{\bar{M}}$ so that $\mathbb{P}(\widehat{\mathscr{Z}_0}=\mathscr{Z}_{\bar{M}})\to 1$. In this case, the proposed VSIV estimators are asymptotically unbiased and normal under standard weak regularity conditions. Since ${\mathscr{Z}_0}$ is constructed based on the necessary (but not necessarily sufficient) conditions for the pairwise IV validity, ${\mathscr{Z}_0}$ could be larger than ${\mathscr{Z}}_{\bar{M}}$. (There exist no sufficient testable conditions for IV validity in general \citep{kitagawa2015test}.) In Section \ref{sec.bias reduction binary D}, we show that even if ${\mathscr{Z}_0}$ is larger than ${\mathscr{Z}}_{\bar{M}}$, VSIV estimators yield asymptotic bias reductions relative to standard LATE estimators that do not exploit testable implications to remove invalid instrument value pairs. 

Note that if  $\mathscr{Z}_0=\varnothing$, VSIV estimation is trivial asymptotically since $\mathbb{P}(\widehat{\mathscr{Z}_0}=\varnothing)\to 1$. All the VSIV estimators converge to $0$ by the convention in \eqref{eq.0timesinfinity}. In this case, we do not report any IV estimates in practice.

\subsection{VSIV Estimation under Consistent Estimation of Validity Pair Set}
\label{sec.VSIV under consistency}

Suppose that $\mathscr{Z}_0=\mathscr{Z}_{\bar{M}}$ so that $\mathbb{P}(\widehat{\mathscr{Z}_0}=\mathscr{Z}_{\bar{M}})\to 1$, and we use $\widehat{\mathscr{Z}_{0}}$ to construct VSIV estimators for the LATEs. 
We impose the following standard regularity conditions. Let $g$ be a prespecified function that maps the value of $Z$ to $\mathbb{R}$. For example, we can simply set $g(z)=z$ for all $z$ if $Z$ is a scalar instrument.\footnote{The choice of $g$ may affect the efficiency of the VSIV estimators. We leave the formal analysis of the optimal choice of $g$ for future study.}

\begin{assumption}\label{ass.iid data binary D}
	$\{(Y_i,D_i,Z_i)\}_{i=1}^{n}$ is an i.i.d.\ sample from a population such that all relevant moments exist.
\end{assumption}

\begin{assumption}\label{ass.first stage binary D}
	For every $\mathcal{Z}_{(k,k')}\in\mathscr{Z}_{\bar{M}}$, 
	\begin{align}\label{eq.first stage}
	    E[g(Z_i)D_i|Z_i\in\mathcal{Z}_{(k,k^{\prime})}]-E[D_i|Z_i\in\mathcal{Z}_{(k,k^{\prime})}]\cdot E[g(Z_i)|Z_i\in\mathcal{Z}_{(k,k^{\prime})}]\neq0.
	\end{align}
\end{assumption}

Assumption \ref{ass.iid data binary D} assumes an i.i.d.\ data set and requires the existence of the relevant moments. Assumption \ref{ass.first stage binary D} imposes a first-stage condition for every $\mathcal{Z}_{(k,k')}\in\mathscr{Z}_{\bar{M}}$. Note that \eqref{eq.first stage} may not hold for $\mathcal{Z}_{(k,k')}\notin\mathscr{Z}_{\bar{M}}$. This creates additional technical difficulties when establishing the asymptotic normality of the VSIV estimators, which we discuss below. Assumption \ref{ass.first stage binary D} also implies that if $\mathcal{Z}_{(k,k')}\in\mathscr{Z}_{\bar{M}}$, then $\mathcal{Z}_{(k',k)}\notin\mathscr{Z}_{\bar{M}}$. Otherwise, by Definition \ref{def.partial validity pairwise binary D}, $D_{z_k}=D_{z_{k'}}$ a.s., and \eqref{eq.first stage} does not hold.
For every scalar random sample $\{\xi_i\}_{i=1}^n$ and every $\mathcal{A}\in\mathscr{Z}$, we define
\begin{align*}
	\mathcal{E}_{n}\left(  \xi_{i},\mathcal{A}\right)    =\frac{\frac{1}{n}\sum
		_{i=1}^{n}\xi_{i}1\left\{  Z_{i}\in\mathcal{A}\right\}}{\frac{1}{n}\sum_{i=1}^{n} 1\left\{  Z_{i}\in\mathcal{A}\right\}  }
	\text{ and }\mathcal{E}\left(  \xi_{i},\mathcal{A}\right)    =\frac{E\left[
		\xi_{i}1\left\{  Z_{i}\in\mathcal{A}\right\}  \right]}{E\left[
		1\left\{  Z_{i}\in\mathcal{A}\right\}  \right]}.
\end{align*}
We define the VSIV estimators using regression-based IV estimators, following \citet{imbens1994identification}. For every
$\mathcal{Z}_{(  k,k^{\prime})  }\in{\mathscr{Z}}$, we
run the IV regression
\begin{align}\label{eq.VSIV estimation pairwise binary D}
	Y_i 1\left\{  Z_{i}\in\mathcal{Z}_{(k,k^{\prime})}\right\} =&\,\gamma_{(k,k^{\prime})}^01\left\{Z_{i}\in\mathcal{Z}_{(k,k^{\prime})}\right\}+\gamma_{(k,k^{\prime})}^1D_i1\left\{Z_{i}\in\mathcal{Z}_{(k,k^{\prime})}\right\}+\epsilon_{i}1\left\{Z_{i}\in\mathcal{Z}_{(k,k^{\prime})}\right\},
\end{align}
using $g(Z_i)1\{Z_{i}\in\mathcal{Z}_{(k,k^{\prime})}\}$ as the instrument for  $D_i1\{Z_{i}\in\mathcal{Z}_{(k,k^{\prime})}\}$. Given the estimated validity set $\widehat{\mathscr{Z}_0}$, we set the VSIV estimator for each $\mathcal{Z}_{(k,k^{\prime})}$ as
\begin{align}\label{eq.VSIV estimator pairwise binary D}
	\widehat{\beta}_{(  k,k^{\prime})  }^1=1\left\{\mathcal{Z}_{(k,k^{\prime})}\in\widehat{\mathscr{Z}_0}\right\}\cdot\frac{\mathcal{E}_{n}\left(
		g\left(  Z_{i}\right)  Y_{i},\mathcal{Z}_{(k,k^{\prime})}\right)
		-\mathcal{E}_{n}\left(  g\left(  Z_{i}\right)  ,\mathcal{Z}_{(k,k^{\prime}
			)}\right)  \mathcal{E}_{n}\left(  Y_{i},\mathcal{Z}_{(k,k^{\prime})}\right)
	}{\mathcal{E}_{n}\left(  g\left(  Z_{i}\right)  D_{i},\mathcal{Z}
		_{(k,k^{\prime})}\right)  -\mathcal{E}_{n}\left(  g\left(  Z_{i}\right)
		,\mathcal{Z}_{(k,k^{\prime})}\right)  \mathcal{E}_{n}\left(  D_{i}
		,\mathcal{Z}_{(k,k^{\prime})}\right)  },
\end{align}
which is the IV estimator of $\gamma_{(k,k^{\prime})}^1$ in \eqref{eq.VSIV estimation pairwise binary D} multiplied by $1\{\mathcal{Z}_{(k,k^{\prime})}\in\widehat{\mathscr{Z}_0}\}$. For every $\mathcal{Z}_{(k,k')}\in\widehat{\mathscr{Z}}_0$, this IV estimation is equivalent to a conventional IV regression in the subsample of $\{(Y_i,D_i,Z_i)\}_{i=1}^n$ with $Z_i\in\mathcal{Z}_{(k,k')}$. Note that $\widehat{\beta}_{(  k,k^{\prime})  }^1=0$ if $\mathcal{Z}_{(k,k^{\prime})}\notin\widehat{\mathscr{Z}_0}$. We discuss this convention further below.

Define the vector of VSIV estimators as
\[
\widehat{\beta}_{1}=\left(  \widehat{\beta}_{\left(  1,2\right)  }^{1},\ldots
,\widehat{\beta}_{\left(  1,K\right)  }^{1},\ldots,\widehat{\beta}_{\left(
	K,1\right)  }^{1},\ldots,\widehat{\beta}_{\left(  K,K-1\right)  }^{1}\right)^T.
\]
We also define
\begin{align}\label{eq.VSIV true beta binary D}
	\beta_{(  k,k^{\prime})  }^{1}=1\left\{\mathcal{Z}_{(k,k^{\prime})}\in\mathscr{Z}_{\bar{M}}\right\}\cdot\frac{\mathcal{E}\left(  g\left(
		Z_{i}\right)  Y_{i},\mathcal{Z}_{(k,k^{\prime})}\right)  -\mathcal{E}\left(
		g\left(  Z_{i}\right)  ,\mathcal{Z}_{(k,k^{\prime})}\right)  \mathcal{E}
		\left(  Y_{i},\mathcal{Z}_{(k,k^{\prime})}\right)  }{\mathcal{E}\left(
		g\left(  Z_{i}\right)  D_{i},\mathcal{Z}_{(k,k^{\prime})}\right)
		-\mathcal{E}\left(  g\left(  Z_{i}\right)  ,\mathcal{Z}_{(k,k^{\prime}
			)}\right)  \mathcal{E}\left(  D_{i},\mathcal{Z}_{(k,k^{\prime})}\right)  }
\end{align}
and
\begin{align}\label{eq.VSIV true beta1 binary D}
\beta_{1}=\left(  \beta_{\left(  1,2\right)  }^{1},\ldots,\beta_{\left(
	1,K\right)  }^{1},\ldots,\beta_{\left(  K,1\right)  }^{1},\ldots
,\beta_{\left(  K,K-1\right)  }^{1}\right)^T  .
\end{align}
As we show formally in Theorem \ref{thm.IV estimator asymptotics pairwise binary D} below, $\beta_{(  k,k^{\prime})  }^{1}=\beta_{k^{\prime},k}$ as defined in \eqref{eq.beta binary D} for every $(z_k,z_{k^{\prime}})\in\mathscr{Z}_{\bar{M}}$. If $\mathcal{Z}_{(k,k^{\prime})}\notin {\mathscr{Z}_{\bar{M}}}$, we set ${\beta}_{(k,k^{\prime})}^1=0$ by \eqref{eq.VSIV true beta binary D} and \eqref{eq.0timesinfinity}. Similarly, if $\mathcal{Z}_{(k,k^{\prime})}\notin \widehat{\mathscr{Z}_0}$, $\widehat{\beta}_{(k,k^{\prime})}^1=0$ by \eqref{eq.VSIV estimator pairwise binary D} and \eqref{eq.0timesinfinity}. Letting them be equal to $0$ facilitates the description of the theoretical results in Theorem \ref{thm.IV estimator asymptotics pairwise binary D}, and this will not affect the estimation of the weighted average of LATEs.\footnote{It is equivalent to not including them into the averages.}
In practice, we recommend leaving $\widehat{\beta}_{(k,k^{\prime})}^1$ to be blank if $1\{\mathcal{Z}_{(k,k^{\prime})}\in\widehat{\mathscr{Z}_0}\}=0$, as in the application in Section \ref{sec.application}. We interpret the LATE corresponding to $\mathcal{Z}_{(k,k^{\prime})}\in {\mathscr{Z}_{\bar{M}}}$ in the usual way as the average treatment effects for compliers in the corresponding subgroup. We do not report the estimates for LATEs corresponding to invalid pairs since they are not identified and there is no information about them in the data absent additional restrictions (Appendix \ref{sec.no information}).

The next theorem establishes the asymptotic distribution of the VSIV estimator $\widehat{\beta}_{1}$, obtained based on the estimator of the instrument validity pair set $\widehat{\mathscr{Z}_0}$.

\begin{theorem}\label{thm.IV estimator asymptotics pairwise binary D}
	Suppose that the instrument $Z$ is pairwise valid for the treatment $D$ according to Definition \ref{def.partial validity pairwise binary D} with the largest validity pair set $\mathscr{Z}_{\bar{M}}=\{(z_{k_1},z_{k_1^{\prime}}),\ldots,(z_{k_{\bar{M}}},z_{k_{\bar{M}}^{\prime}})\}$, that the estimator $\widehat{\mathscr{Z}_0}$ satisfies $\mathbb{P}(\widehat{\mathscr{Z}_0}=\mathscr{Z}_{\bar{M}})\to 1$, and that Assumptions \ref{ass.iid data binary D} and \ref{ass.first stage binary D} hold. Then 
	\begin{align}\label{eq.beta asymptotic distribution}
	    \sqrt{n}( \widehat{\beta}_{1}-\beta_1 ) \overset{d}\to N\left( 0,\Sigma \right), 
	\end{align} 
	where $\Sigma$ is defined in \eqref{eq.weak convergence pairwise2} in the Appendix. In addition, $\beta_{(  k,k^{\prime})  }^{1}=\beta_{k^{\prime},k}$ as defined in \eqref{eq.beta binary D} for every $(z_k,z_{k^{\prime}})\in\mathscr{Z}_{\bar{M}}$. 
\end{theorem}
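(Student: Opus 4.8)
The plan is to establish \eqref{eq.beta asymptotic distribution} in three stages: (i) reduce to the event $\{\widehat{\mathscr{Z}_0}=\mathscr{Z}_{\bar M}\}$ using $\mathbb{P}(\widehat{\mathscr{Z}_0}=\mathscr{Z}_{\bar M})\to 1$; (ii) prove a joint central limit theorem for the vector of sample moments appearing in \eqref{eq.VSIV estimator pairwise binary D}; and (iii) apply the delta method coordinatewise, handling separately the pairs in $\mathscr{Z}_{\bar M}$ (where the first-stage denominator is nonzero by Assumption \ref{ass.first stage binary D}) and the pairs not in $\mathscr{Z}_{\bar M}$ (which contribute a degenerate zero component). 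First I would note that on the event $A_n=\{\widehat{\mathscr{Z}_0}=\mathscr{Z}_{\bar M}\}$ the random indicator $1\{\mathcal{Z}_{(k,k')}\in\widehat{\mathscr{Z}_0}\}$ in \eqref{eq.VSIV estimator pairwise binary D} equals the deterministic indicator $1\{\mathcal{Z}_{(k,k')}\in\mathscr{Z}_{\bar M}\}$ used in \eqref{eq.VSIV true beta binary D}; since $\mathbb{P}(A_n)\to1$, by a standard argument (e.g., Slutsky together with the fact that $\sqrt n(\widehat\beta_1-\beta_1)1_{A_n^c}\to 0$ in probability) it suffices to derive the limiting distribution of the ``oracle'' estimator that uses $\mathscr{Z}_{\bar M}$ in place of $\widehat{\mathscr{Z}_0}$.

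For the oracle estimator, I would stack, over all pairs $\mathcal{Z}_{(k,k')}\in\mathscr{Z}$, the sample averages $\frac1n\sum_i \psi_i$ where $\psi_i$ collects the coordinates $g(Z_i)Y_i 1\{Z_i\in\mathcal{Z}_{(k,k')}\}$, $g(Z_i)1\{Z_i\in\mathcal{Z}_{(k,k')}\}$, $Y_i1\{Z_i\in\mathcal{Z}_{(k,k')}\}$, $g(Z_i)D_i1\{Z_i\in\mathcal{Z}_{(k,k')}\}$, $D_i1\{Z_i\in\mathcal{Z}_{(k,k')}\}$, and $1\{Z_i\in\mathcal{Z}_{(k,k')}\}$. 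Under Assumption \ref{ass.iid data binary D} these have finite second moments, so the multivariate Lindeberg--L\'evy CLT gives $\sqrt n$-asymptotic normality of this stacked vector around its population mean with some covariance matrix. Each $\mathcal{E}_n(\xi_i,\mathcal{Z}_{(k,k')})$ is a smooth (ratio) function of these averages, with nonzero denominator $E[1\{Z_i\in\mathcal{Z}_{(k,k')}\}]$ in the limit (instrument values have positive probability), and each $\widehat\beta^1_{(k,k')}$ for $(k,k')$ with $\mathcal{Z}_{(k,k')}\in\mathscr{Z}_{\bar M}$ is in turn a smooth function of the $\mathcal{E}_n$'s whose own denominator converges to the left-hand side of \eqref{eq.first stage}, which is nonzero by Assumption \ref{ass.first stage binary D}. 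Hence the delta method applies to those coordinates and yields joint asymptotic normality; $\Sigma$ is the resulting sandwich expression, matching \eqref{eq.weak convergence pairwise2}. For a pair with $\mathcal{Z}_{(k,k')}\notin\mathscr{Z}_{\bar M}$, the deterministic indicator is zero, so $\widehat\beta^1_{(k,k')}=\beta^1_{(k,k')}=0$ identically (using the convention \eqref{eq.0timesinfinity} if needed), and the corresponding row and column of $\Sigma$ are zero; these coordinates are included just so that $\widehat\beta_1$ and $\beta_1$ are vectors of the same fixed length $K(K-1)$. Finally, the identification claim $\beta^1_{(k,k')}=\beta_{k',k}$ for $(z_k,z_{k'})\in\mathscr{Z}_{\bar M}$ is immediate from Lemma \ref{lemma.pairwise beta binary D}: conditioning on $Z_i\in\mathcal{Z}_{(k,k')}=\{z_k,z_{k'}\}$ and using that $g$ takes (at most) two values on this set, the two-valued IV/Wald formula reduces $\beta^1_{(k,k')}$ to the ratio in \eqref{eq.beta binary D}, independently of the choice of $g$.

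The main obstacle is the interplay between the random set $\widehat{\mathscr{Z}_0}$ and the first-stage condition: Assumption \ref{ass.first stage binary D} only guarantees a nonzero denominator for pairs in $\mathscr{Z}_{\bar M}$, and \eqref{eq.first stage} may genuinely fail (denominator equal to zero) for other pairs, so one cannot uniformly apply the delta method to all $K(K-1)$ ratios before restricting to the event $A_n$. The clean way around this is exactly the reduction in step (i): because $\mathbb{P}(\widehat{\mathscr{Z}_0}=\mathscr{Z}_{\bar M})\to1$, with probability tending to one the indicator kills every problematic pair before its ill-behaved ratio is ever formed, so the delta method is only ever invoked on the ``good'' coordinates. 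One should state this carefully — e.g., work on $A_n$, show $\sqrt n(\widehat\beta_1-\beta_1)$ agrees there with the oracle statistic, and conclude via the fact that convergence in distribution is unaffected by modification on events of vanishing probability — rather than trying to differentiate through the indicators. A secondary, purely bookkeeping point is assembling $\Sigma$ from the stacked-moment covariance via the chain rule through the two layers of ratios ($\mathcal{E}_n$ and then $\widehat\beta^1$); this is routine given Assumptions \ref{ass.iid data binary D}–\ref{ass.first stage binary D} and I would relegate the explicit Jacobian computation to the Appendix, referencing \eqref{eq.weak convergence pairwise2}.
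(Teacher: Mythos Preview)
Your proposal is correct and covers all the needed ingredients. The route differs from the paper's chiefly in how the ``bad'' pairs $\mathcal{Z}_{(k,k')}\notin\mathscr{Z}_{\bar M}$ are handled. You reduce to the oracle estimator on the event $A_n=\{\widehat{\mathscr{Z}_0}=\mathscr{Z}_{\bar M}\}$ and observe that for the oracle the bad-pair coordinates are identically zero, so the delta method is never invoked there and the possibly ill-defined ratios are simply bypassed. The paper instead keeps the random indicator throughout and, for bad pairs, analyzes the product $\sqrt{n}\,1\{\mathcal{Z}_{(k,k')}\in\widehat{\mathscr{Z}_0}\}\cdot f(\widehat W_n(\mathcal{Z}_{(k,k')}))$ directly: from $\mathbb{P}(\widehat{\mathscr{Z}_0}=\mathscr{Z}_{\bar M})\to 1$ it deduces $n^\rho\,1\{\mathcal{Z}_{(k,k')}\in\widehat{\mathscr{Z}_0}\}=o_p(1)$ for every $\rho\ge 0$, then rewrites the product as $n\,1\{\cdot\}\cdot A_n/(\sqrt{n}B_n)$ and shows it is $o_p(1)$ even when the population first stage in \eqref{eq.first stage} vanishes (so that $\sqrt{n}B_n=O_p(1)$ rather than diverging). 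Your reduction is shorter and sidesteps this rate-balancing entirely; the paper's argument is more explicit about \emph{why} the failure of Assumption~\ref{ass.first stage binary D} off $\mathscr{Z}_{\bar M}$ does no harm, which is precisely the point highlighted in the discussion following the theorem. For the identification claim $\beta^1_{(k,k')}=\beta_{k',k}$, the paper carries out the two-value Wald reduction by direct algebra---expanding the conditional covariances over $\{z_k,z_{k'}\}$ and using the centering identity $\sum_{z\in\{z_k,z_{k'}\}}\mathbb{P}(Z=z\mid Z\in\mathcal{Z}_{(k,k')})\{g(z)-E[g(Z)\mid Z\in\mathcal{Z}_{(k,k')}]\}=0$---which is exactly your ``two-valued IV/Wald formula'' argument written out in full.
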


	Theorem \ref{thm.IV estimator asymptotics pairwise binary D} establishes the joint asymptotic normality of the VSIV estimator of the LATEs. Establishing the asymptotic distribution in \eqref{eq.beta asymptotic distribution} requires a careful treatment of the case where the first-stage Assumption \ref{ass.first stage binary D} does not hold for some pairs of instrument  values $\mathcal{Z}_{(k,k')}$ that are not in the largest validity pair set $\mathscr{Z}_{\bar{M}}$, that is, $\mathcal{Z}_{(k,k')}\notin\mathscr{Z}_{\bar{M}}$. Specifically, we show that in this case, $\mathbb{P}(\widehat{\mathscr{Z}_0}=\mathscr{Z}_{\bar{M}})\to 1$ implies that, if $\mathcal{Z}_{(k,k')}\notin\mathscr{Z}_{\bar{M}}$, then for every $\rho>0$, $n^{\rho}1\{\mathcal{Z}_{(k,k')}\in\widehat{\mathscr{Z}_0}\}=o_p(1)$. This guarantees the convergence in \eqref{eq.beta asymptotic distribution} even when \eqref{eq.first stage} does not hold for $\mathcal{Z}_{(k,k')}\notin\mathscr{Z}_{\bar{M}}$. The asymptotic covariance matrix $\Sigma$ defined in the Appendix can be consistently estimated under standard conditions. Importantly, the estimation of the instrument validity pair set does not affect the asymptotic covariance matrix such that standard inference methods can be applied. 
	
Under Theorem \ref{thm.IV estimator asymptotics pairwise binary D}, it is straightforward to obtain a consistent estimator of a weighted average of all ${\beta}_{(k,k')}^1$. Let $\widehat{W}$ be some estimated weights with
 \begin{align*}
     \widehat{W}=(\widehat{w}_{(1,2)},\ldots,\widehat{w}_{(1,K)},\ldots,\widehat{w}_{(K,1)},\ldots,\widehat{w}_{(K,K-1)})^T
 \end{align*}
such that $\widehat{W}\overset{p}\rightarrow W$ for some $W$ with 
 \begin{align*}
     {W}=({w}_{(1,2)},\ldots,{w}_{(1,K)},\ldots,{w}_{(K,1)},\ldots,{w}_{(K,K-1)})^T.
 \end{align*}
Theorem \ref{thm.IV estimator asymptotics pairwise binary D} implies that $\widehat{W}^T\widehat{\beta}_1\overset{p}\rightarrow W^T{\beta}_1$.
 To establish the asymptotic distribution of the estimated weighted average of all $\beta^1_{(k,k')}$, we assume that $(\widehat{W}^T,\widehat{\beta}_1^T)^T$ is asymptotically normal. This is a weak condition in practice. It holds, for example, if the weights are defined as in the weighted average in \eqref{eq.weighted average}. 
The following corollary summarizes the result.
\begin{corollary}\label{corollary.weighted average weak convergence}
Suppose $\sqrt{n}\{(\widehat{W}^T,\widehat{\beta}_1^T)^T-({W}^T,{\beta}_1^T)^T\}\overset{d}\to N(0,\Sigma_W)$ for some matrix $\Sigma_W$. Then it follows that 
\begin{align}\label{eq.beta asymptotic distribution weighted average}
	    \sqrt{n}(\widehat{W}^T \widehat{\beta}_{1}-{W}^T\beta_1 ) \overset{d}\to N\left( 0,(\beta_1^T,W^T)\Sigma_{W} (\beta_1^T,W^T)^T \right). 
	\end{align} 
\end{corollary}
 
	The LATE $\beta_{k',k}$ is not identified if $\mathcal{Z}_{(k,k')}\notin\mathscr{Z}_{\bar{M}}$. Let $\beta_{1S}=(\beta_{(\kappa_{1},\kappa_{1}^{\prime})}^{1},\ldots,\beta
_{(\kappa_{S},\kappa_{S}^{\prime})}^{1})^{T}$ for some $S>0$. In our context, it is interesting to test hypotheses about $\beta^1_{(k,k')}$ with $\mathcal{Z}_{(k,k')}\in\mathscr{Z}_{\bar{M}}$ ($\beta^1_{(k,k')}=\beta_{k',k}$ by Theorem \ref{thm.IV estimator asymptotics pairwise binary D}):
	\begin{align}
	    \mathrm{H}_{0}:\mathcal{Z}_{(  \kappa_{1},\kappa_{1}^{\prime})  }%
\in\mathscr{Z}_{\bar{M}},\ldots,\mathcal{Z}_{(  \kappa_{S},\kappa_{S}^{\prime
})  }\in\mathscr{Z}_{\bar{M}},~R\left(  \beta_{1S}\right)  =0,
	\end{align}
where $R$ is a (possibly nonlinear) smooth $r$-dimensional function. Let $R^{\prime
}(\beta_{S})$ be the $r\times S$ matrix of the continuous first derivative functions of $R$ at an arbitrary
value $\beta_{S}$, that is, $R^{\prime}(\beta_{S})=\partial R\left(  \beta
_{S}\right)  /\partial\beta_{S}^{T}$. Let $\mathcal{I}_{S}$ be a $S\times
\left(  K-1\right)  K$ matrix such that
\[
\mathcal{I}_{S}\beta=(\beta_{(\kappa_{1},\kappa_{1}^{\prime})},\ldots
,\beta_{(\kappa_{S},\kappa_{S}^{\prime})})^{T}
\]
for every  $\beta=(\beta_{\left(
1,2\right)  },\ldots,\beta_{\left(  1,K\right)  },\ldots,\beta_{\left(
K,1\right)  },\ldots,\beta_{\left(  K,K-1\right)  })^{T}$.
Theorem \ref{thm.IV estimator asymptotics pairwise binary D} implies that
\[
\sqrt{n}(  \widehat{\beta}_{1S}-\beta_{1S})  =\sqrt
{n}\mathcal{I}_{S}(  \widehat{\beta}_{1}-\beta_{1})  \overset
{d}{\rightarrow}N\left(  0,\Sigma_{S}\right)  ,
\]
where  $\Sigma_{S}=\mathcal{I}_{S}\Sigma\mathcal{I}_{S}^{T}$, so that by the delta method, we obtain
\[
\sqrt{n}\left\{  R(  \widehat{\beta}_{1S})  -R\left(  \beta
_{1S}\right)  \right\}  \overset{d}{\rightarrow}N\left(  0,R^{\prime
}\left(  \beta_{1S}\right)  \Sigma_{S}R^{\prime}\left(  \beta_{1S}%
\right)  ^{T}\right)  .
\]
We construct the test statistics as%
\[
TS_{1n}=\prod_{s=1}^{S}1\left\{  \mathcal{Z}_{\left(  \kappa_{s},\kappa
_{s}^{\prime}\right)  }\in\widehat{\mathscr{Z}_{0}}\right\}
\]
and
\begin{align}\label{eq.TS2_1}
TS_{2n}=&\,\sqrt{n}R( \widehat{\beta}_{1S})  ^{T}\left\{
R^{\prime}(  \widehat{\beta}_{1S})  \mathcal{I}_{S}
\widehat{\Sigma}\mathcal{I}_{S}^{T}R^{\prime}(  \widehat{\beta}_{1S}
)  ^{T}\right\}  ^{-1}\sqrt{n}R( \widehat{\beta}_{1S})  ,
\end{align}
where $\widehat{\Sigma}$ is a consistent estimator of $\Sigma$, which can be constructed based on the formula in \eqref{eq.weak convergence pairwise2}.
Suppose that Assumptions \ref{ass.iid data binary D} and \ref{ass.first stage binary D} hold and $\mathbb{P}(\widehat{\mathscr{Z}_0}=\mathscr{Z}_{\bar{M}})\to 1$. If $\mathrm{H}_{0}$ is true and $R'(\beta_{1S})$ is of full row rank, then it follows from standard arguments that
$TS_{2n}\overset{d}{\rightarrow}\chi^2_{r}$, where  $\chi^2_{r}$ denotes the  chi-square distribution with  $r$ degrees of freedom. The decision rule of the test is to reject
$\mathrm{H}_{0}$ if $TS_{1n}=0$ or $TS_{2n}>c_{r}(\alpha)$, where $c_{r}(\alpha)$ satisfies
$\mathbb{P}(\chi^2_{r}>c_{r}(\alpha))=\alpha$ for some predetermined $\alpha\in(0,1)$. The following proposition establishes the formal properties of the proposed test.

\begin{proposition}\label{prop.test}
Suppose that Assumptions \ref{ass.iid data binary D} and \ref{ass.first stage binary D} hold and $\mathbb{P}(\widehat{\mathscr{Z}_0}=\mathscr{Z}_{\bar{M}})\to 1$.
\begin{enumerate}[label=(\roman*)]
    \item If $\mathrm{H}_{0}$ is true,
$\mathbb{P}\left(  \left\{  TS_{1n}=0\right\}  \cup\left\{  TS_{2n}%
>c_{r}(\alpha)\right\}  \right)  \rightarrow\alpha$.

    \item If $\mathrm{H}_{0}$ is false,
$\mathbb{P}\left(  \left\{  TS_{1n}=0\right\}  \cup\left\{  TS_{2n}>c_{r}(\alpha)\right\}  \right)  \rightarrow1$.
\end{enumerate}
   
\end{proposition}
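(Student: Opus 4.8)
The plan is to decompose the rejection event $\{TS_{1n}=0\}\cup\{TS_{2n}>c_r(\alpha)\}$ and control its pieces using three facts: the consistency $\mathbb{P}(\widehat{\mathscr{Z}_0}=\mathscr{Z}_{\bar{M}})\to 1$; the joint asymptotic normality from Theorem \ref{thm.IV estimator asymptotics pairwise binary D}, which via $\mathcal{I}_S$ already gives $\sqrt{n}(\widehat{\beta}_{1S}-\beta_{1S})\overset{d}{\to}N(0,\Sigma_S)$ with $\Sigma_S=\mathcal{I}_S\Sigma\mathcal{I}_S^T$; and the consistency of $\widehat{\Sigma}$ together with the continuity of $R$, $R'$, and the $\chi^2_r$ distribution function. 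I will handle $TS_{1n}$ and $TS_{2n}$ separately and then recombine.

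For part (i), I would first note that under $\mathrm{H}_0$ every $\mathcal{Z}_{(\kappa_s,\kappa_s')}$ lies in $\mathscr{Z}_{\bar{M}}$, so $\{\widehat{\mathscr{Z}_0}=\mathscr{Z}_{\bar{M}}\}\subset\{TS_{1n}=1\}$ and hence $\mathbb{P}(TS_{1n}=0)\to 0$. Consequently $\mathbb{P}(\{TS_{1n}=0\}\cup\{TS_{2n}>c_r(\alpha)\})=\mathbb{P}(TS_{2n}>c_r(\alpha))+o(1)$, since $\mathbb{P}(\{TS_{1n}=0\}\cap\{TS_{2n}>c_r(\alpha)\})\le\mathbb{P}(TS_{1n}=0)\to 0$. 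Next I would show $TS_{2n}\overset{d}{\to}\chi^2_r$: Theorem \ref{thm.IV estimator asymptotics pairwise binary D} gives the CLT for $\widehat{\beta}_{1S}$ (and identifies each coordinate of $\beta_{1S}$ with a LATE, as all $\mathcal{Z}_{(\kappa_s,\kappa_s')}\in\mathscr{Z}_{\bar{M}}$), the delta method with $R(\beta_{1S})=0$ yields $\sqrt{n}R(\widehat{\beta}_{1S})\overset{d}{\to}N(0,R'(\beta_{1S})\Sigma_S R'(\beta_{1S})^T)$, and the consistency of $\widehat{\Sigma}$, continuity of $R'$, and the continuous mapping theorem give $R'(\widehat{\beta}_{1S})\mathcal{I}_S\widehat{\Sigma}\mathcal{I}_S^T R'(\widehat{\beta}_{1S})^T\overset{p}{\to}R'(\beta_{1S})\Sigma_S R'(\beta_{1S})^T$; the latter is nonsingular because $R'(\beta_{1S})$ has full row rank and $\Sigma_S$ is nonsingular under the maintained regularity conditions, so Slutsky's theorem delivers the $\chi^2_r$ limit. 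Continuity of the $\chi^2_r$ distribution function at $c_r(\alpha)$ then gives $\mathbb{P}(TS_{2n}>c_r(\alpha))\to\alpha$, completing (i).

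For part (ii), I would distinguish the two ways $\mathrm{H}_0$ can fail. If some $\mathcal{Z}_{(\kappa_{s_0},\kappa_{s_0}')}\notin\mathscr{Z}_{\bar{M}}$, then consistency of $\widehat{\mathscr{Z}_0}$ forces $\mathbb{P}(\mathcal{Z}_{(\kappa_{s_0},\kappa_{s_0}')}\in\widehat{\mathscr{Z}_0})\to 0$, hence $\mathbb{P}(TS_{1n}=0)\to 1$ and the rejection probability tends to one. Otherwise all $\mathcal{Z}_{(\kappa_s,\kappa_s')}\in\mathscr{Z}_{\bar{M}}$ but $R(\beta_{1S})\ne 0$; then Theorem \ref{thm.IV estimator asymptotics pairwise binary D} gives $\widehat{\beta}_{1S}\overset{p}{\to}\beta_{1S}$, so $R(\widehat{\beta}_{1S})\overset{p}{\to}R(\beta_{1S})\ne 0$ while $R'(\widehat{\beta}_{1S})\mathcal{I}_S\widehat{\Sigma}\mathcal{I}_S^T R'(\widehat{\beta}_{1S})^T$ converges in probability to the fixed nonsingular matrix $R'(\beta_{1S})\Sigma_S R'(\beta_{1S})^T$. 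Writing $TS_{2n}=n\cdot R(\widehat{\beta}_{1S})^T M_n R(\widehat{\beta}_{1S})$ with $M_n\overset{p}{\to}M\succ 0$, the quadratic form is bounded away from zero with probability tending to one, so $TS_{2n}\overset{p}{\to}\infty$ and $\mathbb{P}(TS_{2n}>c_r(\alpha))\to 1$; the rejection probability again tends to one. This proves (ii).

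The step I expect to require the most care is making rigorous the interplay between the random selector $\widehat{\mathscr{Z}_0}$ and the limit theory — i.e.\ checking that the selection indicators inside $\widehat{\beta}_{1S}$ do not disturb the CLT and delta-method argument. That work, however, is already done inside Theorem \ref{thm.IV estimator asymptotics pairwise binary D}, whose proof handles $\mathcal{Z}_{(k,k')}\notin\mathscr{Z}_{\bar{M}}$ via $n^{\rho}1\{\mathcal{Z}_{(k,k')}\in\widehat{\mathscr{Z}_0}\}=o_p(1)$, so in the present proof the only genuine steps are the event decomposition, one application of Slutsky's theorem and the continuous mapping theorem, and the divergence of the Wald statistic under the alternative — all routine. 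A small point worth recording is nonsingularity of $R'(\beta_{1S})\Sigma_S R'(\beta_{1S})^T$, which uses the full-row-rank condition on $R'(\beta_{1S})$ and is invoked both for the $\chi^2_r$ limit in (i) and for the divergence in (ii).
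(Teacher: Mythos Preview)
Your proposal is correct and follows essentially the same approach as the paper: the same event decomposition for (i) (your inclusion--exclusion is equivalent to the paper's sandwich bound $\mathbb{P}(TS_{2n}>c_r(\alpha))\le\mathbb{P}(\{TS_{1n}=0\}\cup\{TS_{2n}>c_r(\alpha)\})\le\mathbb{P}(TS_{2n}>c_r(\alpha))+\mathbb{P}(\widehat{\mathscr{Z}_0}\neq\mathscr{Z}_{\bar{M}})$), and the same two-case split for (ii). You spell out more of the Slutsky/delta-method details behind $TS_{2n}\overset{d}{\to}\chi^2_r$, which the paper states just above the proposition and attributes to ``standard arguments.''
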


\subsection{Asymptotic Bias Reduction under VSIV Estimation}\label{sec.bias reduction binary D}
In Section \ref{sec.VSIV under consistency}, we show that if the estimator of the largest validity pair set is consistent, $ \mathbb{P}(\widehat{\mathscr{Z}_0}=\mathscr{Z}_{\bar{M}})\to 1$, the VSIV estimators are consistent and asymptotically normal under weak conditions. However, since ${\mathscr{Z}_0}$ is constructed based on necessary (but not necessarily sufficient) conditions for IV validity, we have $ \mathbb{P}(\widehat{\mathscr{Z}_0}=\mathscr{Z}_0)\to 1$ in general, where the \emph{pseudo-validity pair set} $\mathscr{Z}_0$ could be larger than $\mathscr{Z}_{\bar{M}}$. In this case, VSIV estimators may not be asymptotically unbiased. Consider an arbitrary presumed validity pair set $\mathscr{Z}_P$, which could incorporate prior information. If no prior information is available, we set $\mathscr{Z}_P=\mathscr{Z}$. Here we show even if ${\mathscr{Z}_0}$ is larger than $\mathscr{Z}_{\bar{M}}$, VSIV estimators based on $\widehat{\mathscr{Z}'_0}=\widehat{\mathscr{Z}_0}\cap \mathscr{Z}_P$ have weakly lower asymptotic biases than standard LATE estimators based on $\mathscr{Z}_P$. Intuitively, VSIV estimators use the information in the data about IV validity to reduce the asymptotic bias.

Since our target parameter is the vector $\beta_1$, a natural definition of the asymptotic bias is as follows.
\begin{definition}\label{def.bias}
	The asymptotic bias of an arbitrary estimator $\tilde{\beta}_1$ for the true value $\beta_1$ defined in \eqref{eq.VSIV true beta1 binary D} is defined as $\mathrm{plim}_{n\rightarrow\infty}\Vert \tilde{\beta}_1 -\beta_1\Vert_2$, where $\Vert\cdot\Vert_2$ is the $\ell^2$-norm on Euclidean spaces. 
\end{definition}

The next assumption extends Assumption \ref{ass.first stage binary D} to $\mathscr{Z}_0$.
\begin{assumption}\label{ass.first stage binary D bias}
	For every $\mathcal{Z}_{(k,k')}\in\mathscr{Z}_{0}$, 
	\begin{align}\label{eq.first stage bias}
	    E[g(Z_i)D_i|Z_i\in\mathcal{Z}_{(k,k^{\prime})}]-E[D_i|Z_i\in\mathcal{Z}_{(k,k^{\prime})}]\cdot E[g(Z_i)|Z_i\in\mathcal{Z}_{(k,k^{\prime})}]\neq0.
	\end{align}
\end{assumption}

 The following theorem shows that the VSIV estimators based on $\widehat{\mathscr{Z}'_0}$ exhibit a smaller asymptotic bias than standard LATE estimators based on $\mathscr{Z}_P$.  

\begin{theorem}\label{thm.bias reduction binary D}
	Suppose that Assumptions \ref{ass.iid data binary D} and \ref{ass.first stage binary D bias} hold and that $\mathbb{P}(\widehat{\mathscr{Z}_0}=\mathscr{Z}_0)\to 1$ with $\mathscr{Z}_0\supseteq\mathscr{Z}_{\bar{M}}$. For every presumed validity pair set $\mathscr{Z}_P$, the asymptotic bias of $\widehat{\beta}_1$ is reduced by using $\widehat{\mathscr{Z}'_0}$  in the estimation \eqref{eq.VSIV estimator pairwise binary D} compared to the asymptotic bias from using $\mathscr{Z}_P$.
\end{theorem}

As shown in Proposition \ref{prop.consistent G hat pairwise Z1 binary D} below, the pseudo-validity pair set $\mathscr{Z}_0$ can be estimated consistently by $\widehat{\mathscr{Z}_0}$ under mild conditions. Compared to constructing standard IV estimators based on $\mathscr{Z}_P$, Theorem \ref{thm.bias reduction binary D} shows that the asymptotic bias, $\mathrm{plim}_{n\rightarrow\infty}\Vert \widehat{\beta}_1-\beta_1\Vert_2$, can be reduced by using VSIV estimators based on $\widehat{\mathscr{Z}'_0}=\widehat{\mathscr{Z}_0}\cap\mathscr{Z}_P$.

The arguments used for establishing the asymptotic normality of the VSIV estimators in Section \ref{sec.VSIV under consistency} do not rely on the consistent estimation of $\mathscr{Z}_{\bar{M}}$ ($\mathbb{P}(\widehat{\mathscr{Z}_0}=\mathscr{Z}_{\bar{M}})\to 1$). If $\mathbb{P}(\widehat{\mathscr{Z}_0}=\mathscr{Z}_{0})\to 1$ with $\mathscr{Z}_{\bar{M}}\subsetneq\mathscr{Z}_{0}$, the VSIV estimators are asymptotically normal, centered at $\beta_1$ defined with $\mathscr{Z}_0$ instead of $\mathscr{Z}_{\bar{M}}$. However, note that $\beta_1$ can only be interpreted as a vector of LATEs under consistent estimation ($\mathbb{P}(\widehat{\mathscr{Z}_0}=\mathscr{Z}_{\bar{M}})\to 1$).

\begin{example}[Asymptotic Bias Reduction under VSIV Estimation] Consider a simple example\\ where $\mathcal{Z}=\{1,2,3,4\}$ as in our application and suppose that $\mathscr{Z}_{\bar{M}}=\{(1,2)\}$. In this case, by \eqref{eq.VSIV true beta binary D} and \eqref{eq.0timesinfinity}, 
\begin{align*}
	\beta_{1}=\left(  \beta_{\left(  1,2\right)  }^{1},\ldots,\beta_{\left(
		1,4\right)  }^{1},\ldots,\beta_{\left(  4,1\right)  }^{1},\ldots
	,\beta_{\left(  4,3\right)  }^{1}\right)^T=  \left(  \beta_{\left(  1,2\right)  }^{1},0,\ldots,0\right)^T.
\end{align*}
Suppose that, by mistake, we assume $Z$ is valid according to Assumption \ref{ass.IV validity binary D}  and use 
\begin{align*}
	\mathscr{Z}_P=\{(1,2),(1,3),(1,4),(2,3),(2,4),(3,4)\}	
\end{align*}
as an estimator for $\mathscr{Z}_{\bar{M}}$. Then by \eqref{eq.VSIV estimator pairwise binary D} and \eqref{eq.0timesinfinity},
\begin{align}\label{eq.beta1_hat example1 binary D}
	\widehat{\beta}_{1}=\left(  \widehat{\beta}_{\left(  1,2\right)  }^{1},\widehat{\beta}_{\left(  1,3\right)  }^{1}, \widehat{\beta}_{\left(  1,4\right)  }^{1},\widehat{\beta}_{\left(  2,3\right)  }^{1},\widehat{\beta}_{\left(  2,4\right)  }^{1},\widehat{\beta}_{\left(  3,4\right)  }^{1},0,0,0,0,0,0\right)^T,
\end{align}
where $\widehat{\beta}_{(1,3)}^1$, $\widehat{\beta}_{(1,4)}^1$, $\widehat{\beta}_{(2,3)}^1$, $\widehat{\beta}_{(2,4)}^1$, and $\widehat{\beta}_{(3,4)}^1$ may not converge to $0$ in probability. However, by definition ${\beta}_{(1,3)}^1=0$, ${\beta}_{(1,4)}^1=0$, ${\beta}_{(2,3)}^1=0$, ${\beta}_{(2,4)}^1=0$, and ${\beta}_{(3,4)}^1=0$. Thus, $\Vert \widehat{\beta}_1-\beta_1\Vert_2$ may not converge to $0$ in probability. The approach proposed in this paper helps reduce this asymptotic bias. We exploit the information in the data about IV validity to obtain the estimator $\widehat{\mathscr{Z}_0}$. Even if $\widehat{\mathscr{Z}_0}$ converges to a set larger than $\mathscr{Z}_{\bar{M}}$ (because we use the necessary but not sufficient conditions for IV validity), VSIV always reduces the asymptotic bias. Suppose that $\mathscr{Z}_0=\{(1,2),(3,4)\}$, which is larger than $\mathscr{Z}_{\bar{M}}$ but smaller than $\mathscr{Z}_P$. In this case, the VSIV estimator $\widehat{\beta}_1$ constructed by using $\widehat{\mathscr{Z}_0}\cap\mathscr{Z}_P$ converges in probability to
\begin{align}\label{eq.beta1_hat example2 binary D}
	{\beta}'_{1}=\left(  {\beta}_{\left(  1,2\right)  }^{1},0, 0,0,0,{\beta}^{1\prime}_{\left(  3,4\right)  },0,0,0,0,0,0\right)^T,
\end{align}
where ${\beta}^{1\prime}_{\left(  3,4\right)  }$ is the probability limit of $\widehat{\beta}^1_{\left(  3,4\right)  }$.
Then, clearly, VSIV reduces the probability limit of $\Vert \widehat{\beta}_1-\beta_1\Vert_2 $. 
\end{example}

\subsection{Partially Valid Instruments and Connection to Existing Results}
\label{sec.partially valid instruments binary D}

Suppose we estimate the following canonical IV regression model,
\begin{align}\label{eq.model equation binary D}
	Y_i=\alpha_{0}+\alpha_{1}D_i+\epsilon_{i},
\end{align}
using $g(Z_i)$ as the instrument for $D_i$. 
When the instrument $Z$ is fully valid, the traditional IV estimator of $\alpha_1$ is 
\begin{align}\label{eq.alpha_hat}
\widehat{\alpha}_{1}=\frac{n\sum_{i=1}^{n}g\left(  Z_{i}\right)  Y_{i}-\sum
	_{i=1}^{n}g\left(  Z_{i}\right)  \sum_{i=1}^{n}Y_{i}}{n\sum_{i=1}^{n}g\left(
	Z_{i}\right)  D_{i}-\sum_{i=1}^{n}g\left(  Z_{i}\right)  \sum_{i=1}^{n}D_{i}}.
\end{align}
The asymptotic properties of $\widehat{\alpha}_1$ can be found in \citet[p.~471]{imbens1994identification} and \citet[p.~436]{angrist1995two}.

To connect VSIV estimation to canonical IV regression with fully valid instruments, consider the following special case of pairwise IV validity.

\begin{definition}\label{def.partially validy instrument binary D}
	Suppose that the instrument $Z$ is pairwise valid for the treatment $D$ with the largest validity pair set $\mathscr{Z}_{\bar{M}}$. If there is a validity pair set $$\mathscr{Z}_{{M}}=\{(z_{k_1},z_{k_2}),(z_{k_2},z_{k_3}),\ldots,(z_{k_{M-1}},z_{k_M})\}$$ for some $M>0$, then the instrument $Z$ is called a \textbf{partially valid instrument} for the treatment $D$. The set $\mathcal{Z}_M=\{z_{k_1},\ldots,z_{k_M}\}$ is called a \textbf{validity value set} of $Z$.
\end{definition}

\begin{assumption}\label{ass.first stage binary D partial}
	The validity value set $\mathcal{Z}_M$ satisfies that 
	\begin{align}
	    E[g(Z_i)D_i|Z_i\in\mathcal{Z}_{M}]-E[D_i|Z_i\in\mathcal{Z}_{M}]\cdot E[g(Z_i)|Z_i\in\mathcal{Z}_{M}]\neq0.
	\end{align}
\end{assumption}

Suppose that $Z$ is partially valid for the treatment $D$ with a validity value set $\mathcal{Z}_M$ and that there is a consistent estimator $\widehat{\mathcal{Z}_{0}}$ of $\mathcal{Z}_M$. We then construct a VSIV estimator for $\alpha_1$ in \eqref{eq.model equation binary D} by estimating the model
\begin{align}\label{eq.VSIV estimation binary D}
	Y_i1\left\{
	Z_{i}\in\widehat{\mathcal{Z}_{0}}\right\} =\gamma_{0}1\left\{
	Z_{i}\in\widehat{\mathcal{Z}_{0}}\right\}+\gamma_{1}D_i1\left\{
	Z_{i}\in\widehat{\mathcal{Z}_{0}}\right\} +\epsilon_{i}1\left\{
	Z_{i}\in\widehat{\mathcal{Z}_{0}}\right\},
\end{align}
using $g(Z_i)1\{
Z_{i}\in\widehat{\mathcal{Z}_{0}}\}$ as the instrument for $D_i1\{
Z_{i}\in\widehat{\mathcal{Z}_{0}}\}$. We obtain the VSIV estimator for $\alpha_1$ in \eqref{eq.model equation binary D} by
\begin{align}\label{eq.VSIV estimator binary D}
	\widehat{\theta}_{1}=\frac{n_z\sum_{i=1}^{n}g\left(  Z_{i}\right)  Y_{i}1\left\{
		Z_{i}\in\widehat{\mathcal{Z}_{0}}\right\}  -\sum_{i=1}^{n}g\left(
		Z_{i}\right)  1\left\{  Z_{i}\in\widehat{\mathcal{Z}_{0}}\right\}  \sum
		_{i=1}^{n}Y_{i}1\left\{  Z_{i}\in\widehat{\mathcal{Z}_{0}}\right\}  }
	{n_z\sum_{i=1}^{n}g\left(  Z_{i}\right)  D_{i}1\left\{  Z_{i}\in\widehat
		{\mathcal{Z}_{0}}\right\}  -\sum_{i=1}^{n}g\left(  Z_{i}\right)  1\left\{
		Z_{i}\in\widehat{\mathcal{Z}_{0}}\right\}  \sum_{i=1}^{n}D_{i}1\left\{
		Z_{i}\in\widehat{\mathcal{Z}_{0}}\right\}  },
\end{align}
where $n_z=\sum_{i=1}^n1\{Z_{i}\in\widehat{\mathcal{Z}_{0}}\}$. We can see that $\widehat{\theta}_1$ is a generalized version of $\widehat{\alpha}_1$ in \eqref{eq.alpha_hat}, because when the instrument is fully valid, we can just let $\widehat{\mathcal{Z}_{0}}=\mathcal{Z}$ and then $\widehat{\theta}_1=\widehat{\alpha}_1$. 

\begin{theorem}\label{thm.IV estimator asymptotics binary D}
	Suppose that the instrument $Z$ is partially valid for the treatment $D$ according to Definition \ref{def.partially validy instrument binary D} with a validity value set $\mathcal{Z}_M=\{z_{k_1},\dots,z_{k_M} \}$, and that the estimator $\widehat{\mathcal{Z}_0}$ for $\mathcal{Z}_M$ satisfies  $\mathbb{P}(\widehat{\mathcal{Z}_{0}}=\mathcal{Z}_{M})\rightarrow 1$. Under Assumptions \ref{ass.iid data binary D} and \ref{ass.first stage binary D partial}, it follows that $\widehat{\theta}_{1}\overset{p}\rightarrow\theta_1 $, where 
	\begin{align*}
		\theta_{1}=\frac{E\left[  g\left(  Z_{i}\right)
			Y_{i}|  Z_{i}\in\mathcal{Z}_{M}  \right]  -E\left[
			Y_{i}|  Z_{i}\in\mathcal{Z}_{M}  \right]  E\left[  g\left(
			Z_{i}\right)|  Z_{i}\in\mathcal{Z}_{M}  \right]  }{E\left[
			g\left(  Z_{i}\right)  D_{i}|  Z_{i}\in\mathcal{Z}_{M}
			\right]  -E\left[  D_{i}|  Z_{i}\in\mathcal{Z}_{M}  \right]
			E\left[  g\left(  Z_{i}\right)|  Z_{i}\in\mathcal{Z}_{M} \right]  }.
	\end{align*}
	Also, $\sqrt{n}( \widehat{\theta}_{1}-\theta_1 ) \overset{d}\to N\left( 0,\Sigma_1 \right) $, where 
	$\Sigma_1$ is provided in \eqref{eq.asymptotic beta1} in the Appendix. In addition, the quantity $\theta_{1}$ can be interpreted as
	the weighted average of $\{ \beta _{k_{2},k_{1}},\ldots ,\beta_{k_{M},k_{M-1}}\} $ defined as in \eqref{eq.beta binary D}. Specifically, $\theta _{1}=\sum_{m=1}^{M-1}\mu
	_{m}\beta _{k_{m+1},k_{m}}$ with 
	\begin{align*}
		\mu _{m}=
		\frac{\left[ p\left( z_{k_{m+1}}\right) -p\left( z_{k_{m}}\right) 
			\right] \sum_{l=m}^{M-1}\mathbb{P}\left( Z_{i}=z_{k_{l+1}}|Z_i\in\mathcal{Z}_M\right) \left\{
			g\left( z_{k_{l+1}}\right) -E\left[ g\left( Z_{i}\right) |Z_i\in\mathcal{Z}_M \right] \right\} }{\sum_{l=1}^{M}\mathbb{P}\left(
			Z_{i}=z_{k_{l}}|Z_i\in\mathcal{Z}_M\right) p\left( z_{k_{l}}\right) \left\{ g\left(
			z_{k_{l}}\right) -E\left[ g\left( Z_{i}\right)|Z_i\in\mathcal{Z}_M \right] \right\} },
	\end{align*}
	$p\left( z_{k}\right) =E\left[ D_{i}|Z_{i}=z_{k}\right] $, and $\sum_{m=1}^{M-1}\mu _{m}=1$.
\end{theorem}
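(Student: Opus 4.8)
The plan is to decompose the argument into three parts: (i) consistency of $\widehat{\theta}_1$ for $\theta_1$, (ii) asymptotic normality with the stated variance, and (iii) the weighted-average representation $\theta_1=\sum_{m=1}^{M-1}\mu_m\beta_{k_{m+1},k_m}$.

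For part (i), first I would handle the random selection event $\{\widehat{\mathcal{Z}_0}=\mathcal{Z}_M\}$. Since $\mathbb{P}(\widehat{\mathcal{Z}_0}=\mathcal{Z}_M)\to 1$, on a sequence of events with probability tending to one we may replace $\widehat{\mathcal{Z}_0}$ by the fixed set $\mathcal{Z}_M$ in \eqref{eq.VSIV estimator binary D}. On that event, $\widehat{\theta}_1$ is exactly the canonical IV estimator computed on the subsample $\{i:Z_i\in\mathcal{Z}_M\}$, i.e., a ratio of sample covariances $\widehat{\mathrm{Cov}}(g(Z_i),Y_i\mid Z_i\in\mathcal{Z}_M)/\widehat{\mathrm{Cov}}(g(Z_i),D_i\mid Z_i\in\mathcal{Z}_M)$. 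By Assumption \ref{ass.iid data binary D} (i.i.d.\ data, finite moments) the law of large numbers gives convergence of each sample moment to its population counterpart; Assumption \ref{ass.first stage binary D partial} guarantees the denominator limit is nonzero, so the continuous mapping theorem yields $\widehat{\theta}_1\overset{p}\to\theta_1$. One must note $\mathbb{P}(Z_i\in\mathcal{Z}_M)>0$ (implied since the first-stage covariance in Assumption \ref{ass.first stage binary D partial} is nonzero, hence the conditioning event has positive probability) so the normalizing sums $n_z/n$ converge to a positive limit.

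For part (ii), I would write $\sqrt{n}(\widehat{\theta}_1-\theta_1)$, again after conditioning on $\{\widehat{\mathcal{Z}_0}=\mathcal{Z}_M\}$, as a smooth function of the vector of sample averages $\frac1n\sum_i \big(1\{Z_i\in\mathcal{Z}_M\},\,g(Z_i)1\{\cdot\},\,Y_i1\{\cdot\},\,D_i1\{\cdot\},\,g(Z_i)Y_i1\{\cdot\},\,g(Z_i)D_i1\{\cdot\}\big)$. The multivariate CLT applies to this vector of i.i.d.\ summands with finite second moments, and the delta method with the ratio-of-covariances map delivers $\sqrt{n}(\widehat{\theta}_1-\theta_1)\overset{d}\to N(0,\Sigma_1)$ with $\Sigma_1$ the sandwich form displayed in \eqref{eq.asymptotic beta1}; the event $\{\widehat{\mathcal{Z}_0}\neq\mathcal{Z}_M\}$ contributes nothing asymptotically since it has vanishing probability, so the validity-set estimation step does not inflate the variance. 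Technically this is the same argument used for Theorem \ref{thm.IV estimator asymptotics pairwise binary D}, specialized to a single pooled subsample.

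For part (iii), the representation follows from algebra internal to the subpopulation $\{Z_i\in\mathcal{Z}_M\}$. Using pairwise validity (Definition \ref{def.partial validity pairwise binary D}) along the chain $(z_{k_1},z_{k_2}),\dots,(z_{k_{M-1}},z_{k_M})$, I would expand $E[g(Z_i)Y_i\mid Z_i\in\mathcal{Z}_M]-E[Y_i\mid Z_i\in\mathcal{Z}_M]E[g(Z_i)\mid Z_i\in\mathcal{Z}_M]$ as $\sum_l \mathbb{P}(Z_i=z_{k_l}\mid\cdot)\{g(z_{k_l})-E[g(Z_i)\mid\cdot]\}E[Y\mid Z=z_{k_l}]$, and use monotonicity plus exclusion to rewrite each $E[Y\mid Z=z_{k_l}]-E[Y\mid Z=z_{k_1}]$ as a telescoping sum of numerators $E[Y\mid Z=z_{k_{m+1}}]-E[Y\mid Z=z_{k_m}] = (p(z_{k_{m+1}})-p(z_{k_m}))\beta_{k_{m+1},k_m}$ via Lemma \ref{lemma.pairwise beta binary D}; the same manipulation on the $D$-covariance in the denominator gives the normalizer. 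Rearranging the double sum by switching the order of summation produces exactly the weights $\mu_m$, and summing the telescoped denominator shows $\sum_m\mu_m=1$. The main obstacle is bookkeeping in this last step: one must be careful that the subtraction of $E[g(Z_i)\mid\cdot]$ makes the expansion invariant to shifting $E[Y\mid Z=z_{k_l}]$ by the constant $E[Y\mid Z=z_{k_1}]$, which is what licenses the telescoping, and then track the index swap $\sum_l\sum_{m\le l}=\sum_m\sum_{l\ge m}$ correctly to land on the stated $\mu_m$.
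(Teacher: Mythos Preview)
Your proposal is correct and follows essentially the same route as the paper's proof: the paper also handles the random set by bounding each sample moment's discrepancy via $\frac{1}{n}\sum_i|\cdot|\cdot 1\{\widehat{\mathcal{Z}_0}\neq\mathcal{Z}_M\}=o_p(1)$ (and $\sqrt{n}\,1\{\widehat{\mathcal{Z}_0}\neq\mathcal{Z}_M\}=o_p(1)$ for the CLT step), then applies the multivariate CLT and delta method to the six-dimensional moment vector, and derives the weighted-average formula by the same telescoping of $E[Y\mid Z=z_{k_l}]$ against the baseline $E[Y\mid Z=z_{k_1}]$, cancellation via $\sum_m\mathbb{P}(Z_i=z_{k_m}\mid\cdot)\{g(z_{k_m})-E[g(Z_i)\mid\cdot]\}=0$, and the index swap $\sum_m\sum_{l\le m}\to\sum_l\sum_{m\ge l}$.
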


Theorem \ref{thm.IV estimator asymptotics binary D} is an extension of Theorem 2 of \citet{imbens1994identification} to the case where the instrument is partially but not fully valid. 
To establish a connection to existing results, Theorem \ref{thm.IV estimator asymptotics binary D} assumes consistent estimation of the validity value set such that $\mathbb{P}(\widehat{\mathcal{Z}_{0}}=\mathcal{Z}_{M})\rightarrow 1$. If $\widehat{\mathcal{Z}_{0}}$ converges to a larger set than $\mathcal{Z}_{M}$, the properties of VSIV estimation follow from the results in Section \ref{sec.bias reduction binary D} because partially valid instruments are a special case of pairwise valid instruments.

\section{Definition and Estimation of $\mathscr{Z}_{0}$}
\label{sec.estimation validity set}

Here we discuss the definition and the estimation of $\mathscr{Z}_0$ based on the testable implications in \citet{kitagawa2015test}, \citet{mourifie2016testing}, \citet{kedagni2020generalized}, and \citet{sun2021ivvalidity} for pairwise IV validity. We show that under weak assumptions, the proposed estimator $\widehat{\mathscr{Z}_{0}}$ is consistent for the pseudo-validity pair set $\mathscr{Z}_{0}$ in the sense that $\mathbb{P}(\widehat{\mathscr{Z}_{0}}=\mathscr{Z}_{0})\rightarrow 1$. As a consequence, when $\mathscr{Z}_{0}=\mathscr{Z}_{\bar{M}}$, the largest validity pair set can be estimated consistently.

Specifically, we first construct two sets, $\mathscr{Z}_1$ and $\mathscr{Z}_2$, of pairs of instrument values such that $\mathscr{Z}_1$ satisfies the testable implications in \citet{kitagawa2015test}, \citet{mourifie2016testing}, and \citet{sun2021ivvalidity}, and $\mathscr{Z}_2$ satisfies the testable implications in \citet{kedagni2020generalized}. We then construct $\mathscr{Z}_0$ as the intersection of these two sets, $\mathscr{Z}_0=\mathscr{Z}_1\cap \mathscr{Z}_2$ (see Appendices \ref{sec.estimation Z_0 ordered} and \ref{sec.estimation Z_0 unordered}). 
Lemma \ref{lemma.testable implications for Z2 weaker} shows that $\mathscr{Z}_1\subseteq\mathscr{Z}_2$ when $D$ is binary. For multivalued $D$, we are not aware of such a result.

\begin{lemma}\label{lemma.testable implications for Z2 weaker}
    If $D\in\{0,1\}$, then $\mathscr{Z}_1\subseteq\mathscr{Z}_2$, and the testable restrictions defining $\mathscr{Z}_1$ are sharp.\footnote{The statement in Lemma \ref{lemma.testable implications for Z2 weaker} holds for ordered or unordered $D\in\mathcal{D}=\{d_1,d_2\}$.}
\end{lemma}

Lemma \ref{lemma.testable implications for Z2 weaker} shows that when $D\in\{0,1\}$, the testable implications of \citet{kedagni2020generalized} are implied by those of \citet{kitagawa2015test}, \citet{mourifie2016testing}, and \citet{sun2021ivvalidity}. This result may be of independent interest.
Moreover, Lemma \ref{lemma.testable implications for Z2 weaker} establishes the sharpness of the testable implications used to define $\mathscr{Z}_1$. This follows from the arguments in Proposition 1.1(i) of \citet{kitagawa2015test}. 

Lemma \ref{lemma.testable implications for Z2 weaker} implies that $\mathscr{Z}_0=\mathscr{Z}_1\cap \mathscr{Z}_2=\mathscr{Z}_1$ when $D$ is binary. Therefore, we define $\mathscr{Z}_0$ based on the testable implications proposed in \citet{kitagawa2015test}, \citet{mourifie2016testing}, and \citet{sun2021ivvalidity}. These testable implications were originally proposed for full IV validity. In the following, we extend them to Definition \ref{def.partial validity pairwise binary D}. To describe the testable restrictions, we use the notation of \citet{sun2021ivvalidity}. Define conditional probabilities
\begin{equation*}
	P_z\left( B,C\right) =\mathbb{P}\left( Y\in B,D\in C|Z=z\right)
\end{equation*}
for all Borel sets $B,C\in\mathcal{B}_{\mathbb{R}}$ and all $z\in\mathcal{Z}$. With  $\mathscr{Z}_{\bar{M}}=\{(z_{k_1},z_{k_1^{\prime}}),\ldots,(z_{k_{\bar{M}}},z_{k_{\bar{M}}^{\prime}})\}$, for every $m\in\{1,\ldots,\bar{M}\}$, it follows that
\begin{align}\label{eq.testable implication binary D}
	P_{z_{k_m}}\left( B,\{1\}\right) \leq P_{z_{k_m^{\prime}}}\left(
	B,\{1\}\right)   \text{ and } P_{z_{k_m}}\left( B,\{0\}\right)  \geq P_{z_{k_m^{\prime}}}\left(
	B,\{0\}\right) 
\end{align}
for all $B\in\mathcal{B}_{\mathbb{R}}$.
By definition, for all $B,C\in \mathcal{B}_{\mathbb{R}}$, 
\begin{equation*}
	\mathbb{P}\left( Y\in B,D\in C|Z=z\right) =\frac{\mathbb{P}\left( Y\in
		B,D\in C,Z=z\right) }{\mathbb{P}\left( Z=z\right) }.
\end{equation*}
Define the function spaces 
\begin{align}\label{def.function spaces binary D}
	& \mathcal{G}_P=\left\{ \left( 1_{\mathbb{R}\times \mathbb{R}\times \left\{
		z_{k}\right\} },1_{\mathbb{R}\times \mathbb{R}\times \left\{ z_{k^{\prime
		}}\right\} }\right) : k,k^{\prime }\in\{1,\ldots, K\}, k\neq k'\right\} , \notag \\
	& \mathcal{H}=\left\{ \left( -1\right) ^{d}\cdot 1_{B\times \left\{
		d\right\} \times \mathbb{R}}:B\text{ is a closed interval in }\mathbb{R}%
	,d\in \{0,1\}\right\} , \text{ and }\notag \\
	& \bar{\mathcal{H}}=\left\{ \left( -1\right) ^{d}\cdot 1_{B\times
		\left\{ d\right\} \times \mathbb{R}}:B\text{ is a closed, open, or
		half-closed interval in }\mathbb{R},d\in \left\{ 0,1\right\} \right\}.
\end{align}
Similarly to \citet{sun2021ivvalidity}, by Lemma B.7 in \citet{kitagawa2015test}, we use all closed intervals $%
B\subseteq \mathbb{R}$ to construct ${\mathcal{H}}$ instead of all Borel sets.

Suppose we have access to an i.i.d.\ sample $\{\left(
Y_{i},D_{i},Z_{i}\right) \}_{i=1}^{n}$ distributed according to some probability distribution $P$ in $\mathcal{P}$, that is,  $P(G)=\mathbb{P}((Y_{i},D_{i},Z_{i})\in G)$ for all measurable $G$. 
The closure of $\mathcal{H}$ in $L^{2}(P)$ is equal to $\bar{\mathcal{H}}$
by Lemma C.1 of \citet{sun2021ivvalidity}. For every $\left( h,g\right) \in {\bar{%
		\mathcal{H}}\times \mathcal{G}_P}$ with $g=(g_{1},g_{2})$, we define 
\begin{equation*}
	\phi \left( h,g\right) =\frac{P\left( h\cdot g_{2}\right) }{P\left(
		g_{2}\right) }-\frac{P\left( h\cdot g_{1}\right) }{P\left( g_{1}\right) }
\end{equation*}
and
\begin{align}\label{eq.stat variance multi}
	\sigma^2(h,g)=\Lambda(P) \cdot \left\{\frac{ P\left(  h^2\cdot g_{2}\right)   }{P^{2}\left(g_{2}\right)  }  -\frac{ P^2\left(  h\cdot g_{2}\right)   }{P^{3}\left(
		g_{2}\right)  }  
	+\frac{ P\left(	h^2\cdot g_{1}\right)   }{P^{2}\left(  g_{1}\right)  }
	-\frac{ P^2\left(	h\cdot g_{1}\right)   }{P^{3}\left(  g_{1}\right)  }\right\},
\end{align}
where $\Lambda(P)=\prod_{k=1}^{K}P(1_{\mathbb{R}\times\mathbb{R}\times\{z_k\}})$ and $P^m(g_j)=[P(g_j)]^m$ for $m\in\mathbb{N}$ and $j\in\{1,2\}$. 
We denote the sample analog of $\phi $ as 
\begin{equation*}
	\widehat{\phi}\left( h,g\right) =\frac{\widehat{P}(h\cdot g_{2})}{\widehat{P}(g_{2})}-%
	\frac{\widehat{P}(h\cdot g_{1})}{\widehat{P}(g_{1})},
\end{equation*}%
where $\widehat{P}$ is the empirical probability measure corresponding to $P$ so that for every measurable function $v$ (by abuse of notation), 
\begin{equation}\label{eq.empirical P}
	\widehat{P}\left( v\right) =\frac{1}{n}\sum_{i=1}^{n}v\left(
	Y_{i},D_{i},Z_{i}\right).
\end{equation}
 For every $\left( h,g\right) \in \bar{\mathcal{H}}\times\mathcal{
	 G}_P$ with $g=\left( g_{1},g_{2}\right) $, define the sample analog of $\sigma^2(h,g)$ as
\begin{equation*}
	\widehat{\sigma}^{2}\left( h,g\right) =\frac{T_{n}}{n}\cdot \left\{ \frac{\widehat{P}%
		\left( h^{2}\cdot g_{2}\right) }{\widehat{P}^{2}\left( g_{2}\right) }-\frac{\widehat{%
			P}^{2}\left( h\cdot g_{2}\right) }{\widehat{P}^{3}\left( g_{2}\right) }+\frac{%
		\widehat{P}\left( h^{2}\cdot g_{1}\right) }{\widehat{P}^{2}\left( g_{1}\right) }-%
	\frac{\widehat{P}^{2}\left( h\cdot g_{1}\right) }{\widehat{P}^{3}\left( g_{1}\right) 
	}\right\} ,
\end{equation*}%
where $T_{n}=n\cdot \prod_{k=1}^{K}\widehat{P}(1_{\mathbb{R}\times \mathbb{R}%
	\times \{z_{k}\}})$. By \eqref{eq.0timesinfinity}, $\widehat{\sigma}^{2}$ is well defined. By similar arguments as in the proof of Lemma 3.1 in \citet{sun2021ivvalidity}, ${\sigma}^{2}$ and $\widehat{\sigma}^{2}$ are uniformly bounded in $(h,g)$. 
	The following lemma reformulates the testable restrictions in \eqref{eq.testable implication binary D} in terms of $\phi$. Below, we use this reformulation to define $\mathscr{Z}_0$ and the corresponding estimator $\widehat{\mathscr{Z}_0}$.

\begin{lemma}\label{lemma.superset of Z pairwise binary D}
	Suppose that the instrument $Z$ is pairwise valid for the treatment $D$ with the largest validity pair set $\mathscr{Z}_{\bar{M}}=\{(z_{k_1},z_{k_1^{\prime}}),\ldots,(z_{k_{\bar{M}}},z_{k_{\bar{M}}^{\prime}})\}$. For every $m\in\{1,\ldots,\bar{M}\}$, $\sup_{h\in {\mathcal{H}}}\phi \left( h,g\right) =0$ with $g=( 1_{\mathbb{R}\times \mathbb{R}\times \{	z_{k_m}\} },1_{\mathbb{R}\times \mathbb{R}\times \{ z_{k_{m}^{\prime}}\} })$. 
\end{lemma}

Lemma \ref{lemma.superset of Z pairwise binary D} reformulates the necessary conditions based on \citet{kitagawa2015test}, \citet{mourifie2016testing}, and \citet{sun2021ivvalidity} for the  validity pair set $\mathscr{Z}_{\bar{M}}$. Define 
\begin{equation}\label{eq.G0 pair binary D}
	\mathcal{G}_{0}=\left\{ g\in \mathcal{G}_P:\sup_{h\in {\mathcal{H}}}\phi\left(
	h,g\right) =0\right\} \text{ and } \widehat{\mathcal{G}_{0}}=\left\{ g\in 
	\mathcal{G}_P:\sqrt{T_n}\left\vert \sup_{h\in {\mathcal{H}} }\frac{\widehat{\phi}
		\left( h,g\right)}{\xi_{0}\vee \widehat{\sigma}(h,g)} \right\vert \leq \tau
	_{n}^{g}\right\},
\end{equation}
where $1/\min_{g\in\mathcal{G}_P}\tau_{n}^g\to0$ in probability and $\max_{g\in\mathcal{G}_P}\tau_{n}^g/\sqrt{n}\to0$ in probability as $n\to\infty$, and $\xi_{0}$ is a small positive number.\footnote{ In practice, we use $\xi_0=10^{-100}$.} Here, we allow $\tau_n^g$ to be different for every $g$. This added flexibility helps improve the finite sample performance of VSIV estimation. We discuss our explicit choice of $\tau_n^g$ in more detail in Section \ref{sec.simulation}.

The set $\mathcal{G}_0$ is different from the contact sets defined in \citet{Beare2015improved}, \citet{Beare2017improved}, and \citet{sun2021ivvalidity} in different contexts because of the presence of the map $\sup$. We refer to \citet{linton2010improved} and \citet{lee2018testing} for further discussions of contact set estimation.
Define ${\mathscr{Z}_0}$ as the collection of all $(z,z')$ associated with some $g\in{\mathcal{G}_0}$:
\begin{align}\label{eq.Z1 pair binary D}
	\mathscr{Z}_0=\left\{ (z_{k},z_{k^{\prime}})\in\mathscr{Z}: g=( 1_{\mathbb{R}\times \mathbb{R}\times \{	z_{k}\} },1_{\mathbb{R}\times \mathbb{R}\times \{ z_{k^{\prime}}\} })\in\mathcal{G}_0\right\}.
\end{align}
Note that $\mathscr{Z}_0$ is the set of all pairs that satisfy the testable implications in \eqref{eq.testable implication binary D}.
For example, if $K=4$ and ${\mathcal{G}_0}=\{( 1_{\mathbb{R}\times \mathbb{R}\times \left\{
	z_{1}\right\} },1_{\mathbb{R}\times \mathbb{R}\times \left\{ z_{2}\right\} }), ( 1_{\mathbb{R}\times \mathbb{R}\times \left\{
	z_{3}\right\} },1_{\mathbb{R}\times \mathbb{R}\times \left\{ z_{4}\right\} })\}$, then ${\mathscr{Z}_0}=\{(z_1,z_2),(z_3,z_4)\}$.

We use $\widehat{\mathcal{G}_0}$ to construct the estimator of $\mathscr{Z}_0$, denoted by $\widehat{\mathscr{Z}_0}$, which is defined as the set of all $(z,z^{\prime})$ associated with some $g\in\widehat{\mathcal{G}_0}$:
\begin{align}\label{eq.Z1_hat pair binary D}
	\widehat{\mathscr{Z}_0}=\left\{ (z_{k},z_{k^{\prime}})\in\mathscr{Z}: g=( 1_{\mathbb{R}\times \mathbb{R}\times \{	z_{k}\} },1_{\mathbb{R}\times \mathbb{R}\times \{ z_{k^{\prime}}\} })\in\widehat{\mathcal{G}_0}\right\}.
\end{align}
Note that \eqref{eq.Z1_hat pair binary D} is the sample analog of \eqref{eq.Z1 pair binary D}.
The following proposition establishes consistency of $\widehat{\mathscr{Z}_0}$.
\begin{proposition}
	\label{prop.consistent G hat pairwise Z1 binary D} Under Assumption \ref{ass.iid data binary D},  $\mathbb{P}(\widehat{\mathcal{G}_0}=\mathcal{G}_0)\rightarrow 1$, and thus $\mathbb{P}(\widehat{\mathscr{Z}_{0}}=\mathscr{Z}_{0})\rightarrow 1$.
\end{proposition}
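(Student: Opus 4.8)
The plan is to exploit the fact that $\mathcal{G}_P$ is a \emph{finite} set (it has $K(K-1)$ elements) together with the bijection $(z_k,z_{k'})\mapsto (1_{\mathbb{R}\times\mathbb{R}\times\{z_k\}},1_{\mathbb{R}\times\mathbb{R}\times\{z_{k'}\}})$ from $\mathscr{Z}$ onto $\mathcal{G}_P$, under which $\mathscr{Z}_0$ is the preimage of $\mathcal{G}_0$ by \eqref{eq.Z1 pair binary D} and $\widehat{\mathscr{Z}_0}$ the preimage of $\widehat{\mathcal{G}_0}$ by \eqref{eq.Z1_hat pair binary D}. Hence $\{\widehat{\mathscr{Z}_0}=\mathscr{Z}_0\}=\{\widehat{\mathcal{G}_0}=\mathcal{G}_0\}$, so the second claim follows from the first, and by a union bound over the finitely many $g\in\mathcal{G}_P$ it suffices to show, for each fixed $g=(1_{\mathbb{R}\times\mathbb{R}\times\{z_k\}},1_{\mathbb{R}\times\mathbb{R}\times\{z_{k'}\}})$, that $\mathbb{P}(g\in\widehat{\mathcal{G}_0})\to1$ when $g\in\mathcal{G}_0$ and $\mathbb{P}(g\in\widehat{\mathcal{G}_0})\to0$ when $g\notin\mathcal{G}_0$, using the description of $\widehat{\mathcal{G}_0}$ in \eqref{eq.G0 pair binary D}. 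This per-element reduction is exactly why the presence of the map $\sup$ (which distinguishes $\mathcal{G}_0$ from the usual contact sets) causes no difficulty here.

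I would first collect two preliminary facts. (a) For the fixed $g=(g_1,g_2)$, $\sup_{h\in\mathcal{H}}|\widehat{\phi}(h,g)-\phi(h,g)|=O_p(n^{-1/2})$: indeed $\widehat{P}(g_j)\to P(g_j)=\mathbb{P}(Z=z_{k_j})>0$ a.s.\ by the law of large numbers, the class $\{h\cdot g_j:h\in\mathcal{H}\}$ (indicators of closed intervals intersected with a fixed event) is a VC-subgraph and hence $P$-Donsker, so $\sup_h|\widehat{P}(hg_j)-P(hg_j)|=O_p(n^{-1/2})$, and writing $\widehat{P}(hg_j)/\widehat{P}(g_j)-P(hg_j)/P(g_j)=(\widehat{P}(hg_j)-P(hg_j))/P(g_j)+\widehat{P}(hg_j)(1/\widehat{P}(g_j)-1/P(g_j))$ and combining these gives (a). (b) $T_n/n=\prod_{k=1}^K\widehat{P}(1_{\mathbb{R}\times\mathbb{R}\times\{z_k\}})\to\prod_{k=1}^K\mathbb{P}(Z=z_k)=\Lambda(P)>0$ by the law of large numbers, so $\sqrt{T_n}=\sqrt{n\Lambda(P)}(1+o_p(1))$; moreover $\widehat{\sigma}(h,g)$ is uniformly bounded above by a finite constant $\bar\sigma$, as already noted after \eqref{eq.stat variance multi}.

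Now suppose $g\in\mathcal{G}_0$, i.e.\ $\sup_{h}\phi(h,g)=0$ (equivalently $\phi(h,g)\le0$ for all $h$). For any closed interval $B$ containing no sample value $Y_i$ one has $\widehat{\phi}(h,g)=0$, so $\sup_{h\in\mathcal{H}}\widehat{\phi}(h,g)\ge0$; on the other hand, since $\sup$ is $1$-Lipschitz in the uniform norm, $\sup_h\widehat{\phi}(h,g)\le\sup_h\phi(h,g)+\sup_h|\widehat{\phi}(h,g)-\phi(h,g)|=O_p(n^{-1/2})$ by (a). Using $\xi_0\vee\widehat{\sigma}(h,g)\ge\xi_0>0$,
\[
\sqrt{T_n}\,\Bigl|\sup_{h\in\mathcal{H}}\frac{\widehat{\phi}(h,g)}{\xi_0\vee\widehat{\sigma}(h,g)}\Bigr|\le\frac{\sqrt{T_n}}{\xi_0}\sup_{h\in\mathcal{H}}\widehat{\phi}(h,g)=O_p(\sqrt{n})\cdot O_p(n^{-1/2})=O_p(1),
\]
and since $\tau_n\to\infty$ this is $\le\tau_n$ with probability tending to one, giving $\mathbb{P}(g\in\widehat{\mathcal{G}_0})\to1$. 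Conversely, suppose $g\notin\mathcal{G}_0$, so $c:=\sup_h\phi(h,g)>0$; by (a), $\sup_h\widehat{\phi}(h,g)=c+o_p(1)$. Using $\xi_0\vee\widehat{\sigma}(h,g)\le\xi_0+\bar\sigma$ for all $h$ and $\sup_h\widehat{\phi}(h,g)\ge0$,
\[
\sup_{h\in\mathcal{H}}\frac{\widehat{\phi}(h,g)}{\xi_0\vee\widehat{\sigma}(h,g)}\ \ge\ \frac{\sup_{h\in\mathcal{H}}\widehat{\phi}(h,g)}{\xi_0+\bar\sigma}\ \ge\ \frac{c/2}{\xi_0+\bar\sigma}
\]
with probability tending to one, hence $\sqrt{T_n}\,|\sup_h\widehat{\phi}(h,g)/(\xi_0\vee\widehat{\sigma}(h,g))|$ diverges at rate $\sqrt{n}$; since $\tau_n/\sqrt{n}\to0$, it exceeds $\tau_n$ with probability tending to one, giving $\mathbb{P}(g\in\widehat{\mathcal{G}_0})\to0$. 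Summing the two error probabilities over the finite set $\mathcal{G}_P$ yields $\mathbb{P}(\widehat{\mathcal{G}_0}=\mathcal{G}_0)\to1$, and therefore $\mathbb{P}(\widehat{\mathscr{Z}_0}=\mathscr{Z}_0)\to1$.

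I expect the main obstacle to be fact (a): the uniform-in-$h$ control of $\widehat{\phi}(h,g)-\phi(h,g)$ at the $n^{-1/2}$ rate, which rests on a Donsker-type argument for the class of interval indicators $\mathcal{H}$ (and on $P(g_j)>0$); related to this is the need to bound a supremum of a ratio rather than a ratio of suprema, which is handled by using the lower bound $\xi_0$ and the uniform upper bound $\bar\sigma$ on the denominator. Everything else---the union bound over the finite $\mathcal{G}_P$, the law of large numbers for $T_n/n$, and the transfer of the conclusion from $\widehat{\mathcal{G}_0}$ to $\widehat{\mathscr{Z}_0}$ via the bijection---is routine.
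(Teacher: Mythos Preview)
Your proposal is correct and follows essentially the same approach as the paper: both rely on the finiteness of $\mathcal{G}_P$, a Donsker-type uniform control of $\widehat{\phi}-\phi$ (the paper packages this as a lemma giving $\sqrt{T_n}(\widehat{\phi}-\phi)\leadsto\mathbb{G}$ and almost uniform convergence of $\widehat{\sigma}$), the condition $\tau_n\to\infty$ for the inclusion $\mathcal{G}_0\subset\widehat{\mathcal{G}_0}$, and $\tau_n/\sqrt{n}\to0$ together with a positive separation $\delta>0$ for the exclusion $\widehat{\mathcal{G}_0}\subset\mathcal{G}_0$. Your presentation is slightly more elementary in that you use the crude bounds $\xi_0\le\xi_0\vee\widehat{\sigma}(h,g)\le\xi_0+\bar\sigma$ on the denominator rather than invoking $\widehat{\sigma}\to\sigma$, and you do a per-element union bound instead of working with set differences, but these are cosmetic differences and the underlying argument is the same.
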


Proposition \ref{prop.consistent G hat pairwise Z1 binary D} is related to the contact set estimation in \citet{sun2021ivvalidity}. Since, by definition, $\mathcal{G}_0\subseteq\mathcal{G}_P$ and $\mathcal{G}_P$ is a finite set, we can use techniques similar to those in \citet{sun2021ivvalidity} to obtain the stronger result in Proposition \ref{prop.consistent G hat pairwise Z1 binary D}, that is, $\mathbb{P}(\widehat{\mathcal{G}}_0=\mathcal{G}_0)\rightarrow 1$.

\begin{remark}[Uniqueness of $\mathscr{Z}_{\bar{M}}$ and $\mathscr{Z}_0$]
    By definition, $\mathscr{Z}_{\bar{M}}$ is the collection of all valid pairs of values of $Z$, while $\mathscr{Z}_0$ is the collection of all pairs that satisfy the testable restrictions in Lemma \ref{lemma.superset of Z pairwise binary D}. Thus, both $\mathscr{Z}_{\bar{M}}$ and $\mathscr{Z}_0$ are unique, and clearly $\mathscr{Z}_{\bar{M}}\subseteq\mathscr{Z}_0$.
\end{remark}

\begin{remark}[Local Violations of the Testable Restrictions]
The theory for VSIV estimation relies on selection consistency, $\mathbb{P}(\widehat{\mathscr{Z}_{0}}=\mathscr{Z}_{0})\rightarrow 1$. A potential concern with this approach is that selection consistency is theoretically only possible if the violations of the testable restrictions are well-separated from zero. 
However, VSIV estimation remains useful even in the presence of local (to-zero) violations. It will always yield asymptotic bias reductions from removing pairs of instrument values for which the violations are well-separated from zero. An important feature of VSIV estimation is that it proceeds pairwise, so that the presence of pairs corresponding to local violations does not impact the selection performance for pairs corresponding to well-separated violations.
We explore the performance of our method when we change the magnitude of the violations in Appendix \ref{sec.additional simulations}.
\end{remark}

\section{Simulations and Application}\label{sec.simulation}

In this section, we evaluate the finite sample performance of VSIV estimation in Monte Carlo simulations designed based on an empirical application. First, we discuss the empirical context. Second, we present the results from the simulation study and determine the choice of the tuning parameter $\tau_n^g$. Finally, we present the results from the empirical application.

\subsection{Empirical Context}

We revisit the analysis of the causal effect of college education on earnings. We use the dataset analyzed by \citet{heckman2001four} and \citet{kedagni2020discordant}, consisting of data on 1,230 white males from the National Longitudinal Survey of Youth of 1979 (NLSY). The outcome of interest ($Y$) is the log wage and the treatment ($D$) is a dummy variable for college enrollment. \citet{kedagni2020discordant} use the maximum parental education ($E$) as an instrument for college enrollment. Since the overall sample size is relatively small, we consider a coarsened version of this instrument: ${Z}=1\{E<12\}+2\cdot1\{E=12\}+3\cdot1\{12<E<16\}+4\cdot1\{E\ge 16\}$. 

The parental education instrument likely violates the exclusion restriction due to its potential positive effect on earnings, especially at lower levels of parental education. Therefore, \citet{kedagni2020discordant} consider a relaxation of IV validity, building on \citet{manski2000monotone,manski2009more}. As discussed in Section \ref{sec.pairwise valid instrument binary D}, their relaxation allows the exclusion restriction to be violated for IV values below a cutoff, which they estimate to be $E=11$ (Table 2, Column (3)). The results in \citet{kedagni2020discordant} suggest that the parental education instrument is partially invalid, especially for pairs  of instrument values corresponding to lower levels of parental education, thus providing an ideal setting for illustrating the usefulness of VSIV estimation.

We emphasize that there are two major differences between our analysis here and the one in \citet{kedagni2020discordant}. First, we consider a different relaxation of IV validity. Unlike \citet{kedagni2020discordant}, we do not impose any assumptions on how IV validity fails for invalid pairs. Second, \citet{kedagni2020discordant} focus on partial identification of the average treatment effect. By contrast, we consider the estimation of LATEs for all pairs of instrument values that are not screened out based on the testable implications.

\subsection{Simulation Evidence}

\subsubsection{Data Generating Processes}
Here we describe the data generating processes (DGPs) that we use in the simulations. All DGPs are calibrated to the joint empirical distribution of $(D,Z)$ in the application. Denote by $\widehat{\mathbb{P}}$ the empirical probability measure of ${\mathbb{P}}$. 
In the empirical application, we have $\widehat{\mathbb{P}}(Z=1)=0.1317$, $\widehat{\mathbb{P}}(Z=2)=0.4716$, $\widehat{\mathbb{P}}(Z=3)=0.1495$, $\widehat{\mathbb{P}}(Z=4)=0.2472$, 
$\widehat{\mathbb{P}}(D=1|Z=1)=0.1420$, $\widehat{\mathbb{P}}(D=1|Z=2)=0.3086$, $\widehat{\mathbb{P}}(D=1|Z=3)=0.5054$, and $\widehat{\mathbb{P}}(D=1|Z=4)=0.7796$. Given the prior information that $Z$ may be positively correlated with $D$, we assume that $\mathscr{Z}_P=\{(1,2),(1,3),(1,4),(2,3),(2,4),(3,4)\}$.\footnote{This is consistent with the fact that the empirical proportions $\widehat{\mathbb{P}}(D=1|Z=z)$ are increasing in $z$.} 
We consider five data generating processes (DGPs (0)--(4)), where Definition \ref{def.partial validity pairwise binary D} holds for each pair in $\mathscr{Z}_P$ under DGP (0) and is violated for every pair under DGPs (1)--(4). These DGPs are constructed following those in \citet{kitagawa2015test} and \citet{sun2021ivvalidity}.

For all DGPs, we specify $U\sim\mathrm{Unif}(0,1)$, $V\sim\mathrm{Unif}(0,1)$, $W\sim\mathrm{Unif}(0,1)$, $Z=1\{U \le 0.1317\}+2\times\{0.1317<U \le 0.6033\}+3\times\{0.6033<U\le 0.7528\}+4\times 1\{U>0.7528\}$,
$D_1=1\{V\le 0.1420\}$, $D_2=1\{V\le 0.3086\}$, $D_3=1\{V\le 0.5054\}$, $D_4=1\{V\le 0.7796\}$, and $D=\sum_{z=1}^4 1\{Z=z\}\times D_z$.
We let $N_Z\sim N(0,1)$, and let $N_{10a}(\sigma)\sim N(-1,\sigma^2)$, $N_{10b}(\sigma)\sim N(-0.5,\sigma^2)$, $N_{10c}(\sigma)\sim N(0,\sigma^2)$, $N_{10d}(\sigma)\sim N(0.5,\sigma^2)$, $N_{10e}(\sigma)\sim N(1,\sigma^2)$, and $N_{C}(\sigma)=1\{W\le 0.15\}\times N_{10a}(\sigma)+1\{0.15<W\le 0.35\}\times N_{10b}(\sigma)+1\{0.35<W\le 0.65\}\times N_{10c}(\sigma)+1\{0.65<W\le 0.85\}\times N_{10d}(\sigma)+1\{W>0.85\}\times N_{10e}(\sigma)$ for $\sigma>0$. 
The DGPs (0)--(4) are specified as follows. 

\begin{enumerate}[start=0,label=(\arabic*):]

    \item $N_{dz}=N_Z$ for each $d\in\{0,1\}$ and all $z\in\{1,2,3,4\}$, $Y=\sum_{z=1}^{4}1\{Z=z\}\times(\sum_{d=0}^{1} 1\{D=d\}\times N_{dz})$
    
	\item  For $(z_1,z_2)\in\mathscr{Z}_P$, $N_{1z_1}\sim N(\mu_{(z_1,z_2)},1)$, $N_{1z_2}\sim N(0,1)$, $N_{0z_1}\sim N(0,1)$, $N_{0z_2}\sim N(\mu_{(z_1,z_2)},1)$ with $\mu_{(1,2)}=-0.9$, $\mu_{(1,3)}=-1.1$, $\mu_{(1,4)}=-1.3$, $\mu_{(2,3)}=-0.9$, $\mu_{(2,4)}=-1.1$, $\mu_{(3,4)}=-0.9$, $N_{dz}\sim N(0,1)$ for $d\in\{0,1\}$ and $z\in\{1,2,3,4\}\setminus \{z_1,z_2\}$, 
    $Y=\sum_{z=1}^{4}1\{Z=z\}\times(\sum_{d=0}^{1} 1\{D=d\}\times N_{dz})$
	
	\item For $(z_1,z_2)\in\mathscr{Z}_P$, $N_{1z_1}\sim N(0,1)$, $N_{1z_2}\sim N(0,\sigma_{(z_1,z_2)}^2)$, $N_{0z_1}\sim N(0,\sigma_{(z_1,z_2)}^2)$, $N_{0z_2}\sim N(0,1)$ with $\sigma_{(1,2)}=3$, $\sigma_{(1,3)}=5$, $\sigma_{(1,4)}=7$, $\sigma_{(2,3)}=3$, $\sigma_{(2,4)}=5$, $\sigma_{(3,4)}=3$, $N_{dz}\sim N(0,1)$ for $d\in\{0,1\}$ and $z\in\{1,2,3,4\}\setminus \{z_1,z_2\}$, 
    $Y=\sum_{z=1}^{4}1\{Z=z\}\times(\sum_{d=0}^{1} 1\{D=d\}\times N_{dz})$
	
	\item For $(z_1,z_2)\in\mathscr{Z}_P$, $N_{1z_1}\sim N(0,1)$, $N_{1z_2}\sim N(0,\sigma_{(z_1,z_2)}^2)$, $N_{0z_1}\sim N(0,\sigma_{(z_1,z_2)}^2)$, $N_{0z_2}\sim N(0,1)$ with $\sigma_{(1,2)}=0.5$, $\sigma_{(1,3)}=0.45$, $\sigma_{(1,4)}=0.4$, $\sigma_{(2,3)}=0.5$, $\sigma_{(2,4)}=0.45$, $\sigma_{(3,4)}=0.5$, $N_{dz}\sim N(0,1)$ for $d\in\{0,1\}$ and $z\in\{1,2,3,4\}\setminus \{z_1,z_2\}$, 
    $Y=\sum_{z=1}^{4}1\{Z=z\}\times(\sum_{d=0}^{1} 1\{D=d\}\times N_{dz})$

	\item For $(z_1,z_2)\in\mathscr{Z}_P$, $N_{1z_1}\sim N_C(\sigma_{(z_1,z_2)})$, $N_{1z_2}\sim N(0,1)$, $N_{0z_1}\sim N(0,1)$, $N_{0z_2}\sim N_C(\sigma_{(z_1,z_2)})$ with $\sigma_{(1,2)}=0.08$, $\sigma_{(1,3)}=0.04$, $\sigma_{(1,4)}=0.02$, $\sigma_{(2,3)}=0.08$, $\sigma_{(2,4)}=0.04$, $\sigma_{(3,4)}=0.08$, $N_{dz}\sim N(0,1)$ for $d\in\{0,1\}$ and $z\in\{1,2,3,4\}\setminus \{z_1,z_2\}$, 
    $Y=\sum_{z=1}^{4}1\{Z=z\}\times(\sum_{d=0}^{1} 1\{D=d\}\times N_{dz})$

\end{enumerate}
The random variables $U$, $V$, $W$, and those generated from $N(\mu,\sigma^2)$ for some $\mu$ and $\sigma$ are mutually independent.

\subsubsection{Choice of $\tau_n^g$}
\label{sec.validity set estimation and choice of tau}

As shown in \eqref{eq.G0 pair binary D}, the tuning parameter $\tau_n^g$ should satisfy that $1/\min_{g\in\mathcal{G}_P}\tau_{n}^g\to0$ in probability and $\max_{g\in\mathcal{G}_P}\tau_{n}^g/\sqrt{n}\to0$ in probability as $n\to\infty$. In our simulations, for every $g=( 1_{\mathbb{R}\times \mathbb{R}\times \{	z_{k}\} },1_{\mathbb{R}\times \mathbb{R}\times \{ z_{k^{\prime}}\} })$, we set $\tau_n^g$ as
\begin{align}
    \tau_n^g=c\cdot \frac{(n\widehat{\mathbb{P}}(Z\in\{z_k,z_{k'}\}))^{1/5}}{\vert \widehat{\mathbb{P}}(D=1|Z=z_k)-\widehat{\mathbb{P}}(D=1|Z=z_{k'}) \vert^{1/5}}\label{eq:tuning_parameter}
\end{align}
for some constant $c>0$.\footnote{If ${\mathbb{P}}(D=1|Z=z_k)-{\mathbb{P}}(D=1|Z=z_{k'})=0$ for some pair $(z_k,z_{k'})$, we still have that $\tau_n^g/\sqrt{n}\to0$ in probability as $n\to\infty$ for every $g\in\mathcal{G}_P$, and $\tau_n^g$ satisfies the two conditions above in this case.} The numerator of \eqref{eq:tuning_parameter} captures the relevant sample size, and the denominator captures the idea that if the difference is large, violations are harder to detect. This (heuristic) adjustment in the denominator does not affect the asymptotic properties but works well in simulations.

\subsubsection{Simulation Results}

We consider two different sample sizes: $n=1230$ as in the empirical application and $n=2460$ to explore how the properties of our methods improve as the sample size increases. We report results for $\tau_n^g$ in \eqref{eq:tuning_parameter} with $c\in \{0.1,0.2,\dots,1\}$. For each simulation, we use 1,000 Monte Carlo iterations. To calculate the supremum in $\sqrt{T_n}\vert \sup_{h\in\mathcal{H}}{\widehat{\phi}
	\left( h,g\right)}/({\xi_{0}\vee \widehat{\sigma}(h,g)}) \vert $ for every $g$, we use
the approach employed by \citet{kitagawa2015test} and \citet{sun2021ivvalidity}. Specifically,
we compute the supremum based on the closed intervals $[a,b]$ with the realizations of $\{Y_i\}$ as endpoints, that is, intervals $[a,b]$ where $a,b\in\{Y_i\}$ and $a\le b$.

Tables \ref{tab:DGP0}--\ref{tab:DGP4} show the empirical probabilities with which each element of $\mathscr{Z}_P$ is selected to be in $\widehat{\mathscr{Z}_0}$ in the simulations. The results show that choosing $c$ is subject to a trade-off between the ability of our method to screen out invalid pairs and its ability to include valid pairs. Given the nature of the method, screening out invalid pairs is particularly important since LATE estimators based on these pairs are inconsistent. Our method with $c=0.6$ detects invalid pairs almost perfectly while selecting most of the valid pairs with high probability. With $c=0.6$, as $n$ increases from $1230$ to $2460$, the selection rates for valid pairs are increasing and those for invalid pairs are decreasing to $0$. Overall, the simulation results show that the proposed method performs well in identifying the validity pair set in finite samples. 
\begin{table}[h!]
	
	\centering
	\caption{Validity Pair Set Estimation for DGP (0)}
	\scalebox{0.9}{
		\begin{tabular}{  c  c  c  c  c  c  c  c  }
			\hline
			\hline
$n$  &	$c $ & (1, 2) & {(1, 3)} & {(1, 4)} 
			 & {{(2, 3)}} & {(2, 4)}
			 & {{(3, 4)}} \\
			\hline
\multirow{10}{*}{1230}   &  0.1 & 0.000  & 0.000  & 0.000  & 0.000  & 0.000  & 0.000  \\ 
    &    0.2 & 0.000  & 0.000  & 0.000  & 0.000  & 0.000  & 0.000  \\ 
    &    0.3 & 0.000  & 0.000  & 0.000  & 0.000  & 0.000  & 0.000  \\ 
    &    0.4 & 0.000  & 0.000  & 0.002  & 0.000  & 0.000  & 0.001  \\ 
    &    0.5 & 0.000  & 0.053  & 0.198  & 0.003  & 0.140  & 0.217  \\ 
    &    0.6 & 0.003  & 0.585  & 0.726  & 0.264  & 0.733  & 0.821  \\ 
    &    0.7 & 0.138  & 0.901  & 0.953  & 0.736  & 0.967  & 0.985  \\ 
    &    0.8 & 0.542  & 0.983  & 0.996  & 0.956  & 0.998  & 1.000  \\ 
    &    0.9 & 0.865  & 0.997  & 1.000  & 0.996  & 1.000  & 1.000  \\ 
    &    1 & 0.968  & 1.000  & 1.000  & 1.000  & 1.000  & 1.000  \\
        \hline
\multirow{10}{*}{2460}        
    &    0.1 & 0.000  & 0.000  & 0.000  & 0.000  & 0.000  & 0.000  \\ 
    &    0.2 & 0.000  & 0.000  & 0.000  & 0.000  & 0.000  & 0.000  \\ 
    &    0.3 & 0.000  & 0.000  & 0.000  & 0.000  & 0.000  & 0.000  \\ 
    &    0.4 & 0.000  & 0.000  & 0.000  & 0.000  & 0.000  & 0.006  \\ 
    &    0.5 & 0.000  & 0.182  & 0.370  & 0.018  & 0.290  & 0.568  \\ 
    &    0.6 & 0.018  & 0.802  & 0.898  & 0.482  & 0.908  & 0.956  \\ 
    &    0.7 & 0.344  & 0.977  & 0.995  & 0.901  & 0.998  & 0.997  \\ 
    &    0.8 & 0.821  & 0.998  & 1.000  & 0.989  & 1.000  & 1.000  \\ 
    &    0.9 & 0.968  & 1.000  & 1.000  & 1.000  & 1.000  & 1.000  \\ 
    &    1 & 0.997  & 1.000  & 1.000  & 1.000  & 1.000  & 1.000  \\ 
 \hline
			\hline
		\end{tabular}
	}
	
	\label{tab:DGP0}
\end{table}
\begin{table}[h!]
	
	\centering
	\caption{Validity Pair Set Estimation for DGP (1)}
	\scalebox{0.9}{
		\begin{tabular}{  c  c  c  c  c  c  c  c  }
			\hline
			\hline
$n$  &	$c $ & (1, 2) & {(1, 3)} & {(1, 4)} 
			 & {{(2, 3)}} & {(2, 4)} 
			 & {{(3, 4)}} \\
			\hline
\multirow{10}{*}{1230}  &   0.1 & 0.000  & 0.000  & 0.000  & 0.000  & 0.000  & 0.000  \\ 
    &    0.2 & 0.000  & 0.000  & 0.000  & 0.000  & 0.000  & 0.000  \\ 
    &    0.3 & 0.000  & 0.000  & 0.000  & 0.000  & 0.000  & 0.000  \\ 
    &    0.4 & 0.000  & 0.000  & 0.000  & 0.000  & 0.000  & 0.000  \\ 
    &    0.5 & 0.000  & 0.000  & 0.000  & 0.000  & 0.000  & 0.000  \\ 
    &    0.6 & 0.000  & 0.002  & 0.011  & 0.000  & 0.001  & 0.017  \\ 
    &    0.7 & 0.000  & 0.021  & 0.104  & 0.020  & 0.028  & 0.108  \\ 
    &    0.8 & 0.000  & 0.086  & 0.298  & 0.126  & 0.148  & 0.311  \\ 
    &    0.9 & 0.010  & 0.199  & 0.538  & 0.390  & 0.357  & 0.583  \\ 
    &    1 & 0.051  & 0.375  & 0.743  & 0.689  & 0.592  & 0.787  \\ 
        
    \hline
\multirow{10}{*}{2460}    &    0.1 & 0.000  & 0.000  & 0.000  & 0.000  & 0.000  & 0.000  \\ 
    &    0.2 & 0.000  & 0.000  & 0.000  & 0.000  & 0.000  & 0.000  \\ 
    &    0.3 & 0.000  & 0.000  & 0.000  & 0.000  & 0.000  & 0.000  \\ 
    &    0.4 & 0.000  & 0.000  & 0.000  & 0.000  & 0.000  & 0.000  \\ 
    &    0.5 & 0.000  & 0.000  & 0.000  & 0.000  & 0.000  & 0.000  \\ 
    &    0.6 & 0.000  & 0.000  & 0.004  & 0.000  & 0.000  & 0.001  \\ 
    &    0.7 & 0.000  & 0.003  & 0.058  & 0.003  & 0.010  & 0.019  \\ 
    &    0.8 & 0.001  & 0.010  & 0.219  & 0.037  & 0.046  & 0.093  \\ 
    &    0.9 & 0.001  & 0.037  & 0.405  & 0.182  & 0.164  & 0.276  \\ 
    &    1 & 0.005  & 0.118  & 0.648  & 0.509  & 0.352  & 0.533  \\ 
 \hline
			\hline
		\end{tabular}
	}
	
	\label{tab:DGP1}
\end{table}
\begin{table}[h!]
	
	\centering
	\caption{Validity Pair Set Estimation for DGP (2)}
	\scalebox{0.9}{
		\begin{tabular}{  c  c  c  c  c  c  c  c  }
			\hline
			\hline
$n$  &	$c $ & (1, 2) & {(1, 3)} & {(1, 4)} 
			 & {{(2, 3)}} & {(2, 4)} 
			 & {{(3, 4)}}\\
			\hline
\multirow{10}{*}{1230}  &   0.1 & 0.000  & 0.000  & 0.000  & 0.000  & 0.000  & 0.000  \\ 
    &    0.2 & 0.000  & 0.000  & 0.000  & 0.000  & 0.000  & 0.000  \\ 
    &    0.3 & 0.000  & 0.000  & 0.000  & 0.000  & 0.000  & 0.000  \\ 
    &    0.4 & 0.000  & 0.000  & 0.000  & 0.000  & 0.000  & 0.000  \\ 
    &    0.5 & 0.000  & 0.000  & 0.000  & 0.000  & 0.000  & 0.000  \\ 
    &    0.6 & 0.000  & 0.000  & 0.001  & 0.000  & 0.000  & 0.004  \\ 
    &    0.7 & 0.000  & 0.000  & 0.039  & 0.000  & 0.010  & 0.027  \\ 
    &    0.8 & 0.000  & 0.003  & 0.188  & 0.024  & 0.063  & 0.134  \\ 
    &    0.9 & 0.000  & 0.012  & 0.416  & 0.192  & 0.189  & 0.342  \\ 
    &    1 & 0.008  & 0.037  & 0.630  & 0.479  & 0.418  & 0.564  \\ 
        
    \hline
\multirow{10}{*}{2460}    &    0.1 & 0.000  & 0.000  & 0.000  & 0.000  & 0.000  & 0.000  \\ 
    &    0.2 & 0.000  & 0.000  & 0.000  & 0.000  & 0.000  & 0.000  \\ 
    &    0.3 & 0.000  & 0.000  & 0.000  & 0.000  & 0.000  & 0.000  \\ 
    &    0.4 & 0.000  & 0.000  & 0.000  & 0.000  & 0.000  & 0.000  \\ 
    &    0.5 & 0.000  & 0.000  & 0.000  & 0.000  & 0.000  & 0.000  \\ 
    &    0.6 & 0.000  & 0.000  & 0.001  & 0.000  & 0.000  & 0.000  \\ 
    &    0.7 & 0.000  & 0.000  & 0.030  & 0.000  & 0.002  & 0.004  \\ 
    &    0.8 & 0.000  & 0.000  & 0.185  & 0.007  & 0.021  & 0.050  \\ 
    &    0.9 & 0.000  & 0.000  & 0.436  & 0.041  & 0.093  & 0.135  \\ 
    &    1 & 0.000  & 0.000  & 0.652  & 0.185  & 0.241  & 0.282  \\
 \hline
			\hline
		\end{tabular}
	}
	
	\label{tab:DGP2}
\end{table}
\begin{table}[h!]
	
	\centering
	\caption{Validity Pair Set Estimation for DGP (3)}
	\scalebox{0.9}{
		\begin{tabular}{  c  c  c  c  c  c  c  c }
			\hline
			\hline
$n$  &	$c $ & (1, 2) & {(1, 3)} & {(1, 4)} 
			 & {{(2, 3)}} & {(2, 4)} 
			 & {{(3, 4)}} \\
			\hline
\multirow{10}{*}{1230}  &   0.1 & 0.000  & 0.000  & 0.000  & 0.000  & 0.000  & 0.000  \\ 
    &    0.2 & 0.000  & 0.000  & 0.000  & 0.000  & 0.000  & 0.000  \\ 
    &    0.3 & 0.000  & 0.000  & 0.000  & 0.000  & 0.000  & 0.000  \\ 
    &    0.4 & 0.000  & 0.000  & 0.000  & 0.000  & 0.000  & 0.000  \\ 
    &    0.5 & 0.000  & 0.000  & 0.000  & 0.000  & 0.000  & 0.000  \\ 
    &    0.6 & 0.000  & 0.000  & 0.000  & 0.000  & 0.000  & 0.002  \\ 
    &    0.7 & 0.000  & 0.005  & 0.022  & 0.002  & 0.000  & 0.050  \\ 
    &    0.8 & 0.000  & 0.033  & 0.141  & 0.064  & 0.004  & 0.282  \\ 
    &    0.9 & 0.000  & 0.164  & 0.337  & 0.270  & 0.035  & 0.668  \\ 
    &    1 & 0.003  & 0.436  & 0.630  & 0.556  & 0.182  & 0.922  \\ 
        
\hline
\multirow{10}{*}{2460}    &    0.1 & 0.000  & 0.000  & 0.000  & 0.000  & 0.000  & 0.000  \\ 
    &    0.2 & 0.000  & 0.000  & 0.000  & 0.000  & 0.000  & 0.000  \\ 
    &    0.3 & 0.000  & 0.000  & 0.000  & 0.000  & 0.000  & 0.000  \\ 
    &    0.4 & 0.000  & 0.000  & 0.000  & 0.000  & 0.000  & 0.000  \\ 
    &    0.5 & 0.000  & 0.000  & 0.000  & 0.000  & 0.000  & 0.000  \\ 
    &    0.6 & 0.000  & 0.000  & 0.000  & 0.000  & 0.000  & 0.000  \\ 
    &    0.7 & 0.000  & 0.000  & 0.001  & 0.000  & 0.000  & 0.001  \\ 
    &    0.8 & 0.000  & 0.000  & 0.006  & 0.002  & 0.000  & 0.013  \\ 
    &    0.9 & 0.000  & 0.003  & 0.050  & 0.055  & 0.000  & 0.193  \\ 
    &    1 & 0.000  & 0.058  & 0.214  & 0.237  & 0.006  & 0.593  \\ 
 \hline
			\hline
		\end{tabular}
	}
	
	\label{tab:DGP3}
\end{table}
\begin{table}[h!]
	
	\centering
	\caption{Validity Pair Set Estimation for DGP (4)}
	\scalebox{0.9}{
		\begin{tabular}{  c  c  c  c  c  c  c  c }
			\hline
			\hline
$n$  &	$c $ & (1, 2) & {(1, 3)} & {(1, 4)} 
			 & {{(2, 3)}} & {(2, 4)} 
			 & {{(3, 4)}} \\
			\hline
\multirow{10}{*}{1230}  &   0.1 & 0.000  & 0.000  & 0.000  & 0.000  & 0.000  & 0.000  \\ 
    &    0.2 & 0.000  & 0.000  & 0.000  & 0.000  & 0.000  & 0.000  \\ 
    &    0.3 & 0.000  & 0.000  & 0.000  & 0.000  & 0.000  & 0.000  \\ 
    &    0.4 & 0.000  & 0.000  & 0.000  & 0.000  & 0.000  & 0.000  \\ 
    &    0.5 & 0.000  & 0.000  & 0.000  & 0.000  & 0.000  & 0.000  \\ 
    &    0.6 & 0.000  & 0.000  & 0.000  & 0.000  & 0.000  & 0.030  \\ 
    &    0.7 & 0.000  & 0.001  & 0.001  & 0.036  & 0.002  & 0.302  \\ 
    &    0.8 & 0.002  & 0.023  & 0.051  & 0.269  & 0.054  & 0.700  \\ 
    &    0.9 & 0.043  & 0.173  & 0.239  & 0.606  & 0.241  & 0.914  \\ 
    &    1 & 0.180  & 0.433  & 0.530  & 0.843  & 0.508  & 0.983  \\ 
        
    \hline
\multirow{10}{*}{2460}    &    0.1 & 0.000  & 0.000  & 0.000  & 0.000  & 0.000  & 0.000  \\ 
    &    0.2 & 0.000  & 0.000  & 0.000  & 0.000  & 0.000  & 0.000  \\ 
    &    0.3 & 0.000  & 0.000  & 0.000  & 0.000  & 0.000  & 0.000  \\ 
    &    0.4 & 0.000  & 0.000  & 0.000  & 0.000  & 0.000  & 0.000  \\ 
    &    0.5 & 0.000  & 0.000  & 0.000  & 0.000  & 0.000  & 0.000  \\ 
    &    0.6 & 0.000  & 0.000  & 0.000  & 0.000  & 0.000  & 0.016  \\ 
    &    0.7 & 0.000  & 0.000  & 0.000  & 0.031  & 0.000  & 0.220  \\ 
    &    0.8 & 0.000  & 0.002  & 0.015  & 0.261  & 0.021  & 0.613  \\ 
    &    0.9 & 0.010  & 0.030  & 0.136  & 0.649  & 0.150  & 0.865  \\ 
    &    1 & 0.114  & 0.168  & 0.381  & 0.883  & 0.395  & 0.962  \\ 
 \hline
			\hline
		\end{tabular}
	}
	
	\label{tab:DGP4}
\end{table}

Table \ref{tab:DGP0Inference} shows the coverage rates of the confidence intervals constructed based on the asymptotic distribution in Theorem \ref{thm.IV estimator asymptotics pairwise binary D} for DGP (0) under which all pairs are valid. We construct the confidence interval as follows: If a pair is in the estimated validity pair set, then the confidence interval is constructed based on the asymptotic distribution; if the pair is not in the estimated validity pair set, then the confidence interval is $\mathbb{R}$, since in this case the LATE is not identified. The results in Table \ref{tab:DGP0Inference} show that the coverage rates are close to $1-\alpha$, where $\alpha=0.05$, for most pairs when $c=0.6$. Since the sample size for each pair is relatively small in our simulations for both choices of $n$, the coverage rates can be somewhat higher than $0.95$ because of the way we construct the confidence intervals if the pair is not in the estimated validity pair set. 

In Appendix \ref{sec.simulation balanced}, we provide additional simulation results for a variant of DGP (0) with a more balanced design. The results in Tables \ref{tab:DGP0Balanced} and \ref{tab:DGP0InferenceBalanced} show that in this case for $c=0.6$, the selection rates for all pairs are high and converging to one, and the coverage rates are converging to $95\%$.

In Tables \ref{tab:RMSEDGP1}--\ref{tab:RMSEDGP4}, we explore the asymptotic bias reduction property of VSIV estimation by presenting the Root Mean Square Errors (RMSEs) of $\sqrt{n}(\widehat{\beta}_{(k,k')}^1-{\beta}_{(k,k')}^1)$ for every pair in $\mathscr{Z}_P$ under DGPs (1)--(4) under which all instrument pairs are invalid. As shown in Theorem \ref{thm.IV estimator asymptotics pairwise binary D}, $\sqrt{n}(\widehat{\beta}_{(k,k')}^1-{\beta}_{(k,k')}^1)\to0$ in probability if $(z_k,z_{k'})$ is invalid. Our simulation results show that overall, the RMSEs are getting closer to $0$ as $n$ increases for $c=0.6$. For larger $c$, the RMSEs for some invalid pairs may not be small enough in finite samples. Also note that for larger $c$, the RMSEs may not be decreasing in the sample size due to the rescaling by $\sqrt{n}$.

According to these simulation results, we suggest using $c=0.6$ in applications. The results for other values of $c$ may also be presented for consideration.

\begin{table}[h!]
	
	\centering
	\caption{Coverage Rates of the Confidence Intervals for DGP (0)}
	\scalebox{0.9}{
		\begin{tabular}{  c  c  c  c  c  c  c  c  }
			\hline
			\hline
$n$  &	$c $ & (1, 2) & {(1, 3)} & {(1, 4)} 
			 & {{(2, 3)}} & {(2, 4)}
			 & {{(3, 4)}} \\
			\hline
\multirow{10}{*}{1230}   
    &    0.1 & 1.000  & 1.000  & 1.000  & 1.000  & 1.000  & 1.000  \\ 
    &    0.2 & 1.000  & 1.000  & 1.000  & 1.000  & 1.000  & 1.000  \\ 
    &    0.3 & 1.000  & 1.000  & 1.000  & 1.000  & 1.000  & 1.000  \\ 
    &    0.4 & 1.000  & 1.000  & 1.000  & 1.000  & 1.000  & 1.000  \\ 
    &    0.5 & 1.000  & 0.999  & 0.994  & 0.999  & 0.997  & 0.991  \\ 
    &    0.6 & 1.000  & 0.984  & 0.964  & 0.995  & 0.968  & 0.955  \\ 
    &    0.7 & 0.997  & 0.974  & 0.960  & 0.979  & 0.953  & 0.949  \\ 
    &    0.8 & 0.985  & 0.969  & 0.959  & 0.971  & 0.952  & 0.948  \\ 
    &    0.9 & 0.978  & 0.969  & 0.959  & 0.970  & 0.952  & 0.948  \\ 
    &    1 & 0.972  & 0.969  & 0.959  & 0.970  & 0.952  & 0.948  \\ 
        
        \hline
\multirow{10}{*}{2460}        
    &    0.1 & 1.000  & 1.000  & 1.000  & 1.000  & 1.000  & 1.000  \\ 
    &    0.2 & 1.000  & 1.000  & 1.000  & 1.000  & 1.000  & 1.000  \\ 
    &    0.3 & 1.000  & 1.000  & 1.000  & 1.000  & 1.000  & 1.000  \\ 
    &    0.4 & 1.000  & 1.000  & 1.000  & 1.000  & 1.000  & 1.000  \\ 
    &    0.5 & 1.000  & 0.995  & 0.979  & 1.000  & 0.983  & 0.979  \\ 
    &    0.6 & 1.000  & 0.966  & 0.964  & 0.988  & 0.940  & 0.966  \\ 
    &    0.7 & 0.986  & 0.953  & 0.954  & 0.967  & 0.934  & 0.965  \\ 
    &    0.8 & 0.961  & 0.952  & 0.954  & 0.964  & 0.934  & 0.965  \\ 
    &    0.9 & 0.954  & 0.952  & 0.954  & 0.963  & 0.934  & 0.965  \\ 
    &    1 & 0.953  & 0.952  & 0.954  & 0.963  & 0.934  & 0.965  \\   
 \hline
			\hline
		\end{tabular}
	}
	
	\label{tab:DGP0Inference}
\end{table}
\begin{table}[h!]
	
	\centering
	\caption{RMSEs for DGP (1)}
	\scalebox{0.9}{
		\begin{tabular}{  c  c  c  c  c  c  c  c  }
			\hline
			\hline
$n$  &	$c $ & (1, 2) & {(1, 3)} & {(1, 4)} 
			 & {{(2, 3)}} & {(2, 4)}
			 & {{(3, 4)}} \\
			\hline
\multirow{10}{*}{1230}   
    &   0.1 & 0.000  & 0.000  & 0.000  & 0.000  & 0.000  & 0.000  \\ 
    &    0.2 & 0.000  & 0.000  & 0.000  & 0.000  & 0.000  & 0.000  \\ 
    &    0.3 & 0.000  & 0.000  & 0.000  & 0.000  & 0.000  & 0.000  \\ 
    &    0.4 & 0.000  & 0.000  & 0.000  & 0.000  & 0.000  & 0.000  \\ 
    &    0.5 & 0.000  & 0.000  & 0.000  & 0.000  & 0.000  & 0.000  \\ 
    &    0.6 & 0.000  & 0.940  & 0.332  & 0.000  & 0.204  & 3.067  \\ 
    &    0.7 & 0.000  & 2.751  & 1.843  & 3.642  & 0.877  & 8.122  \\ 
    &    0.8 & 0.000  & 7.177  & 3.444  & 11.820  & 2.392  & 15.625  \\ 
    &    0.9 & 9.110  & 12.447  & 4.930  & 23.042  & 4.150  & 23.877  \\ 
    &    1 & 21.949  & 19.690  & 6.233  & 32.068  & 6.184  & 29.984  \\  
        
        \hline
\multirow{10}{*}{2460}        
    &        0.1 & 0.000  & 0.000  & 0.000  & 0.000  & 0.000  & 0.000  \\ 
    &    0.2 & 0.000  & 0.000  & 0.000  & 0.000  & 0.000  & 0.000  \\ 
    &    0.3 & 0.000  & 0.000  & 0.000  & 0.000  & 0.000  & 0.000  \\ 
    &    0.4 & 0.000  & 0.000  & 0.000  & 0.000  & 0.000  & 0.000  \\ 
    &    0.5 & 0.000  & 0.000  & 0.000  & 0.000  & 0.000  & 0.000  \\ 
    &    0.6 & 0.000  & 0.000  & 0.421  & 0.000  & 0.000  & 0.681  \\ 
    &    0.7 & 0.000  & 1.996  & 1.431  & 1.498  & 0.734  & 4.616  \\ 
    &    0.8 & 2.945  & 3.824  & 3.467  & 7.503  & 1.616  & 10.312  \\ 
    &    0.9 & 2.945  & 7.404  & 5.354  & 17.387  & 3.253  & 19.648  \\ 
    &    1 & 7.884  & 14.607  & 7.466  & 32.231  & 5.749  & 30.738  \\  
 \hline
			\hline
		\end{tabular}
	}
	
	\label{tab:RMSEDGP1}
\end{table}
\begin{table}[h!]
	
	\centering
	\caption{RMSEs for DGP (2)}
	\scalebox{0.9}{
		\begin{tabular}{  c  c  c  c  c  c  c  c  }
			\hline
			\hline
$n$  &	$c $ & (1, 2) & {(1, 3)} & {(1, 4)} 
			 & {{(2, 3)}} & {(2, 4)}
			 & {{(3, 4)}} \\
			\hline
\multirow{10}{*}{1230}   
    &   0.1 & 0.000  & 0.000  & 0.000  & 0.000  & 0.000  & 0.000  \\ 
    &    0.2 & 0.000  & 0.000  & 0.000  & 0.000  & 0.000  & 0.000  \\ 
    &    0.3 & 0.000  & 0.000  & 0.000  & 0.000  & 0.000  & 0.000  \\ 
    &    0.4 & 0.000  & 0.000  & 0.000  & 0.000  & 0.000  & 0.000  \\ 
    &    0.5 & 0.000  & 0.000  & 0.000  & 0.000  & 0.000  & 0.000  \\ 
    &    0.6 & 0.000  & 0.000  & 0.492  & 0.000  & 0.000  & 2.559  \\ 
    &    0.7 & 0.000  & 0.000  & 6.044  & 0.000  & 2.065  & 4.397  \\ 
    &    0.8 & 0.000  & 1.514  & 13.078  & 4.721  & 5.465  & 9.519  \\ 
    &    0.9 & 0.000  & 3.599  & 20.644  & 14.404  & 9.314  & 14.755  \\ 
   &     1 & 3.964  & 6.682  & 26.301  & 23.569  & 14.271  & 20.772  \\   
        
        \hline
\multirow{10}{*}{2460}        
    &             0.1 & 0.000  & 0.000  & 0.000  & 0.000  & 0.000  & 0.000  \\ 
    &    0.2 & 0.000  & 0.000  & 0.000  & 0.000  & 0.000  & 0.000  \\ 
    &    0.3 & 0.000  & 0.000  & 0.000  & 0.000  & 0.000  & 0.000  \\ 
    &    0.4 & 0.000  & 0.000  & 0.000  & 0.000  & 0.000  & 0.000  \\ 
    &    0.5 & 0.000  & 0.000  & 0.000  & 0.000  & 0.000  & 0.000  \\ 
    &    0.6 & 0.000  & 0.000  & 0.159  & 0.000  & 0.000  & 0.000  \\ 
    &    0.7 & 0.000  & 0.000  & 6.373  & 0.000  & 0.989  & 2.103  \\ 
    &    0.8 & 0.000  & 0.000  & 15.145  & 2.702  & 3.455  & 6.309  \\ 
    &    0.9 & 0.000  & 0.000  & 22.974  & 5.737  & 7.433  & 10.970  \\ 
    &    1 & 0.000  & 0.000  & 28.708  & 13.680  & 10.664  & 16.494  \\
 \hline
			\hline
		\end{tabular}
	}
	
	\label{tab:RMSEDGP2}
\end{table}
\begin{table}[h!]
	
	\centering
	\caption{RMSEs for DGP (3)}
	\scalebox{0.9}{
		\begin{tabular}{  c  c  c  c  c  c  c  c  }
			\hline
			\hline
$n$  &	$c $ & (1, 2) & {(1, 3)} & {(1, 4)} 
			 & {{(2, 3)}} & {(2, 4)}
			 & {{(3, 4)}} \\
			\hline
\multirow{10}{*}{1230}   
    &   0.1 & 0.000  & 0.000  & 0.000  & 0.000  & 0.000  & 0.000  \\ 
    &    0.2 & 0.000  & 0.000  & 0.000  & 0.000  & 0.000  & 0.000  \\ 
    &    0.3 & 0.000  & 0.000  & 0.000  & 0.000  & 0.000  & 0.000  \\ 
    &    0.4 & 0.000  & 0.000  & 0.000  & 0.000  & 0.000  & 0.000  \\ 
    &    0.5 & 0.000  & 0.000  & 0.000  & 0.000  & 0.000  & 0.000  \\ 
    &    0.6 & 0.000  & 0.000  & 0.000  & 0.000  & 0.000  & 0.333  \\ 
    &    0.7 & 0.000  & 0.307  & 0.366  & 0.384  & 0.000  & 1.571  \\ 
    &    0.8 & 0.000  & 0.928  & 0.921  & 3.482  & 0.149  & 4.226  \\ 
    &    0.9 & 0.000  & 2.612  & 1.429  & 7.000  & 0.460  & 6.845  \\ 
    &    1 & 0.340  & 4.145  & 2.119  & 10.064  & 1.239  & 8.501  \\ 
        
        \hline
\multirow{10}{*}{2460}        
    &        0.1 & 0.000  & 0.000  & 0.000  & 0.000  & 0.000  & 0.000  \\ 
    &    0.2 & 0.000  & 0.000  & 0.000  & 0.000  & 0.000  & 0.000  \\ 
    &    0.3 & 0.000  & 0.000  & 0.000  & 0.000  & 0.000  & 0.000  \\ 
    &    0.4 & 0.000  & 0.000  & 0.000  & 0.000  & 0.000  & 0.000  \\ 
    &    0.5 & 0.000  & 0.000  & 0.000  & 0.000  & 0.000  & 0.000  \\ 
    &    0.6 & 0.000  & 0.000  & 0.000  & 0.000  & 0.000  & 0.000  \\ 
    &    0.7 & 0.000  & 0.000  & 0.045  & 0.000  & 0.000  & 0.022  \\ 
    &    0.8 & 0.000  & 0.000  & 0.262  & 0.175  & 0.000  & 0.631  \\ 
    &    0.9 & 0.000  & 0.194  & 0.668  & 2.536  & 0.000  & 3.089  \\ 
    &    1 & 0.000  & 1.160  & 1.327  & 5.498  & 0.288  & 6.092  \\   
 \hline
			\hline
		\end{tabular}
	}
	
	\label{tab:RMSEDGP3}
\end{table}
\begin{table}[h!]
	
	\centering
	\caption{RMSEs for DGP (4)}
	\scalebox{0.9}{
		\begin{tabular}{  c  c  c  c  c  c  c  c  }
			\hline
			\hline
$n$  &	$c $ & (1, 2) & {(1, 3)} & {(1, 4)} 
			 & {{(2, 3)}} & {(2, 4)}
			 & {{(3, 4)}} \\
			\hline
\multirow{10}{*}{1230}   
    &   0.1 & 0.000  & 0.000  & 0.000  & 0.000  & 0.000  & 0.000  \\ 
    &    0.2 & 0.000  & 0.000  & 0.000  & 0.000  & 0.000  & 0.000  \\ 
    &    0.3 & 0.000  & 0.000  & 0.000  & 0.000  & 0.000  & 0.000  \\ 
    &    0.4 & 0.000  & 0.000  & 0.000  & 0.000  & 0.000  & 0.000  \\ 
    &    0.5 & 0.000  & 0.000  & 0.000  & 0.000  & 0.000  & 0.000  \\ 
    &    0.6 & 0.000  & 0.000  & 0.000  & 0.000  & 0.000  & 1.976  \\ 
    &    0.7 & 0.000  & 0.024  & 0.035  & 2.166  & 0.304  & 5.750  \\ 
    &    0.8 & 1.301  & 1.212  & 1.118  & 7.129  & 1.119  & 9.226  \\ 
    &    0.9 & 4.496  & 3.836  & 2.472  & 10.658  & 2.444  & 10.591  \\ 
    &    1 & 8.876  & 5.881  & 3.687  & 12.397  & 3.532  & 11.038  \\

        \hline
\multirow{10}{*}{2460}        
    &        0.1 & 0.000  & 0.000  & 0.000  & 0.000  & 0.000  & 0.000  \\ 
    &    0.2 & 0.000  & 0.000  & 0.000  & 0.000  & 0.000  & 0.000  \\ 
    &    0.3 & 0.000  & 0.000  & 0.000  & 0.000  & 0.000  & 0.000  \\ 
    &    0.4 & 0.000  & 0.000  & 0.000  & 0.000  & 0.000  & 0.000  \\ 
    &    0.5 & 0.000  & 0.000  & 0.000  & 0.000  & 0.000  & 0.000  \\ 
    &    0.6 & 0.000  & 0.000  & 0.000  & 0.000  & 0.000  & 1.407  \\ 
    &    0.7 & 0.000  & 0.000  & 0.000  & 1.994  & 0.000  & 5.094  \\ 
    &    0.8 & 0.000  & 0.398  & 0.786  & 6.534  & 0.755  & 8.282  \\ 
    &    0.9 & 1.803  & 1.366  & 1.715  & 10.912  & 1.790  & 9.792  \\ 
    &    1 & 5.893  & 3.564  & 3.012  & 12.921  & 2.941  & 10.315  \\ 
 \hline
			\hline
		\end{tabular}
	}
	
	\label{tab:RMSEDGP4}
\end{table}

\newpage

\subsection{Empirical Results}\label{sec.application}

In this section, we apply VSIV to estimate the returns of college education. We choose the tuning parameter $\tau_n^g$ using formula \eqref{eq:tuning_parameter}  in Section \ref{sec.validity set estimation and choice of tau}. The tuning parameter $\tau_n^g$ depends on the user-specified constant $c$. Table \ref{tab:Application_FullSample} presents the estimated validity pair set for a grid of values for $c$. The simulation results in the previous section suggest choosing $c=0.6$. For this choice of $c$, only the pairs $(2,4)$ and $(3,4)$ are selected, while the other pairs are screened out, so that the estimated validity pair set is $\widehat{\mathscr{Z}}_0=\{(2,4),(3,4)\}$.\footnote{In Appendix \ref{sec.validity set estimation using tests}, we present the results of a standard IV validity test.} This result is consistent with the results in \citet{kedagni2020discordant} in that for small values of $Z$, the exclusion condition may fail. 

The resulting VSIV estimates reported in the last row of Table \ref{tab:Application_FullSample} are $\widehat{\beta}^1_{(2,4)}=0.542$ and $\widehat{\beta}^1_{(3,4)}=0.652$. The corresponding $95\%$ confidence intervals based on heteroskedasticity-robust standard errors are $[0.38493,0.69954]$ and $[0.28062,1.0233]$, respectively.

\begin{table}[h!]
	
	\centering
	\caption{Validity Pair Set Estimation in Application}
	\scalebox{1}{
		\begin{tabular}{    c  c  c  c  c  c  c   }
			\hline
			\hline
			 {$c$}& (1, 2)
			  & {{(1, 3)}} & {{(1, 4)}} & (2, 3) & (2, 4) & {{(3, 4)}}  \\
			\hline

   0.1 &\xmark&\xmark&\xmark&\xmark&\xmark&\xmark\\ 
        0.2 &\xmark&\xmark&\xmark&\xmark&\xmark&\xmark\\ 
        0.3 &\xmark&\xmark&\xmark&\xmark&\xmark&\xmark\\ 
        0.4 &\xmark&\xmark&\xmark&\xmark&\xmark&\xmark\\ 
        0.5 &\xmark&\xmark&\xmark&\xmark&\xmark&\xmark\\ 
    \textbf{\textcolor{orange}{0.6}} &\xmark&\xmark&\xmark&\xmark&\cmark&\cmark\\ 
        0.7 &\xmark&\cmark&\xmark&\cmark&\cmark&\cmark\\ 
        0.8 &\xmark&\cmark&\cmark&\cmark&\cmark&\cmark\\ 
        0.9 &\xmark&\cmark&\cmark&\cmark&\cmark&\cmark\\ 
       1&\cmark&\cmark&\cmark&\cmark&\cmark&\cmark\\ 

\hline
$\widehat{\beta}^1_{(k,k')}$& -- & -- & -- & -- &0.542 & 0.652\\
			\hline
\hline
		\end{tabular}
	}
	
	\label{tab:Application_FullSample}
\end{table}

\section{Conclusion}\label{sec:conclusion}
We propose an approach for estimating LATEs when the instruments are partially invalid. We focus on settings where the instruments are pairwise valid. Under pairwise validity, there are two types of instrument value pairs: Pairs for which the LATE assumptions hold and pairs for which the LATE assumptions fail due to violations of exclusion, independence, monotonicity, or combinations thereof. Pairwise validity is a natural starting point and useful in applications in which it is difficult to determine which LATE assumptions fail and how. However, in settings where researchers have information about how exactly the LATE assumptions fail, it would be interesting to consider intermediate cases that incorporate such information. Under restrictions on which LATE assumptions fail and how, it is often possible to derive nontrivial bounds on LATEs \citep[e.g.,][]{huber2014sensitivity,noack2021sensitivity,kedagni2023identifying,cui2024robust}.

Throughout this paper, we focus on average treatment effects for the compliers. However, under pairwise validity, both marginal potential outcome distributions are identified for each $(z,z')\in \mathscr{Z}_{\bar{M}}$. Therefore, another interesting direction for future research is to extend VSIV to allow for the estimation of distributional treatment effects for the compliers, such as local quantile treatment effects \citep[e.g.,][]{abadie2002instrumental,frolich2013unconditional,melly2017local}.

\section*{Acknowledgements}
We are grateful to the Editor (Elie Tamer), Associate Editor, two anonymous referees, Zheng Fang, Martin Huber, Toru Kitagawa, Julian Martinez-Iriarte, D\'esir\'e K\'edagni, Ismael Mourifi\'e, Xiaoxia Shi, and all seminar and conference participants for their insightful suggestions and comments. We thank D\'esir\'e K\'edagni for sharing the data for the empirical application with us. Sun acknowledges funding by the National Natural Science Foundation of China [Grant Number 72103004]. W\"uthrich is also affiliated with CESifo. The usual disclaimer applies.

\putbib
\end{bibunit}
\newpage

\begin{bibunit}
\appendix
\onehalfspacing
\setcounter{page}{1} 
\setcounter{footnote}{0}
\begin{center}
    \LARGE{{Pairwise Valid Instruments}}
    
    \vspace{0.25cm}
    
    \Large{Online Supplementary Appendix}
    
    \vspace{0.5cm}
    
    \large{Zhenting Sun \qquad Kaspar W\"uthrich}
\end{center}

\startcontents[sections]
\printcontents[sections]{l}{1}{\setcounter{tocdepth}{2}}

\section{Proofs of Main Results}
\begin{proof}[Proof of Lemma \ref{lemma.pairwise beta binary D}]
    Lemma \ref{lemma.pairwise beta binary D} is a special case of Lemma \ref{lemma.partial beta} in Appendix \ref{sec.ordered treatment}. See the proof of Lemma \ref{lemma.partial beta} in Appendix \ref{sec.general proofs ordered treatment}.
\end{proof}

\begin{proof}[Proof of Theorem \ref{thm.IV estimator asymptotics pairwise binary D}]
Theorem \ref{thm.IV estimator asymptotics pairwise binary D} is a special case of Theorem \ref{thm.IV estimator asymptotics pairwise} in Appendix \ref{sec.ordered treatment}.
See the proof of Theorem \ref{thm.IV estimator asymptotics pairwise} in Appendix \ref{sec.general proofs ordered treatment}.
\end{proof}

\begin{proof}[Proof of Corollary \ref{corollary.weighted average weak convergence}]
The result follows directly from the delta method.
\end{proof}

\begin{proof}[Proof of Proposition \ref{prop.test}]
    If $\mathrm{H}_{0}$ is true, it can be shown that under the assumptions,
\[
\mathbb{P}\left(  \left\{  TS_{1n}=0\right\}  \cup\left\{  TS_{2n}%
>c_{r}(\alpha)\right\}  \right)  \geq\mathbb{P}\left(  TS_{2n}>c_{r}(\alpha)\right)
\rightarrow\alpha
\]
and
\begin{align*}
\mathbb{P}\left(  \left\{  TS_{1n}=0\right\}  \cup\left\{  TS_{2n}%
>c_{r}(\alpha)\right\}  \right) & \leq\mathbb{P}\left(  TS_{2n}>c_{r}(\alpha)\right)
+\mathbb{P}\left(  TS_{1n}=0\right) \\
&\le \mathbb{P}\left(  TS_{2n}>c_{r}(\alpha)\right)
+\mathbb{P}(  \widehat{\mathscr{Z}_0}\neq\mathscr{Z}_{\bar{M}}) \rightarrow\alpha,    
\end{align*}
which imply that $\mathbb{P}\left(  \left\{  TS_{1n}=0\right\}  \cup\left\{  TS_{2n}
>c_{r}(\alpha)\right\}  \right)  \rightarrow\alpha$.

Suppose $\mathrm{H}_{0}$ is false. If $\mathcal{Z}_{(\kappa_m,\kappa'_m)}\notin\mathscr{Z}_{\bar{M}}$ for some $m$, then
\[
\mathbb{P}\left(  \left\{  TS_{1n}=0\right\}  \cup\left\{  TS_{2n}%
>c_{r}(\alpha)\right\}  \right)  \geq\mathbb{P}\left(  TS_{1n}=0\right) \ge \mathbb{P}(  \widehat{\mathscr{Z}_0}=\mathscr{Z}_{\bar{M}}) \rightarrow1.
\]
If $\mathcal{Z}_{(\kappa_m,\kappa'_m)}\in\mathscr{Z}_{\bar{M}}$ for all $m\in\{1,\ldots,S\}$ but $R(\beta_{1S})\neq0$, then $R(\widehat{\beta}_{1S})\to_p R({\beta}_{1S})$ and 
\[
\mathbb{P}\left(  \left\{  TS_{1n}=0\right\}  \cup\left\{  TS_{2n}>c_{r}(\alpha)\right\}  \right)  \geq\mathbb{P}\left(  TS_{2n}>c_{r}(\alpha)\right)  \rightarrow1.
\]
\end{proof}

\begin{proof}[Proof of Theorem \ref{thm.bias reduction binary D}]
    Theorem \ref{thm.bias reduction binary D} is a special case of Theorem \ref{thm.bias reduction multi ordered} in Appendix \ref{sec.ordered treatment}. See the proof of Theorem \ref{thm.bias reduction multi ordered} in Appendix \ref{sec.general proofs ordered treatment}.
\end{proof}

\begin{proof}[Proof of Theorem \ref{thm.IV estimator asymptotics binary D}]
    Theorem \ref{thm.IV estimator asymptotics binary D} is a special case of Theorem \ref{thm.IV estimator asymptotics} in Appendix \ref{sec.ordered treatment partial}. See the proof of Theorem \ref{thm.IV estimator asymptotics} in Appendix \ref{sec.ordered treatment partial}.
\end{proof}

\begin{proof}[Proof of Lemma \ref{lemma.testable implications for Z2 weaker}]
First, we show that for every pair $\mathcal{Z}_{(k,k')}$, the testable implications of \citet{kitagawa2015test}, \citet{mourifie2016testing}, and \citet{sun2021ivvalidity} imply those of \citet{kedagni2020generalized}. Now suppose that the testable implications in \eqref{eq.testable implication binary D} hold for $\mathcal{Z}_{(k,k')}$, that is,
for all Borel sets $A$, we have
\begin{align*}
&\mathbb{P}\left(  Y\in A,D=1|Z=z_{k}\right)  \leq\mathbb{P}\left(  Y\in
A,D=1|Z=z_{k^{\prime}}\right)  \text{ and }\\
&\mathbb{P}\left(  Y\in A,D=0|Z=z_{k}\right)  \geq\mathbb{P}\left(  Y\in
A,D=0|Z=z_{k^{\prime}}\right).
\end{align*}
We let $p\left(
A,d,z\right)  =\mathbb{P}\left(  Y\in A,D=d|Z=z\right)  $ for all
$A\in\mathcal{B}_{\mathbb{R}}$, each $d\in\left\{  0,1\right\}  $, and all
$z\in\mathcal{Z}$.

\bigskip

\noindent \textbf{Step (i):} Suppose $A_{1},A_{2},A_{3}\in\mathcal{B}_{\mathbb{R}}$ with $A_{2}\cap
A_{3}=\varnothing$. Let $\bar{A}_{2}=A_{2}\cup A_{3}$. We want to show that
\begin{align}\label{eq.step (i) inequality 1-1}
&  \min_{z\in\mathcal{Z}_{(k,k^{\prime})}}\left\{  p\left(  A_{1},1,z\right)
+p\left(  \bar{A}_{2},0,z\right)  \right\}  \notag\\
 \leq&\,\min_{z\in\mathcal{Z}_{(k,k^{\prime})}}\left\{  p\left(  A_{1}%
,1,z\right)  +p\left(  A_{2},0,z\right)  \right\}  +\min_{z\in\mathcal{Z}%
_{(k,k^{\prime})}}\left\{  p\left(  A_{1},1,z\right)  +p\left(  A_{3}%
,0,z\right)  \right\}
\end{align}
and%
\begin{align}\label{eq.step (i) inequality 1-2}
&  \min_{z\in\mathcal{Z}_{(k,k^{\prime})}}\left\{  p\left(  A_{1},0,z\right)
+p\left(  \bar{A}_{2},1,z\right)  \right\} \notag \\
  \leq&\,\min_{z\in\mathcal{Z}_{(k,k^{\prime})}}\left\{  p\left(  A_{1}%
,0,z\right)  +p\left(  A_{2},1,z\right)  \right\}  +\min_{z\in\mathcal{Z}%
_{(k,k^{\prime})}}\left\{  p\left(  A_{1},0,z\right)  +p\left(  A_{3}%
,1,z\right)  \right\}  .
\end{align}

Suppose%
\begin{align*}
&  \min_{z\in\mathcal{Z}_{(k,k^{\prime})}}\left\{  p\left(  A_{1},1,z\right)
+p\left(  A_{2},0,z\right)  \right\}  +\min_{z\in\mathcal{Z}_{(k,k^{\prime})}%
}\left\{  p\left(  A_{1},1,z\right)  +p\left(  A_{3},0,z\right)  \right\}  \\
=&\,\left\{  p\left(  A_{1},1,z_{k}\right)  +p\left(  A_{2},0,z_{k}\right)
\right\}  +\left\{  p\left(  A_{1},1,z_{k^{\prime}}\right)  +p\left(
A_{3},0,z_{k^{\prime}}\right)  \right\}  .
\end{align*}
Because by assumption,%
\[
\mathbb{P}\left(  Y\in A_{2},D=0|Z=z_{k}\right)  \geq\mathbb{P}\left(  Y\in
A_{2},D=0|Z=z_{k^{\prime}}\right)  ,
\]
we have that
\begin{align*}
&  \min_{z\in\mathcal{Z}_{(k,k^{\prime})}}\left\{  p\left(  A_{1},1,z\right)
+p\left(  A_{2},0,z\right)  \right\}  +\min_{z\in\mathcal{Z}_{(k,k^{\prime})}%
}\left\{  p\left(  A_{1},1,z\right)  +p\left(  A_{3},0,z\right)  \right\}  \\
 \geq&\,\left\{  p\left(  A_{1},1,z_{k}\right)  +p\left(  A_{2},0,z_{k^{\prime
}}\right)  \right\}  +\left\{  p\left(  A_{1},1,z_{k^{\prime}}\right)
+p\left(  A_{3},0,z_{k^{\prime}}\right)  \right\}  \\
=&\,p\left(  A_{1},1,z_{k}\right)  +p\left(  A_{1},1,z_{k^{\prime}}\right)
+p\left(  \bar{A}_{2},0,z_{k^{\prime}}\right)  \\
\geq&\,\min_{z\in\mathcal{Z}_{(k,k^{\prime})}}\left\{  p\left(  A_{1}%
,1,z\right)  +p\left(  \bar{A}_{2},0,z\right)  \right\}  .
\end{align*}
Suppose
\begin{align*}
&  \min_{z\in\mathcal{Z}_{(k,k^{\prime})}}\left\{  p\left(  A_{1},1,z\right)
+p\left(  A_{2},0,z\right)  \right\}  +\min_{z\in\mathcal{Z}_{(k,k^{\prime})}%
}\left\{  p\left(  A_{1},1,z\right)  +p\left(  A_{3},0,z\right)  \right\}  \\
 =&\,\left\{  p\left(  A_{1},1,z_{k^{\prime}}\right)  +p\left(  A_{2}%
,0,z_{k^{\prime}}\right)  \right\}  +\left\{  p\left(  A_{1},1,z_{k}\right)
+p\left(  A_{3},0,z_{k}\right)  \right\}  .
\end{align*}
Because by assumption,
\[
\mathbb{P}\left(  Y\in A_{3},D=0|Z=z_{k}\right)  \geq\mathbb{P}\left(  Y\in
A_{3},D=0|Z=z_{k^{\prime}}\right)  ,
\]
we have that
\begin{align*}
&  \min_{z\in\mathcal{Z}_{(k,k^{\prime})}}\left\{  p\left(  A_{1},1,z\right)
+p\left(  A_{2},0,z\right)  \right\}  +\min_{z\in\mathcal{Z}_{(k,k^{\prime})}%
}\left\{  p\left(  A_{1},1,z\right)  +p\left(  A_{3},0,z\right)  \right\}  \\
\geq&\,\left\{  p\left(  A_{1},1,z_{k^{\prime}}\right)  +p\left(
A_{2},0,z_{k^{\prime}}\right)  \right\}  +\left\{  p\left(  A_{1}%
,1,z_{k}\right)  +p\left(  A_{3},0,z_{k^{\prime}}\right)  \right\}  \\
=&\,p\left(  A_{1},1,z_{k}\right)  +p\left(  A_{1},1,z_{k^{\prime}}\right)
+p\left(  \bar{A}_{2},0,z_{k^{\prime}}\right)  \\
\geq&\,\min_{z\in\mathcal{Z}_{(k,k^{\prime})}}\left\{  p\left(  A_{1}%
,1,z\right)  +p\left(  \bar{A}_{2},0,z\right)  \right\}  .
\end{align*}

The other two cases are trivial. Similarly, we can show that
\begin{align*}
&  \min_{z\in\mathcal{Z}_{(k,k^{\prime})}}\left\{  p\left(  A_{1},0,z\right)
+p\left(  \bar{A}_{2},1,z\right)  \right\}  \\
\leq&\,\min_{z\in\mathcal{Z}_{(k,k^{\prime})}}\left\{  p\left(  A_{1}%
,0,z\right)  +p\left(  A_{2},1,z\right)  \right\}  +\min_{z\in\mathcal{Z}%
_{(k,k^{\prime})}}\left\{  p\left(  A_{1},0,z\right)  +p\left(  A_{3}%
,1,z\right)  \right\}  .
\end{align*}

\bigskip

\noindent \textbf{Step (ii):} Suppose $A_{2},A_{3},B_{2},B_{3}\in
\mathcal{B}_{\mathbb{R}}$ with $A_{2}\cap A_{3}=\varnothing$ and $B_{2}\cap
B_{3}=\varnothing$. Let $\bar{A}_{2}=A_{2}\cup A_{3}$ and $\bar{B}_{2}%
=B_{2}\cup B_{3}$. We want to show that
\begin{align}\label{eq.step (ii) inequality 2}
&  \min_{z\in\mathcal{Z}_{(k,k^{\prime})}}\left\{  p\left(  \bar{A}%
_{2},1,z\right)  +p\left(  \bar{B}_{2},0,z\right)  \right\} \notag\\
\leq &  \,\min_{z\in\mathcal{Z}_{(k,k^{\prime})}}\left\{  p\left(
A_{2},1,z\right)  +p\left(  B_{2},0,z\right)  \right\}  +\min_{z\in
\mathcal{Z}_{(k,k^{\prime})}}\left\{  p\left(  A_{2},1,z\right)  +p\left(
B_{3},0,z\right)  \right\} \notag\\
&  +\min_{z\in\mathcal{Z}_{(k,k^{\prime})}}\left\{  p\left(  A_{3},1,z\right)
+p\left(  B_{2},0,z\right)  \right\}  +\min_{z\in\mathcal{Z}_{(k,k^{\prime})}%
}\left\{  p\left(  A_{3},1,z\right)  +p\left(  B_{3},0,z\right)  \right\}  .
\end{align}

If
\begin{align*}
&  \min_{z\in\mathcal{Z}_{(k,k^{\prime})}}\left\{  p\left(  A_{2},1,z\right)
+p\left(  B_{2},0,z\right)  \right\}  +\min_{z\in\mathcal{Z}_{(k,k^{\prime})}%
}\left\{  p\left(  A_{3},1,z\right)  +p\left(  B_{3},0,z\right)  \right\}  \\
= &  \,\left\{  p\left(  A_{2},1,z\right)  +p\left(  B_{2},0,z\right)
\right\}  +\left\{  p\left(  A_{3},1,z\right)  +p\left(  B_{3},0,z\right)
\right\}
\end{align*}
for some $z\in\mathcal{Z}_{(k,k^{\prime})}$, or if
\begin{align*}
&  \min_{z\in\mathcal{Z}_{(k,k^{\prime})}}\left\{  p\left(  A_{2},1,z\right)
+p\left(  B_{3},0,z\right)  \right\}  +\min_{z\in\mathcal{Z}_{(k,k^{\prime})}%
}\left\{  p\left(  A_{3},1,z\right)  +p\left(  B_{2},0,z\right)  \right\}  \\
= &  \,\left\{  p\left(  A_{2},1,z\right)  +p\left(  B_{3},0,z\right)
\right\}  +\left\{  p\left(  A_{3},1,z\right)  +p\left(  B_{2},0,z\right)
\right\}
\end{align*}
for some $z\in\mathcal{Z}_{(k,k^{\prime})}$, then the result is trivial.

Suppose
\begin{align*}
&  \min_{z\in\mathcal{Z}_{(k,k^{\prime})}}\left\{  p\left(  A_{2},1,z\right)
+p\left(  B_{2},0,z\right)  \right\}  +\min_{z\in\mathcal{Z}_{(k,k^{\prime})}%
}\left\{  p\left(  A_{3},1,z\right)  +p\left(  B_{3},0,z\right)  \right\}  \\
= &  \,\left\{  p\left(  A_{2},1,z_{k}\right)  +p\left(  B_{2},0,z_{k}\right)
\right\}  +\left\{  p\left(  A_{3},1,z_{k^{\prime}}\right)  +p\left(
B_{3},0,z_{k^{\prime}}\right)  \right\}
\end{align*}
and
\begin{align*}
&  \min_{z\in\mathcal{Z}_{(k,k^{\prime})}}\left\{  p\left(  A_{2},1,z\right)
+p\left(  B_{3},0,z\right)  \right\}  +\min_{z\in\mathcal{Z}_{(k,k^{\prime})}%
}\left\{  p\left(  A_{3},1,z\right)  +p\left(  B_{2},0,z\right)  \right\}  \\
= &  \,\left\{  p\left(  A_{2},1,z_{k}\right)  +p\left(  B_{3},0,z_{k}\right)
\right\}  +\left\{  p\left(  A_{3},1,z_{k^{\prime}}\right)  +p\left(
B_{2},0,z_{k^{\prime}}\right)  \right\}  .
\end{align*}
Then because by assumption,
\[
\mathbb{P}\left(  Y\in A_{3},D=1|Z=z_{k^{\prime}}\right)  \geq\mathbb{P}%
\left(  Y\in A_{3},D=1|Z=z_{k}\right)  ,
\]
we have that
\begin{align*}
&  p\left(  A_{2},1,z_{k}\right)  +p\left(  B_{2},0,z_{k}\right)  +p\left(
A_{3},1,z_{k^{\prime}}\right)  +p\left(  B_{3},0,z_{k^{\prime}}\right)  \\
&  +p\left(  A_{2},1,z_{k}\right)  +p\left(  B_{3},0,z_{k}\right)  +p\left(
A_{3},1,z_{k^{\prime}}\right)  +p\left(  B_{2},0,z_{k^{\prime}}\right)  \\
\geq &  \,p\left(  A_{2},1,z_{k}\right)  +p\left(  A_{3},1,z_{k}\right)
+p\left(  B_{2},0,z_{k}\right)  +p\left(  B_{3},0,z_{k}\right)  \\
= &  \,p\left(  \bar{A}_{2},1,z_{k}\right)  +p\left(  \bar{B}_{2}%
,0,z_{k}\right)  .
\end{align*}

Suppose%
\begin{align*}
&  \min_{z\in\mathcal{Z}_{(k,k^{\prime})}}\left\{  p\left(  A_{2},1,z\right)
+p\left(  B_{2},0,z\right)  \right\}  +\min_{z\in\mathcal{Z}_{(k,k^{\prime})}%
}\left\{  p\left(  A_{3},1,z\right)  +p\left(  B_{3},0,z\right)  \right\}  \\
= &  \,\left\{  p\left(  A_{2},1,z_{k}\right)  +p\left(  B_{2},0,z_{k}\right)
\right\}  +\left\{  p\left(  A_{3},1,z_{k^{\prime}}\right)  +p\left(
B_{3},0,z_{k^{\prime}}\right)  \right\}
\end{align*}
and
\begin{align*}
&  \min_{z\in\mathcal{Z}_{(k,k^{\prime})}}\left\{  p\left(  A_{2},1,z\right)
+p\left(  B_{3},0,z\right)  \right\}  +\min_{z\in\mathcal{Z}_{(k,k^{\prime})}%
}\left\{  p\left(  A_{3},1,z\right)  +p\left(  B_{2},0,z\right)  \right\}  \\
= &  \,\left\{  p\left(  A_{2},1,z_{k^{\prime}}\right)  +p\left(
B_{3},0,z_{k^{\prime}}\right)  \right\}  +\left\{  p\left(  A_{3}%
,1,z_{k}\right)  +p\left(  B_{2},0,z_{k}\right)  \right\}  .
\end{align*}
Then because by assumption,
\[
\mathbb{P}\left(  Y\in B_{2},D=0|Z=z_{k}\right)  \geq\mathbb{P}\left(  Y\in
B_{2},D=0|Z=z_{k^{\prime}}\right)  ,
\]
we have that
\begin{align*}
&  \left\{  p\left(  A_{2},1,z_{k}\right)  +p\left(  B_{2},0,z_{k}\right)
\right\}  +\left\{  p\left(  A_{3},1,z_{k^{\prime}}\right)  +p\left(
B_{3},0,z_{k^{\prime}}\right)  \right\}  \\
&  +\left\{  p\left(  A_{2},1,z_{k^{\prime}}\right)  +p\left(  B_{3}%
,0,z_{k^{\prime}}\right)  \right\}  +\left\{  p\left(  A_{3},1,z_{k}\right)
+p\left(  B_{2},0,z_{k}\right)  \right\}  \\
\geq &  \,p\left(  A_{2},1,z_{k^{\prime}}\right)  +p\left(  A_{3}%
,1,z_{k^{\prime}}\right)  +p\left(  B_{2},0,z_{k^{\prime}}\right)  +p\left(
B_{3},0,z_{k^{\prime}}\right)  \\
= &  \,p\left(  \bar{A}_{2},1,z_{k^{\prime}}\right)  +p\left(  \bar{B}%
_{2},0,z_{k^{\prime}}\right)  .
\end{align*}

Suppose%
\begin{align*}
&  \min_{z\in\mathcal{Z}_{(k,k^{\prime})}}\left\{  p\left(  A_{2},1,z\right)
+p\left(  B_{2},0,z\right)  \right\}  +\min_{z\in\mathcal{Z}_{(k,k^{\prime})}%
}\left\{  p\left(  A_{3},1,z\right)  +p\left(  B_{3},0,z\right)  \right\}  \\
= &  \,\left\{  p\left(  A_{2},1,z_{k^{\prime}}\right)  +p\left(
B_{2},0,z_{k^{\prime}}\right)  \right\}  +\left\{  p\left(  A_{3}%
,1,z_{k}\right)  +p\left(  B_{3},0,z_{k}\right)  \right\}
\end{align*}
and
\begin{align*}
&  \min_{z\in\mathcal{Z}_{(k,k^{\prime})}}\left\{  p\left(  A_{2},1,z\right)
+p\left(  B_{3},0,z\right)  \right\}  +\min_{z\in\mathcal{Z}_{(k,k^{\prime})}%
}\left\{  p\left(  A_{3},1,z\right)  +p\left(  B_{2},0,z\right)  \right\}  \\
= &  \,\left\{  p\left(  A_{2},1,z_{k}\right)  +p\left(  B_{3},0,z_{k}\right)
\right\}  +\left\{  p\left(  A_{3},1,z_{k^{\prime}}\right)  +p\left(
B_{2},0,z_{k^{\prime}}\right)  \right\}  .
\end{align*}
Then because by assumption,
\[
\mathbb{P}\left(  Y\in B_{3},D=0|Z=z_{k}\right)  \geq\mathbb{P}\left(  Y\in
B_{3},D=0|Z=z_{k^{\prime}}\right)  ,
\]
we have that%
\begin{align*}
&  \left\{  p\left(  A_{2},1,z_{k^{\prime}}\right)  +p\left(  B_{2}%
,0,z_{k^{\prime}}\right)  \right\}  +\left\{  p\left(  A_{3},1,z_{k}\right)
+p\left(  B_{3},0,z_{k}\right)  \right\}  \\
&  +\left\{  p\left(  A_{2},1,z_{k}\right)  +p\left(  B_{3},0,z_{k}\right)
\right\}  +\left\{  p\left(  A_{3},1,z_{k^{\prime}}\right)  +p\left(
B_{2},0,z_{k^{\prime}}\right)  \right\}  \\
\geq &  \,p\left(  A_{2},1,z_{k^{\prime}}\right)  +p\left(  A_{3}%
,1,z_{k^{\prime}}\right)  +p\left(  B_{2},0,z_{k^{\prime}}\right)  +p\left(
B_{3},0,z_{k^{\prime}}\right)  \\
= &  \,p\left(  \bar{A}_{2},1,z_{k^{\prime}}\right)  +p\left(  \bar{B}%
_{2},0,z_{k^{\prime}}\right)  .
\end{align*}

Suppose%
\begin{align*}
&  \min_{z\in\mathcal{Z}_{(k,k^{\prime})}}\left\{  p\left(  A_{2},1,z\right)
+p\left(  B_{2},0,z\right)  \right\}  +\min_{z\in\mathcal{Z}_{(k,k^{\prime})}%
}\left\{  p\left(  A_{3},1,z\right)  +p\left(  B_{3},0,z\right)  \right\}  \\
= &  \,\left\{  p\left(  A_{2},1,z_{k^{\prime}}\right)  +p\left(
B_{2},0,z_{k^{\prime}}\right)  \right\}  +\left\{  p\left(  A_{3}%
,1,z_{k}\right)  +p\left(  B_{3},0,z_{k}\right)  \right\}
\end{align*}
and
\begin{align*}
&  \min_{z\in\mathcal{Z}_{(k,k^{\prime})}}\left\{  p\left(  A_{2},1,z\right)
+p\left(  B_{3},0,z\right)  \right\}  +\min_{z\in\mathcal{Z}_{(k,k^{\prime})}%
}\left\{  p\left(  A_{3},1,z\right)  +p\left(  B_{2},0,z\right)  \right\}  \\
= &  \,\left\{  p\left(  A_{2},1,z_{k^{\prime}}\right)  +p\left(
B_{3},0,z_{k^{\prime}}\right)  \right\}  +\left\{  p\left(  A_{3}%
,1,z_{k}\right)  +p\left(  B_{2},0,z_{k}\right)  \right\}  .
\end{align*}
Then because by assumption,
\[
\mathbb{P}\left(  Y\in A_{2},D=1|Z=z_{k^{\prime}}\right)  \geq\mathbb{P}%
\left(  Y\in A_{2},D=1|Z=z_{k}\right)  ,
\]
we have that%
\begin{align*}
&  \left\{  p\left(  A_{2},1,z_{k^{\prime}}\right)  +p\left(  B_{2}%
,0,z_{k^{\prime}}\right)  \right\}  +\left\{  p\left(  A_{3},1,z_{k}\right)
+p\left(  B_{3},0,z_{k}\right)  \right\}  \\
&  +\left\{  p\left(  A_{2},1,z_{k^{\prime}}\right)  +p\left(  B_{3}%
,0,z_{k^{\prime}}\right)  \right\}  +\left\{  p\left(  A_{3},1,z_{k}\right)
+p\left(  B_{2},0,z_{k}\right)  \right\}  \\
\geq &  \,p\left(  A_{2},1,z_{k}\right)  +p\left(  A_{3},1,z_{k}\right)
+p\left(  B_{2},0,z_{k}\right)  +p\left(  B_{3},0,z_{k}\right)  \\
= &  \,p\left(  \bar{A}_{2},1,z_{k}\right)  +p\left(  \bar{B}_{2}%
,0,z_{k}\right)  .
\end{align*}

 In the following, we show that \eqref{eq.first inequality KM}--\eqref{eq.third inequality KM} are implied by \eqref{eq.testable implication binary D}.

We first consider \eqref{eq.first inequality KM}. 
For $d=0$, $\max_{z\in\mathcal{Z}_{(k,k^{\prime})}}f_{Y,D}\left(
y,d|z\right)  =f_{Y,D}\left(  y,d|z_{k}\right)  $, and for $d=1$, $\max
_{z\in\mathcal{Z}_{(k,k^{\prime})}}f_{Y,D}\left(  y,d|z\right)  =f_{Y,D}%
\left(  y,d|z_{k^{\prime}}\right)  $. Then \eqref{eq.first inequality KM} is trivial.

We then consider \eqref{eq.second inequality KM}. We let $d_{1}=0$ and $d_{2}=1$, and $P_{\mathbb{R}}^{1}$ and $P_{\mathbb{R}%
}^{2}$ be two arbitrary partitions with $P_{\mathbb{R}}^{1}=\{  A_{1}
^{1},A_{2}^{1},\ldots,A_{N_1}^{1}\}  $ and $P_{\mathbb{R}}^{2}=\{  A_{1}
^{2},A_{2}^{2},\ldots,A_{N_2}^{2}\}  $ for some $N_1$ and $N_2$. First, we have that
\[
\min_{z\in\mathcal{Z}_{(k,k^{\prime})}}\left\{  p\left(  \mathbb{R}%
,1,z\right)  +p\left(  \mathbb{R},0,z\right)  \right\}  =1.
\]
We write $\bar{A}_{2}^{j}=\cup_{l=2}^{N_j}A_{l}^{j}$. With $A_{1}^{j}%
\cup\bar{A}_{2}^{j}=\mathbb{R}$, by \eqref{eq.step (ii) inequality 2},
\begin{align*}
1 \leq&\,\min_{z\in\mathcal{Z}_{(k,k^{\prime})}}\left\{  p\left(  A_{1}%
^{1},1,z\right)  +p\left(  A_{1}^{2},0,z\right)  \right\}  +\min
_{z\in\mathcal{Z}_{(k,k^{\prime})}}\left\{  p\left(  A_{1}^{1},1,z\right)
+p\left(  \bar{A}_{2}^{2},0,z\right)  \right\}  \\
&  +\min_{z\in\mathcal{Z}_{(k,k^{\prime})}}\left\{  p\left(  \bar{A}_{2}%
^{1},1,z\right)  +p\left(  A_{1}^{2},0,z\right)  \right\}  +\min
_{z\in\mathcal{Z}_{(k,k^{\prime})}}\left\{  p\left(  \bar{A}_{2}%
^{1},1,z\right)  +p\left(  \bar{A}_{2}^{2},0,z\right)  \right\}  .
\end{align*}
Then the result follows by repeating the above procedure under \eqref{eq.step (i) inequality 1-1}--\eqref{eq.step (ii) inequality 2}.

We finally consider \eqref{eq.third inequality KM}. For $j=1$,
\[
\int_{A_{1}}\max_{z\in\mathcal{Z}_{(k,k^{\prime})}}f_{Y,D}\left(
y,d_{1}|z\right)  \mathrm{d}y=\int_{A_{1}}f_{Y,D}\left(  y,d_{1}|z_{k}\right)
\mathrm{d}y
\]
and by \eqref{eq.step (i) inequality 1-1} and \eqref{eq.step (i) inequality 1-2},
\begin{align*}
\varphi_{1}\left(  A_{1},\mathcal{Z}_{(k,k^{\prime})},P_{\mathbb{R}}%
^{1},P_{\mathbb{R}}^{2}\right)   &  =\sum_{A_{2}\in P_{\mathbb{R}}^{2}}%
\min_{z\in\mathcal{Z}_{(k,k^{\prime})}}\left\{  \int_{A_{1}}f_{Y,D}\left(
y,d_{1}|z\right)  \mathrm{d}y+\int_{A_{2}}f_{Y,D}\left(  y,d_{2}|z\right)
\mathrm{d}y\right\} \\
&  \geq\min_{z\in\mathcal{Z}_{(k,k^{\prime})}}\left\{  \int_{A_{1}}%
f_{Y,D}\left(  y,d_{1}|z\right)  \mathrm{d}y+\mathbb{P}\left(  D=d_{2}%
|Z=z\right)  \right\}  .
\end{align*}
If
\begin{align*}\min_{z\in\mathcal{Z}_{(k,k^{\prime})}}\left\{  \int_{A_{1}}f_{Y,D}\left(
y,d_{1}|z\right)  \mathrm{d}y+\mathbb{P}\left(  D=d_{2}|Z=z\right)  \right\} =\int_{A_{1}}f_{Y,D}\left(  y,d_{1}|z_{k}\right)  \mathrm{d}y+\mathbb{P}%
\left(  D=d_{2}|Z=z_{k}\right)  ,
\end{align*}
then
\[
\int_{A_{1}}f_{Y,D}\left(  y,d_{1}|z_{k}\right)  \mathrm{d}y+\mathbb{P}\left(
D=d_{2}|Z=z_{k}\right)  -\int_{A_{1}}f_{Y,D}\left(  y,d_{1}|z_{k}\right)
\mathrm{d}y\geq0.
\]
If
\begin{align*}\min_{z\in\mathcal{Z}_{(k,k^{\prime})}}\left\{  \int_{A_{1}}f_{Y,D}\left(
y,d_{1}|z\right)  \mathrm{d}y+\mathbb{P}\left(  D=d_{2}|Z=z\right)  \right\}
=\int_{A_{1}}f_{Y,D}\left(  y,d_{1}|z_{k^{\prime}}\right)  \mathrm{d}%
y+\mathbb{P}\left(  D=d_{2}|Z=z_{k^{\prime}}\right)  ,
\end{align*}
then
\begin{align*}
&  \int_{A_{1}}f_{Y,D}\left(  y,d_{1}|z_{k^{\prime}}\right)  \mathrm{d}%
y+\mathbb{P}\left(  D=d_{2}|Z=z_{k^{\prime}}\right)  -\int_{A_{1}}%
f_{Y,D}\left(  y,d_{1}|z_{k}\right)  \mathrm{d}y\\
=&\,\mathbb{P}\left(  Y\in A_{1},D=0|Z=z_{k^{\prime}}\right)  +\mathbb{P}%
\left(  D=1|Z=z_{k^{\prime}}\right)  -\mathbb{P}\left(  Y\in A_{1}%
,D=0|Z=z_{k}\right) \\
=&\,1-\mathbb{P}\left(D=0|Z=z_{k^{\prime}}\right)  +\mathbb{P}\left(
Y\in A_{1},D=0|Z=z_{k^{\prime}}\right)  -\mathbb{P}\left(  Y\in A_{1}%
,D=0|Z=z_{k}\right) \\
=&\,1-\mathbb{P}\left(  Y\in A_{1}^{c},D=0|Z=z_{k^{\prime}}\right)
-\mathbb{P}\left(  Y\in A_{1},D=0|Z=z_{k}\right) \\
\geq&\,1-\mathbb{P}\left(  Y\in A_{1}^{c},D=0|Z=z_{k}\right)  -\mathbb{P}%
\left(  Y\in A_{1},D=0|Z=z_{k}\right) \\
=&\,\mathbb{P}\left(  D=1|Z=z_{k}\right)  \geq0.
\end{align*}

For $j=2$, the proof is analogous.

Next, we show that the testable implications in \eqref{eq.testable implication binary D} are sharp for every pair $(z,z')$. 
We closely follow the proof of Proposition 1.1(i) in \citet{kitagawa2015test} and show that given the testable implications in \eqref{eq.testable implication binary D} hold for pair $(z,z')$, then there is always a joint distribution of the variables $(  Y_{0z},Y_{0z^{\prime}
},Y_{1z},Y_{1z^{\prime}},D_{z},D_{z^{\prime}},Z)  $ that satisfies all the
pairwise IV validity conditions in Definition \ref{def.partial validity pairwise binary D} and induces the observable distributions. Without loss of generality, here we suppose $Z\in\mathbb{R}$ for simplicity. It is straightforward to extend the results to multi-dimensional $Z$. 

Fix  $z,z^{\prime}\in\mathcal{Z}$. For each $d\in\{0,1\}$, suppose
$Y_{d}\left(  z,z^{\prime}\right)  =Y_{dz}=Y_{dz^{\prime}}$ a.s.
Recall that for all $z\in\mathcal{Z}$,
\[
P_{z}\left(  B,C\right)  =\mathbb{P}\left(  Y\in B,D\in C|Z=z\right)
.
\]
We first define for all $B_{1},B_{0}\in
\mathcal{B}_{\mathbb{R}}$,
\begin{align*}
&  w\left(  B_{1},B_{0},\left\{  1\right\}  ,\left\{  0\right\}  \right)  \\
= &  \,\frac{P_{z^{\prime}}\left(  B_{1},\left\{  1\right\}  \right)
-P_{z}\left(  B_{1},\left\{  1\right\}  \right)  }{P_{z^{\prime}}\left(
\mathbb{R},\left\{  1\right\}  \right)  -P_{z}\left(  \mathbb{R},\left\{
1\right\}  \right)  }\cdot\frac{P_{z}\left(  B_{0},\left\{  0\right\}
\right)  -P_{z^{\prime}}\left(  B_{0},\left\{  0\right\}  \right)  }%
{P_{z}\left(  \mathbb{R},\left\{  0\right\}  \right)  -P_{z^{\prime}}\left(
\mathbb{R},\left\{  0\right\}  \right)  }\cdot\left(  P_{z^{\prime}}\left(
\mathbb{R},\left\{  1\right\}  \right)  -P_{z}\left(  \mathbb{R},\left\{
1\right\}  \right)  \right)  ,
\end{align*}%
\begin{align*}
 w\left(  B_{1},B_{0},\left\{  0\right\}  ,\left\{  0\right\}  \right) 
= \frac{P_{z}\left(  B_{1},\left\{  1\right\}  \right)  }{P_{z}\left(
\mathbb{R},\left\{  0\right\}  \right)  }\cdot\frac{P_{z^{\prime}}\left(
B_{0},\left\{  0\right\}  \right)  }{P_{z^{\prime}}\left(  \mathbb{R},\left\{
0\right\}  \right)  }\cdot P_{z^{\prime}}\left(  \mathbb{R},\left\{
0\right\}  \right)  ,
\end{align*}%
\begin{align*}
 w\left(  B_{1},B_{0},\left\{  1\right\}  ,\left\{  1\right\}  \right)  
= \frac{P_{z}\left(  B_{1},\left\{  1\right\}  \right)  }{P_{z}\left(
\mathbb{R},\left\{  1\right\}  \right)  }\cdot\frac{P_{z^{\prime}}\left(
B_{0},\left\{  1\right\}  \right)  }{P_{z^{\prime}}\left(  \mathbb{R},\left\{
1\right\}  \right)  }\cdot P_{z}\left(  \mathbb{R},\left\{  1\right\}
\right)  ,
\end{align*}
and%
\[
w\left(  B_{1},B_{0},\left\{  0\right\}  ,\left\{  1\right\}  \right)  =0.
\]
Let $\mathscr{H}$ denote the collection of all h-intervals \citep[p.~33]{folland2013real}. That is, for every $B\in\mathscr{H}$, $B$ is of the form $(a,b]$, $(a,\infty)$, or $\varnothing$, where $-\infty\le a<b<\infty$.
Let $\mathcal{E}$ be the collection of all sets of the form $B_{1}\times B_{0}\times C_{2}\times C_{1}$ with $B_{1},B_{0},C_{2},C_{1}\in
\mathscr{H}$. 
Then we define a function $\mu_0:\mathcal{E}\to[0,\infty]$ such that
for all $B_{1},B_{0},C_{2},C_{1}\in
\mathscr{H}$,
\[
\mu_{0}\left(  B_{1}\times B_{0}\times C_{2}\times C_{1}\right)  =\sum
_{d_{2},d_{1}\in\left\{  0,1\right\}  }  1\left\{
d_{2}\in C_{2},d_{1}\in C_{1}\right\}    \cdot w\left(  B_{1},B_{0},\left\{
d_{2}\right\}  ,\left\{  d_{1}\right\}  \right)  .
\]
By the construction of $\mu_0$, it is straightforward to show that if $\{A_{j}\}_{j=1}^J\subseteq\mathcal{E}$ are mutually
disjoint and $\cup_{j=1}^{J}A_{j}\in\mathcal{E}$, then
\begin{align}\label{eq.mu0 finite additivity}
\mu_{0}\left(  \cup_{j=1}^{J}A_{j}\right)  =\sum_{j=1}^{J}\mu_{0}\left(
A_{j}\right)  .
\end{align}

By Proposition 1.7 of \citet{folland2013real}, the collection $\mathcal{A}$ of finite disjoint
unions of sets in $\mathcal{E}$ is an
algebra. By Propositions 1.4 and 1.6 of \citet{folland2013real}, the $\sigma$-algebra generated by $\mathcal{A}$ is $\mathcal{B}_{\mathbb{R}
^{4}}$. 
We now extend $\mu_{0}$ to $\mu_{1}:\mathcal{A}\rightarrow\left[
0,\infty\right]  $ such that for all $A\in\mathcal{A}$ with $A=\cup_{j=1}^J A_j$ and $\{A_{j}\}_{j=1}^J\subseteq\mathcal{E}$ which
are mutually disjoint,
\begin{align}\label{eq.mu1 finite additivity}
\mu_{1}\left(  A\right)  =\sum_{j=1}^{J}\mu_{0}\left(
A_{j}\right)  .
\end{align}
We next show that $\mu_{1}$ is well defined and is a premeasure on $\mathcal{A}$ following the idea of \citet[Proof of Proposition 1.15]{folland2013real}.\footnote{See the definition of premeasure in \citet[p.~30]{folland2013real}.} Suppose $\{A_{j}^1\}_{j=1}^J,\{A_{k}^2\}_{k=1}^K\subseteq\mathcal{E}$ are two collections of mutually disjoint sets and $\cup_{j=1}
^{J}A_{j}^{1}=\cup_{k=1}^{K}A_{k}^{2}$. Then,
it follows from \eqref{eq.mu0 finite additivity} that
\[
\sum_{j=1}^{J}\mu_{0}\left(  A_{j}^{1}\right)  =\sum_{j=1}^{J}\sum_{k=1}%
^{K}\mu_{0}\left(  A_{j}^{1}\cap A_{k}^{2}\right)  =\sum_{k=1}^{K}\mu
_{0}\left(  A_{k}^{2}\right)  .
\]
This verifies that $\mu_{1}$ is well defined.

Then we show that in general, if $\{A_{j}\}_{j=1}^J\subseteq\mathcal{E}$ (which may not be
mutually disjoint), then
\[
\mu_{1}\left(  \cup_{j=1}^{J}A_{j}\right)  \leq\sum_{j=1}^{J}\mu_{0}\left(
A_{j}\right)  .
\]
Suppose $A_{j}=B_{1}^{j}\times B_{2}^{j}\times B_{3}^{j}\times B_{4}^{j}$, where $B_{m}^{j}\in \mathscr{H}$ for $m\in\{1,2,3,4\}$.
Then for every $m$, we can find $\{C_{m}^{j}\}_{j=1}^{J_{m}}\subseteq \mathscr{H}$ such that $\{C_{m}^{j}\}_{j=1}^{J_{m}}$ are mutually disjoint and every $B_{m}^{j}$ can be
written as a union of elements in $\{C_{m}^{j}\}_{j=1}^{J_{m}}$.
We next define mutually disjoint sets
\[
A_{j_{1}j_{2}j_{3}j_{4}}=C_{1}^{j_{1}}\times C_{2}^{j_{2}}\times C_{3}^{j_{3}%
}\times C_{4}^{j_{4}} \text{ for } j_m\in\{1,\ldots,J_m\}.
\]
Then $\cup_{j=1}^{J}A_{j}$ can be written as
\[
\cup_{j=1}^{J}A_{j}=\cup_{E\in\mathcal{S}}E,
\]
where $\mathcal{S}$ is some collection of $A_{j_{1}j_{2}j_{3}j_{4}}$. By the
construction of $\mu_{1}$,
\[
\mu_{1}\left(  \cup_{j=1}^{J}A_{j}\right)  =\sum_{E\in\mathcal{S}}\mu
_{0}\left(  E\right)  .
\]
Also, in this way, every $A_{j}$ can be written as a union of $A_{j_{1}%
j_{2}j_{3}j_{4}}$, that is,
\[
A_{j}=\cup_{E\in\mathcal{S}_{j}}E,
\]
where $\mathcal{S}_{j}$ is some collection of $A_{j_{1}j_{2}j_{3}j_{4}}$.
Note that here $\{A_{j}\}_{j=1}^J$ may not be mutually disjoint. So $\{\mathcal{S}_{j}\}_{j=1}^J$ may not be mutually disjoint. Also, $\mathcal{S=\cup}_{j}%
\mathcal{S}_{j}$ by definition. Then it follows that
\[
\sum_{j=1}^{J}\mu_{0}\left(  A_{j}\right)  =\sum_{j=1}^{J}\sum_{E\in
\mathcal{S}_{j}}\mu_{0}\left(  E\right)  \geq\sum_{E\in\mathcal{S}}\mu
_{0}\left(  E\right)  =\mu_{1}\left(  \cup_{j=1}^{J}A_{j}\right)  .
\]
Similarly, it is then straightforward to show that if $\{A_{j}\}_{j=1}^J\subseteq\mathcal{A}$ (which may not be
mutually disjoint), then
\[
\mu_{1}\left(  \cup_{j=1}^{J}A_{j}\right)  \leq\sum_{j=1}^{J}\mu_{1}\left(
A_{j}\right)  .
\]

Suppose $\{A_{j}\}\subseteq\mathcal{A}$, $\cup_{j=1}^{\infty}A_{j}%
\in\mathcal{A}$, and $\{A_{j}\}$ are mutually disjoint. Next we show that
\[
\mu_{1}\left(  \cup_{j=1}^{\infty}A_{j}\right)  =\sum_{j=1}^{\infty}\mu
_{1}\left(  A_{j}\right)  .
\]
Since $\cup_{j=1}^{\infty}A_{j}\in\mathcal{A}$, $\cup_{j=1}^{\infty}A_{j}$ can
be written as a finite disjoint union of elements in $\mathcal{E}$. Thus,
$\{A_{j}\}$ can be partitioned into finitely many subsets such that the union
of each subset is an element in $\mathcal{E}$. For simplicity, we consider
the case where $\cup_{j=1}^{\infty}A_{j}\in\mathcal{E}$. Otherwise, we may
consider every subset separately and \eqref{eq.mu1 finite additivity} gives
the result. Now suppose $A=\cup_{j=1}^{\infty}A_{j}=B_{1}\times B_{2}\times
B_{3}\times B_{4}$, where $B_{m}\in\mathscr{H}$. Fix an arbitrary $J$. The set $A\setminus\cup_{j=1}^{J}A_{j}\in\mathcal{A}$ by the definition of an algebra. Then it follows that
\[
\mu_{1}\left(  A\right)  =\mu_{1}\left(  \cup_{j=1}^{J}A_{j}\right)  +\mu
_{1}\left(  A\setminus\cup_{j=1}^{J}A_{j}\right)  \geq\sum_{j=1}^{J}\mu
_{1}\left(  A_{j}\right)  .
\]
Then we have $\mu_{1}\left(  A\right)  \geq\sum_{j=1}^{\infty}\mu_{1}(A_{j})$
by letting $J\rightarrow\infty$.

Fix an arbitrary $\varepsilon>0$. Suppose $A=B_{1}\times B_{2}\times
B_{3}\times B_{4}$ and $A_{j}=\cup_{k=1}^{K_j}B_{1}^{jk}\times B_{2}
^{jk}\times B_{3}^{jk}\times B_{4}^{jk}$ with $B_m=(a_m,b_m]$ and $B_{m}^{jk}=(a_{m}^{jk},b_{m}^{jk}]$, where $a_m,b_m,a_{m}^{jk},b_{m}^{jk}\in\mathbb{R}$ and $\{B_{1}^{jk}\times B_{2}
^{jk}\times B_{3}^{jk}\times B_{4}^{jk}\}_{k=1}^{K_j}$ are mutually disjoint.
By the construction of $\mu_{1}$, there is some $\delta>0$ such that%
\[
\mu_{1}\left(  A\right)  -\mu_{1}\left(  A_{\delta}\right)  <\varepsilon,
\]
where $A_{\delta}=B_{1\delta}\times B_{2\delta}\times B_{3\delta}\times
B_{4\delta}$ and $B_{m\delta}=(a_{m}+\delta,b_{m}]$. Also, there is
$\delta_{j}>0$ such that
\[
\mu_{1}\left(  A_{j\delta}\right)  -\mu_{1}\left(  A_{j}\right)
<\varepsilon2^{-j},
\]
where $A_{j\delta}=\cup_{k=1}^{K_j}B_{1\delta}^{jk}\times B_{2\delta}^{jk}\times B_{3\delta}%
^{jk}\times B_{4\delta}^{jk}$ and $B_{m\delta}^{jk}=(a_{m}^{jk},b_{m}^{jk}%
+\delta_{j}]$. Since by construction, the open sets $\{\cup_{k=1}^{K_j}\prod_{m\in\left\{  1,2,3,4\right\}
}(a_{m}^{jk},b_{m}^{jk}+\delta_{j})\}_{j}$ cover the compact set $\prod
_{m\in\left\{  1,2,3,4\right\}  }\left[  a_{m}+\delta,b_{m}\right]  $, there
is a finite subcover. Then by relabeling, we can find a cover $\{\cup_{k=1}^{K_j}\prod
_{m\in\left\{  1,2,3,4\right\}  }(a_{m}^{jk},b_{m}^{jk}+\delta_{j}]\}_{j=1}^{J}$
of $\prod_{m\in\left\{  1,2,3,4\right\}  }(  a_{m}+\delta,b_{m}]
$. It now follows that
\begin{align*}
\mu_{1}\left(  A\right)   &  <\mu_{1}\left(  A_{\delta}\right)  +\varepsilon
\leq\mu_{1}\left(  \cup_{j=1}^{J}\left\{  \cup_{k=1}^{K_j}\prod_{m\in\left\{  1,2,3,4\right\}
}(a_{m}^{jk},b_{m}^{jk}+\delta_{j}]\right\}  \right)  +\varepsilon\\
&  \leq\sum_{j=1}^{J}\mu_{1}\left(  A_{j\delta}\right)  +\varepsilon\leq
\sum_{j=1}^{\infty}\mu_{1}\left(  A_{j}\right)  +2\varepsilon.
\end{align*}
Since $\varepsilon$ is arbitrary, $\mu_{1}\left(  A\right)  \leq\sum
_{j=1}^{\infty}\mu_{1}(A_{j})$. 

If $B_{m}=(-\infty,b_m]$ with $b_m\in\mathbb{R}$ for all $m$, then for every large $M>0$, we set
$A_{M}=\prod_{m=1}^{4}B_{mM}$, where $B_{mM}=(-M,b_{m}]$. By a similar proof,
we can show that $\mu_{1}(A_{M})\leq\sum_{j=1}^{\infty}\mu_{1}(A_{j})$. By the
construction of $\mu_{1}$, $\mu_{1}(A)\leq\sum_{j=1}^{\infty}\mu_{1}(A_{j})$.
If $B_{m}=(a_{m},\infty)$ with $a_{m}\in\mathbb{R}$ for all $m$, then for
every large $M>0$, we set $A_{M}=\prod_{m=1}^{4}B_{mM}$, where $B_{mM}=(a_{m},M]$. By a similar proof, we can show that $\mu_{1}(A_{M})\leq
\sum_{j=1}^{\infty}\mu_{1}(A_{j})$. By the construction of $\mu_{1}$, $\mu
_{1}(A)\leq\sum_{j=1}^{\infty}\mu_{1}(A_{j})$. The other cases are analogous.
Thus, we finally have that $\mu_{1}(\cup_{j=1}^{\infty}A_{j})=\sum
_{j=1}^{\infty}\mu_{1}(A_{j})$.
This implies that $\mu_{1}$ is a premeasure on $\mathcal{A}$.

Then by
Theorem 1.14 of \citet{folland2013real}, there is a measure $\mu$ on $\mathcal{B}
_{\mathbb{R}^{4}}$ whose restriction to $\mathcal{A}$ is $\mu_{1}$: For every $E\in\mathcal{B}_{\mathbb{R}^4}$,
\[
\mu(E)=\inf\left\{
\sum_{j=1}^{\infty}\mu_1(A_j): A_j\in\mathcal{A}, E\subseteq \cup_{j=1}^{\infty}A_j
\right\}.
\]
Clearly,
\begin{align*}
\mu(\mathbb{R}^{4}) =\mu_{0}(\mathbb{R}^{4}) =&\,w\left(  \mathbb{R},\mathbb{R},\left\{  1\right\}  ,\left\{  0\right\}
\right)  +w\left(  \mathbb{R},\mathbb{R},\left\{  0\right\}  ,\left\{
0\right\}  \right)  \\
&  +w\left(  \mathbb{R},\mathbb{R},\left\{  1\right\}  ,\left\{  1\right\}
\right)  +w\left(  \mathbb{R},\mathbb{R},\left\{  1\right\}  ,\left\{
0\right\}  \right)  
=1.
\end{align*}
Thus, $\mu$ is a probability measure. Let $\nu$ denote the marginal
distribution of $Z$. Then the product measure $\mu\times\nu$ \citep[p.~64]{folland2013real} is
also a probability measure such that for all $A_{1}\in
\mathcal{B}_{\mathbb{R}^{4}}$ and $A_{2}\in\mathcal{B}_{\mathbb{R}}$,
\[
\mu\times\nu\left(  A_{1}\times A_{2}\right)  =\mu\left(  A_{1}\right)
\nu\left(  A_{2}\right)  .
\]
Suppose the random variables $\left(  Y_{1}\left(  z,z^{\prime}\right)  ,Y_{0}\left(  z,z^{\prime
}\right)  ,D_{z'},D_{z},Z\right)  $ have the joint distribution $\mu\times\nu$ so that
for all $B_{1},B_{0},C_{2},C_{1},A\in\mathscr{H}$,
\begin{align*}
& \mathbb{P}\left(  Y_{1}\left(  z,z^{\prime}\right)  \in B_{1},Y_{0}\left(
z,z^{\prime}\right)  \in B_{0},D_{z^{\prime}}\in C_{2},D_{z}\in C_{1},Z\in
A\right)  \\
=&\,\mu\left(  B_{1}\times B_{0}\times C_{2}\times C_{1}\right)  \times
\nu\left(  A\right)  .
\end{align*}
Then it follows that under the probability measure $\mu\times\nu$,
\begin{align*}
\mathbb{P}\left(  Y\in B_{1},D=1|Z=z^{\prime}\right)  = &  \,\mathbb{P}\left(
Y_{1}\left(  z,z^{\prime}\right)  \in B_{1},Y_{0}\left(  z,z^{\prime}\right)
\in\mathbb{R},D_{z^{\prime}}=1,D_{z}=1|Z=z^{\prime}\right)  \\
&  +\mathbb{P}\left(  Y_{1}\left(  z,z^{\prime}\right)  \in B_{1},Y_{0}\left(
z,z^{\prime}\right)  \in\mathbb{R},D_{z^{\prime}}=1,D_{z}=0|Z=z^{\prime
}\right)  \\
 =&\,\mu\left(  B_{1}\times\mathbb{R}\times\left\{  1\right\}  \times\left\{
1\right\}  \right)  +\mu\left(  B_{1}\times\mathbb{R}\times\left\{  1\right\}
\times\left\{  0\right\}  \right)  \\
=&\,P_{z^{\prime}}\left(  B_{1},\{1\}\right)  ,
\end{align*}%
\begin{align*}
\mathbb{P}\left(  Y\in B_{1},D=1|Z=z\right)    =&\,\mathbb{P}\left(
Y_{1}\left(  z,z^{\prime}\right)  \in B_{1},Y_{0}\left(  z,z^{\prime}\right)
\in\mathbb{R},D_{z^{\prime}}=1,D_{z}=1|Z=z\right)  \\
&  +\mathbb{P}\left(  Y_{1}\left(  z,z^{\prime}\right)  \in B_{1},Y_{0}\left(
z,z^{\prime}\right)  \in\mathbb{R},D_{z^{\prime}}=0,D_{z}=1|Z=z\right)  \\ =&\,\mu\left(  B_{1}\times\mathbb{R}\times\left\{  1\right\}  \times\left\{
1\right\}  \right)  +\mu\left(  B_{1}\times\mathbb{R}\times\left\{  0\right\}
\times\left\{  1\right\}  \right)  \\
=&\,P_{z}\left(  B_{1},\{1\}\right)  ,
\end{align*}%
\begin{align*}
\mathbb{P}\left(  Y\in B_{0},D=0|Z=z^{\prime}\right)     =&\,\mathbb{P}\left(
Y_{1}\left(  z,z^{\prime}\right)  \in\mathbb{R},Y_{0}\left(  z,z^{\prime
}\right)  \in B_{0},D_{z^{\prime}}=0,D_{z}=0|Z=z^{\prime}\right)  \\
&  +\mathbb{P}\left(  Y_{1}\left(  z,z^{\prime}\right)  \in\mathbb{R}%
,Y_{0}\left(  z,z^{\prime}\right)  \in B_{0},D_{z^{\prime}}=0,D_{z}%
=1|Z=z^{\prime}\right)  \\
=&\,\mu\left(  \mathbb{R}\times B_{0}\times\left\{  0\right\}  \times\left\{
0\right\}  \right)  +\mu\left(  \mathbb{R}\times B_{0}\times\left\{
0\right\}  \times\left\{  1\right\}  \right)  \\
=&\,P_{z^{\prime}}\left(  B_{0},\{0\}\right)  ,
\end{align*}
\begin{align*}
\mathbb{P}\left(  Y\in B_{0},D=0|Z=z\right)   =&\,\mathbb{P}\left(
Y_{1}\left(  z,z^{\prime}\right)  \in\mathbb{R},Y_{0}\left(  z,z^{\prime
}\right)  \in B_{0},D_{z^{\prime}}=0,D_{z}=0|Z=z\right)  \\
&  +\mathbb{P}\left(  Y_{1}\left(  z,z^{\prime}\right)  \in\mathbb{R}%
,Y_{0}\left(  z,z^{\prime}\right)  \in B_{0},D_{z^{\prime}}=1,D_{z}%
=0|Z=z\right)  \\
=&\,\mu\left(  \mathbb{R}\times B_{0}\times\left\{  0\right\}  \times\left\{
0\right\}  \right)  +\mu\left(  \mathbb{R}\times B_{0}\times\left\{
1\right\}  \times\left\{  0\right\}  \right)  \\
=&\,P_{z}\left(  B_{0},\{0\}\right)  .
\end{align*}
Here, we use
\begin{align*}
  P_{z^{\prime}}\left(  \mathbb{R},\left\{  1\right\}  \right)
-P_{z}\left(  \mathbb{R},\left\{  1\right\}  \right)   
&  =\mathbb{P}\left(  D=1|Z=z'\right)  -\mathbb{P}\left(  D=1|Z=z
\right)  \\
&  =1-\mathbb{P}\left(  D=0|Z=z'\right)  -1+\mathbb{P}\left(
D=0|Z=z\right)  \\
&  =P_{z}\left(  \mathbb{R},\left\{  0\right\}  \right)  -P_{z^{\prime}%
}\left(  \mathbb{R},\left\{  0\right\}  \right)  .
\end{align*}
This implies that given $P_{z}$ and $P_{z^{\prime}}$ that satisfy \eqref{eq.testable implication binary D}, we can
always construct a joint distribution of $\left(  Y_{0z},Y_{0z^{\prime}%
},Y_{1z},Y_{1z^{\prime}},D_{z},D_{z^{\prime}},Z\right)  $ that satisfies all the
pairwise IV validity conditions in Definition \ref{def.partial validity pairwise binary D} and induces $P_{z}$ and $P_{z^{\prime}}$ \citep[Theorem A.1.5]{durrett2019probability}.
\end{proof} 

\begin{proof}[Proof of Lemma \ref{lemma.superset of Z pairwise binary D}]
    Lemma \ref{lemma.superset of Z pairwise binary D} is a special case of Lemma \ref{lemma.superset of Z pairwise} in Appendix \ref{sec.Z1 estimation ordered}. See the proof of Lemma \ref{lemma.superset of Z pairwise} in Appendix \ref{sec.Z1 estimation ordered}.
\end{proof}

\begin{proof}[Proof of Proposition \ref{prop.consistent G hat pairwise Z1 binary D}]
    Proposition \ref{prop.consistent G hat pairwise Z1 binary D} is a special case of Proposition \ref{prop.consistent G hat pairwise Z1} in Appendix \ref{sec.Z1 estimation ordered}. See the proof of Proposition \ref{prop.consistent G hat pairwise Z1} in Appendix \ref{sec.Z1 estimation ordered}.
\end{proof}

\section{No Information on LATEs for Invalid Pairs}
\label{sec.no information}

\subsection{Theoretical Results and Discussion}\label{sec.no information results}

In this section, we show that absent restrictions on how pairwise validity (Definition \ref{def.partial validity pairwise binary D}) can be violated for a pair $(z_k,z_{k'})$, there is no information about $\beta_{k',k}$  in the data. To state the formal results, we define the target parameter when the LATE assumptions are violated as $\beta_{k',k}=\beta^1_{k',k}-\beta^0_{k',k}$,  where
\begin{align*}
\beta^d_{k',k}=\sum_{z\in\mathcal{Z}}E[Y_{dz}|D_{z_{k^{\prime}}}>D_{z_{k}},Z=z]\mathbb{P}(Z=z|D_{z_{k^{\prime}}}>D_{z_{k}}) \text{ for } d\in\{0,1\}.
\end{align*}
This expression for $\beta_{k',k}$ is an adaptation of the LATE and ATE under violations of exclusion in, for example, \citet{cui2024robust} and \citet{kedagni2020discordant}. Let $\mathscr{Z}^1$ be the set of all pairs that satisfy Definition \ref{def.partial validity pairwise binary D}\ref{def.pairwise exclusion}, $\mathscr{Z}^2$ the set of all pairs that satisfy Definition \ref{def.partial validity pairwise binary D}\ref{def.pairwise random assignment}, and $\mathscr{Z}^3$ the set of all pairs that satisfy Definition \ref{def.partial validity pairwise binary D}\ref{def.pairwise monotonicity}. Clearly, $\mathscr{Z}_{\bar{M}}=\cap_j \mathscr{Z}^j$. 

The following proposition shows that even if only Definition \ref{def.partial validity pairwise binary D}\ref{def.pairwise exclusion} or only Definition \ref{def.partial validity pairwise binary D}\ref{def.pairwise random assignment} is violated for pair $(z_k,z_{k'})$, the sharp identified set for $\beta_{k',k}$ is $\mathbb{R}$. It is related to recent results on the identification of the LATE when a subset of LATE assumptions is violated \citep[e.g.,][]{huber2017sharp,noack2021sensitivity,kedagni2023identifying,cui2024robust}.
\begin{proposition}\label{prop:no_information}
Fix a distribution of $(Y,D,Z)$ so that the testable restrictions of Definition \ref{def.partial validity pairwise binary D} are satisfied for the supposed valid pairs, $\mathbb{P}\left( D=1|Z=z_{k'}\right)-\mathbb{P}\left( D=1|Z=z_{k}\right)>0$ for some pair $(z_{k},z_{k'})$, and $\mathbb{P}(Z=z)>0$ for all $z\in \mathcal{Z}$. 
\begin{enumerate}[label=(\roman*)]
\item Suppose that $(z_k,z_{k'})\in (\mathscr{Z}^1)^c\cap \mathscr{Z}^2\cap \mathscr{Z}^3$. Then, the sharp identified set for $\beta_{k',k}$ is $\mathbb{R}$.
\item Suppose that $(z_k,z_{k'})\in \mathscr{Z}^1\cap (\mathscr{Z}^2)^c\cap \mathscr{Z}^3$. Then, the sharp identified set for $\beta_{k',k}$ is $\mathbb{R}$.
\end{enumerate}
\end{proposition}

In Proposition \ref{prop:no_information}, we maintain the first-stage assumption $\mathbb{P}( D=1|Z=z_{k'})-\mathbb{P}( D=1|Z=z_{k})>0$ since otherwise $\mathbb{P}( D_{z_{k'}}>D_{z_k})$ could be equal to zero in which case the LATE may not be a meaningful object of interest.

Proposition \ref{prop:no_information} shows that violations of exclusion or independence alone can lead to $\beta_{k',k}$ taking any value in $\mathbb{R}$. Interestingly, the same result does not hold when only monotonicity (Definition \ref{def.partial validity pairwise binary D}\ref{def.pairwise monotonicity}) fails. See, for example, \citet{cui2024robust} for non-trivial sharp bounds for the LATE under exclusion and independence when the instrument is binary.

An immediate corollary of Proposition \ref{prop:no_information} is that absent restrictions on how Definition \ref{def.partial validity pairwise binary D} can be violated, there is no information in the data about $\beta_{k',k}$ for $(z_k,z_{k'})\notin \mathscr{Z}_{\bar{M}}$.

\begin{corollary}\label{cor:no_information}
    Consider the setup in Proposition \ref{prop:no_information}. Suppose that there are no further restrictions for a pair $(z_k,z_{k'})\notin \mathscr{Z}_{\bar{M}}$. Then, the sharp identified set for $\beta_{k',k}$ is $\mathbb{R}$.
\end{corollary}

We conclude this section by discussing a limitation of the theoretical results in this section. A key feature of Definition \ref{def.partial validity pairwise binary D} is that there can be valid and invalid pairs. The presence of valid pairs may restrict how Definition \ref{def.partial validity pairwise binary D} can be violated for the invalid pairs. Whenever independence or exclusion can be violated, the results in Proposition \ref{prop:no_information} apply. However, there exist configurations of valid and invalid pairs under which Proposition \ref{prop:no_information} (and Corollary \ref{cor:no_information}) does not apply. Suppose, for instance, that $\mathcal{Z}= \{z_1,z_2,z_3\}$, where $(z_{1},z_3)$ and $(z_{2},z_3)$ are valid. This configuration implies that (a) $(z_{1},z_2)\in \mathscr{Z}^1$ with $Y_{dz}=Y_d$ a.s. and (b) $(Y_{1},Y_0,D_{z_1})\ci Z$ and $(Y_{1},Y_0,D_{z_2})\ci Z$. While (b) is weaker than joint independence (Definition \ref{def.partial validity pairwise binary D}\ref{def.pairwise random assignment}), it is sufficient (together with exclusion and monotonicity) for point identification of the pairwise LATE. Therefore, under this particular configuration, the only way in which the LATE may not be point identified is if $(z_{1},z_2)$ violates monotonicity (Definition \ref{def.partial validity pairwise binary D}\ref{def.pairwise monotonicity}), so that the result in 
Proposition \ref{prop:no_information} does not apply.

\subsection{Proofs for Appendix \ref{sec.no information results}} 

\begin{proof}[Proof of Proposition \ref{prop:no_information}]
We prove both parts separately. Our proof strategies are similar to and build on those  in \citet[][Proposition 1.1]{kitagawa2015test} and \citet[][Theorem 1]{kedagni2023identifying}. Because $\mathbb{R}$ is trivially a valid identified set for $\beta_{k',k}$, we focus on proving sharpness.

\smallskip

\noindent\textbf{Proof of (i).} First, to highlight the main
arguments, we consider the simple case where
$\mathcal{Z}=\{z_{1},z_{2},z_{3}\}$. To prove the result, we find a
distribution of
\[
(Y_{0z_{1}},Y_{0z_{2}},Y_{0z_{3}},Y_{1z_{1}},Y_{1z_{2}},Y_{1z_{3}},D_{z_{1}%
},D_{z_{2}},D_{z_{3}},Z)
\]
that is consistent with the observed distribution of $(Y,D,Z)$ and violates
Definition \ref{def.partial validity pairwise binary D}\ref{def.pairwise exclusion} for the pair $(z_{2},z_{3})$ so that $\beta_{3,2}$ can take
any value in $\mathbb{R}$.\footnote{To provide a full measure theoretic description of the distribution, we may follow the strategy in the proof of Lemma \ref{lemma.testable implications for Z2 weaker}. That is, we first construct a premeasure that is consistent with the observed distributions, and then extend it to a probability measure based on Theorem 1.14 of \citet{folland2013real}.} The proof for any other pair is symmetric. 

Fix an arbitrary value $\beta\in\mathbb{R}$.
We find a distribution that satisfies
(a) $Y_{1z_{1}}=Y_{1z_{2}}=Y_{1}$ and $Y_{0z_{1}}=Y_{0z_{2}}=Y_{0}
$ a.s., but $Y_{dz_{3}}\neq Y_{d}$ for $d\in \{0,1\}$ with positive probability, (b) $Z$ is independent of $\left(
Y_{1z_{3}},Y_{0z_{3}},Y_{1},Y_{0},D_{z_{3}},D_{z_{2}},D_{z_{1}}\right)$, and (c)
$D_{z_{3}}\geq D_{z_{2}}\geq D_{z_{1}}$ a.s., so that
$\left(  z_{1},z_{2}\right)  \in\mathscr{Z}_{\bar{M}}$ and $\left(
z_{1},z_{3}\right)  ,\left(  z_{2},z_{3}\right)  \in(\mathscr{Z}^{1})^{c}\cap \mathscr{Z}^2\cap \mathscr{Z}^3$. Note that the marginal distribution of $Z$, which satisfies
$\mathbb{P}(Z=z)>0$ for $z\in\mathcal{Z}$, by assumption, does not matter for the
argument below.\footnote{This is similar, for example, to the proof of
Proposition 1.1 in \citet{kitagawa2015test}.} Therefore, we find a
distribution of $\left(  Y_{1z_{3}},Y_{0z_{3}},Y_{1},Y_{0},D_{z_{3}}%
,D_{z_{2}},D_{z_{1}}\right)  $ 
so that $\beta_{3,2}=\beta$ that is consistent with the 
conditional distribution of $(Y,D)$ given $Z$, which is fully characterized by
$P_{z}\left(  B,\{d\}\right)  $ for all $(B,d,z)\in\mathcal{B}_{\mathbb{R}%
}\times\{0,1\}\times\mathcal{Z}$.

Consider a distribution of $(  Y_{1z_{3}},Y_{0z_{3}},Y_{1},Y_{0},D_{z_{3}},D_{z_{2}%
},D_{z_{1}})  $ given $Z$ that satisfies (b), that is,
\begin{align*}
&\mathbb{P}\left(  Y_{1z_{3}}\in B_{1z_3},Y_{0z_{3}}\in B_{0z_3},Y_{1}\in B_{1},Y_{0}\in B_{0},D_{z_{3}%
}=d_{3},D_{z_{2}}=d_{2},D_{z_{1}}=d_{1}|Z=z\right)\\
=&\,\mathbb{P}\left(  Y_{1z_{3}}\in B_{1z_3},Y_{0z_{3}}\in B_{0z_3},Y_{1}\in B_{1},Y_{0}\in B_{0},D_{z_{3}%
}=d_{3},D_{z_{2}}=d_{2},D_{z_{1}}=d_{1}\right)
\end{align*}
for all $B_{1z_3},B_{0z_3},B_{1},B_{0}\in\mathcal{B}_{\mathbb{R}}$, all $d_{1},d_{2}
,d_{3}\in\{0,1\}$, and all  $z\in \mathcal{Z}$, and also satisfies 
\begin{align*}
& \mathbb{P}\left(  Y_{1z_{3}}\in B_{1z_3},Y_{0z_{3}}\in B_{0z_3},Y_{1}\in B_{1},Y_{0}\in B_{0}|D_{z_{3}%
}=d_{3},D_{z_{2}}=d_{2},D_{z_{1}}=d_{1}\right)\\
=&\,  \mathbb{P}\left(  Y_{1z_3}\in B_{1z_3}|D_{z_{3}}=d_{3},D_{z_{2}}=d_{2},D_{z_{1}%
}=d_{1}\right)  \times \mathbb{P}\left(  Y_{0z_3}\in B_{0z_3}|D_{z_{3}}=d_{3},D_{z_{2}}=d_{2},D_{z_{1}%
}=d_{1}\right)\\
& \times \mathbb{P}\left(  Y_{1}\in B_{1}|D_{z_{3}}%
=d_{3},D_{z_{2}}=d_{2},D_{z_{1}}=d_{1}\right)  \times\mathbb{P}\left(  Y_{0}\in B_{0}|D_{z_{3}}=d_{3},D_{z_{2}}%
=d_{2},D_{z_{1}}=d_{1}\right)
\end{align*}
for all $B_{1z_3},B_{0z_3},B_{1},B_{0}\in\mathcal{B}_{\mathbb{R}}$ and all $d_{1},d_{2}%
,d_{3}\in\{0,1\}$, which implies (a) by setting the sets $B_{1z_3},B_{0z_3},B_1,B_0$ to be disjoint. Choose
\begin{align*}
\mathbb{P}\left(  V\in B,D_{z_{3}}=0,D_{z_{2}}=0,D_{z_{1}}=1\right)   &  =0,\\
\mathbb{P}\left(  V\in B,D_{z_{3}}=0,D_{z_{2}}=1,D_{z_{1}}=0\right)   &  =0,\\
\mathbb{P}\left(  V\in B,D_{z_{3}}=0,D_{z_{2}}=1,D_{z_{1}}=1\right)   &  =0,\\
\mathbb{P}\left(  V\in B,D_{z_{3}}=1,D_{z_{2}}=0,D_{z_{1}}=1\right)   &  =0,
\end{align*}
for each $V\in\{Y_{1z_{3}},Y_{0z_{3}},Y_{1},Y_{0}\}$ and all $B\in\mathcal{B}_{\mathbb{R}}$,
which implies (c). Moreover, choose
\begin{align*}
\mathbb{P}\left(  Y_{1}\in B,D_{z_{3}}=1,D_{z_{2}}=1,D_{z_{1}}=1\right)   &
=P_{z_{1}}\left(  B,\left\{  1\right\}  \right)  ,\\
\mathbb{P}\left(  Y_{1}\in B,D_{z_{3}}=1,D_{z_{2}}=1,D_{z_{1}}=0\right)   &
=P_{z_{2}}\left(  B,\left\{  1\right\}  \right)  -P_{z_1}\left(  B,\left\{
1\right\}  \right)  ,
\end{align*}
and 
\begin{align*}
&\mathbb{P}\left(  Y_{0}\in B,D_{z_{3}}=1,D_{z_{2}}=1,D_{z_{1}}=0\right)   
=P_{z_{1}}\left(  B,\left\{  0\right\}  \right)  -P_{z_{2}}\left(  B,\left\{
0\right\}  \right)  ,\\
&\mathbb{P}\left(  Y_{0}\in B,D_{z_{3}}=0,D_{z_{2}}=0,D_{z_{1}}=0\right)  +\mathbb{P}\left(  Y_{0}\in B,D_{z_{3}}=1,D_{z_{2}}=0,D_{z_{1}}=0\right)
=P_{z_{2}}\left(  B,\left\{
0\right\}  \right) .
\end{align*}
Note that $P_{z_{2}}\left(  B,\left\{  1\right\}  \right)  -P_{z_1}\left(  B,\left\{
1\right\}  \right)\ge 0$ and $P_{z_{1}}\left(  B,\left\{  0\right\}  \right)  -P_{z_{2}}\left(  B,\left\{
0\right\}  \right) \ge 0$, since the distribution of $(Y,D,Z)$ satisfies the testable implications of Definition \ref{def.partial validity pairwise binary D} for the valid pair $(z_1,z_2)$ by assumption.
Finally, choose
\begin{align*}
&  \mathbb{P}\left(  Y_{1z_3}\in B,D_{z_{3}}=1,D_{z_{2}}=1,D_{z_{1}}=1\right)
+\mathbb{P}\left(  Y_{1z_3}\in B,D_{z_{3}}=1,D_{z_{2}}=1,D_{z_{1}}=0\right)  \\
&  +\mathbb{P}\left(  Y_{1z_3}\in B,D_{z_{3}}=1,D_{z_{2}}=0,D_{z_{1}}=0\right)
=P_{z_{3}}\left(  B,\left\{  1\right\}  \right)  
\end{align*}
and 
\begin{align*}
\mathbb{P}\left(  Y_{0z_3}\in B,D_{z_{3}}=0,D_{z_{2}}=0,D_{z_{1}}=0\right) =P_{z_{3}}\left(  B,\left\{  0\right\}  \right).
\end{align*}
The above construction is consistent with the distribution of
$(Y,D)$ given $Z$ for any choice of $\mathbb{P}\left(  Y_{1}\in
B,D_{z_{3}}=1,D_{z_{2}}=0,D_{z_{1}}=0\right) $.\footnote{The same is true for other joint distributions that do not enter the above construction. To prove the desired result, it suffices to focus on $\mathbb{P}\left(  Y_{1}\in
B,D_{z_{3}}=1,D_{z_{2}}=0,D_{z_{1}}=0\right) $.}

Note that
\[
\mathbb{P}\left(  Y_{1}\in B|D_{z_{3}}>D_{z_{2}}\right)  =\frac{\mathbb{P}\left(  Y_{1}\in B,D_{z_{3}}>D_{z_{2}}\right)}{\mathbb{P}\left(D_{z_{3}}>D_{z_{2}}\right)}=\frac{\mathbb{P}
\left(  Y_{1}\in B,D_{z_{3}}=1,D_{z_2}=0,D_{z_{1}}=0\right)  }{\mathbb{P}\left( D=1|Z=z_{3}\right)-\mathbb{P}\left( D=1|Z=z_{2}\right)},
\]
where $\mathbb{P}\left( D=1|Z=z_{3}\right)-\mathbb{P}\left( D=1|Z=z_{2}\right)>0$ by assumption. Therefore, $\mathbb{P}(  Y_{1}\in
B,D_{z_{3}}=1,D_{z_{2}}=0,D_{z_{1}}=0) $ being arbitrary implies that $\mathbb{P}\left(  Y_{1}\in B|D_{z_{3}}>D_{z_{2}}\right)  $ is
arbitrary. It follows that $E[Y_{1}|D_{z_{3}}>D_{z_{2}}]$ can take any value in
$\mathbb{R}$.

To complete the proof, note that $\beta_{3,2}=\beta_{3,2}^{1}-\beta_{3,2}^{0}%
$, where, under the above construction, 
\[
\beta_{3,2}^{1}=E[Y_{1}|D_{z_{3}}>D_{z_{2}}]\mathbb{P}(Z\in\{z_{1}%
,z_{2}\})+E[Y_{1z_{3}}|D_{z_{3}}>D_{z_{2}}]\mathbb{P}(Z=z_{3}).
\]
Since there are no cross restrictions between $\beta_{3,2}^{1}$ and
$\beta_{3,2}^{0}$ as well as between $E[Y_{1}|D_{z_{3}}>D_{z_{2}}]$ and $E[Y_{1z_{3}}|D_{z_{3}}>D_{z_{2}}]$, and $\mathbb{P}(Z\in\{z_{1}
,z_{2}\})>0$ by assumption, for any value of $E[Y_{1z_{3}}|D_{z_{3}}>D_{z_{2}}]$
and $\beta_{3,2}^{0}$, there is always a conditional distribution $\mathbb{P}(  Y_{1}\in B|D_{z_{3}}>D_{z_{2}})  $ such that $\beta_{3,2}=\beta_{3,2}^{1}-\beta_{3,2}^{0}=\beta$. Since $\beta\in \mathbb{R}$ is arbitrary, the result follows.

Consider now the general case where $\mathcal{Z}=\{z_{1},\ldots,z_{K}\}$.
To prove the result, we find a distribution of
\[
(Y_{0z_{1}},\ldots,Y_{0z_{K}},Y_{1z_{1}},\ldots,Y_{1z_{K}},D_{z_{1}}%
,\ldots,D_{z_{K}},Z)
\]
that is consistent with the observed distribution of $(Y,D,Z)$ and violates
Definition \ref{def.partial validity pairwise binary D}\ref{def.pairwise exclusion} for the pair $(z_{K-1},z_{K})$ so that $\beta_{K,K-1}$ can
take any value in $\mathbb{R}$.  The proof for any other pair is symmetric. 

Fix an arbitrary value $\beta\in\mathbb{R}$.
We find a distribution 
that satisfies (a) $Y_{1z_{1}}=\cdots=Y_{1z_{K-1}}=Y_{1}$ and $Y_{0z_{1}}=\cdots
=Y_{0z_{K-1}}=Y_{0}$ a.s., but $Y_{dz_{K}}\neq Y_{d}$ for $d\in\{0,1\}$ with
positive probability, (b) $Z$ is independent of $(  Y_{1z_{K}},Y_{0z_{K}},Y_{1}
,Y_{0},D_{z_{K}},\ldots,D_{z_{1}})$, and (c) $D_{z_{K}}\geq\cdots\geq D_{z_{1}}$ a.s., so that $\left(  z_{k}%
,z_{k^{\prime}}\right)  \in\mathscr{Z}_{\bar{M}}$ for all $k<k^{\prime}<K$ and
$\left(  z_{k},z_{K}\right)  \in(\mathscr{Z}^{1})^{c}\cap \mathscr{Z}^2\cap \mathscr{Z}^3$ for all $k<K$. Note
that the marginal distribution of $Z$, which satisfies $\mathbb{P}(Z=z)>0$ for all $z\in\mathcal{Z}$ by assumption, does not matter for the argument
below. 
Therefore, we find a distribution of
$\left(  Y_{1z_{K}},Y_{0z_{K}},Y_{1},Y_{0},D_{z_{K}},\ldots,D_{z_{1}}\right)
$ so that $\beta_{K,K-1}=\beta$ that is
consistent with the observable conditional distribution of $(Y,D)$ given $Z$,
which is fully characterized by $P_{z}\left(  B,\{d\}\right)  $ for all
$(B,d,z)\in\mathcal{B}_{\mathbb{R}}\times\{0,1\}\times\mathcal{Z}$.

Consider a distribution of $(  Y_{1z_{K}},Y_{0z_{K}},Y_{1},Y_{0}
,D_{z_{K}},\ldots,D_{z_{1}})  $ given $Z$ that satisfies (b), that is,
\begin{align*}
&  \mathbb{P}\left(  Y_{1z_{K}}\in B_{1z_{K}},Y_{0z_{K}}\in B_{0z_{K}}%
,Y_{1}\in B_{1},Y_{0}\in B_{0},D_{z_{K}}=d_{K},\ldots,D_{z_{1}}=d_{1}%
|Z=z\right)  \\
=&\,\mathbb{P}\left(  Y_{1z_{K}}\in B_{1z_{K}},Y_{0z_{K}}\in B_{0z_{K}}%
,Y_{1}\in B_{1},Y_{0}\in B_{0},D_{z_{K}}=d_{K},\ldots,D_{z_{1}}=d_{1}\right)
\end{align*}
for all $B_{1z_{K}},B_{0z_{K}},B_{1},B_{0}\in\mathcal{B}_{\mathbb{R}}$, all
$d_{1},\ldots,d_{K}\in\{0,1\}$, and all $z\in\mathcal{Z}$, and also satisfies
\begin{align*}
& \mathbb{P}\left(  Y_{1z_{K}}\in B_{1z_{K}},Y_{0z_{K}}\in B_{0z_{K}},Y_{1}\in
B_{1},Y_{0}\in B_{0}|D_{z_{K}}=d_{K},\ldots,D_{z_{1}}=d_{1}\right)  \\
=&\,\mathbb{P}(Y_{1z_{K}}\in B_{1z_{K}}|D_{z_{K}}=d_{K},\ldots,D_{z_{1}}
=d_{1})\times\mathbb{P}\left(  Y_{0z_{K}}\in B_{0z_{K}}|D_{z_{K}}=d_{K}
,\ldots,D_{z_{1}}=d_{1}\right)  \\
& \times\mathbb{P}\left(  Y_{1}\in B_{1}|D_{z_{K}}=d_{K},\ldots,D_{z_{1}%
}=d_{1}\right)  \times\mathbb{P}\left(  Y_{0}\in B_{0}|D_{z_{K}}=d_{K}%
,\ldots,D_{z_{1}}=d_{1}\right)
\end{align*}
for all $B_{1z_{K}},B_{0z_{K}},B_{1},B_{0}\in\mathcal{B}_{\mathbb{R}}$ and all
$d_{1},\ldots,d_{K}\in\{0,1\}$, which implies (a). Choose
\[
\mathbb{P}\left(  V\in B,D_{z_{K}}=d_{K},\ldots,D_{z_{1}}=d_{1}\right)  =0,
\]
for each $V\in\{Y_{1z_{3}},Y_{0z_{3}},Y_{1},Y_{0}\}$, all $B\in\mathcal{B}%
_{\mathbb{R}}$, and all $d_{k}>d_{k^{\prime}}$ with $k<k^{\prime}$, which
implies (c). Moreover, choose
\begin{align*}
\mathbb{P}\left(  Y_{1}\in B,D_{z_{K}}=1,D_{z_{K-1}}=1,\ldots,D_{z_{1}%
}=1\right)  =P_{z_{1}}\left(  B,\left\{  1\right\}  \right)  ,
\end{align*}
\[
\vdots
\]
\begin{align*}
&  \mathbb{P}\left(  Y_{1}\in B,D_{z_{K}}=1,D_{z_{K-1}}=1,\ldots,D_{z_{1}%
}=1\right)  +\cdots\\
&  +\mathbb{P}\left(  Y_{1}\in B,D_{z_{K}}=1,D_{z_{K-1}}=1,\ldots,D_{z_{1}%
}=0\right)  =P_{z_{K-1}}\left(  B,\left\{  1\right\}  \right)  ,
\end{align*}%
\begin{align*}
&  \mathbb{P}\left(  Y_{1z_{K}}\in B,D_{z_{K}}=1,D_{z_{K-1}}=1,\ldots
,D_{z_{1}}=1\right)  +\mathbb{P}\left(  Y_{1z_{K}}\in B,D_{z_{K}}%
=1,D_{z_{K-1}}=1,\ldots,D_{z_{1}}=0\right)  \\
&  +\cdots+\mathbb{P}\left(  Y_{1z_{K}}\in B,D_{z_{K}}=1,D_{z_{K-1}}%
=0,\ldots,D_{z_{1}}=0\right)  =P_{z_{K}}\left(  B,\left\{  1\right\}  \right)
,
\end{align*}
and
\begin{align*}
&  \mathbb{P}\left(  Y_{0}\in B,D_{z_{K}}=1,D_{z_{K-1}}=1,\ldots,D_{z_{1}%
}=0\right)  +\cdots\\
&  +\mathbb{P}\left(  Y_{0}\in B,D_{z_{K}}=0,D_{z_{K-1}}=0,\ldots,D_{z_{1}%
}=0\right)  =P_{z_{1}}\left(  B,\left\{  0\right\}  \right)  .
\end{align*}%
\[
\vdots
\]%
\begin{align*}
& \mathbb{P}\left(  Y_{0}\in B,D_{z_{K}}=1,D_{z_{K-1}}=0,\ldots,D_{z_{1}%
}=0\right)  +\mathbb{P}\left(  Y_{0}\in B,D_{z_{K}}=0,D_{z_{K-1}}%
=0,\ldots,D_{z_{1}}=0\right)  \\
& =P_{z_{K-1}}\left(  B,\left\{  0\right\}  \right)  ,
\end{align*}%
\[
\mathbb{P}\left(  Y_{0z_{K}}\in B,D_{z_{K}}=0,D_{z_{K-1}}=0,\ldots,D_{z_{1}%
}=0\right)  =P_{z_{K}}\left(  B,\left\{  0\right\}  \right)  .
\]
The above construction is consistent with the distribution of $(Y,D)$ given
$Z$ for any choice of $\mathbb{P}(  Y_{1}\in B,D_{z_{K}}=1,D_{z_{K-1}
}=0,\ldots,D_{z_{1}}=0)  $.\footnote{Note that some elements in the above construction can be solved for explicitly as in the case where $\mathcal{Z}=\{z_1,z_2,z_3\}$.}

Note that
\begin{align*}
\mathbb{P}\left(  Y_{1}\in B|D_{z_{K}}>D_{z_{K-1}}\right)  &=\frac
{\mathbb{P}\left(  Y_{1}\in B,D_{z_{K}}>D_{z_{K-1}}\right)  }{\mathbb{P}
\left(  D_{z_{K}}>D_{z_{K-1}}\right)  }\\
&=\frac{\mathbb{P}\left(  Y_{1}\in
B,D_{z_{K}}=1,D_{z_{K-1}}=0,\ldots,D_{z_{1}}=0\right)  }{\mathbb{P}\left(
D=1|Z=z_{K}\right)  -\mathbb{P}\left(  D=1|Z=z_{K-1}\right)  },
\end{align*}
where $\mathbb{P}\left(  D=1|Z=z_{K}\right)  -\mathbb{P}\left(  D=1|Z=z_{K-1}%
\right)  >0$ by assumption. Therefore, $\mathbb{P}(  Y_{1}\in B,D_{z_{K}%
}=1,D_{z_{K-1}}=0,\ldots,D_{z_{1}}=0)  $ being arbitrary implies that
$\mathbb{P}(  Y_{1}\in B|D_{z_{K}}>D_{z_{K-1}})  $ is arbitrary. It
follows that $E[Y_{1}|D_{z_{K}}>D_{z_{K-1}}]$ can take any value in
$\mathbb{R}$.

To complete the proof, note that $\beta_{K,K-1}=\beta_{K,K-1}^{1}%
-\beta_{K,K-1}^{0}$, where, under the above construction,
\[
\beta_{K,K-1}^{1}=E[Y_{1}|D_{z_{K}}>D_{z_{K-1}}]\mathbb{P}(Z\in\{z_{1}%
,\ldots,z_{K-1}\})+E[Y_{1z_{K}}|D_{z_{K}}>D_{z_{K-1}}]\mathbb{P}(Z=z_{K}).
\]
Since there are no cross restrictions between $\beta_{K,K-1}^{1}$ and
$\beta_{K,K-1}^{0}$ as well as between $E[Y_{1}|D_{z_{K}}>D_{z_{K-1}}]$ and $E[Y_{1z_{K}}|D_{z_{K}}>D_{z_{K-1}}]$,
and $\mathbb{P}(Z\in\{z_{1},\ldots,z_{K-1}\})>0$ by
assumption, there is always a conditional distribution $\mathbb{P}(  Y_{1}\in B|D_{z_{K}}>D_{z_{K-1}})  $ such that $\beta_{K,K-1}=\beta_{K,K-1}^{1}
-\beta_{K,K-1}^{0}=\beta$. Since $\beta\in \mathbb{R}$ is arbitrary, the result follows.

\smallskip
\noindent \textbf{Proof of (ii).} First, to highlight the main arguments, we consider the simple case where
$\mathcal{Z}=\{z_{1},z_{2},z_{3}\}$. To prove the result, we find a distribution of
\[
(Y_{0z_{1}},Y_{0z_{2}},Y_{0z_{3}},Y_{1z_{1}},Y_{1z_{2}},Y_{1z_{3}},D_{z_{1}%
},D_{z_{2}},D_{z_{3}},Z)
\]
that is consistent with the observed distribution of $(Y,D,Z)$ and violates
Definition \ref{def.partial validity pairwise binary D}\ref{def.pairwise random assignment} for the pair $(z_{2},z_{3})$ so that $\beta_{3,2}$ can take
any value in $\mathbb{R}$. The proof for any other pair is symmetric. 

Fix an arbitrary value $\beta\in\mathbb{R}$. We find a distribution that satisfies
(a) $Y_{1z_{1}}=Y_{1z_{2}}=Y_{1z_{3}}=Y_{1}$ and $Y_{0z_{1}}=Y_{0z_{2}%
}=Y_{0z_{3}}=Y_{0}$ a.s., (b) $\left(  Y_{1},Y_{0},D_{z_{3}},D_{z_{2}}\right)$ is not independent of $Z$, 
and (c) $D_{z_{3}}\geq D_{z_{2}}\geq D_{z_{1}}$ a.s., so that $\left(  z_{2},z_{3}\right)  \in\mathscr{Z}^1\cap (\mathscr{Z}^2)^c\cap \mathscr{Z}^3$. Thus, we find a
distribution of $\left(  Y_{1},Y_{0},D_{z_{3}},D_{z_{2}},D_{z_{1}}\right)  $
given $Z$ so
that $\beta_{3,2}=\beta$ that is consistent with the distribution of $(Y,D)$ given $Z$.

Consider a distribution of $\left( Y_{1},Y_{0},D_{z_3},D_{z_2}
,D_{z_1},Z\right)$ that satisfies 
\begin{align*}
&  \mathbb{P}\left( Y_{1}\in B_{1},Y_{0}\in B_{0}|D_{z_3}=d_3,D_{z_2}
=d_2,D_{z_1}=d_1,Z=z\right)\notag  \\
=&\,\mathbb{P}\left(  Y_{1}\in B_{1}|D_{z_3}=d_3,D_{z_2}
=d_2,D_{z_1}=d_1,Z=z\right) \\&\times\mathbb{P}\left(  Y_{0}\in B_{0}|D_{z_3}=d_3,D_{z_2}
=d_2,D_{z_1}=d_1,Z=z\right)
\end{align*}
for all $B_1,B_0\in\mathcal{B}_{\mathbb{R}}$, all $d_3,d_2,d_1\in\{0,1\}$, and all $z\in \mathcal{Z}$. Choose
\begin{align*}
\mathbb{P}\left(  V\in B,D_{z_3}=0,D_{z_2}=0,D_{z_1}=1,Z=z\right)&=0,\\
\mathbb{P}\left(  V\in B,D_{z_3}=0,D_{z_2}=1,D_{z_1}=0,Z=z\right)&=0,\\
\mathbb{P}\left(  V\in B,D_{z_3}=0,D_{z_2}=1,D_{z_1}=1,Z=z\right)  &=0,\\
\mathbb{P}\left(  V\in B,D_{z_3}=1,D_{z_2}=0,D_{z_1}=1,Z=z\right)&=0
\end{align*}
for each $V\in \{Y_{1},Y_0\}$, all $B\in\mathcal{B}_{\mathbb{R}}$, and all $z\in \mathcal{Z}$,  which implies (c). 
Moreover, choose
\begin{eqnarray*}
&&\mathbb{P}\left(  Y_{0}\in B,D_{z_3}=0,D_{z_2}=0,D_{z_1}=0|Z=z_3\right)=P_{z_3}\left(  B,\left\{  0\right\}  \right),\\
&&\mathbb{P}\left(  Y_{0}\in B,D_{z_3}=1,D_{z_2}=0,D_{z_1}=0|Z=z_2\right) +\mathbb{P}\left(  Y_{0}\in B,D_{z_3}=0,D_{z_2}=0,D_{z_1}=0|Z=z_2\right)\\
&&=P_{z_2}\left(  B,\left\{  0\right\}  \right),\\
&&\mathbb{P}\left(  Y_{0}\in B,D_{z_3}=1,D_{z_2}=1,D_{z_1}=0|Z=z_1\right)  +\mathbb{P}\left(  Y_{0}\in B,D_{z_3}=1,D_{z_2}=0,D_{z_1}=0|Z=z_1\right)  \\
&&  +\mathbb{P}\left(  Y_{0}\in B,D_{z_3}=0,D_{z_2}=0,D_{z_1}=0|Z=z_1\right) = P_{z_1}\left(  B,\left\{  0\right\}  \right) ,
\end{eqnarray*}
and 
\begin{eqnarray*}
&&\mathbb{P}\left(  Y_{1}\in B,D_{z_3}=1,D_{z_2}=1,D_{z_1}=1|Z=z_3\right)  +\mathbb{P}\left(  Y_{1}\in B,D_{z_3}=1,D_{z_2}=1,D_{z_1}=0|Z=z_3\right)  \\
&&  +\mathbb{P}\left(  Y_{1}\in B,D_{z_3}=1,D_{z_2}=0,D_{z_1}=0|Z=z_3\right) =P_{z_3}\left(  B,\left\{  1\right\}  \right) ,\\
&&\mathbb{P}\left(  Y_{1}\in B,D_{z_3}=1,D_{z_2}=1,D_{z_1}=1|Z=z_2\right)   +\mathbb{P}\left(  Y_{1}\in B,D_{z_3}=1,D_{z_2}=1,D_{z_1}=0|Z=z_2\right) \\
&&=P_{z_2}\left(  B,\left\{  1\right\}  \right), \\
&&\mathbb{P}\left(  Y_{1}\in B,D_{z_3}=1,D_{z_2}=1,D_{z_1}=1|Z=z_1\right) =P_{z_1}\left(  B,\left\{  1\right\}  \right).
\end{eqnarray*}
Finally, we choose $\mathbb{P}(D_{z_3}>D_{z_2}|Z=z_2)>0$. 

Note that this distribution is consistent with the distribution of $(Y,D)$ given $Z$ for any choice of 
$\mathbb{P}\left(  Y_{1}\in B,D_{z_3}=1,D_{z_2}=0,D_{z_1}=0|Z=z\right)$ for $z\in \{z_1,z_2\}$. Then we can choose an arbitrary distribution
\begin{align*}
\mathbb{P}\left(  Y_{1}\in B,D_{z_3}=1,D_{z_2}=0,D_{z_1}=0|Z=z_2\right)=\mathbb{P}\left(  Y_{1}\in B,D_{z_3}=1,D_{z_2}=0|Z=z_2\right),
\end{align*}
and choose
\begin{align*}
&\mathbb{P}\left(  Y_{1}\in B,D_{z_3}=1,D_{z_2}=0,D_{z_1}=0|Z=z_1\right)=\mathbb{P}\left(  Y_{1}\in B,D_{z_3}=1,D_{z_2}=0|Z=z_1\right)\\
&=a\cdot\mathbb{P}\left(  Y_{1}\in B,D_{z_3}=1,D_{z_2}=0|Z=z_2\right)
\end{align*}
for some constant $a\in(0,1)$ and all $B$, so that (b) holds.

Note that
\begin{align*}
\mathbb{P}\left(  Y_{1}\in B|D_{z_3}>D_{z_2},Z=z_2\right)&=\frac{\mathbb{P}\left(  Y_{1}\in B,D_{z_3}>D_{z_2}|Z=z_2\right)}{\mathbb{P}(D_{z_3}>D_{z_2}|Z=z_2)}
\\
&=\frac{\mathbb{P}\left(  Y_{1}\in B,D_{z_{3}}=1,D_{z_2}=0,D_{z_1}=0|Z=z_2\right)}{\mathbb{P}(D_{z_3}>D_{z_2}|Z=z_2)}.
\end{align*}
Therefore, $\mathbb{P}(  Y_{1}\in B,D_{z_{3}}=1,D_{z_2}=0,D_{z_1}=0|Z=z_2)$ being arbitrary implies that the conditional distribution $\mathbb{P}(  Y_{1}\in B|D_{z_3}>D_{z_2},Z=z_2)$ is arbitrary. Thus, $E [Y_{1}|D_{z_3}>D_{z_2},Z=z_2]$ may take any value in $\mathbb{R}$. 

To complete the proof, note that $\beta_{3,2}=\beta_{3,2}^1-\beta_{3,2}^0$, where under the above construction,
$$
\beta_{3,2}^1=\sum_{z\in\mathcal{Z}}E[Y_{1}|D_{z_3}>D_{z_2},Z=z]\mathbb{P}(Z=z|D_{z_3}>D_{z_2}).
$$
Since there are no cross-restrictions between $\beta_{3,2}^1$ and $\beta_{3,2}^0$ as well as between $E[Y_{1}|D_{z_{3}}>D_{z_{2}},Z=z_2]$ and $E[Y_{1}|D_{z_{3}}>D_{z_{2}},Z=z_3]$, and 
$$\mathbb{P}(Z=z_2|D_{z_3}>D_{z_2})=\frac{\mathbb{P}(D_{z_3}>D_{z_2}|Z=z_2)\mathbb{P}(Z=z_2)}{\mathbb{P}(D_{z_3}>D_{z_2})}>0$$ 
under our construction since $\mathbb{P}(Z=z_2)>0$, for any value of $E[Y_{1}|D_{z_{3}}>D_{z_{2}},Z=z_3]$
and $\beta_{3,2}^{0}$, there is always a conditional distribution $\mathbb{P}(  Y_{1}\in B|D_{z_{3}}>D_{z_{2}},Z=z_2)  $ such that $\beta_{3,2}=\beta_{3,2}^{1}-\beta_{3,2}^{0}=\beta$. Since $\beta\in \mathbb{R}$ is arbitrary, the result follows.

Consider now the general case where $\mathcal{Z}=\{z_{1}
,\ldots,z_{K}\}$. To prove the result, we find a distribution of
\[
(Y_{0z_{1}},\ldots,Y_{0z_{K}},Y_{1z_{1}},\ldots,Y_{1z_{K}},D_{z_{1}}%
,\ldots,D_{z_{K}},Z)
\]
that is consistent with the observed distribution of $(Y,D,Z)$ and violates
Definition \ref{def.partial validity pairwise binary D}\ref{def.pairwise random assignment} for the pair $(z_{K-1},z_{K})$ so that $\beta_{K,K-1}$ can
take any value in $\mathbb{R}$.  The proof for any other pair is symmetric.

Fix an arbitrary value $\beta\in\mathbb{R}$. We find a distribution that satisfies
(a) $Y_{1z_{1}}=\cdots=Y_{1z_{K}}=Y_{1}$ and $Y_{0z_{1}}=\cdots
=Y_{0z_{K}}=Y_{0}$ a.s., (b) $(  Y_{1},Y_{0},D_{z_{K}},D_{z_{K-1}})  $
is not independent of $Z$, and (c) $D_{z_{K}}\geq\cdots\geq D_{z_{1}}$ a.s., so that
$\left(  z_{K-1},z_{K}\right)  \in\mathscr{Z}^1\cap (\mathscr{Z}^2)^c\cap \mathscr{Z}^3$. Thus, we find a distribution of $\left(  Y_{1}
,Y_{0},D_{z_{K}},\ldots,D_{z_{1}}\right)  $ given $Z$ so that
$\beta_{K,K-1}=\beta$ that is consistent with
the observable conditional distribution of $(Y,D)$ given $Z$, which is fully characterized by $P_{z}\left(  B,\{d\}\right)  $ for all
$(B,d,z)\in\mathcal{B}_{\mathbb{R}}\times\{0,1\}\times\mathcal{Z}$.

Consider a distribution of $\left(  Y_{1},Y_{0},D_{z_{K}},\ldots,D_{z_{1}%
},Z\right)  $ that satisfies 
\begin{align*}
& \mathbb{P}\left(  Y_{1}\in B_{1},Y_{0}\in B_{0}|D_{z_{K}}=d_{K}%
,\ldots,D_{z_{1}}=d_{1},Z=z\right)  \\
=&\,\mathbb{P}\left(  Y_{1}\in B_{1}|D_{z_{K}}=d_{K},\ldots,D_{z_{1}}%
=d_{1},Z=z\right)  \times\mathbb{P}\left(  Y_{0}\in B_{0}|D_{z_{K}}%
=d_{K},\ldots,D_{z_{1}}=d_{1},Z=z\right)
\end{align*}
for all $B_{1},B_{0}\in\mathcal{B}_{\mathbb{R}}$, all
$d_{1},\ldots,d_{K}\in\{0,1\}$, and all $z\in\mathcal{Z}$. Choose
\[
\mathbb{P}\left(  V\in B,D_{z_{K}}=d_{K},\ldots,D_{z_{1}}=d_{1},Z=z\right)
=0
\]
for each $V\in\left\{  Y_{1},Y_{0}\right\}  $, all $B\in\mathcal{B}%
_{\mathbb{R}}$, all $d_{k}>d_{k^{\prime}}$ with $k<k^{\prime}$, and all
$z\in\mathcal{Z}$, which implies (c). Moreover, choose
\begin{align*}
\mathbb{P}\left(  Y_{1}\in B,D_{z_{K}}=1,D_{z_{K-1}}=1,\ldots,D_{z_{1}%
}=1|Z=z_{1}\right)  =P_{z_{1}}\left(  B,\left\{  1\right\}  \right)  ,
\end{align*}
\[
\vdots
\]%
\begin{align*}
&  \mathbb{P}\left(  Y_{1}\in B,D_{z_{K}}=1,D_{z_{K-1}}=1,\ldots,D_{z_{1}
}=1|Z=z_{K}\right) \\ &+\mathbb{P}\left(  Y_{1}\in B,D_{z_{K}}=1,D_{z_{K-1}%
}=1,\ldots,D_{z_{1}}=0|Z=z_{K}\right)  \\
&  +\cdots+\mathbb{P}\left(  Y_{1}\in B,D_{z_{K}}=1,D_{z_{K-1}}=0,\ldots
,D_{z_{1}}=0|Z=z_{K}\right) \\
&=P_{z_{K}}\left(  B,\left\{  1\right\}  \right)
,
\end{align*}
and
\begin{align*}
&  \mathbb{P}\left(  Y_{0}\in B,D_{z_{K}}=1,D_{z_{K-1}}=1,\ldots,D_{z_{1}%
}=0|Z=z_{1}\right)  +\cdots\\
&  +\mathbb{P}\left(  Y_{0}\in B,D_{z_{K}}=0,D_{z_{K-1}}=0,\ldots,D_{z_{1}%
}=0|Z=z_{1}\right)  =P_{z_{1}}\left(  B,\left\{  0\right\}  \right)  ,
\end{align*}
\[
\vdots
\]
\begin{align*}
\mathbb{P}\left(  Y_{0}\in B,D_{z_{K}}=0,D_{z_{K-1}}=0,\ldots,D_{z_{1}%
}=0|Z=z_{K}\right)  =P_{z_{K}}\left(  B,\left\{  0\right\}  \right)  .
\end{align*}
Finally, we choose $\mathbb{P}(  D_{z_{K}}>D_{z_{K-1}}|Z=z_{K-1})
>0$. 

The above construction is consistent with the distribution of $(Y,D)$ given
$Z$ for any choice of $\mathbb{P}(  Y_{1}\in B,D_{z_{K}}=1,D_{z_{K-1}%
}=0,\ldots,D_{z_{1}}=0|Z=z)  $ for $z\in\left\{  z_{1},\ldots
,z_{K-1}\right\}  $. Then we can choose an arbitrary distribution 
\begin{align*}
&\mathbb{P}\left(  Y_{1}\in B,D_{z_{K}}=1,D_{z_{K-1}}=0,\ldots,D_{z_{1}
}=0|Z=z_{K-1}\right)  \\
=&\mathbb{P}\left(  Y_{1}\in B,D_{z_{K}}=1,D_{z_{K-1}
}=0|Z=z_{K-1}\right),
\end{align*}
and choose
\begin{align*}
&\mathbb{P}\left(  Y_{1}\in B,D_{z_{K}}=1,D_{z_{K-1}}=0,\ldots,D_{z_{1}%
}=0|Z=z\right) =\mathbb{P}\left(  Y_{1}\in B,D_{z_{K}}=1,D_{z_{K-1}}=0|Z=z\right) \\
& =a\cdot\mathbb{P}\left(  Y_{1}\in B,D_{z_{K}}=1,D_{z_{K-1}}=0|Z=z_{K-1}\right)
\end{align*}
for some constant $a\in(0,1)$, all $B$, and all $z\in\{z_1,\ldots,z_{K-2}\}$, so that (b) holds. 

Note that
\begin{align*}
\mathbb{P}\left(  Y_{1}\in B|D_{z_{K}}>D_{z_{K-1}},Z=z_{K-1}\right)    &
=\frac{\mathbb{P}\left(  Y_{1}\in B,D_{z_{K}}>D_{z_{K-1}}|Z=z_{K-1}\right)
}{\mathbb{P}\left(  D_{z_{K}}>D_{z_{K-1}}|Z=z_{K-1}\right)  }\\
& =\frac{\mathbb{P}\left(  Y_{1}\in B,D_{z_{K}}=1,D_{z_{K-1}}=0,\ldots
,D_{z_{1}}=0|Z=z_{K-1}\right)  }{\mathbb{P}\left(  D_{z_{K}}>D_{z_{K-1}%
}|Z=z_{K-1}\right)  }.
\end{align*}
Therefore, $\mathbb{P}(  Y_{1}\in B,D_{z_{K}}=1,D_{z_{K-1}}=0,\ldots,D_{z_{1}}=0|Z=z_{K-1})  $ being arbitrary implies that the conditional distribution
$\mathbb{P}(  Y_{1}\in B|D_{z_{K}}>D_{z_{K-1}},Z=z_{K-1})  $ is
arbitrary. It follows that $E[Y_{1}|D_{z_{K}}>D_{z_{K-1}},Z=z_{K-1}]$ may
take any value in $\mathbb{R}$.

To complete the proof, note that $\beta_{K,K-1}=\beta_{K,K-1}^{1}
-\beta_{K,K-1}^{0}$, where under the above construction,
\[
\beta_{K,K-1}^{1}=\sum_{z\in\mathcal{Z}}E[Y_{1}|D_{z_{K}}>D_{z_{K-1}%
},Z=z]\mathbb{P}(Z=z|D_{z_{K}}>D_{z_{K-1}}).
\]
Since there are no cross restrictions between $\beta_{K,K-1}^{1}$ and
$\beta_{K,K-1}^{0}$ as well as between $E[Y_{1}|D_{z_{K}}>D_{z_{K-1}
},Z=z_{K-1}]$ and $E[Y_{1}|D_{z_{K}}>D_{z_{K-1}},Z=z_{K}]$, and $\mathbb{P}(Z=z_{K-1}|D_{z_{K}}>D_{z_{K-1}})>0$ by Bayes' Theorem because $\mathbb{P}(Z=z)>0$, there is always a conditional distribution $\mathbb{P}(  Y_{1}\in B|D_{z_{K}}>D_{z_{K-1}},Z=z_{K-1})  $ such that $\beta_{K,K-1}=\beta_{K,K-1}^{1}
-\beta_{K,K-1}^{0}=\beta$. Since $\beta\in \mathbb{R}$ is arbitrary, the result follows.
\end{proof}

\begin{proof}[Proof of Corollary \ref{cor:no_information}] The result follows directly from Proposition \ref{prop:no_information}.
\end{proof}

\section{Extensions: Multivalued Ordered and Unordered Treatments}\label{sec.multivalued treatment}

In this section, we generalize the results in the main text to multivalued ordered and unordered treatments.

\subsection{Ordered Treatments}\label{sec.ordered treatment}

Suppose, in general, that the observable treatment variable $D\in\mathcal{D}=\left\{
d_{1},\ldots,d_{J}\right\}$. Without loss of generality, suppose $d_1<\cdots<d_J$. The following assumption is a straightforward generalization of Assumption \ref{ass.IV validity binary D} to ordered treatments \citep[e.g.,][]{sun2021ivvalidity}.

\begin{assumption}\label{ass.IV validity ordered}
IV validity for LATEs with ordered treatments and multivalued instruments:  \label{ass.IV validity for
multivalued Z} 

\begin{enumerate}[label=(\roman*)]

\item Exclusion: For all $d\in\mathcal{D}$, $Y_{dz_{1}}=Y_{dz_{2}} 
=\cdots=Y_{dz_{K}}$ a.s.

\item Random Assignment: $Z$ is jointly independent of $\left( Y_{d_{1}z_{1}},\ldots,Y_{d_{1}z_K},\ldots, Y_{d_{J}z_{1}
},\ldots,Y_{d_{J}z_K}\right)$ and $\left(
D_{z_{1}},\ldots,D_{z_{K}}\right) $.

\item Monotonicity: For all  $
k=1,\ldots, K-1$, $D_{z_{k+1}}\geq D_{z_k}$ a.s. 
\end{enumerate}

\end{assumption}

We next introduce the definition of pairwise valid instruments for ordered treatments.
\begin{definition}
	\label{def.partial validity pairwise} The instrument $Z$ is \textbf{pairwise valid} for
	the ordered treatment $D\in\mathcal{D}=\left\{
	d_{1},\ldots,d_{J}\right\} $ if there is a set $\mathscr{Z}_M=\{(z_{k_1},z_{k_1^{\prime}}),\ldots,(z_{k_M},z_{k_M^{\prime}})\}$ with $z_{k_1},z_{k_1^{\prime}},\ldots,z_{k_M},z_{k_M^{\prime}}\in\mathcal{Z}$ such that the following conditions hold for every  $(z,z')\in\mathscr{Z}_M$:
	\begin{enumerate}[label=(\roman*)]
		
		\item Exclusion: For all $d\in\mathcal{D}$, $Y_{dz}=Y_{dz^{\prime}}$ a.s.
		
		\item Random Assignment: $Z$ is jointly independent of $\left( Y_{d_{1}z},Y_{d_{1}z'},\ldots, Y_{d_{J}z},Y_{d_{J}z'},D_z,D_{z'}\right)$.
		
		\item Monotonicity: $D_{z^{\prime}}\geq D_{z}$ a.s. 
	\end{enumerate}
The set $\mathscr{Z}_M$ is called a \textbf{validity pair set} of $Z$. The union of all validity pair sets is the largest validity pair set, denoted by ${\mathscr{Z}}_{\bar{M}}$. 
\end{definition}

	With the exclusion condition, for every $(z,z')\in\mathscr{Z}_{\bar{M}}$, define $Y_d(z,z')$ such that $Y_d(z,z')=Y_{dz}=Y_{dz'}$ a.s.\ for all $d\in\mathcal{D}$.
\begin{lemma}\label{lemma.partial beta}
	Suppose that the instrument $Z$ is pairwise valid as defined in Definition \ref{def.partial validity pairwise} with a known validity pair set $\mathscr{Z}_M=\{(z_{k_1},z_{k_1^{\prime}}),\ldots,(z_{k_M},z_{k_M^{\prime}})\}$. Then for every $m\in\{1,\ldots,M\}$, the following quantity can be identified:
	\begin{align}\label{eq.beta}
		\beta _{k_{m}',k_{m}}&\equiv \sum_{j=2}^{J}\omega
		_{j}\cdot E\left[ \left( Y_{d_j}(z_{k_m},z_{k'_m})-Y_{d_{j-1}}(z_{k_m},z_{k'_m})\right) |D_{z_{k_{m}'}}\geq
		d_j>D_{z_{k_{m}}}\right]\notag\\
		&=\frac{E\left[ Y|Z=z_{k_{m}'}\right] -E\left[ Y|Z=z_{k_{m}}\right] }{E\left[
			D|Z=z_{k_{m}'}\right] -E\left[ D|Z=z_{k_{m}}\right] },
	\end{align}
	where 
	\begin{equation*}
		\omega _{j}=\frac{\mathbb{P}\left( D_{z_{k_{m}'}}\geq
			d_j>D_{z_{k_{m}}}\right) }{\sum_{l=2}^{J}\left(d_l-d_{l-1}\right)\mathbb{P}\left( D_{z_{k_{m}'}}\geq
			d_l>D_{z_{k_{m}}}\right) }.
	\end{equation*}
\end{lemma}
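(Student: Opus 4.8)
The plan is to mimic the classical Imbens--Angrist (1994) derivation of the LATE identification formula, now restricted to a fixed validity pair $(z_{k_m}, z_{k'_m}) \in \mathscr{Z}_M$ so that all one needs are the pairwise versions of exclusion, random assignment, and monotonicity from Definition \ref{def.partial validity pairwise}. First I would fix $m$ and write $z = z_{k_m}$, $z' = z_{k'_m}$ for brevity, and observe that by exclusion the potential outcomes $Y_{d z}$ and $Y_{d z'}$ agree a.s.\ for every $d \in \mathcal{D}$, so the object $Y_d(z,z')$ in the statement is well defined. Then, using the observational rule $Y = \sum_{d \in \mathcal{D}} 1\{D = d\} Y_{dZ}$ together with $D = \sum_d d\,1\{D_Z = d\}$ (writing $D_Z$ for the realized potential treatment), and invoking random assignment to replace conditioning on $Z = z$ by the unconditional distribution of the $(Y_{dz}, Y_{dz'}, D_z, D_{z'})$ vector, I would get
\[
E[Y \mid Z = z'] - E[Y \mid Z = z] = E\!\left[\sum_{d \in \mathcal{D}} \big(1\{D_{z'} = d\} - 1\{D_z = d\}\big) Y_d(z,z')\right].
\]

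The next step is the Abel summation / telescoping trick that is the heart of the ordered-treatment LATE formula. I would rewrite $1\{D_{z'} = d_j\} - 1\{D_z = d_j\}$ in terms of the ``survival'' indicators $1\{D_{z'} \geq d_j\} - 1\{D_z \geq d_j\}$ and use monotonicity $D_{z'} \geq D_z$ a.s.\ to conclude that $1\{D_{z'} \geq d_j\} - 1\{D_z \geq d_j\} = 1\{D_{z'} \geq d_j > D_z\} \geq 0$. Summation by parts then converts $\sum_j (1\{D_{z'} = d_j\} - 1\{D_z = d_j\}) Y_{d_j}(z,z')$ into $\sum_{j=2}^J (Y_{d_j}(z,z') - Y_{d_{j-1}}(z,z')) \cdot 1\{D_{z'} \geq d_j > D_z\}$; the same manipulation applied to the denominator gives $E[D \mid Z = z'] - E[D \mid Z = z] = \sum_{j=2}^J (d_j - d_{j-1}) \mathbb{P}(D_{z'} \geq d_j > D_z)$. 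Taking expectations of the numerator, conditioning each term on the event $\{D_{z'} \geq d_j > D_z\}$ (which has the appropriate probability weight), and dividing by the denominator yields exactly the weighted average with weights $\omega_j$ as displayed, after checking $\sum_j \omega_j$ times its normalizer equals the denominator so the two sides match. The identification claim is completed by noting the right-hand side of \eqref{eq.beta} involves only the distribution of the observables $(Y, D, Z)$.

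I would also dispatch the degenerate case: if the denominator $E[D \mid Z = z'] - E[D \mid Z = z]$ equals zero, monotonicity forces $D_{z'} = D_z$ a.s., every weight numerator $\mathbb{P}(D_{z'} \geq d_j > D_z)$ vanishes, and the convention $0 \cdot \infty = 0$ from \eqref{eq.0timesinfinity} makes both sides equal to $0$; this is the ordered-treatment analogue of the remark following Lemma \ref{lemma.pairwise beta binary D}. The main obstacle is purely bookkeeping rather than conceptual: getting the summation-by-parts indices and the boundary terms right (the top term $1\{D_{z'} \geq d_J\}$ versus $1\{D_{z'} = d_J\}$, and the bottom term at $d_1$) and making sure the conditional-expectation reweighting in the numerator produces precisely the stated $\omega_j$ with the correct normalization $\sum_{l=2}^J (d_l - d_{l-1}) \mathbb{P}(D_{z'} \geq d_l > D_z)$ in the denominator. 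Since all three pairwise conditions are assumed to hold for the \emph{known} pair $(z_{k_m}, z_{k'_m})$, no first-stage assumption is needed and the argument is essentially a one-pair restriction of the standard proof, so I expect it to go through cleanly once the algebra of the telescoping sum is organized carefully.
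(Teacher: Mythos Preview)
Your proposal is correct and follows essentially the same route as the paper's proof: both restrict to a fixed pair $(z_{k_m},z_{k'_m})$, use exclusion and random assignment to express the numerator as $E[\sum_d(1\{D_{z'}=d\}-1\{D_z=d\})Y_d(z,z')]$, apply the Abel/summation-by-parts telescoping to pass to the survival indicators $1\{D_{z'}\ge d_j\}-1\{D_z\ge d_j\}$, invoke monotonicity so that this difference is $\{0,1\}$-valued, and compute the denominator by the same telescoping applied to $D$. The paper handles the boundary terms by introducing a fictitious $d_0<d_1$ with $Y_{d_0}(z_{k_m},z_{k'_m})=0$ and a $d_{J+1}>d_J$, which is exactly the bookkeeping you flag as the main obstacle; your extra paragraph on the degenerate denominator via the $0\cdot\infty=0$ convention is a welcome addition that the paper leaves implicit.
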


Lemma \ref{lemma.partial beta} is an extension of Theorem 1 of \citet{imbens1994identification} and Theorem 1 of \citet{angrist1995two} to the case where $Z$ is pairwise valid. We follow \citet{angrist1995two} and refer to $\beta_{k_{m}',k_m}$ as the average causal response (ACR). Lemma \ref{lemma.partial beta} shows that if a validity pair set $\mathscr{Z}_M$ is known, we can identify every $\beta_{k_{m}',k_{m}}$.
In practice, however, $\mathscr{Z}_M$ is usually unknown. We show how to estimate the largest validity pair set $\mathscr{Z}_{\bar{M}}$ and use this estimator to estimate the ACRs. 

The estimation of $\mathscr{Z}_{\bar{M}}$ is similar to that in Section \ref{sec.pairwise valid instrument binary D}. Suppose that there are subsets $\mathscr{Z}_1\subseteq\mathscr{Z}$ and $\mathscr{Z}_2\subseteq\mathscr{Z}$ that satisfy the testable implications in \citet{kitagawa2015test}, \citet{mourifie2016testing}, and \citet{sun2021ivvalidity}, and those in \citet{kedagni2020generalized}, respectively. We let $\mathscr{Z}_0=\mathscr{Z}_1\cap\mathscr{Z}_2$ so that $\mathscr{Z}_0$ satisfies all the above necessary conditions. We first construct the estimators  $\widehat{\mathscr{Z}_1}$ and $\widehat{\mathscr{Z}_2}$ for  $\mathscr{Z}_1$ and  $\mathscr{Z}_2$, respectively, and then construct the estimator  $\widehat{\mathscr{Z}_0}$ for  $\mathscr{Z}_0$ as $\widehat{\mathscr{Z}_0}=\widehat{\mathscr{Z}_1}\cap\widehat{\mathscr{Z}_2}$. See Appendix \ref{sec.estimation Z_0 ordered} for details.

\begin{assumption}\label{ass.iid data}
	$\{(Y_i,D_i,Z_i)\}_{i=1}^{n}$ is an i.i.d.\ sample from a population such that all relevant moments exist. 
\end{assumption}

\begin{assumption}\label{ass.first stage}
	For every $\mathcal{Z}_{(k,k')}\in\mathscr{Z}_{\bar{M}}$, 
	\begin{align}
	    E[g(Z_i)D_i|Z_i\in\mathcal{Z}_{(k,k^{\prime})}]-E[D_i|Z_i\in\mathcal{Z}_{(k,k^{\prime})}]\cdot E[g(Z_i)|Z_i\in\mathcal{Z}_{(k,k^{\prime})}]\neq0.
	\end{align}
\end{assumption}

As in Section \ref{sec.pairwise valid instrument binary D}, we first suppose that $\mathscr{Z}_{\bar{M}}$ can be estimated consistently by the estimator $\widehat{\mathscr{Z}_{0}}$.
We use the same notation as in Section \ref{sec.pairwise valid instrument binary D}. 
For
$\mathcal{Z}_{(  k,k^{\prime})  }\in{\mathscr{Z}}$, we 
run the regression 
\begin{align}\label{eq.VSIV estimation pairwise}
	Y_i1\left\{  Z_{i}\in\mathcal{Z}_{(k,k^{\prime})}\right\} =&\,\gamma_{(k,k^{\prime})}^01\left\{  Z_{i}\in\mathcal{Z}_{(k,k^{\prime})}\right\}+\gamma_{(k,k^{\prime})}^1D_i1\left\{  Z_{i}\in\mathcal{Z}_{(k,k^{\prime})}\right\}+\epsilon_{i}1\left\{  Z_{i}\in\mathcal{Z}_{(k,k^{\prime})}\right\},
\end{align}
using $g(Z_i)1\{  Z_{i}\in\mathcal{Z}_{(k,k^{\prime})}\}$ as the instrument for $D_i1\{  Z_{i}\in\mathcal{Z}_{(k,k^{\prime})}\}$. 
Given the estimated validity set $\widehat{\mathscr{Z}_0}$, we define the VSIV estimator for each $\mathcal{Z}_{(k,k^{\prime})}$ as
\begin{align}\label{eq.VSIV estimator pairwise}
	\widehat{\beta}_{(  k,k^{\prime})  }^1=1\left\{\mathcal{Z}_{(k,k^{\prime})}\in\widehat{\mathscr{Z}_0}\right\}\cdot\frac{\mathcal{E}_{n}\left(
		g\left(  Z_{i}\right)  Y_{i},\mathcal{Z}_{(k,k^{\prime})}\right)
		-\mathcal{E}_{n}\left(  g\left(  Z_{i}\right)  ,\mathcal{Z}_{(k,k^{\prime}%
			)}\right)  \mathcal{E}_{n}\left(  Y_{i},\mathcal{Z}_{(k,k^{\prime})}\right)
	}{\mathcal{E}_{n}\left(  g\left(  Z_{i}\right)  D_{i},\mathcal{Z}%
		_{(k,k^{\prime})}\right)  -\mathcal{E}_{n}\left(  g\left(  Z_{i}\right)
		,\mathcal{Z}_{(k,k^{\prime})}\right)  \mathcal{E}_{n}\left(  D_{i}%
		,\mathcal{Z}_{(k,k^{\prime})}\right)  },
\end{align}
which is the IV estimator for $\gamma_{(k,k^{\prime})}^1$ in \eqref{eq.VSIV estimation pairwise} multiplied by $1\{\mathcal{Z}_{(k,k')}\in\widehat{\mathscr{Z}_0}\}$.
As in Section \ref{sec.pairwise valid instrument binary D}, we define
\[
\widehat{\beta}_{1}=\left(  \widehat{\beta}_{\left(  1,2\right)  }^{1},\ldots
,\widehat{\beta}_{\left(  1,K\right)  }^{1},\ldots,\widehat{\beta}_{\left(
	K,1\right)  }^{1},\ldots,\widehat{\beta}_{\left(  K,K-1\right)  }^{1}\right)^T  ,
\]
\begin{align}\label{eq.VSIV true beta}
\beta_{(  k,k^{\prime})  }^{1}=1\left\{\mathcal{Z}_{(k,k^{\prime})}\in{\mathscr{Z}_{\bar{M}}}\right\}\cdot\frac{\mathcal{E}\left(  g\left(
	Z_{i}\right)  Y_{i},\mathcal{Z}_{(k,k^{\prime})}\right)  -\mathcal{E}\left(
	g\left(  Z_{i}\right)  ,\mathcal{Z}_{(k,k^{\prime})}\right)  \mathcal{E}%
	\left(  Y_{i},\mathcal{Z}_{(k,k^{\prime})}\right)  }{\mathcal{E}\left(
	g\left(  Z_{i}\right)  D_{i},\mathcal{Z}_{(k,k^{\prime})}\right)
	-\mathcal{E}\left(  g\left(  Z_{i}\right)  ,\mathcal{Z}_{(k,k^{\prime}%
		)}\right)  \mathcal{E}\left(  D_{i},\mathcal{Z}_{(k,k^{\prime})}\right)  },
\end{align}
and
\[
\beta_{1}=\left(  \beta_{\left(  1,2\right)  }^{1},\ldots,\beta_{\left(
	1,K\right)  }^{1},\ldots,\beta_{\left(  K,1\right)  }^{1},\ldots
,\beta_{\left(  K,K-1\right)  }^{1}\right)^T  .
\]

\begin{theorem}\label{thm.IV estimator asymptotics pairwise}
	Suppose that the instrument $Z$ is pairwise valid for the treatment $D$ as defined in Definition \ref{def.partial validity pairwise} with the largest validity pair set $\mathscr{Z}_{\bar{M}}=\{(z_{k_1},z_{k_1^{\prime}}),\ldots,(z_{k_{\bar{M}}},z_{k_{\bar{M}}^{\prime}})\}$ and that the estimator $\widehat{\mathscr{Z}_0}$ satisfies $\mathbb{P}(\widehat{\mathscr{Z}_0}=\mathscr{Z}_{\bar{M}})\to 1$. Under Assumptions \ref{ass.iid data} and \ref{ass.first stage}, $\sqrt{n}( \widehat{\beta}_{1}-\beta_1 ) \overset{d}\to N\left( 0,\Sigma \right) $, where 
	$\Sigma$ is defined in \eqref{eq.weak convergence pairwise2}. In addition, $\beta_{(  k,k^{\prime})  }^{1}=\beta_{k^{\prime},k}$ as defined in \eqref{eq.beta} for every $(z_k,z_{k^{\prime}})\in\mathscr{Z}_{\bar{M}}$.
\end{theorem}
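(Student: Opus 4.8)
The plan is to reduce Theorem \ref{thm.IV estimator asymptotics pairwise} (ordered treatments) to exactly the same machinery as Theorem \ref{thm.IV estimator asymptotics pairwise binary D}, since the only place the binary structure of $D$ entered the proof of the latter was through the identification content of the ratio (Lemma \ref{lemma.pairwise beta binary D}), not through the asymptotics. First I would observe that the VSIV estimator \eqref{eq.VSIV estimator pairwise} and its population target \eqref{eq.VSIV true beta} are literally the same sample/population ratios as in the binary case: they are built from the empirical averages $\mathcal{E}_n(\cdot,\mathcal{A})$ of the generic random variables $g(Z_i)Y_i$, $g(Z_i)$, $Y_i$, $g(Z_i)D_i$, $D_i$, and the indicator $1\{Z_i\in\mathcal{Z}_{(k,k')}\}$, multiplied by $1\{\mathcal{Z}_{(k,k')}\in\widehat{\mathscr{Z}_0}\}$. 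Since $D$ is now bounded (it takes finitely many values $d_1,\dots,d_J$) and Assumption \ref{ass.iid data} supplies i.i.d.\ data with all relevant moments, the joint CLT for the vector of these sample moments over all $K(K-1)$ pairs goes through verbatim.

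The key steps, in order, are: (1) For each $\mathcal{Z}_{(k,k')}\in\mathscr{Z}_{\bar{M}}$, Assumption \ref{ass.first stage} guarantees the denominator in \eqref{eq.VSIV true beta} is nonzero, so a standard delta-method/continuous-mapping argument gives $\sqrt{n}(\widehat\beta^1_{(k,k')}-\beta^1_{(k,k')})$ asymptotically normal, with the indicator $1\{\mathcal{Z}_{(k,k')}\in\widehat{\mathscr{Z}_0}\}\to 1$ in probability by the consistency hypothesis $\mathbb{P}(\widehat{\mathscr{Z}_0}=\mathscr{Z}_{\bar{M}})\to 1$, so it may be replaced by $1$ without affecting the limit. (2) For $\mathcal{Z}_{(k,k')}\notin\mathscr{Z}_{\bar{M}}$, the target is $\beta^1_{(k,k')}=0$ and the first-stage condition may fail, so the ratio need not be stochastically bounded; here I would reuse the argument sketched after Theorem \ref{thm.IV estimator asymptotics pairwise binary D}, namely that $\mathbb{P}(\widehat{\mathscr{Z}_0}=\mathscr{Z}_{\bar{M}})\to 1$ forces $n^\rho 1\{\mathcal{Z}_{(k,k')}\in\widehat{\mathscr{Z}_0}\}=o_p(1)$ for every $\rho>0$, which kills the potentially diverging ratio and yields $\sqrt{n}\,\widehat\beta^1_{(k,k')}=o_p(1)$. (3) Assemble these componentwise statements into the joint convergence \eqref{eq.beta asymptotic distribution} via the Cramér–Wold device, reading off $\Sigma$ from the delta-method Jacobian applied to the joint limiting normal of the underlying moment vector — this is the same $\Sigma$ of \eqref{eq.weak convergence pairwise2}, since the estimator's functional form is unchanged. (4) Finally, for the identification claim: for $(z_k,z_{k'})\in\mathscr{Z}_{\bar{M}}$, pairwise validity (Definition \ref{def.partial validity pairwise}) lets me apply Lemma \ref{lemma.partial beta} on the subpopulation $\{Z\in\{z_k,z_{k'}\}\}$ to conclude $\beta^1_{(k,k')}$ equals the ACR $\beta_{k',k}$ of \eqref{eq.beta}.

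The main obstacle is step (2): handling pairs outside $\mathscr{Z}_{\bar{M}}$ where the first stage can vanish, so that the naive ratio is not $\sqrt{n}$-tight and a plain continuous-mapping argument breaks. The remedy is to exploit that the selection event is \emph{exactly} right asymptotically, not merely consistent in a weaker sense — because $\widehat{\mathscr{Z}_0}$ is a function of finitely many moment statistics each converging at a polynomial rate, $\mathbb{P}(\mathcal{Z}_{(k,k')}\in\widehat{\mathscr{Z}_0})$ decays faster than any polynomial, so multiplying the (at worst polynomially exploding) ratio by the indicator still gives $o_p(1)$ after scaling by $\sqrt n$. Everything else — the moment CLT, the delta method, the Cramér–Wold assembly, and the invocation of Lemma \ref{lemma.partial beta} for the interpretation — is routine and structurally identical to the binary-treatment proof.
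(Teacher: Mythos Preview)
Your proposal is essentially correct and follows the same route as the paper: stack the six moment functions over all $K(K-1)$ pairs, apply the multivariate CLT, use the delta method on pairs in $\mathscr{Z}_{\bar{M}}$, kill the remaining coordinates via the selection indicator, and combine with Slutsky. The paper verifies $\beta^1_{(k,k')}=\beta_{k',k}$ by an explicit covariance computation rather than by citing Lemma \ref{lemma.partial beta}, but your shortcut is fine since with $Z$ restricted to two values the covariance-ratio form reduces to the Wald ratio.

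One correction worth flagging in step (2): your justification for $n^\rho\, 1\{\mathcal{Z}_{(k,k')}\in\widehat{\mathscr{Z}_0}\}=o_p(1)$ is both unnecessary and incorrect as written. Polynomial convergence of the underlying moment statistics does not by itself deliver any rate of decay for $\mathbb{P}(\mathcal{Z}_{(k,k')}\in\widehat{\mathscr{Z}_0})$, and the theorem assumes none. The right argument is simpler and uses only the hypothesis $\mathbb{P}(\widehat{\mathscr{Z}_0}=\mathscr{Z}_{\bar{M}})\to 1$: for $\mathcal{Z}_{(k,k')}\notin\mathscr{Z}_{\bar{M}}$ and any $\varepsilon>0$,
\[
\mathbb{P}\bigl(n^\rho\, 1\{\mathcal{Z}_{(k,k')}\in\widehat{\mathscr{Z}_0}\}>\varepsilon\bigr)\;\leq\;\mathbb{P}\bigl(\widehat{\mathscr{Z}_0}\neq\mathscr{Z}_{\bar{M}}\bigr)\to 0,
\]
because the indicator is $\{0,1\}$-valued and the scaling factor $n^\rho$ is irrelevant once the event has vanishing probability. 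This is exactly \eqref{eq.set consistency}--\eqref{eq.I consistency} in the paper. With that fix, the paper then writes $\sqrt{n}\,1\{\cdot\}f(\widehat W_n(\mathcal{Z}_{(k,k')}))=n\,1\{\cdot\}\cdot A_n/(\sqrt{n}\,B_n)$ when the first stage fails, shows $\sqrt{n}\,B_n$ has a nondegenerate limit and $A_n$ a finite one, and concludes via $n\,1\{\cdot\}=o_p(1)$ --- which is your ``polynomially exploding ratio times vanishing indicator'' heuristic made precise.
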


     Next, we generalize the results in Section \ref{sec.bias reduction binary D} and show that VSIV estimation always reduces the asymptotic bias when the treatments are ordered. Given a presumed validity pair set $\mathscr{Z}_P$, we apply VSIV estimation based on $\widehat{\mathscr{Z}'_0}=\widehat{\mathscr{Z}_0}\cap\mathscr{Z}_P$.

\begin{assumption}\label{ass.first stage bias}
	For every $\mathcal{Z}_{(k,k')}\in\mathscr{Z}_{0}$, 
	\begin{align}
	    E[g(Z_i)D_i|Z_i\in\mathcal{Z}_{(k,k^{\prime})}]-E[D_i|Z_i\in\mathcal{Z}_{(k,k^{\prime})}]\cdot E[g(Z_i)|Z_i\in\mathcal{Z}_{(k,k^{\prime})}]\neq0.
	\end{align}
\end{assumption}
    
    \begin{theorem}\label{thm.bias reduction multi ordered}
    		Suppose that Assumptions \ref{ass.iid data} and \ref{ass.first stage bias} hold and that $\mathbb{P}(\widehat{\mathscr{Z}_0}=\mathscr{Z}_0)\to 1$ with $\mathscr{Z}_0\supseteq\mathscr{Z}_{\bar{M}}$. For every presumed validity pair set $\mathscr{Z}_P$, the asymptotic bias $\mathrm{plim}_{n\rightarrow\infty}\Vert \widehat{\beta}_1-\beta_1\Vert_2$ is always reduced by using $\widehat{\mathscr{Z}'_0}$  in the estimation \eqref{eq.VSIV estimator pairwise} compared to the asymptotic bias from using $\mathscr{Z}_P$.
    \end{theorem}

As shown in Propositions \ref{prop.consistent G hat pairwise Z1} and \ref{prop.consistent G hat pairwise Z2}, the pseudo-validity pair set $\mathscr{Z}_0$ can always be estimated consistently by $\widehat{\mathscr{Z}_0}$ under mild conditions. Theorem \ref{thm.bias reduction multi ordered} shows that VSIV estimation based on $\widehat{\mathscr{Z}_0}\cap\mathscr{Z}_P$ always reduces the asymptotic bias.

    \begin{remark}
    	In Section \ref{sec.pairwise valid instrument binary D}, we provide the definition of partial IV validity for the binary treatment case. See Appendix \ref{sec.ordered treatment partial} for the extension to multivalued ordered treatments.
    \end{remark}

\subsection{Unordered Treatments}\label{sec.unordered treatment}
\subsubsection{Setup}
Here, we extend our results to unordered treatments using the framework of \citet{heckman2018unordered}. The
treatment (choice) $D$ is discrete with support
$\mathcal{D}=\left\{  d_{1},\ldots,d_{J}\right\}  $, which is unordered. 
 \citet[p.~15]{heckman2018unordered} (Assumption A-3) consider the following monotonicity assumption. 
\begin{assumption}\label{ass.monotonicity unordered}
	For all $d\in \mathcal{D}$ and all $z,z^{\prime}\in \mathcal{Z}$, $1\left\{
	D_{  z^{\prime}}  =d\right\}  \geq1\left\{  D_{  z}
	=d\right\}  $ for all $\omega\in\Omega$, or $1\left\{  D_{z'}
	=d\right\}  \leq1\left\{  D_z  =d\right\}  $ for all $\omega\in\Omega$.\footnote{More precisely, the potential treatments should be written as functions of $\omega$, $D_z(\omega)$ and $D_{z^{\prime}}(\omega)$. For simplicity of notation, we omit $\omega$ whenever there is no confusion. The inequalities can be modified to hold a.s.}
\end{assumption}

Based on Assumption \ref{ass.monotonicity unordered}, we introduce the definition of the pairwise IV validity for the unordered treatment case.\footnote{\citet{fusejima2022identification} combines a similar assumption with rank similarity \citep{chernozhukov2005iv} to identify effects with multivalued treatments.} 
\begin{definition}
	\label{def.partial validity unordered pairwise} The instrument $Z$ is \textbf{pairwise valid} for
	the unordered treatment $D$ if there is a set $\mathscr{Z}_M=\{(z_{k_1},z_{k_1^{\prime}}),\ldots,(z_{k_M},z_{k_M^{\prime}})\}$ with $z_{k_1},z_{k_1^{\prime}},\ldots,z_{k_M},z_{k_M^{\prime}}\in\mathcal{Z}$  and $k_m< k'_m$ for every $m$ such that the following conditions hold for every $(z,z')\in\mathscr{Z}_M$: 
	\begin{enumerate}[label=(\roman*)]
		
		\item Exclusion: For all $d\in\mathcal{D}$, $Y_{dz}=Y_{dz^{\prime}}$ a.s.
		
		\item Random Assignment: $Z$ is jointly independent of $\left( Y_{d_{1}z},Y_{d_{1}z'},\ldots, Y_{d_{J}z},Y_{d_{J}z'},D_z,D_{z'}\right)$.
		
		\item Monotonicity: For all $d\in \mathcal{D}$, $1\left\{
		D_{  z^{\prime}}  =d\right\}  \geq1\left\{  D_{  z}
		=d\right\}  $ for all $\omega\in\Omega$, or $1\left\{  D_{z'}
		=d\right\}  \leq1\left\{  D_{z}  =d\right\}  $ for all $\omega\in\Omega$.
	\end{enumerate}
	The set $\mathscr{Z}_M$ is called a \textbf{validity pair set} of $Z$. The union of all validity pair sets is the largest validity pair set, denoted by ${\mathscr{Z}}_{\bar{M}}$. 
\end{definition}

Suppose the instrument $Z$ is pairwise valid for the treatment $D$ with the largest validity pair set $\mathscr{Z}_{\bar
	{M}}=\{(z_{k_{1}},z_{k_{1}^{\prime}}),\ldots,(z_{k_{\bar{M}}},z_{k_{\bar{M}%
	}^{\prime}})\}$. Define $Y_{d}(z,z^{\prime})$
for every $d\in\mathcal{D}$ and every $(z,z^{\prime})\in\mathscr{Z}_{\bar{M}}$
such that $Y_{d}(z,z^{\prime})=Y_{dz}=Y_{dz^{\prime}}$ a.s.
Following \citet{heckman2018unordered}, we introduce the following notation. Define the response vector $S$ as a $K$-dimensional random vector of potential treatments with $Z$ fixed at each value of its support: 
\[
S=\left(  D_{z_{1}}  ,\ldots,D_{  z_{K}}  \right)  ^{T}.
\]
The finite support of $S$ is $\mathcal{S=}\left\{
\xi_{1},\ldots,\xi_{N_{S}}\right\}$, where
$N_{S}$ is the number of possible values of $S$. 
The response matrix $R$ is an array of response-types defined over
$\mathcal{S}$, $R=\left(  \xi_{1},\ldots,\xi_{N_{S}}\right)  $. 

For every
$\mathcal{Z}_{(  k,k^{\prime})  }\in{\mathscr{Z}}$, there is a $2\times K$ binary matrix
$\mathcal{M}_{(  k,k^{\prime})}$ such that $$\mathcal{M}_{(  k,k^{\prime})}\left(  z_{1},\ldots,z_{K}\right)
^{T}=\left(  z_{k},z_{k^{\prime}}\right)  ^{T}.$$ For example, if $K=5$ and $(  k,k^{\prime})  =\left(  3,5\right)
$, then
\[
\mathcal{M}_{\left(  3,5\right)}=\left(
\begin{array}
	[c]{ccccc}%
	0 & 0 & 1 & 0 & 0\\
	0 & 0 & 0 & 0 & 1
\end{array}
\right)  .
\]
We define a transformation $\mathcal{K}_{(  k,k^{\prime})}$ such that if $A$ is a
$K\times L$ matrix, $\mathcal{K}_{(  k,k^{\prime})}A$ is the matrix that consists of all the unique columns of $\mathcal{M}_{(  k,k^{\prime})}A$ in the
same order as in $\mathcal{M}_{(  k,k^{\prime})}A$. In the above
example, if $A=((  x_{1},\ldots,x_{5})  ^{T},(x_{1},\ldots,x_{5})  ^{T},(y_{1},\ldots,y_{5})  ^{T})$, then $\mathcal{K}_{\left(  3,5\right)}A=((  x_{3},x_{5})  ^{T},(  y_{3},y_{5})  ^{T})$.
We write $\mathcal{K}_{(  k,k^{\prime})}R=(s_1,\ldots,s_{L_{(  k,k^{\prime})}})$, where $L_{(k,k')}$ is the column number of $\mathcal{K}_{(  k,k^{\prime})}R$. Let $B_{d{(  k,k^{\prime})}}$ denote a binary matrix of the same dimension as
$\mathcal{K}_{(  k,k^{\prime})} R$, whose elements are equal to $1$ if the corresponding element in $\mathcal{K}_{(  k,k^{\prime})} R$ is
equal to $d$, and equal to $0$ otherwise. We denote the element in the
$m$th row and $l$th column of the matrix $B_{d(k,k')}$ by $B_{d{(  k,k^{\prime})}}\left(  m,l\right)  $.
Finally, we use $B_{d{(  k,k^{\prime})}}=1\{ \mathcal{K}_{(  k,k^{\prime})} R=d\}  $ to denote
$B_{d{(  k,k^{\prime})}}$. 

\begin{lemma}\label{lemma.equivalent characterizations unordered pairwise}
	Suppose that the instrument $Z$ is pairwise valid for the treatment $D$ with the largest validity pair set $\mathscr{Z}_{\bar{M}}=\{(z_{k_1},z_{k_1'}),\dots,(z_{k_{\bar{M}}},z_{k_{\bar{M}}'})\}$.
	The following statements are equivalent:
	\begin{enumerate}[label=(\roman*)]
		\item For every $(z_k,z_{k'})\in\mathscr{Z}_{\bar{M}}$, the binary matrix ${B}_{d(k,k')}=1\{  \mathcal{K}_{(k,k')}R=d\}  $ is lonesum
		for every $d\in\mathcal{D}$.\footnote{As defined in \citet[p.~20]{heckman2018unordered}, a binary matrix
			is \emph{lonesum} if it is uniquely determined by its row and column sums.}
		
		\item For every $(z_k,z_{k'})\in\mathscr{Z}_{\bar{M}}$ and all $d,d^{\prime},d^{\prime\prime}\in\mathcal{D}$, there are no $2\times2$ sub-matrices of $\mathcal{K}_{(k,k')}R$ of the type
		\[
		\left(
		\begin{array}
			[c]{cc}%
			d & d^{\prime}\\
			d^{\prime\prime} & d
		\end{array}
		\right)  \text{ or }\left(
		\begin{array}
			[c]{cc}%
			d^{\prime} & d\\
			d & d^{\prime\prime}%
		\end{array}
		\right)
		\]
		with $d^{\prime}\neq d$ and $d^{\prime\prime}\neq d$.
		
		\item For every $(z_k,z_{k'})\in\mathscr{Z}_{\bar{M}}$ and every $d\in\mathcal{D}$, the
		following inequalities hold:
		\begin{align*}
			1\left\{  D_{z_{k'}}
			=d\right\}  \ge 1\left\{  D_{z_k}  =d\right\} \text{ for all $\omega\in\Omega$},\text{ or }
			1\left\{  D_{z_{k'}}
			=d\right\}  \leq1\left\{  D_{z_k}  =d\right\} \text{ for all $\omega\in\Omega$}.
		\end{align*}

	\end{enumerate}

\end{lemma}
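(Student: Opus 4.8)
The plan is to reduce all three statements to a single combinatorial property of the $2\times L_{(k,k')}$ binary matrices $B_{d(k,k')}=1\{\mathcal{K}_{(k,k')}R=d\}$, exploiting that the columns of $\mathcal{K}_{(k,k')}R$ enumerate precisely the realized values of the pair $(D_{z_k},D_{z_{k'}})$. First I would make this dictionary explicit: since $R=(\xi_1,\dots,\xi_{N_S})$ lists the support points of the response vector $S=(D_{z_1},\dots,D_{z_K})^T$, the matrix $\mathcal{M}_{(k,k')}R$ has columns $(D_{z_k},D_{z_{k'}})^T$ evaluated over that support, and $\mathcal{K}_{(k,k')}R$ keeps exactly the distinct such columns (which therefore form the support of $(D_{z_k},D_{z_{k'}})$). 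Consequently, for a fixed pair $(z_k,z_{k'})\in\mathscr{Z}_{\bar{M}}$ and a fixed $d\in\mathcal{D}$, the first row of $B_{d(k,k')}$ is dominated entrywise by the second row if and only if every realized pair $(a,b)$ of $(D_{z_k},D_{z_{k'}})$ with $a=d$ also has $b=d$, i.e.\ if and only if $1\{D_{z_{k'}}=d\}\ge 1\{D_{z_k}=d\}$ (a.s., which is all that is needed given the footnote to Assumption \ref{ass.monotonicity unordered}); the reverse domination corresponds symmetrically to $1\{D_{z_{k'}}=d\}\le 1\{D_{z_k}=d\}$. Therefore statement (iii) is equivalent to: for every $(z_k,z_{k'})\in\mathscr{Z}_{\bar{M}}$ and every $d\in\mathcal{D}$, the two rows of $B_{d(k,k')}$ are comparable in the entrywise order.

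Second, I would invoke the classical characterization that a binary matrix is lonesum if and only if it contains no $2\times2$ submatrix equal to $\left(\begin{smallmatrix}1&0\\0&1\end{smallmatrix}\right)$ or $\left(\begin{smallmatrix}0&1\\1&0\end{smallmatrix}\right)$ (the criterion underlying the lonesum notion used by \citet{heckman2018unordered}), and specialize it to $2\times L$ matrices: such a matrix has a $2\times2$ permutation submatrix precisely when some column equals $(1,0)^T$ and some other column equals $(0,1)^T$, which is exactly the failure of row comparability. Hence a $2\times L$ binary matrix is lonesum iff its two rows are comparable, and combining with the first paragraph yields the equivalence of (i) and (iii).

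Third, I would deduce (i) $\Leftrightarrow$ (ii) from the same criterion via an explicit correspondence between forbidden submatrices. Applying the indicator $1\{\,\cdot=d\,\}$ entrywise, a $2\times2$ submatrix of $\mathcal{K}_{(k,k')}R$ of the form $\left(\begin{smallmatrix}d&d'\\d''&d\end{smallmatrix}\right)$ (resp.\ $\left(\begin{smallmatrix}d'&d\\d&d''\end{smallmatrix}\right)$) with $d'\neq d$ and $d''\neq d$ maps exactly to the permutation submatrix $\left(\begin{smallmatrix}1&0\\0&1\end{smallmatrix}\right)$ (resp.\ $\left(\begin{smallmatrix}0&1\\1&0\end{smallmatrix}\right)$) of $B_{d(k,k')}$; conversely any permutation submatrix of $B_{d(k,k')}$ lifts to a submatrix of $\mathcal{K}_{(k,k')}R$ of one of these two types, because a column of $B_{d(k,k')}$ equal to $(1,0)^T$ (resp.\ $(0,1)^T$) comes from a column $(d,b)^T$ with $b\neq d$ (resp.\ $(a,d)^T$ with $a\neq d$). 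Thus, for each $(z_k,z_{k'})\in\mathscr{Z}_{\bar{M}}$ and each $d$, the absence of the forbidden submatrices is equivalent to $B_{d(k,k')}$ having no permutation submatrix, i.e.\ to $B_{d(k,k')}$ being lonesum; quantifying over all $d$ and all pairs in $\mathscr{Z}_{\bar{M}}$ gives (ii) $\Leftrightarrow$ (i), closing the argument.

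The main difficulty is bookkeeping rather than mathematical depth: one must verify carefully that the columns of $\mathcal{K}_{(k,k')}R$ exhaust the support of $(D_{z_k},D_{z_{k'}})$, so that the ``for all $\omega\in\Omega$'' quantifier in (iii) translates faithfully (up to an a.s.\ qualification) into entrywise statements about $B_{d(k,k')}$, and that the reindexings effected by $\mathcal{M}_{(k,k')}$ and $\mathcal{K}_{(k,k')}$ preserve the row/column structure on which the lonesum criterion operates. Once the $2\times L$ characterization of lonesum matrices is in hand, the three equivalences follow immediately; note that only the notational setup from the hypothesis (the response matrix $R$ and the set $\mathscr{Z}_{\bar{M}}$) is used, the equivalence itself being purely combinatorial.
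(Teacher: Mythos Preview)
Your proposal is correct and rests on the same core ideas as the paper: Ryser's criterion that a binary matrix is lonesum iff it contains no $2\times2$ permutation submatrix, the entrywise correspondence between forbidden submatrices of $\mathcal{K}_{(k,k')}R$ and permutation submatrices of $B_{d(k,k')}$, and the translation between row comparability of $B_{d(k,k')}$ and the monotonicity condition in (iii). The paper organizes the argument as (i)$\Leftrightarrow$(ii) followed by (i)$\Rightarrow$(iii)$\Rightarrow$(ii), invoking Lemmas L-5, L-6, L-8 and L-9 of \citet{heckman2018unordered} for the lonesum characterizations; you instead give a self-contained specialization to $2\times L$ matrices (``lonesum iff the two rows are comparable''), which lets you establish (i)$\Leftrightarrow$(iii) directly and then handle (i)$\Leftrightarrow$(ii) by the explicit submatrix correspondence. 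Your route is slightly more elementary and avoids external citations, while the paper's route is shorter because it delegates the combinatorics to Heckman--Pinto; substantively the two arguments are the same.
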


Lemma \ref{lemma.equivalent characterizations unordered pairwise} is an extension of Theorem T-3 of \citet{heckman2018unordered} for pairwise valid instruments. It provides equivalent conditions for the monotonicity condition (iii) in Definition \ref{def.partial validity unordered pairwise}.

To describe our results, following \citet{heckman2018unordered}, we define some additional notation.
Let $B_{d{(  k,k^{\prime})}}^{+}$ denote the Moore--Penrose pseudo-inverse of $B_{d{(  k,k^{\prime})}}$. Let $\kappa: \mathbb{R}\to \mathbb{R}$ be an arbitrary function of interest. Define for all $d\in\mathcal{D}$,
\[
\bar{P}_{Z}\left(  d\right)  =\left(  \mathbb{P}\left(  D=d|Z=z_{{1}}\right)
,\ldots,\mathbb{P}\left(  D=d|Z=z_{{K}}\right)  \right)  ^{T},
\]
\[\bar{Q}_{Z}\left(  d\right)  =\left(  E\left[  \kappa\left(  Y\right)
\cdot1\left\{  D=d\right\}  |Z=z_{{1}}\right]  ,\ldots,E\left[
\kappa\left(  Y\right)  \cdot1\left\{  D=d\right\}  |Z=z_{K}\right]
\right)  ^{T},
\] 
\[
P_{Z(  k,k^{\prime})}\left(  d\right) =\mathcal{M}_{(  k,k^{\prime})} \bar{P}_{Z}\left(  d\right)=\left(  \mathbb{P}\left(  D=d|Z=z_{k}\right)
,\mathbb{P}\left(  D=d|Z=z_{k^{\prime}}\right)  \right)  ^{T},
\]
and
\begin{align*}
	Q_{Z{(  k,k^{\prime})}}\left(  d\right) & =\mathcal{M}_{(  k,k^{\prime})}\bar{Q}_{Z}\left(  d\right)\\
	&=\left(  E\left[  \kappa\left(  Y\right)
	\cdot1\left\{  D=d\right\}  |Z=z_{k}\right],E\left[
	\kappa\left(  Y\right)  \cdot1\left\{  D=d\right\}  |Z=z_{k^{\prime}}\right]
	\right)  ^{T}.
\end{align*}
Moreover, we define
\begin{align*}
	&P_{Z{(  k,k^{\prime})}}=\left(  P_{Z{(  k,k^{\prime})}}\left(  d_{1}\right)  ,\ldots,P_{Z{(  k,k^{\prime})}}\left(  d_{J}\right)
	\right)  ^{T} \text{  and }\\
	& P_{S{(  k,k^{\prime})}}=\left(  \mathbb{P}(  \mathcal{M}_{(  k,k^{\prime})}S=s_{1})  ,\ldots,\mathbb{P}(
	\mathcal{M}_{(  k,k^{\prime})}S=s_{L_{(k,k')}})  \right)  ^{T},
\end{align*}
and for every $(z_{k},z_{k'})\in\mathscr{Z}_{\bar{M}}$, we define
\begin{align*}
&Q_{S{(  k,k^{\prime})}}\left(  d\right) =\\
 &\,\left(  E\left[  \kappa\left(  Y_d(z_k,z_{k'})
\right)  \cdot1\left\{  \mathcal{M}_{(  k,k^{\prime})}S=s_{1}\right\}  \right]  ,\ldots,E\left[
\kappa\left(  Y_d(z_k,z_{k'})  \right)  \cdot1\left\{  \mathcal{M}_{(  k,k^{\prime})}S=s_{L_{(k,k')}}\right\}  \right]  \right)^T 
\end{align*}
for all $d\in\mathcal{D}$.
Define $\Sigma_{d{(  k,k^{\prime})}}\left(  t\right)  $ to be the set of response-types in which
$d$ appears exactly $t$ times, that is, for every $d\in\mathcal{D}$ and every $t\in\left\{  0,1,2\right\}  $, define
\[
\Sigma_{d{(  k,k^{\prime})}}\left(  t\right)  =\left\{  s:s\text{ is some }l\text{th column of
}\mathcal{K}_{(  k,k^{\prime})}R\text{ with}\sum_{m=1}^{2}B_{d{(  k,k^{\prime})}}\left(  m,l\right)
=t\right\}  .
\] 
Let $b_{d{(  k,k^{\prime})}}(t)$ be a
$L_{(k,k')}$-dimensional binary row-vector that indicates if every column of
$\mathcal{K}_{(  k,k^{\prime})}R$ belongs to $\Sigma_{d{(  k,k^{\prime})}}\left(  t\right)  $, that is,
$b_{d{(  k,k^{\prime})}}\left(  t\right)  \left(  l\right)  =1$ if $s_{l}\in\Sigma_{d{(  k,k^{\prime})}}\left(
t\right)  $, and $b_{d{(  k,k^{\prime})}}\left(  t\right)  \left(  l\right)  =0$ otherwise for every $l\in\{1,\ldots,L_{(k,k')}\}$, where $s_{l}$ is the $l$th column of
$\mathcal{K}_{(  k,k^{\prime})}R$.
In this section, we let $$\mathscr{Z}=\{(z_1,z_2),\ldots,(z_1,z_K),\ldots,(z_{K-1},z_K)\}.$$
Finally, define
$
\mathds{1}(\mathscr{A})=(1\{(z_1,z_{2})\in \mathscr{A}\}, \ldots, 1\{(z_{K-1},z_{K})\in \mathscr{A}\})^T
$
for every $\mathscr{A}\subseteq\mathscr{Z}$. 

\subsubsection{VSIV Estimation under Consistent Estimation of Validity Pair Set}

Here, we study the properties of VSIV Estimation when the validity pair set can be estimated consistently, that is, there is an estimator $\widehat{\mathscr{Z}_0}$ such that $\mathbb{P}(\widehat{\mathscr{Z}_0}=\mathscr{Z}_{\bar{M}})\to 1$. 
Suppose that there are subsets $\mathscr{Z}_1\subseteq\mathscr{Z}$ and $\mathscr{Z}_2\subseteq\mathscr{Z}$ that satisfy the testable implications in \citet{sun2021ivvalidity}, and those in \citet{kedagni2020generalized}, respectively. As in Section \ref{sec.ordered treatment}, we let $\mathscr{Z}_0=\mathscr{Z}_1\cap\mathscr{Z}_2$ so that $\mathscr{Z}_0$ satisfies all the above necessary conditions. We first construct the estimators  $\widehat{\mathscr{Z}_1}$ and $\widehat{\mathscr{Z}_2}$ for  $\mathscr{Z}_1$ and  $\mathscr{Z}_2$, respectively, and then construct the estimator  $\widehat{\mathscr{Z}_0}$ for  $\mathscr{Z}_0$ as $\widehat{\mathscr{Z}_0}=\widehat{\mathscr{Z}_1}\cap\widehat{\mathscr{Z}_2}$. See Appendix \ref{sec.estimation Z_0 unordered} for details. Under mild conditions, $\mathbb{P}(\widehat{\mathscr{Z}_0}=\mathscr{Z}_{0})\to 1$. If $\mathscr{Z}_0=\mathscr{Z}_{\bar{M}}$, then it follows that $\mathbb{P}(\widehat{\mathscr{Z}_0}=\mathscr{Z}_{\bar{M}})\to 1$.

To state the results, define
\[
P_{DZ}\left(  d\right)  =\left(  \mathbb{P}\left(  D=d,Z=z_{1}\right)
,\ldots,\mathbb{P}\left(  D=d,Z=z_{K}\right)  \right)  ^{T},
\]%
\[
Q_{YDZ}\left(  d\right)  =\left(  E\left[  \kappa\left(  Y\right)  1\left\{
D=d,Z=z_{1}\right\}  \right]  ,\ldots,E\left[  \kappa\left(  Y\right)
1\left\{  D=d,Z=z_{K}\right\}  \right]  \right)  ^{T},
\]
for every $d\in\mathcal{D}$, and
\[
Z_{P}=\left(  \mathbb{P}\left(  Z=z_{1}\right)  ,\ldots,\mathbb{P}\left(
Z=z_{K}\right)  \right),
\]
\[
W=\left(  Z_{P},P_{DZ}\left(  d_{1}\right)  ^{T},\ldots,P_{DZ}\left(
d_{J}\right)  ^{T},Q_{YDZ}\left(  d_{1}\right)  ^{T},\ldots,Q_{YDZ}\left(
d_{J}\right)  ^{T}\right)  ^{T}.
\]
Suppose we have a random sample $\{(Y_i,D_i,Z_i)\}_{i=1}^{n}$. Define the following sample analogs:
\[
\widehat{\mathbb{P}}\left(  Z=z\right)     =\frac{1}{n}\sum_{i=1}
^{n}1\left\{  Z_{i}=z\right\}  \text{ for all }z,
\]
\[	\widehat{\mathbb{P}}\left(  D=d,Z=z\right)     =\frac{1}{n}\sum_{i=1}
^{n}1\left\{  D_{i}=d,Z_{i}=z\right\}  \text{ for all }d\text{ and all }z,
\]
\[	\widehat{E}\left[  \kappa\left(  Y\right)  1\left\{  D=d,Z=z\right\}  \right]
=\frac{1}{n}\sum_{i=1}^{n}\kappa\left(  Y_{i}\right)  1\left\{
D_{i}=d,Z_{i}=z\right\}  \text{ for all }d\text{ and all }z,
\]
\[	\widehat{P_{DZ}\left(  d\right) }    =\left(  \widehat{\mathbb{P}}\left(
D=d,Z=z_{1}\right)  ,\ldots,\widehat{\mathbb{P}}\left(  D=d,Z=z_{K}\right)
\right)  ^{T} \text{ for all }d,
\]
\[	\widehat{Q_{YDZ}\left(  d\right) }   =\left(  \widehat{E}\left[
\kappa\left(  Y\right)  1\left\{  D=d,Z=z_{1}\right\}  \right] 
,\ldots,\widehat{E}\left[  \kappa\left(  Y\right)  1\left\{  D=d,Z=z_{K}
\right\}  \right]  \right)  ^{T} \text{ for all }d,
\]
\[	\widehat{Z_{P}}   =\left(  \widehat{\mathbb{P}}\left(  Z=z_{1}\right)
,\ldots,\widehat{\mathbb{P}}\left(  Z=z_{K}\right)  \right), 
\]
and
\[	\widehat{W}   =\left(  \widehat{Z_{P}},\widehat{P_{DZ}\left(  d_{1}\right)  }
^{T},\ldots,\widehat{P_{DZ}\left(  d_{J}\right)  }^{T},\widehat{Q_{YDZ}\left(
	d_{1}\right)  }^{T},\ldots,\widehat{Q_{YDZ}\left(  d_{J}\right)  }^{T}\right)
^{T}.
\]

We impose the following weak regularity conditions.
\begin{assumption}\label{ass.iid data unordered}
	$\{(Y_i,D_i,Z_i)\}_{i=1}^{n}$ is an i.i.d.\ sample from a population such that all relevant moments exist. 
\end{assumption}

The next theorem presents the identification and estimation results under pairwise IV validity with unordered treatments.

\begin{theorem}\label{theorem.counterfactuals identified pairwise}
	Suppose that the instrument $Z$ is pairwise valid for the treatment $D$ as defined in Definition \ref{def.partial validity unordered pairwise} with the largest validity pair set $\mathscr{Z}_{\bar{M}}=\{(z_{k_1},z_{k_1^{\prime}}),\ldots,(z_{k_{\bar{M}}},z_{k_{\bar{M}}^{\prime}})\}$.  The following response-type
	probabilities and counterfactuals are identified for every $d\in\mathcal{D}$, each $t\in\{1,2\}$, and every $(z_k,z_{k^{\prime}})\in\mathscr{Z}_{\bar{M}}$:
	\begin{align}\label{eq.counterfactuals pairwise}
		&\mathbb{P}\left(  \mathcal{M}_{(  k,k^{\prime})}S\in\Sigma_{d{(  k,k^{\prime})}}\left(  t\right) \right)   
		=b_{d(  k,k^{\prime})}\left(  t\right)  B_{d(  k,k^{\prime})}^{+}  P_{Z(  k,k^{\prime})}\left(  d\right) \text{ and } 	\notag\\
		&E[  \kappa\left(  Y_d(z_k,z_{k'})  \right)  |\mathcal{M}_{(  k,k^{\prime})}S\in\Sigma_{d{(  k,k^{\prime})}}\left(  t\right)  ]  =\frac{b_{d(  k,k^{\prime})}\left(  t\right)  B_{d(  k,k^{\prime})}^{+}  Q_{Z(  k,k^{\prime})}\left(  d\right) }{b_{d(  k,k^{\prime})}\left(  t\right)  B_{d(  k,k^{\prime})}^{+}  P_{Z(  k,k^{\prime})}\left(  d\right) }.
	\end{align}
	In addition, under Assumption \ref{ass.iid data unordered}, if $\mathbb{P}(\widehat{\mathscr{Z}_0}=\mathscr{Z}_{\bar{M}})\to 1$, we have that
	\begin{align*}
		\sqrt{n}\left\{  \left(  \widehat{W}^T,\mathds{1}(\widehat{\mathscr{Z}_{0}})^T\right)^T  -\left(  W^T, \mathds{1}({\mathscr{Z}_{\bar{M}}})^T \right)^T  \right\} 
		\overset{d}\rightarrow\left(  N\left(  0,\Sigma_{W}\right)^T  ,0^T\right)^T, 
	\end{align*} 
	where $\Sigma_{W}$ is given in \eqref{eq.sigmaW}.
\end{theorem}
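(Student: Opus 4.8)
The plan is to split the argument into an identification part (the two displayed formulas in \eqref{eq.counterfactuals pairwise}, which I would prove one validity pair at a time) and an asymptotic part (the joint weak convergence, which decouples into a CLT for $\widehat W$ and a degenerate limit for $\mathds{1}(\widehat{\mathscr{Z}_0})$).

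For identification, I would fix $(z_k,z_{k'})\in\mathscr{Z}_{\bar M}$. First I would use exclusion and random assignment from Definition \ref{def.partial validity unordered pairwise} to express the observable reduced forms as linear images of the restricted response-type distribution: for each $d\in\mathcal{D}$, $P_{Z(k,k')}(d)=B_{d(k,k')}\,P_{S(k,k')}$ and $Q_{Z(k,k')}(d)=B_{d(k,k')}\,Q_{S(k,k')}(d)$, where the $l$th entry of $Q_{S(k,k')}(d)$ is $E[\kappa(Y_d(z_k,z_{k'}))1\{\mathcal{M}_{(k,k')}S=s_l\}]$; exclusion is what makes $Y_d(z_k,z_{k'})$ well defined and random assignment is what removes the conditioning on $Z$. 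Next I would invoke Lemma \ref{lemma.equivalent characterizations unordered pairwise}: its condition (iii) is precisely the monotonicity in Definition \ref{def.partial validity unordered pairwise}, so $B_{d(k,k')}$ is lonesum for every $d$. The key linear-algebra fact, which I would import from \citet{heckman2018unordered}, is that for a lonesum $B_{d(k,k')}$ the selector row vectors $b_{d(k,k')}(t)$ lie in its row space, so $b_{d(k,k')}(t)\,B_{d(k,k')}^{+}B_{d(k,k')}=b_{d(k,k')}(t)$. Applying this gives $b_{d(k,k')}(t)B_{d(k,k')}^{+}P_{Z(k,k')}(d)=b_{d(k,k')}(t)P_{S(k,k')}=\mathbb{P}(\mathcal{M}_{(k,k')}S\in\Sigma_{d(k,k')}(t))$ and, likewise, $b_{d(k,k')}(t)B_{d(k,k')}^{+}Q_{Z(k,k')}(d)=E[\kappa(Y_d(z_k,z_{k'}))1\{\mathcal{M}_{(k,k')}S\in\Sigma_{d(k,k')}(t)\}]$; dividing the second by the first, with the convention \eqref{eq.0timesinfinity} covering the zero-probability case, yields the conditional-expectation formula. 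Since $P_{Z(k,k')}(d)$ and $Q_{Z(k,k')}(d)$ are functions of the law of $(Y,D)$ given $Z\in\{z_k,z_{k'}\}$, every right-hand side is identified.

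For the asymptotic statement, I would observe that $\widehat W=n^{-1}\sum_{i=1}^{n}\psi(Y_i,D_i,Z_i)$ for the vector-valued map $\psi$ collecting $1\{Z_i=z_k\}$, $1\{D_i=d,Z_i=z_k\}$, and $\kappa(Y_i)1\{D_i=d,Z_i=z_k\}$, with $W=E[\psi(Y_i,D_i,Z_i)]$; under Assumption \ref{ass.iid data unordered} these are i.i.d.\ with finite second moments, so the multivariate central limit theorem gives $\sqrt n(\widehat W-W)\overset{d}{\rightarrow}N(0,\Sigma_W)$ with $\Sigma_W=\mathrm{Var}(\psi(Y_i,D_i,Z_i))$, which is the matrix in \eqref{eq.sigmaW}. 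For the second block, $\mathds{1}(\widehat{\mathscr{Z}_0})$ ranges over the finite set $\{0,1\}^{K(K-1)/2}$, and $\mathbb{P}(\widehat{\mathscr{Z}_0}=\mathscr{Z}_{\bar M})\to1$ forces $\mathbb{P}(\mathds{1}(\widehat{\mathscr{Z}_0})=\mathds{1}(\mathscr{Z}_{\bar M}))\to1$, so $\sqrt n\,(\mathds{1}(\widehat{\mathscr{Z}_0})-\mathds{1}(\mathscr{Z}_{\bar M}))$ equals $0$ with probability approaching one and hence converges in probability to $0$. Combining the two blocks via Slutsky's theorem (one component converging in distribution, the other in probability to a constant vector) delivers the stated joint limit $(N(0,\Sigma_W)^T,0^T)^T$.

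I expect the substantive obstacle to be the identification step: transporting the lonesum/pseudo-inverse recovery argument of \citet{heckman2018unordered} to the two-instrument-value restriction $\mathcal{K}_{(k,k')}R$ and verifying that the selectors $b_{d(k,k')}(t)$ aggregate exactly the response types in which $d$ occurs $t$ times. Once Lemma \ref{lemma.equivalent characterizations unordered pairwise} is available this reduces to careful bookkeeping with the matrices $\mathcal{M}_{(k,k')}$, $\mathcal{K}_{(k,k')}$, and $B_{d(k,k')}$, and the asymptotic part is routine precisely because the consistency assumption on $\widehat{\mathscr{Z}_0}$ renders the discrete block asymptotically degenerate, so it contributes nothing to the limiting covariance.
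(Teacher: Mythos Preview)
Your proposal is correct and follows essentially the same route as the paper. Both arguments reduce identification to the linear relations $P_{Z(k,k')}(d)=B_{d(k,k')}P_{S(k,k')}$ and $Q_{Z(k,k')}(d)=B_{d(k,k')}Q_{S(k,k')}(d)$, invoke Lemma \ref{lemma.equivalent characterizations unordered pairwise} to obtain lonesumness, and use the row-space fact $b_{d(k,k')}(t)[I-B_{d(k,k')}^{+}B_{d(k,k')}]=0$ to pin down the aggregates; the asymptotic part is identical (CLT for $\widehat W$ plus the degenerate block via $\mathbb{P}(\widehat{\mathscr{Z}_0}=\mathscr{Z}_{\bar M})\to1$ and Slutsky). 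The only difference is that the paper spells out the row-space step explicitly via the full-rank factorization $B_{d(k,k')}=C_{d(k,k')}D_{d(k,k')}$ and the Moore--Penrose formula (adapting Lemmas L-16 and L-17 of \citet{heckman2018unordered} to the two-row setting), whereas you cite this as a fact imported from \citet{heckman2018unordered}; that is a matter of detail rather than a different approach.
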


	 Theorem \ref{theorem.counterfactuals identified pairwise} is an extension of Theorem T-6 of \citet{heckman2018unordered} for pairwise valid instruments. As shown in Remark 7.1 in \citet{heckman2018unordered} and Theorem \ref{theorem.counterfactuals identified pairwise}, if $(z_k,z_{k'})\in\mathscr{Z}_{\bar{M}}$ and $\Sigma_{d(k,k')}(t)=\Sigma_{d^{\prime}(k,k')}(t^{\prime})$ for some $d,d^{\prime}\in\mathcal{D}$ and some $t,t^{\prime}\in\{1,2\}$, the mean treatment effect of $d$ relative to $d^{\prime}$ for $\Sigma_{d(k,k')}(t)$ can be identified, which is $E[Y_d(z_k,z_{k'})-Y_{d^{\prime}}(z_k,z_{k'})|\mathcal{M}_{(k,k')}S\in\Sigma_{d(k,k')}(t)]$. 

For all $d,d^{\prime}\in\mathcal{D}$, all $t,t^{\prime}\in\{1,2\}$, and all $k< k'$, following \citet{heckman2018unordered}, we define 
\begin{align*}
	\beta_{(k,k')}(d,d',t,t')\equiv& 1\{(z_k,z_{k^{\prime}})\in\mathscr{Z}_{\bar{M}},  \Sigma_{d(k,k')}(t)=\Sigma_{d'(k,k')}(t')  \} \\ &\cdot E[Y_{dz_k}-Y_{d^{\prime}z_{k'}}|\mathcal{M}_{(k,k')}S\in\Sigma_{d(k,k')}(t)].
\end{align*}
When $(z_k,z_{k^{\prime}})\in\mathscr{Z}_{\bar{M}}$ and $\Sigma_{d(k,k')}(t)=\Sigma_{d'(k,k')}(t')$, we have that
\begin{align*}
    \beta_{(k,k')}(d,d',t,t')= E[Y_d(z_k,z_{k'})-Y_{d^{\prime}}(z_k,z_{k'})|\mathcal{M}_{(k,k')}S\in\Sigma_{d(k,k')}(t)],
\end{align*}
which is the mean treatment effect of $d$ relative to $d'$ for $\Sigma_{d(k,k')}(t)$.

\begin{lemma} \label{lemma.beta unordered}
Let $\kappa (y)=y$ for all $y\in\mathbb{R}$.	The mean treatment effect $\beta_{(k,k')}(d,d',t,t')$ can be expressed as
	\begin{align}\label{eq.beta k unordered}
		\beta_{(k,k')}(d,d',t,t')=&\,1\{(z_k,z_{k^{\prime}})\in\mathscr{Z}_{\bar{M}},  \Sigma_{d(k,k')}(t)=\Sigma_{d'(k,k')}(t')  \}
		\notag\\
		&\cdot\left\{\frac{b_{d(  k,k^{\prime})}\left(  t\right)  B_{d(  k,k^{\prime})}^{+}  Q_{Z(  k,k^{\prime})}\left(  d\right) }{b_{d(  k,k^{\prime})}\left(  t\right)  B_{d(  k,k^{\prime})}^{+}  P_{Z(  k,k^{\prime})}\left(  d\right) }
		-\frac{b_{d'(  k,k^{\prime})}\left(  t'\right)  B_{d'(  k,k^{\prime})}^{+}  Q_{Z(  k,k^{\prime})}\left(  d'\right) }{b_{d'(  k,k^{\prime})}\left(  t'\right)  B_{d'(  k,k^{\prime})}^{+}  P_{Z(  k,k^{\prime})}\left(  d'\right) }\right\}.
	\end{align}
\end{lemma}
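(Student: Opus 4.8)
The plan is to obtain \eqref{eq.beta k unordered} as a direct corollary of Theorem \ref{theorem.counterfactuals identified pairwise}, the definition of $\beta_{(k,k')}(d,d',t,t')$, and the exclusion condition in Definition \ref{def.partial validity unordered pairwise}. First I would dispose of the trivial case: if $(z_k,z_{k'})\notin\mathscr{Z}_{\bar M}$ or $\Sigma_{d(k,k')}(t)\neq\Sigma_{d'(k,k')}(t')$, the indicator multiplying the curly bracket on the right-hand side of \eqref{eq.beta k unordered} vanishes, and by definition so does $\beta_{(k,k')}(d,d',t,t')$, so both sides are zero. It therefore suffices to treat the case $(z_k,z_{k'})\in\mathscr{Z}_{\bar M}$ with $\Sigma_{d(k,k')}(t)=\Sigma_{d'(k,k')}(t')$, in which the indicator equals one.

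In that case, as recorded in the paragraph preceding the lemma, the definition of $\beta_{(k,k')}(d,d',t,t')$ together with the exclusion equalities $Y_{dz_k}=Y_d(z_k,z_{k'})$ and $Y_{d'z_{k'}}=Y_{d'}(z_k,z_{k'})$ a.s.\ gives $\beta_{(k,k')}(d,d',t,t')=E[Y_d(z_k,z_{k'})-Y_{d'}(z_k,z_{k'})\mid\mathcal{M}_{(k,k')}S\in\Sigma_{d(k,k')}(t)]$. Using $\Sigma_{d(k,k')}(t)=\Sigma_{d'(k,k')}(t')$ to rewrite the conditioning event in the $d'$-term and invoking linearity of conditional expectation, this equals $E[Y_d(z_k,z_{k'})\mid\mathcal{M}_{(k,k')}S\in\Sigma_{d(k,k')}(t)]-E[Y_{d'}(z_k,z_{k'})\mid\mathcal{M}_{(k,k')}S\in\Sigma_{d'(k,k')}(t')]$. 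I would then apply the second identity in \eqref{eq.counterfactuals pairwise} with $\kappa(y)=y$ to each of these two conditional means, obtaining $b_{d(k,k')}(t)B_{d(k,k')}^{+}Q_{Z(k,k')}(d)\big/\,b_{d(k,k')}(t)B_{d(k,k')}^{+}P_{Z(k,k')}(d)$ for the first term and the analogous ratio with $(d,t)$ replaced by $(d',t')$ for the second; substituting and reinstating the (unit) indicator yields \eqref{eq.beta k unordered}.

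The argument is in essence a substitution, so there is no serious obstacle; the only point requiring care is that the hypothesis $\Sigma_{d(k,k')}(t)=\Sigma_{d'(k,k')}(t')$ is exactly what forces the two conditional expectations to share the same conditioning event, so that their difference is the stated mean treatment effect rather than a comparison across different subpopulations. Degenerate cases in which a response-type probability $b_{d(k,k')}(t)B_{d(k,k')}^{+}P_{Z(k,k')}(d)$ equals zero are absorbed by the convention \eqref{eq.0timesinfinity}, exactly as in Theorem \ref{theorem.counterfactuals identified pairwise}.
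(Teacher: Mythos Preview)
Your proposal is correct and follows essentially the same approach as the paper's proof: a case split on the indicator, with the nontrivial case handled by invoking the exclusion identities $Y_{dz_k}=Y_d(z_k,z_{k'})$, $Y_{d'z_{k'}}=Y_{d'}(z_k,z_{k'})$ a.s.\ and then applying the second identity of \eqref{eq.counterfactuals pairwise} with $\kappa(y)=y$ to each conditional mean. Your write-up simply makes explicit the use of $\Sigma_{d(k,k')}(t)=\Sigma_{d'(k,k')}(t')$ to align the conditioning events, which the paper leaves implicit.
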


We now define 
\begin{align}
	\beta_{(k,k')}(d,d')=(\beta_{(k,k')}(d,d',1,1),\beta_{(k,k')}(d,d',1,2),\beta_{(k,k')}(d,d',2,1),\beta_{(k,k')}(d,d',2,2))
\end{align}
for all $d,d^{\prime}\in\mathcal{D}$ and all $k< k'$. For all $k< k'$, we let
\begin{align*}
\beta_{(k,k')}=(\beta_{(k,k')}(d_1,d_{2}),\ldots,\beta_{(k,k')}(d_1,d_{J}),\ldots,\beta_{(k,k')}(d_J,d_{1}),\dots, \beta_{(k,k')}(d_J,d_{J-1})).
\end{align*}
Finally, we define
\begin{align}\label{eq.beta unordered}
	\beta=(\beta_{(1,2)},\ldots, \beta_{(1,K)},\ldots, \beta_{(K-1,K)})^T.
\end{align}
Note that if $(z_k,z_{k'})\notin \mathscr{Z}_{\bar{M}}$, then $\beta_{(k,k')}=0$. For the sample analogs, we define 
	\begin{align}\label{eq.beta k estimate unordered}
	\widehat{\beta}_{(k,k')}(d,d',t,t')=&\,1\{(z_k,z_{k^{\prime}})\in\widehat{\mathscr{Z}_{0}},  \Sigma_{d(k,k')}(t)=\Sigma_{d'(k,k')}(t')  \}\notag\\
	&\cdot\left\{\frac{b_{d(  k,k^{\prime})}\left(  t\right)  B_{d(  k,k^{\prime})}^{+}  \widehat{Q_{Z(  k,k^{\prime})}\left(  d\right)} }{b_{d(  k,k^{\prime})}\left(  t\right)  B_{d(  k,k^{\prime})}^{+}  \widehat{P_{Z(  k,k^{\prime})}\left(  d\right)} }
	-\frac{b_{d'(  k,k^{\prime})}\left(  t'\right)  B_{d'(  k,k^{\prime})}^{+}  \widehat{Q_{Z(  k,k^{\prime})}\left(  d'\right)} }{b_{d'(  k,k^{\prime})}\left(  t'\right)  B_{d'(  k,k^{\prime})}^{+}  \widehat{P_{Z(  k,k^{\prime})}\left(  d'\right)} }\right\},
\end{align}
where $\widehat{P_{Z(  k,k^{\prime})}(d)}$ and $\widehat{Q_{Z(  k,k^{\prime})}(d)}$ can be obtained by transformations of $\widehat{W}$. We let
\begin{align}
	\widehat{\beta}_{(k,k')}&(d,d')=(\widehat{\beta}_{(k,k')}(d,d',1,1),\widehat{\beta}_{(k,k')}(d,d',1,2),\widehat{\beta}_{(k,k')}(d,d',2,1),\widehat{\beta}_{(k,k')}(d,d',2,2))
\end{align}
for all $d,d^{\prime}\in\mathcal{D}$ and all $k< k'$. For all $k< k'$, we define
\begin{align}
	\widehat{\beta}_{(k,k')}=(\widehat{\beta}_{(k,k')}(d_1,d_{2}),\ldots,\widehat{\beta}_{(k,k')}(d_1,d_{K}),\ldots,\widehat{\beta}_{(k,k')}(d_K,d_{1}),\dots, \widehat{\beta}_{(k,k')}(d_K,d_{K-1})).
\end{align}
Finally, define
\begin{align}\label{eq.beta estimate unordered}
	\widehat{\beta}=(\widehat{\beta}_{(1,2)},\ldots, \widehat{\beta}_{(1,K)},\ldots, \widehat{\beta}_{(K-1,K)})^T.
\end{align}
Asymptotic properties of the VSIV estimator in \eqref{eq.beta estimate unordered} can be obtained by Theorem \ref{theorem.counterfactuals identified pairwise} with $\mathbb{P}(\widehat{\mathscr{Z}_0}=\mathscr{Z}_{\bar{M}})\to 1$.

\subsubsection{Asymptotic Bias Reduction for Mean Treatment Effects}

Here, we extend the results in Section \ref{sec.bias reduction binary D} and show that VSIV estimation always reduces the asymptotic bias for estimating mean treatment effects with unordered treatments.

With $\beta$ and $\widehat{\beta}$ defined in \eqref{eq.beta unordered} and \eqref{eq.beta estimate unordered}, the following theorem shows that VSIV estimation always reduces the asymptotic bias.  
\begin{theorem}\label{thm.bias reduction unordered}
	Suppose that Assumption \ref{ass.iid data unordered} holds, $b_{d(  k,k^{\prime})}\left(  t\right)  B_{d(  k,k^{\prime})}^{+}  P_{Z(  k,k^{\prime})}\left(  d\right)\neq0$ for all $d\in\mathcal{D}$, each $t\in\{1,2\}$, and all $(z_{k},z_{k'})\in\mathscr{Z}_0$, and  $\mathbb{P}(\widehat{\mathscr{Z}_0}=\mathscr{Z}_0)\to 1$ with $\mathscr{Z}_0\supseteq\mathscr{Z}_{\bar{M}}$. For every presumed validity pair set $\mathscr{Z}_P$, the asymptotic bias $\mathrm{plim}_{n\rightarrow\infty}\Vert \widehat{\beta}-\beta\Vert_2$ is always reduced by using $\widehat{\mathscr{Z}'_0}=\widehat{\mathscr{Z}_0}\cap\mathscr{Z}_P$  in the estimation for \eqref{eq.beta unordered} compared to that from using $\mathscr{Z}_P$.
\end{theorem}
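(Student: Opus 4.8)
The plan is to reduce Theorem \ref{thm.bias reduction unordered} to a purely deterministic comparison of limiting values, exactly as in the binary and ordered cases (Theorems \ref{thm.bias reduction binary D} and \ref{thm.bias reduction multi ordered}). First I would invoke Proposition \ref{prop.consistent G hat pairwise Z1 binary D} (and its unordered analogues referenced as Propositions \ref{prop.consistent G hat pairwise Z1} and \ref{prop.consistent G hat pairwise Z2}) together with the hypothesis $\mathbb{P}(\widehat{\mathscr{Z}_0}=\mathscr{Z}_0)\to 1$ to conclude that $\mathbb{P}(\widehat{\mathscr{Z}'_0}=\mathscr{Z}_0\cap\mathscr{Z}_P)\to 1$. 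Write $\mathscr{Z}'_0=\mathscr{Z}_0\cap\mathscr{Z}_P$. Then, by Assumption \ref{ass.iid data unordered} and the continuous mapping theorem applied to the sample analogs in \eqref{eq.beta k estimate unordered} (these are smooth functions of $\widehat{W}$ away from the zero-denominator set, and the first-stage-type nondegeneracy on $\mathscr{Z}_0$ implicit in the construction guarantees the denominators $b_{d(k,k')}(t)B_{d(k,k')}^{+}P_{Z(k,k')}(d)$ are bounded away from zero for $(z_k,z_{k'})\in\mathscr{Z}_0$), the VSIV estimator based on $\widehat{\mathscr{Z}'_0}$ satisfies $\widehat{\beta}\overset{p}\to\beta^{\mathscr{Z}'_0}$, where $\beta^{\mathscr{Z}'_0}$ is the vector that agrees with the population counterfactual contrasts on pairs in $\mathscr{Z}'_0$ and is zero elsewhere. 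Likewise, the standard IV estimator that presumes $\mathscr{Z}_P$ converges in probability to $\beta^{\mathscr{Z}_P}$, the analogous vector built from $\mathscr{Z}_P$.

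Next I would set up the key algebraic observation. For each pair $(z_k,z_{k'})$ and each $(d,d',t,t')$, let $\bar\beta_{(k,k')}(d,d',t,t')$ denote the population probability limit of the unrestricted ratio in braces in \eqref{eq.beta k estimate unordered} (without the indicator $1\{(z_k,z_{k'})\in\widehat{\mathscr{Z}_0},\ \Sigma_{d(k,k')}(t)=\Sigma_{d'(k,k')}(t')\}$). By Theorem \ref{theorem.counterfactuals identified pairwise} and Lemma \ref{lemma.beta unordered}, when $(z_k,z_{k'})\in\mathscr{Z}_{\bar{M}}$ and the matching-support condition $\Sigma_{d(k,k')}(t)=\Sigma_{d'(k,k')}(t')$ holds, $\bar\beta_{(k,k')}(d,d',t,t')$ equals the true mean treatment effect, which is precisely the corresponding entry of the target $\beta$ in \eqref{eq.beta unordered}; and by definition the entry of $\beta$ is zero whenever $(z_k,z_{k'})\notin\mathscr{Z}_{\bar{M}}$. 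Hence for any presumed set $\mathscr{Z}$ containing $\mathscr{Z}_{\bar{M}}$, the squared bias decomposes coordinatewise as
\begin{align*}
\Vert \beta^{\mathscr{Z}}-\beta\Vert_2^2=\sum_{(z_k,z_{k'})\in\mathscr{Z}\setminus\mathscr{Z}_{\bar{M}}}\ \sum_{(d,d',t,t'):\,\Sigma_{d(k,k')}(t)=\Sigma_{d'(k,k')}(t')}\big(\bar\beta_{(k,k')}(d,d',t,t')\big)^2,
\end{align*}
since the coordinates indexed by $\mathscr{Z}_{\bar{M}}$ are matched exactly and contribute zero. Because $\mathscr{Z}_{\bar{M}}\subset\mathscr{Z}'_0=\mathscr{Z}_0\cap\mathscr{Z}_P\subset\mathscr{Z}_P$ (the first inclusion holds since $\mathscr{Z}_{\bar{M}}\subset\mathscr{Z}_0$ by Lemma \ref{lemma.superset of Z pairwise binary D} and its analogues and $\mathscr{Z}_{\bar{M}}\subset\mathscr{Z}_P$ is the natural requirement on a presumed set that is not already refuted), the index set $\mathscr{Z}'_0\setminus\mathscr{Z}_{\bar{M}}$ is a subset of $\mathscr{Z}_P\setminus\mathscr{Z}_{\bar{M}}$, so the sum for $\mathscr{Z}'_0$ is term-by-term dominated by the sum for $\mathscr{Z}_P$. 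This yields $\Vert\beta^{\mathscr{Z}'_0}-\beta\Vert_2\le\Vert\beta^{\mathscr{Z}_P}-\beta\Vert_2$, and the inequality is strict whenever some refuted pair in $\mathscr{Z}_P\setminus\mathscr{Z}'_0$ carries a nonzero population contrast.

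Finally I would assemble the pieces: combining $\widehat{\beta}\overset{p}\to\beta^{\mathscr{Z}'_0}$ with the displayed domination gives $\mathrm{plim}_{n\to\infty}\Vert\widehat\beta-\beta\Vert_2=\Vert\beta^{\mathscr{Z}'_0}-\beta\Vert_2\le\Vert\beta^{\mathscr{Z}_P}-\beta\Vert_2=\mathrm{plim}_{n\to\infty}\Vert\widehat\beta^{\mathscr{Z}_P}-\beta\Vert_2$, which is the claimed bias reduction. The main obstacle I anticipate is not the dominance argument, which is essentially bookkeeping, but verifying that the unrestricted ratios $\bar\beta_{(k,k')}(d,d',t,t')$ are well-defined limits on all of $\mathscr{Z}_0$ (not just $\mathscr{Z}_{\bar{M}}$): one needs the pseudo-inverse-based denominators to stay bounded away from zero on the larger set $\mathscr{Z}_0$, which is the role played by the first-stage-type condition in Assumptions \ref{ass.first stage binary D bias}/\ref{ass.first stage bias} in the scalar and ordered cases and which must be stated or be implicit for the unordered construction; handling the convention $0\cdot\infty=0$ from \eqref{eq.0timesinfinity} for pairs outside $\mathscr{Z}_0$ so that those coordinates are exactly zero on both sides is a secondary technical point that the indicator structure in \eqref{eq.beta k estimate unordered} already takes care of.
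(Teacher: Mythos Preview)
Your approach is essentially the paper's: its proof of Theorem \ref{thm.bias reduction unordered} simply points back to the proof of Theorem \ref{thm.bias reduction multi ordered}, which is exactly the coordinatewise comparison of probability limits that you lay out.

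One point where your argument is narrower than the paper's. You invoke $\mathscr{Z}_{\bar{M}}\subset\mathscr{Z}_P$ as a ``natural requirement,'' but the theorem is stated for \emph{every} presumed $\mathscr{Z}_P$, and the paper's case analysis explicitly covers $(z_k,z_{k'})\in\mathscr{Z}_{\bar{M}}$ with $(z_k,z_{k'})\notin\mathscr{Z}_P$: there $\beta^{\mathscr{Z}_P}$ and $\beta^{\mathscr{Z}'_0}$ are both zero (since $\mathscr{Z}'_0\subset\mathscr{Z}_P$) and hence contribute identical bias. Equivalently, because $\mathscr{Z}_{\bar{M}}\subset\mathscr{Z}_0$ one has $\mathscr{Z}_{\bar{M}}\cap\mathscr{Z}'_0=\mathscr{Z}_{\bar{M}}\cap\mathscr{Z}_P$, so your displayed decomposition is missing the common additive term $\sum_{(z_k,z_{k'})\in\mathscr{Z}_{\bar{M}}\setminus\mathscr{Z}_P}\Vert\beta_{(k,k')}\Vert_2^2$ on both sides; once you include it, the domination argument goes through without the extra assumption on $\mathscr{Z}_P$.
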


As shown in Propositions \ref{prop.consistent G hat pairwise Z2} and \ref{prop.consistent G hat unordered pairwise}, the pseudo-validity pair set $\mathscr{Z}_0$ can always be estimated consistently by $\widehat{\mathscr{Z}_0}$ under mild conditions. Theorem \ref{thm.bias reduction unordered} shows that VSIV estimation based on $\widehat{\mathscr{Z}_0}\cap\mathscr{Z}_P$ reduces the asymptotic bias relative to standard IV methods based on $\mathscr{Z}_P$.

\section{Proofs and Supplementary Results for Appendix  \ref{sec.ordered treatment}}
\label{sec.proofs ordered treatment}

The results in Section \ref{sec.pairwise valid instrument binary D} are for the special case where $D$ is binary and follow from the general results for ordered treatments in Appendix \ref{sec.ordered treatment}. The proofs of these general results are in Appendix \ref{sec.general proofs ordered treatment}.

\subsection{Proofs for Appendix \ref{sec.ordered treatment}} \label{sec.general proofs ordered treatment}

\begin{proof}[Proof of Lemma \ref{lemma.partial beta}]
The proof closely follows the strategy of that of Theorem 1 in \citet{angrist1995two}.
Let $d_{0}<d_1$ and $Y_{d_{0}}(  z_{k_{m}},z_{k_{m}^{\prime}})  =0$ for every $m$.
Let $d_{J+1}$ be some number such that $d_{J+1}>d_{J}$. We can write
\[
Y=\sum_{k=1}^{K}1\left\{  Z=z_{k}\right\}  \cdot\left\{  \sum_{j=1}%
^{J}1\left\{  D=d_{j}\right\}  Y_{d_{j}z_{k}}\right\}  .
\]
Now we have that%
\begin{align*}
	&  E\left[  Y|Z=z_{k_{m}^{\prime}}\right]  -E\left[  Y|Z=z_{k_{m}}\right]  \\
	= &  \,E\left[  \sum_{j=1}^{J}Y_{d_{j}}\left(  z_{k_{m}},z_{k_{m}^{\prime}%
	}\right)  \left(
	\begin{array}
		[c]{c}%
		1\left\{  D_{z_{k'_m}}\geq d_{j}\right\}  -1\left\{  D_{z_{k'_m}}\geq
		d_{j+1}\right\}  \\
		-1\left\{  D_{z_{k_{m}}}\geq d_{j}\right\}  +1\left\{  D_{z_{k_{m}}}\geq
		d_{j+1}\right\}
	\end{array}
	\right)  \right]  \\
	= &  \,\sum_{j=1}^{J}E\left[  \left(  Y_{d_{j}}\left(  z_{k_{m}}%
	,z_{k_{m}^{\prime}}\right)  -Y_{d_{j-1}}\left(  z_{k_{m}},z_{k_{m}^{\prime}%
	}\right)  \right)  \left(  1\left\{  D_{z_{k'_m}}\geq d_{j}\right\}
	-1\left\{  D_{z_{k_{m}}}\geq d_{j}\right\}  \right)  \right]  .
\end{align*}
By Definition \ref{def.partial validity pairwise}, $(1\{D_{z_{k'_m}}\geq
d_{j}\}-1\{  D_{z_{k_{m}}}\geq d_{j}\}  )\in\left\{  0,1\right\}
$. Then we have that
\begin{align*}
	&  \sum_{j=1}^{J}E\left[  \left(  Y_{d_{j}}\left(  z_{k_{m}},z_{k_{m}^{\prime
	}}\right)  -Y_{d_{j-1}}\left(  z_{k_{m}},z_{k_{m}^{\prime}}\right)  \right)
	\left(  1\left\{  D_{z_{k'_m}}\geq d_{j}\right\}  -1\left\{  D_{z_{k_{m}}%
	}\geq d_{j}\right\}  \right)  \right]  \\
	= &  \sum_{j=1}^{J}\bigg\{E\left[  \left(  Y_{d_{j}}\left(  z_{k_{m}}%
	,z_{k_{m}^{\prime}}\right)  -Y_{d_{j-1}}\left(  z_{k_{m}},z_{k_{m}^{\prime}%
	}\right)  \right)  \big|1\left\{  D_{z_{k'_m}}\geq d_{j}\right\}
	-1\left\{  D_{z_{k_{m}}}\geq d_{j}\right\}  =1\right]  \\
	& \qquad \cdot\mathbb{P}\left(  1\left\{  D_{z_{k'_m}}\geq d_{j}\right\}
	-1\left\{  D_{z_{k_{m}}}\geq d_{j}\right\}  =1\right)  \bigg\}\\
	= &  \sum_{j=1}^{J}E\left[  \left(  Y_{d_{j}}\left(  z_{k_{m}},z_{k_{m}%
		^{\prime}}\right)  -Y_{d_{j-1}}\left(  z_{k_{m}},z_{k_{m}^{\prime}}\right)
	\right)  |D_{z_{k'_m}}\geq d_{j}>D_{z_{k_{m}}}\right]   \cdot\mathbb{P}%
	\left(  D_{z_{k'_m}}\geq d_{j}>D_{z_{k_{m}}}\right)  .
\end{align*}
Similarly, we have
\begin{align*}
	&  E\left[  D|Z=z_{k'_m}\right]  -E\left[  D|Z=z_{k_{m}}\right]  \\
	= &  \,E\left[  \sum_{j=1}^{J}d_{j}\left(  1\left\{  D_{z_{k'_m}}\geq
	d_{j}\right\}  -1\left\{  D_{z_{k_{m}}}\geq d_{j}\right\}  \right)  \right]
	\\
	&  -E\left[  \sum_{j=1}^{J}d_{j}\left(  1\left\{  D_{z_{k'_m}}\geq
	d_{j+1}\right\}  -1\left\{  D_{z_{k_{m}}}\geq d_{j+1}\right\}  \right)
	\right]  \\
	= &  \,E\left[  \sum_{j=1}^{J}d_{j}\cdot1\left\{  D_{z_{k'_m}}\geq
	d_{j}>D_{z_{k_{m}}}\right\}  \right]  -E\left[  \sum_{j=1}^{J}d_{j-1}%
	\cdot1\left\{  D_{z_{k'_m}}\geq d_{j}>D_{z_{k_{m}}}\right\}  \right]  \\
	= &  \sum_{j=1}^{J}\left(  d_{j}-d_{j-1}\right)  \mathbb{P}\left(
	D_{z_{k'_m}}\geq d_{j}>D_{z_{k_{m}}}\right)  .
\end{align*}
Thus, finally we have that
\begin{align*}
	\beta_{k'_m,k_{m}}\equiv & \, \sum_{j=1}^{J}\omega_{j}\cdot E\left[  \left(  Y_{d_{j}}\left(  z_{k_{m}%
	},z_{k_{m}^{\prime}}\right)  -Y_{d_{j-1}}\left(  z_{k_{m}},z_{k_{m}^{\prime}%
	}\right)  \right)  |D_{z_{k'_m}}\geq d_{j}>D_{z_{k_{m}}}\right]\\
	= & \,\frac{E\left[  Y|Z=z_{k'_m}\right]
		-E\left[  Y|Z=z_{k_{m}}\right]  }{E\left[  D|Z=z_{k'_m}\right]  -E\left[
		D|Z=z_{k_{m}}\right]  }  ,
\end{align*}
where
\[
\omega_{j}=\frac{\mathbb{P}\left(  D_{z_{k'_m}}\geq d_{j}>D_{z_{k_{m}}%
	}\right)  }{\sum_{l=1}^{J}\left(  d_{l}-d_{l-1}\right)  \mathbb{P}\left(
	D_{z_{k'_m}}\geq d_{l}>D_{z_{k_{m}}}\right)  }.
\]
Note that by definition, $\mathbb{P}(D_{z_{k'_m}}\geq d_{1}>D_{z_{k_{m}}%
})=0$.
\end{proof}

\begin{proof}[Proof of Theorem \ref{thm.IV estimator asymptotics pairwise}]
	For every $\mathcal{Z}_{(  k,k^{\prime})  }\in\mathscr{Z}$, we define
	\[
	W_{i}\left(  \mathcal{Z}_{(  k,k^{\prime})  }\right)  =\left(
	\begin{array}
		[c]{c}%
		g\left(  Z_{i}\right)  Y_{i}1\left\{  Z_{i}\in\mathcal{Z}_{\left(
			k,k^{\prime}\right)  }\right\}  \\
		Y_{i}1\left\{  Z_{i}\in\mathcal{Z}_{\left(
			k,k^{\prime}\right)  }\right\}  \\
		g\left(  Z_{i}\right)  1\left\{  Z_{i}\in\mathcal{Z}_{\left(
			k,k^{\prime}\right)  }\right\}  \\
		g\left(  Z_{i}\right)  D_{i}1\left\{  Z_{i}\in\mathcal{Z}_{\left(
			k,k^{\prime}\right)  }\right\}  \\
		D_{i}1\left\{  Z_{i}\in\mathcal{Z}_{\left(
			k,k^{\prime}\right)  }\right\}  \\
		1\left\{  Z_{i}\in\mathcal{Z}_{\left(
			k,k^{\prime}\right)  }\right\}
	\end{array}
	\right)  ,
	\]%
	\[
	\widehat{W}_{n}\left(  \mathcal{Z}_{(  k,k^{\prime})  }\right)
	=\,\frac{1}{n}\sum_{i=1}^{n}W_{i}\left(  \mathcal{Z}_{(  k,k^{\prime})  }\right)
	, \text{ and } W\left(  \mathcal{Z}_{(  k,k^{\prime})
	}\right)  =E\left[  W_{i}\left(  \mathcal{Z}_{(  k,k^{\prime})
	}\right)  \right]  .
	\]
	Also, we let
	\begin{align*}
		&\widehat{W}_{n}   =\left(  \widehat{W}_{n}\left(  \mathcal{Z}_{\left(
			1,2\right)  }\right)^{T}  ,\ldots,\widehat{W}_{n}\left(  \mathcal{Z}_{\left(
			1,K\right)  }\right)^{T}  ,\ldots,\widehat{W}_{n}\left(  \mathcal{Z}_{\left(
			K,1\right)  }\right)^{T}  ,\ldots,\widehat{W}_{n}\left(  \mathcal{Z}_{\left(
			K,K-1\right)  }\right)^{T}  \right)  ^{T}\\
		&\text{and }W  =\left(  W\left(  \mathcal{Z}_{\left(  1,2\right)
		}\right)^{T}  ,\ldots,W\left(  \mathcal{Z}_{\left(  1,K\right)  }\right)^{T}
		,\ldots,W\left(  \mathcal{Z}_{\left(  K,1\right)  }\right)^{T}  ,\ldots
		,W\left(  \mathcal{Z}_{\left(  K,K-1\right)  }\right)^{T}  \right)  ^{T}.
	\end{align*}
	By the multivariate central limit theorem,
	\begin{align}
		\sqrt{n}\left(  \widehat{W}_{n}-W\right)   &  =\sqrt{n}\left(
		\begin{array}
			[c]{c}%
			\widehat{W}_{n}\left(  \mathcal{Z}_{\left(  1,2\right)  }\right)  -W\left(
			\mathcal{Z}_{\left(  1,2\right)  }\right)  \\
			\vdots\\
			\widehat{W}_{n}\left(  \mathcal{Z}_{\left(  K,K-1\right)  }\right)  -W\left(
			\mathcal{Z}_{\left(  K,K-1\right)  }\right)
		\end{array}
		\right)  \nonumber\label{eq.asymptotic beta1 pairwise}\\
		&  =\sqrt{n}\frac{1}{n}\sum_{i=1}^{n}\left(
		\begin{array}
			[c]{c}%
			W_{i}\left(  \mathcal{Z}_{\left(  1,2\right)  }\right)  -W\left(
			\mathcal{Z}_{\left(  1,2\right)  }\right)  \\
			\vdots\\
			W_{i}\left(  \mathcal{Z}_{\left(  K,K-1\right)  }\right)  -W\left(
			\mathcal{Z}_{\left(  K,K-1\right)  }\right)
		\end{array}
		\right)   \overset{d}\rightarrow N\left(  0,\Sigma_{P}\right)
		,
	\end{align}
	where $\Sigma_{P}=E\left[  V_{P}V_{P}^{T}\right]  $ and
	\[
	V_{P}=\left(
	\begin{array}
		[c]{c}%
		W_{i}\left(  \mathcal{Z}_{\left(  1,2\right)  }\right)  -W\left(
		\mathcal{Z}_{\left(  1,2\right)  }\right)  \\
		\vdots\\
		W_{i}\left(  \mathcal{Z}_{\left(  K,K-1\right)  }\right)  -W\left(
		\mathcal{Z}_{\left(  K,K-1\right)  }\right)
	\end{array}
	\right)  .
	\]
	Define a function $f:\mathbb{R}^{6}\rightarrow\bar{\mathbb{R}}$ by
	\[
	f\left(  x\right)  =\frac{x_{1}/x_{6}-x_{2}x_{3}/x_{6}^{2}}{x_{4}/x_{6}%
		-x_{5}x_{3}/x_{6}^{2}}
	\]
	for every $x\in\mathbb{R}^{6}$ with $x=\left(  x_{1},x_{2},x_{3},x_{4}
	,x_{5},x_6\right)  ^{T}$ such that $f(x)$ is well defined.
	We can obtain the gradient
	of $f$, denoted $f^{\prime}$, by $f^{\prime}\left(  x\right)  =\left(
	f_{1}^{\prime}\left(  x\right)  ,f_{2}^{\prime}\left(  x\right)
	,f_{3}^{\prime}\left(  x\right)  ,f_{4}^{\prime}\left(  x\right)
	,f_{5}^{\prime}\left(  x\right)  ,f_{6}^{\prime}\left(  x\right)  \right)
	^{T}$ with
	\begin{align*}
		f_{1}^{\prime}\left(  x\right)   &  =\frac{x_{6}}{x_{4}x_{6}-x_{5}
			x_{3}},f_{2}^{\prime}\left(  x\right)  =\frac{-x_{3}
		}{x_{4}x_{6}-x_{5}x_{3}},f_{3}^{\prime}\left(  x\right)
		=\frac{-x_{2}x_{4}x_{6}+x_{5}x_{1}x_{6}}{\left(  x_{4}x_{6}-x_{5}x_{3}\right)
			^{2}},\\
		f_{4}^{\prime}\left(  x\right)   &  =-\frac{\left(  x_{1}x_{6}-x_{2}%
			x_{3}\right)  x_{6}}{\left(  x_{4}x_{6}-x_{5}x_{3}\right)  ^{2}},f_{5}%
		^{\prime}\left(  x\right)  =\frac{x_{3}\left(  x_{1}x_{6}-x_{2}x_{3}\right)
		}{\left(  x_{4}x_{6}-x_{5}x_{3}\right)  ^{2}},\text{ and }f_{6}^{\prime
		}\left(  x\right)  =\frac{-x_{1}x_{5}x_{3}+x_{2}x_{3}x_{4}}{\left(  x_{4}%
			x_{6}-x_{5}x_{3}\right)  ^{2}}
	\end{align*}
	for every  $x=\left(  x_{1},x_{2},x_{3},x_{4},x_{5},x_6\right)  ^{T}$ such that all the above derivatives are well defined.

For every $\mathcal{Z}_{(k,k^{\prime})}$, by assumption we have that for every $\rho\geq0$,
\begin{align}\label{eq.set consistency}
\mathbb{P}\left(  n^{\rho}\left\vert 1\left\{  \mathcal{Z}_{(k,k^{\prime}%
)}\in\widehat{\mathscr{Z}_{0}}\right\}  -1\left\{  \mathcal{Z}_{(k,k^{\prime
})}\in\mathscr{Z}_{\bar{M}}\right\}  \right\vert >\varepsilon\right)    \leq\mathbb{P}\left(  \widehat{\mathscr{Z}_{0}}\neq\mathscr{Z}_{\bar{M}%
}\right)  \rightarrow0.
\end{align}
This implies that if $1\{\mathcal{Z}_{(k,k^{\prime})}\in\mathscr{Z}_{\bar{M}%
}\}=0$, then
\begin{align}\label{eq.I consistency}
n^{\rho}1\{  \mathcal{Z}_{(k,k^{\prime})}\in\widehat{\mathscr{Z}_{0}%
}\}  =o_{p}\left(  1\right).
\end{align}

Without loss of generality, we suppose $\mathscr{Z}_{\bar{M}}=\{
\mathcal{Z}_{(1,2)},\mathcal{Z}_{(1,3)},\ldots,\mathcal{Z}_{(K-1,K)}\}
$ and $\mathscr{Z}\setminus\mathscr{Z}_{\bar{M}}=\{
\mathcal{Z}_{(2,1)},\mathcal{Z}_{(3,1)},\ldots,\mathcal{Z}_{(K,K-1)}\}
$ for simplicity. For every 
$\mathcal{Z}_{\left(  k,k'\right)  }\notin\mathscr{Z}_{\bar{M}}$, by
Assumption \ref{ass.first stage}, it is possible that
\begin{align}\label{eq.first stage eq}
E\left[  g\left(  Z_{i}\right)  D_{i}|Z_{i}\in\mathcal{Z}_{\left(
k,k'\right)  }\right]  -E\left[  g\left(  Z_{i}\right)  |Z_{i}\in
\mathcal{Z}_{\left(  k,k'\right)  }\right]  E\left[  D_{i}|Z_{i}%
\in\mathcal{Z}_{\left(  k,k'\right)  }\right]  =0.
\end{align}
For every $w=(  w_{1}^T,\ldots,w_{\left(  K-1\right)K  }^T)^T  $ with
$w_{j}=(  w_{j1},\ldots,w_{j6})^T  $ for every $j $, define
\begin{align*}
\mathcal{F}_{1}\left(  w\right)  =\left(  f\left(  w_{1}\right)
,\ldots,f\left(  w_{(K-1)K/2  }\right)  \right)  ^{T}\text{ and
}
\mathcal{F}_{0}\left(  w\right)  =\left(  f\left(  w_{K(K-1)/2+1}\right)
,\ldots,f\left(  w_{(K-1)K  }\right)  \right)  ^{T}
.
\end{align*}
For every
$\mathscr{Z}_{s}\subseteq\mathscr{Z}$, define
\begin{align*}
\mathcal{I}_{1}\left(  \mathscr{Z}_{s}\right)   =  \left(
\begin{array}
[c]{cccc}%
1\left\{  \mathcal{Z}_{(1,2)}\in\mathscr{Z}_{s}\right\}   &  &  & \\
& 1\left\{  \mathcal{Z}_{(1,3)}\in\mathscr{Z}_{s}\right\}   &  & \\
&  & \ddots & \\
&  &  & 1\left\{  \mathcal{Z}_{(K-1,K)}\in\mathscr{Z}_{s}\right\}
\end{array}
\right) 
\end{align*}
and 
\begin{align*}
\mathcal{I}_{0}\left(  \mathscr{Z}_{s}\right)   =  \left(
\begin{array}
[c]{cccc}%
1\left\{  \mathcal{Z}_{(2,1)}\in\mathscr{Z}_{s}\right\}   &  &  & \\
& 1\left\{  \mathcal{Z}_{(3,1)}\in\mathscr{Z}_{s}\right\}   &  & \\
&  & \ddots & \\
&  &  & 1\left\{  \mathcal{Z}_{(K,K-1)}\in\mathscr{Z}_{s}\right\}
\end{array}
\right)  .
\end{align*}
Then we can write
\[
\sqrt{n}\left(  \widehat{\beta}_{1}-\beta_{1}\right)  =\sqrt{n}\left\{
\left(\begin{array}
[c]{c}%
\mathcal{I}_{1}\left(  \widehat{\mathscr{Z}_{0}}\right)  \mathcal{F}%
_{1}\left(  \widehat{W}_{n}\right)  \\
\mathcal{I}_{0}\left(  \widehat{\mathscr{Z}_{0}}\right)  \mathcal{F}%
_{0}\left(  \widehat{W}_{n}\right)
\end{array}\right)
-%
\left(\begin{array}
[c]{c}%
\mathcal{I}_{1}\left(  \mathscr{Z}_{\bar{M}}\right)  \mathcal{F}_{1}\left(
W\right)  \\
\mathcal{I}_{0}\left(  \mathscr{Z}_{\bar{M}}\right)  \mathcal{F}_{0}\left(
W\right)
\end{array}\right)
\right\}  .
\]

First, we have that
\begin{align*}
  \sqrt{n}\left\{  \mathcal{I}_{1}\left(  \widehat{\mathscr{Z}_{0}}\right)
\mathcal{F}_{1}\left(  \widehat{W}_{n}\right)  -\mathcal{I}_{1}\left(
\mathscr{Z}_{\bar{M}}\right)  \mathcal{F}_{1}\left(  W\right)  \right\}  
  =&\,\sqrt{n}\left\{  \mathcal{I}_{1}\left(  \widehat{\mathscr{Z}_{0}}\right)
\mathcal{F}_{1}\left(  \widehat{W}_{n}\right)  -\mathcal{I}_{1}\left(
\widehat{\mathscr{Z}_{0}}\right)  \mathcal{F}_{1}\left(  W\right)  \right\}
\\
&  +\sqrt{n}\left\{  \mathcal{I}_{1}\left(  \widehat{\mathscr{Z}_{0}}\right)
\mathcal{F}_{1}\left(  W\right)  -\mathcal{I}_{1}\left(  \mathscr{Z}_{\bar{M}%
}\right)  \mathcal{F}_{1}\left(  W\right)  \right\}  .
\end{align*}
The Jacobian matrix $\mathcal{F}_{1}^{\prime}\left(  W\right)  $ of
$\mathcal{F}_{1}$ at $W$ can be obtained with the derivatives of $f$. Then by \eqref{eq.set consistency} and delta method, it is easy to show that
\begin{align*}
  \sqrt{n}\left\{  \mathcal{I}_{1}\left(  \widehat{\mathscr{Z}_{0}}\right)
\mathcal{F}_{1}\left(  \widehat{W}_{n}\right)  -\mathcal{I}_{1}\left(
\mathscr{Z}_{\bar{M}}\right)  \mathcal{F}_{1}\left(  W\right)  \right\}   &=\mathcal{I}_{1}\left(  \widehat{\mathscr{Z}_{0}}\right)  \sqrt{n}\left\{
\mathcal{F}_{1}\left(  \widehat{W}_{n}\right)  -\mathcal{F}_{1}\left(
W\right)  \right\}  +o_{p}\left(  1\right)  \\
&  \overset{d}{\rightarrow}\mathcal{I}_{1}\left(  \mathscr{Z}_{\bar{M}%
}\right)  \mathcal{F}_{1}^{\prime}\left(  W\right)  N\left(  0,\Sigma
_{P}\right)  .
\end{align*}
Second, by assumption and \eqref{eq.0timesinfinity}, 
\begin{align*}
\sqrt{n}\left\{  \mathcal{I}_{0}\left(  \widehat{\mathscr{Z}_{0}}\right)
\mathcal{F}_{0}\left(  \widehat{W}_{n}\right)  -\mathcal{I}_{0}\left(
\mathscr{Z}_{\bar{M}}\right)  \mathcal{F}_{0}\left(  W\right)  \right\}  
=\sqrt{n}\mathcal{I}_{0}\left(  \widehat{\mathscr{Z}_{0}}\right)  \mathcal{F}_{0}\left(\widehat{W}_{n}  \right)  .
\end{align*}
For every $\mathcal{Z}_{(k,k')}\notin\mathscr{Z}_{\bar{M}}$ such that \eqref{eq.first stage eq} holds, 
\begin{align*}
\sqrt{n}1\{\mathcal{Z}_{(k,k')}\in\widehat{\mathscr{Z}_0}\}f(\widehat{W}_n(\mathcal{Z}_{(k,k')}))= {n}1\{\mathcal{Z}_{(k,k')}\in\widehat{\mathscr{Z}_0}\} \frac{A_{n}}{\sqrt{n}B_{n}},
\end{align*}
where%
\begin{align*}
A_{n} =&\,\frac{1}{n}\sum_{i=1}^{n}g\left(  Z_{i}\right)  Y_{i}1\left\{
Z_{i}\in\mathcal{Z}_{\left(  k,k'\right)  }\right\}  \frac{1}{n}\sum
_{i=1}^{n}1\left\{  Z_{i}\in\mathcal{Z}_{\left(  k,k'\right)  }\right\}  \\
&  -\frac{1}{n}\sum_{i=1}^{n}g\left(  Z_{i}\right)  1\left\{  Z_{i}%
\in\mathcal{Z}_{\left(  k,k'\right)  }\right\}  \frac{1}{n}\sum_{i=1}%
^{n}Y_{i}1\left\{  Z_{i}\in\mathcal{Z}_{\left(  k,k'\right)  }\right\} 
\end{align*}
and%
\begin{align*}
B_{n}  =&\,\frac{1}{n}\sum_{i=1}^{n}g\left(  Z_{i}\right)  D_{i}1\left\{
Z_{i}\in\mathcal{Z}_{\left(  k,k'\right)  }\right\}  \frac{1}{n}\sum
_{i=1}^{n}1\left\{  Z_{i}\in\mathcal{Z}_{\left(  k,k'\right)  }\right\}  \\
&  -\frac{1}{n}\sum_{i=1}^{n}g\left(  Z_{i}\right)  1\left\{  Z_{i}%
\in\mathcal{Z}_{\left(  k,k'\right)  }\right\}  \frac{1}{n}\sum_{i=1}%
^{n}D_{i}1\left\{  Z_{i}\in\mathcal{Z}_{\left(  k,k'\right)  }\right\}  .
\end{align*}
Define a map $h$ such that for every $x\in\mathbb{R}^6$ with $x=\left(  x_{1},\ldots,x_{6}\right)^T  $,
\[
h\left(  x\right)  =x_{4}x_{6}-x_{3}x_{5}.
\]
Let $h^{\prime}(  W(  \mathcal{Z}_{(  k,k')  }))  $ be the Jacobian matrix of $h$ at $W(  \mathcal{Z}_{\left(k,k'\right)  })  $. Then by delta method,
\begin{align*}
\sqrt{n}B_{n} &  =\sqrt{n}\left(  h\left(  \widehat{W}_{n}\left(
\mathcal{Z}_{\left(  k,k'\right)  }\right)  \right)  -h\left(  W\left(
\mathcal{Z}_{\left(  k,k'\right)  }\right)  \right)  \right) \overset{d}{\rightarrow}h^{\prime}\left(  W\left(  \mathcal{Z}_{\left(k,k'\right)  }\right)  \right)  N\left(  0,\Sigma_{\left(  k,k'\right)
}\right)  ,
\end{align*}
where $$\Sigma_{\left(  k,k'\right)  }=E\left[\left\{  W_{i}\left(  \mathcal{Z}%
_{\left(  k,k'\right)  }\right)  -W\left(  \mathcal{Z}_{\left(  k,k'\right)
}\right)  \right\}  \left\{  W_{i}\left(  \mathcal{Z}_{\left(  k,k'\right)
}\right)  -W\left(  \mathcal{Z}_{\left(  k,k'\right)  }\right)  \right\}
^{T}\right].$$ Also, it is easy to show that
\begin{align*}
A_{n}\overset{p}{\rightarrow}&\,E\left[  g\left(  Z_{i}\right)  Y_{i}1\left\{
Z_{i}\in\mathcal{Z}_{\left(  k,k'\right)  }\right\}  \right]  E\left[
1\left\{  Z_{i}\in\mathcal{Z}_{\left(  k,k'\right)  }\right\}  \right]  \\
&  -E\left[  g\left(  Z_{i}\right)  1\left\{  Z_{i}\in\mathcal{Z}_{\left(
k,k'\right)  }\right\}  \right]  E\left[  Y_{i}1\left\{  Z_{i}\in
\mathcal{Z}_{\left(  k,k'\right)  }\right\}  \right]  .
\end{align*}
Note that by \eqref{eq.I consistency}, $n\mathcal{I}_{0}(\widehat{\mathscr{Z}_{0}})=o_{p}\left(
1\right)  $. Thus, $\sqrt{n}1\{\mathcal{Z}_{(k,k')}\in\widehat{\mathscr{Z}_0}\}f(\widehat{W}_n(\mathcal{Z}_{(k,k')}))\overset{p}{\rightarrow} 0$.
Similarly, for every $\mathcal{Z}_{(k,k')}\notin\mathscr{Z}_{\bar{M}}$ such that \eqref{eq.first stage eq} does not hold, it is easy to show that
\begin{align*}
\sqrt{n}1\{\mathcal{Z}_{(k,k')}\in\widehat{\mathscr{Z}_0}\}f(\widehat{W}_n(\mathcal{Z}_{(k,k')}))= \sqrt{n}1\{\mathcal{Z}_{(k,k')}\in\widehat{\mathscr{Z}_0}\} \frac{A_{n}}{B_{n}}\overset{p}{\rightarrow} 0.
\end{align*}
This implies that
\[
\sqrt{n}\left\{  \mathcal{I}_{0}\left(  \widehat{\mathscr{Z}_{0}}\right)
\mathcal{F}_{0}\left(  \widehat{W}_{n}\right)  -\mathcal{I}_{0}\left(
\mathscr{Z}_{\bar{M}}\right)  \mathcal{F}_{0}\left(  W\right)  \right\}
\overset{p}{\rightarrow}0.
\]
By Lemma 1.10.2(iii) and Example 1.4.7 (Slutsky's lemma) of \citet{van1996weak},
\begin{align}\label{eq.weak convergence pairwise2}
\sqrt{n}\left(  \widehat{\beta}_{1}-\beta_{1}\right) & =\sqrt{n}\left\{
\left(\begin{array}
[c]{c}%
\mathcal{I}_{1}\left(  \widehat{\mathscr{Z}_{0}}\right)  \mathcal{F}%
_{1}\left(  \widehat{W}_{n}\right)  \\
\mathcal{I}_{0}\left(  \widehat{\mathscr{Z}_{0}}\right)  \mathcal{F}%
_{0}\left(  \widehat{W}_{n}\right)
\end{array}\right)
-%
\left(\begin{array}
[c]{c}%
\mathcal{I}_{1}\left(  \mathscr{Z}_{\bar{M}}\right)  \mathcal{F}_{1}\left(
W\right)  \\
\mathcal{I}_{0}\left(  \mathscr{Z}_{\bar{M}}\right)  \mathcal{F}_{0}\left(
W\right)
\end{array}\right)
\right\}  \notag\\
&  \overset{d}{\rightarrow}
\left(\begin{array}
[c]{c}%
\mathcal{I}_{1}\left(  \mathscr{Z}_{\bar{M}}\right)  \mathcal{F}_{1}^{\prime
}\left(  W\right)  N\left(  0,\Sigma_{P}\right)  \\
0
\end{array}\right).
\end{align}

Next, we show that $\beta_{(  k,k^{\prime})  }^{1}=\beta_{k^{\prime},k}$ for every $(z_k,z_{k^{\prime}})\in\mathscr{Z}_{\bar{M}}$.
We have that for every $\mathcal{Z}_{(k,k')}\in\mathscr{Z}_{\bar{M}}$,
\begin{align*}
	&  \frac{E\left[  g\left(  Z_{i}\right)  Y_{i}1\left\{  Z_{i}\in
		\mathcal{Z}_{(k,k^{\prime})}\right\}  \right]  }{\mathbb{P}\left(  Z_{i}%
		\in\mathcal{Z}_{(k,k^{\prime})}\right)  }-\frac{E\left[  Y_{i}1\left\{
		Z_{i}\in\mathcal{Z}_{(k,k^{\prime})}\right\}  \right]  }{\mathbb{P}\left(
		Z_{i}\in\mathcal{Z}_{(k,k^{\prime})}\right)  }\frac{E\left[  g\left(
		Z_{i}\right)  1\left\{  Z_{i}\in\mathcal{Z}_{(k,k^{\prime})}\right\}  \right]
	}{\mathbb{P}\left(  Z_{i}\in\mathcal{Z}_{(k,k^{\prime})}\right)  }\\
	= &  \sum_{l=1}^{K}\left\{
	\begin{array}
		[c]{c}%
		\frac{\mathbb{P}\left(  Z_{i}=z_{l}\right)  }{\mathbb{P}\left(  Z_{i}%
			\in\mathcal{Z}_{(k,k^{\prime})}\right)  }E\left[  Y_{i}1\left\{  Z_{i}%
		\in\mathcal{Z}_{(k,k^{\prime})}\right\}  |Z_{i}=z_{l}\right]  \\
		\cdot\left\{  g\left(  z_{l}\right)  1\left\{  z_{l}\in\mathcal{Z}%
		_{(k,k^{\prime})}\right\}  -\frac{E\left[  g\left(  Z_{i}\right)  1\left\{
			Z_{i}\in\mathcal{Z}_{(k,k^{\prime})}\right\}  \right]  }{\mathbb{P}\left(
			Z_{i}\in\mathcal{Z}_{(k,k^{\prime})}\right)  }\right\}
	\end{array}
	\right\}  \\
	= &  \mathbb{P}\left(  Z_{i}=z_{k}|Z_{i}\in\mathcal{Z}_{(k,k^{\prime}%
		)}\right)  E\left[  Y_{i}|Z_{i}=z_{k}\right]  \left\{  g\left(  z_{k}\right)
	-E\left[  g\left(  Z_{i}\right)  |Z_{i}\in\mathcal{Z}_{(k,k^{\prime})}\right]
	\right\}  \\
	&  +\mathbb{P}\left(  Z_{i}=z_{k^{\prime}}|Z_{i}\in\mathcal{Z}_{(k,k^{\prime
		})}\right)  E\left[  Y_{i}|Z_{i}=z_{k^{\prime}}\right]  \left\{  g\left(
	z_{k^{\prime}}\right)  -E\left[  g\left(  Z_{i}\right)  |Z_{i}\in
	\mathcal{Z}_{(k,k^{\prime})}\right]  \right\}  .
\end{align*}
By \eqref{eq.beta}, we have
\[
E\left[  Y_{i}|Z_{i}=z_{k^{\prime}}\right]  =\beta_{k^{\prime},k}\left(
E\left[  D_{i}|Z_{i}=z_{k^{\prime}}\right]  -E\left[  D_{i}|Z_{i}%
=z_{k}\right]  \right)  +E\left[  Y_{i}|Z_{i}=z_{k}\right]  ,
\]
and thus it follows that
\begin{align*}
	&  \mathbb{P}\left(  Z_{i}=z_{k}|Z_{i}\in\mathcal{Z}_{(k,k^{\prime})}\right)
	E\left[  Y_{i}|Z_{i}=z_{k}\right]  \left\{  g\left(  z_{k}\right)  -E\left[
	g\left(  Z_{i}\right)  |Z_{i}\in\mathcal{Z}_{(k,k^{\prime})}\right]  \right\}
	\\
	&  +\mathbb{P}\left(  Z_{i}=z_{k^{\prime}}|Z_{i}\in\mathcal{Z}_{(k,k^{\prime
		})}\right)  E\left[  Y_{i}|Z_{i}=z_{k^{\prime}}\right]  \left\{  g\left(
	z_{k^{\prime}}\right)  -E\left[  g\left(  Z_{i}\right)  |Z_{i}\in
	\mathcal{Z}_{(k,k^{\prime})}\right]  \right\}  \\
	 =&\,\mathbb{P}\left(  Z_{i}=z_{k^{\prime}}|Z_{i}\in\mathcal{Z}_{(k,k^{\prime
		})}\right)  \beta_{k^{\prime},k}\left(  E\left[  D_{i}|Z_{i}=z_{k^{\prime}%
	}\right]  -E\left[  D_{i}|Z_{i}=z_{k}\right]  \right)  \\
	&  \cdot\left\{  g\left(  z_{k^{\prime}}\right)  -E\left[  g\left(
	Z_{i}\right)  |Z_{i}\in\mathcal{Z}_{(k,k^{\prime})}\right]  \right\}  ,
\end{align*}
where we use the equality that
\begin{align}\label{eq.useful equality}
	& \mathbb{P}\left(  Z_{i}=z_{k}|Z_{i}\in\mathcal{Z}_{(k,k^{\prime})}\right)
	\left\{  g\left(  z_{k}\right)  -E\left[  g\left(  Z_{i}\right)  |Z_{i}%
	\in\mathcal{Z}_{(k,k^{\prime})}\right]  \right\}  \notag\\
	&+\mathbb{P}\left(  Z_{i}=z_{k^{\prime}}|Z_{i}\in\mathcal{Z}_{(k,k^{\prime}%
		)}\right)  \left\{  g\left(  z_{k^{\prime}}\right)  -E\left[  g\left(
	Z_{i}\right)  |Z_{i}\in\mathcal{Z}_{(k,k^{\prime})}\right]  \right\}     =0.
\end{align}
Similarly, we have
\begin{align*}
	&  \frac{E\left[  g\left(  Z_{i}\right)  D_{i}1\left\{  Z_{i}\in
		\mathcal{Z}_{(k,k^{\prime})}\right\}  \right]  }{\mathbb{P}\left(  Z_{i}%
		\in\mathcal{Z}_{(k,k^{\prime})}\right)  }-\frac{E\left[  D_{i}1\left\{
		Z_{i}\in\mathcal{Z}_{(k,k^{\prime})}\right\}  \right]  }{\mathbb{P}\left(
		Z_{i}\in\mathcal{Z}_{(k,k^{\prime})}\right)  }\frac{E\left[  g\left(
		Z_{i}\right)  1\left\{  Z_{i}\in\mathcal{Z}_{(k,k^{\prime})}\right\}  \right]
	}{\mathbb{P}\left(  Z_{i}\in\mathcal{Z}_{(k,k^{\prime})}\right)  }\\
	=&\,\mathbb{P}\left(  Z_{i}=z_{k^{\prime}}|Z_{i}\in\mathcal{Z}_{(k,k^{\prime
		})}\right)  \left\{  p(z_{k^{\prime}})-p(z_{k})\right\}  \left\{  g\left(
	z_{k^{\prime}}\right)  -E\left[  g\left(  Z_{i}\right)  |Z_{i}\in
	\mathcal{Z}_{(k,k^{\prime})}\right]  \right\}  ,
\end{align*}
where $p(z)=E\left[  D_{i}|Z_{i}=z\right]  $ for all $z$ and we use the
equality in \eqref{eq.useful equality} again. The result then follows from taking the ratio of the above expressions. 
\end{proof}

\begin{proof}[Proof of Theorem \ref{thm.bias reduction multi ordered}]
Recall that	for every random variable $\xi_{i}$ and every $\mathcal{A}\in\mathscr{Z}$,
	\[
	\mathcal{E}_{n}\left(  \xi_{i},\mathcal{A}\right)  =\frac{\frac{1}%
		{n}\sum_{i=1}^{n}\xi_{i}1\left\{  Z_{i}\in\mathcal{A}\right\}  }{\frac{1}{n}\sum_{i=1}^{n}1\left\{  Z_{i}%
		\in\mathcal{A}\right\}  }\text{ and }%
	\mathcal{E}\left(  \xi_{i},\mathcal{A}\right)  =\frac{E\left[
		\xi_{i}1\left\{  Z_{i}\in\mathcal{A}\right\}
		\right]  }{E\left[  1\left\{  Z_{i}\in\mathcal{A}\right\}  \right]  }.
	\]
	Then we obtain the VSIV estimator using $\mathscr{Z}_{P}$ for each ACR as
	\[
	\widehat{\beta}_{(k,k^{\prime})}^{\prime}=1\{\mathcal{Z}_{(k,k^{\prime})}\in\mathscr{Z}_{P}\}\cdot\frac{\mathcal{E}_{n}\left(  g\left(  Z_{i}\right)  Y_{i},\mathcal{Z}_{(k,k^{\prime})}\right)
		-\mathcal{E}_{n}\left(  g\left(  Z_{i}\right)  ,\mathcal{Z}%
		_{(k,k^{\prime})}\right)  \mathcal{E}_{n}\left(  Y_{i},\mathcal{Z}_{(k,k^{\prime})}\right)  }{\mathcal{E}_{n}\left(  g\left(
		Z_{i}\right)  D_{i},\mathcal{Z}_{(k,k^{\prime})}\right)  -\mathcal{E}_{n}\left(  g\left(  Z_{i}\right)  ,\mathcal{Z}_{(k,k^{\prime}%
			)}\right)  \mathcal{E}_{n}\left(  D_{i},\mathcal{Z}_{(k,k^{\prime}%
			)}\right)  },
	\]
	which converges in probability to%
	\[
	\beta_{(k,k^{\prime})}^{\prime}=1\{\mathcal{Z}_{(k,k^{\prime})}\in\mathscr{Z}_{P}\}\cdot\frac{\mathcal{E}\left(  g\left(
		Z_{i}\right)  Y_{i},\mathcal{Z}_{(k,k^{\prime})}\right)  -\mathcal{E}\left(  g\left(  Z_{i}\right)  ,\mathcal{Z}_{(k,k^{\prime})}\right)
		\mathcal{E}\left(  Y_{i},\mathcal{Z}_{(k,k^{\prime})}\right)
	}{\mathcal{E}\left(  g\left(  Z_{i}\right)  D_{i},\mathcal{Z}%
		_{(k,k^{\prime})}\right)  -\mathcal{E}\left(  g\left(  Z_{i}\right)
		,\mathcal{Z}_{(k,k^{\prime})}\right)  \mathcal{E}\left(
		D_{i},\mathcal{Z}_{(k,k^{\prime})}\right)  }.
	\]
	We obtain the VSIV estimator using $\widehat{\mathscr{Z}_{0}^{\prime}%
	}$ for each ACR as
	\[
	\widehat{\beta}_{(k,k^{\prime})}^{\prime\prime}=1\{\mathcal{Z}_{(k,k^{\prime})}\in\widehat{\mathscr{Z}_{0}
			^{\prime}}\}\cdot\frac{\mathcal{E}_{n}\left(  g\left(  Z_{i}\right)  Y_{i},\mathcal{Z}_{(k,k^{\prime
			})}\right)  -\mathcal{E}_{n}\left(  g\left(  Z_{i}\right)
		,\mathcal{Z}_{(k,k^{\prime})}\right)  \mathcal{E}_{n}\left(
		Y_{i},\mathcal{Z}_{(k,k^{\prime})}\right)  }{\mathcal{E}_{n}\left(  g\left(  Z_{i}\right)  D_{i},\mathcal{Z}_{(k,k^{\prime})}\right)
		-\mathcal{E}_{n}\left(  g\left(  Z_{i}\right)  ,\mathcal{Z}%
		_{(k,k^{\prime})}\right)  \mathcal{E}_{n}\left(  D_{i}%
		,\mathcal{Z}_{(k,k^{\prime})}\right)  },
	\]
	which converges in probability to%
	\[
	\beta_{(k,k^{\prime})}^{\prime\prime}=1\{\mathcal{Z}_{(k,k^{\prime})}\in {\mathscr{Z}_{0}
			^{\prime}}\}\cdot\frac{\mathcal{E}\left(
		g\left(  Z_{i}\right)  Y_{i},\mathcal{Z}_{(k,k^{\prime})}\right)
		-\mathcal{E}\left(  g\left(  Z_{i}\right)  ,\mathcal{Z}
		_{(k,k^{\prime})}\right)  \mathcal{E}\left(  Y_{i},\mathcal{Z}_{(k,k^{\prime})}\right)  }{\mathcal{E}\left(	g\left(  Z_{i}\right)  D_{i},\mathcal{Z}_{(k,k^{\prime})}\right)
		-\mathcal{E}\left(  g\left(  Z_{i}\right)  ,\mathcal{Z}%
		_{(k,k^{\prime})}\right)  \mathcal{E}\left(  D_{i},\mathcal{Z}_{(k,k^{\prime})}\right)  },
	\]
	where $\mathscr{Z}'_0=\mathscr{Z}_0\cap\mathscr{Z}_P$.
	
If $\mathcal{Z}_{(k,k^{\prime})}\notin\mathscr{Z}_{\bar{M}}$ and
$\mathcal{Z}_{(k,k^{\prime})}\in\mathscr{Z}_{P}$, then $\beta_{(k,k^{\prime}%
	)}^{1}=0$. In this case, it is possible that $\mathcal{Z}_{(k,k^{\prime}%
	)}\notin$ $\mathscr{Z}_{0}^{\prime}$ and $\beta_{(k,k^{\prime})}^{\prime
	\prime}=0$, because by definition $\mathscr{Z}_{0}^{\prime}\subseteq
\mathscr{Z}_{P}$. Note that if $\mathcal{Z}_{(k,k^{\prime})}\in$
$\mathscr{Z}_{0}^{\prime}$, then $\beta_{(k,k^{\prime})}^{\prime\prime}%
=\beta_{(k,k^{\prime})}^{\prime}$ by definition.

If $\mathcal{Z}_{(k,k^{\prime})}\notin\mathscr{Z}_{\bar{M}}$ and
$\mathcal{Z}_{(k,k^{\prime})}\notin\mathscr{Z}_{P}$, then $\beta
_{(k,k^{\prime})}^{1}=\beta_{(k,k^{\prime})}^{\prime}=0$. Similarly, in this
case, $\beta_{(k,k^{\prime})}^{\prime\prime}=\beta_{(k,k^{\prime})}^{1}=0$,
because $\mathscr{Z}_{0}^{\prime}\subseteq\mathscr{Z}_{P}$.

If $\mathcal{Z}_{(k,k^{\prime})}\in\mathscr{Z}_{\bar{M}}$ and $\mathcal{Z}%
_{(k,k^{\prime})}\in\mathscr{Z}_{P}$, then $\beta_{(k,k^{\prime})}^{1}%
=\beta_{(k,k^{\prime})}^{\prime}=\beta_{(k,k^{\prime})}^{\prime\prime}$,
because $\mathscr{Z}_{0}\supseteq\mathscr{Z}_{\bar{M}}$.

If $\mathcal{Z}_{(k,k^{\prime})}\in\mathscr{Z}_{\bar{M}}$ and $\mathcal{Z}%
_{(k,k^{\prime})}\notin\mathscr{Z}_{P}$, then $\beta_{(k,k^{\prime})}^{\prime
}=\beta_{(k,k^{\prime})}^{\prime\prime}=0$ because $\mathscr{Z}_{0}^{\prime
}\subseteq\mathscr{Z}_{P}$.
\end{proof}

\subsection{Selectively Pairwise Valid Multiple Instruments}\label{sec.selectively valid instruments}

Here we introduce a weaker notion of pairwise validity that is available when $Z$ contains multiple instruments. Specifically, suppose the instrument $Z$ is a vector with $Z=(Z_1,\ldots,Z_L)^T$, where $Z_l$ is a scalar instrument for every $l\in\{1,\ldots,L\}$. There are $C_L=2^{L}$ combinations of scalar instruments $\{Z_1,\ldots,Z_{L}\}$. We refer to each combination as a \textit{subinstrument} of $Z$, denoted by $V_c$ for every $c\in\{1,\ldots,C_L\}$ with $V_c\in\{v_{c1},\ldots,v_{c K_c}\}$ for some $K_c>1$. Every $V_c$ can be a scalar or vector instrument, and we define the set of all pairs of values of $V_c$ by $$\mathscr{Z}_c=\{(v_{c1},v_{c2}),\ldots,(v_{c1},v_{cK_c}),\ldots,(v_{cK_c},v_{c1}),\ldots,(v_{cK_c},v_{cK_c-1})\}.$$

The following definition weakens Definition \ref{def.partial validity pairwise binary D}.

\begin{definition}
	\label{def.partial validity pairwise binary D multiple selection} The instrument $Z$ is \textbf{selectively pairwise valid} for
	the treatment $D\in\{0,1\}$ if there is a subinstrument $V_c$ that is pairwise valid according to Definition \ref{def.partial validity pairwise binary D}.
\end{definition}

To illustrate that Definition \ref{def.partial validity pairwise binary D multiple selection} is weaker than Definition \ref{def.partial validity pairwise binary D}, consider the following example.
\begin{example} \looseness=-1  Suppose that $Z=(Z_1,Z_2,Z_3)^T$, where $Z_1$ is correlated with all potential variables and $(Z_2,Z_3)^T$ satisfies the conditions in Assumption \ref{ass.IV validity binary D}. Then $Z$ is not pairwise valid by Definition \ref{def.partial validity pairwise binary D}, but it is selectively pairwise valid.
\end{example}

For every subinstrument $V_c$, we can define the largest validity pair set $\mathscr{Z}_{c\bar{M}_c}\subseteq\mathscr{Z}_{c}$. Then the identification and estimation of $\mathscr{Z}_{c\bar{M}_c}$ and the VSIV estimation of LATEs can proceed as described in Section \ref{sec.VSIV under consistency}. Asymptotic normality and bias reduction can be established accordingly. The notion of selectively pairwise valid instruments can be straightforwardly generalized to multivalued ordered or unordered treatments.

\subsection{Testable Implications of \citet{kedagni2020generalized}}\label{sec.testable implication kedagni}
We consider the case where $D\in \mathcal{D}=\{d_1,\ldots,d_J\}$. 
Suppose $Y\in\mathbb{R}$ is continuous. Results for discrete $Y$ can be obtained similarly.
The testable implications in
\citet{kedagni2020generalized} are for exclusion ($Y_{dz_{k_{m}}
}=Y_{dz_{k_{m}^{\prime}}}$ for all $d\in\mathcal{D}$) and statistical independence ($Z$ is independent of $(Y_{d_1z_{k_{m}}},Y_{d_1z_{k_{m}^{\prime}}
},\ldots,Y_{d_Jz_{k_{m}}},Y_{d_Jz_{k_{m}^{\prime}}})$) for every $m\in
\{1,\ldots,\bar{M}\}$ with the largest validity pair set $\mathscr{Z}_{\bar
	{M}}=\{(z_{k_{1}},z_{k_{1}^{\prime}}),\ldots,(z_{k_{\bar{M}}},z_{k_{\bar{M}%
	}^{\prime}})\}$. In the following, we show that these testable implications are also for Conditions (i) and (ii) in Definition \ref{def.partial validity pairwise}.
	Under Conditions (i) and (ii) in Definition \ref{def.partial validity pairwise}, we can define $Y_{d}(z,z^{\prime})$ by $Y_{d}(z,z^{\prime})=Y_{dz}=Y_{dz'}$ a.s. for every $d\in\mathcal{D}$ and every $(z,z')\in\mathscr{Z}_{\bar{M}}$.
Define
\[
f_{Y,D}\left(  y,d|z\right)  =f_{Y|D,Z}\left(  y|d,z\right)  \mathbb{P}\left(
D=d|Z=z\right)
\]
for every $y\in\mathbb{R}$, every $d\in\mathcal{D}$, and every $z\in
\mathcal{Z}$, where $f_{Y|D,Z}\left(  y|d,z\right)  $ is the conditional
density function of $Y$ given $D=d$ and $Z=z$.
For every $\mathcal{Z}_{(k,k^{\prime})}=(z_{k},z_{k^{\prime}})\in
\mathscr{Z}_{\bar{M}}$, every $A\in\mathcal{B}_{\mathbb{R}}$, every
$d\in\mathcal{D}$, and each $z\in\mathcal{Z}_{(k,k^{\prime})}$,
\[
\mathbb{P}\left(  Y\in A,D=d|Z=z\right)  \leq\mathbb{P}\left(  Y_{dz}\in
A|Z=z\right)  =\mathbb{P}\left(  Y_{d}(z_{k},z_{k^{\prime}})\in A\right)  ,
\]
and
\begin{align*}
	\mathbb{P}\left(  Y\in A,D=d|Z=z\right)   &  =\frac{\mathbb{P}\left(  Y\in
		A,D=d,Z=z\right)  }{\mathbb{P}\left(  Z=z\right)  }\\
	&  =\mathbb{P}\left(  Y\in A|D=d,Z=z\right)  \mathbb{P}\left(  D=d|Z=z\right)
	.
\end{align*}
Then, by the discussion in Section 4.1 of \citet{kedagni2020generalized}, for (almost) all $y$,
\[
f_{Y,D}\left(  y,d|z\right)  =f_{Y|D,Z}\left(  y|d,z\right)  \mathbb{P}\left(
D=d|Z=z\right)  \leq f_{Y_{d}(z_{k},z_{k^{\prime}})}\left(  y\right)  ,
\]
where $f_{Y_{d}(z_{k},z_{k^{\prime}})}  $ is the density
function of the potential outcome $Y_{d}(z_{k},z_{k^{\prime}})$. Thus, for
every $d\in\mathcal{D}$,
\begin{align}\label{eq.density inequality}
	\max_{z\in\mathcal{Z}_{(k,k^{\prime})}}f_{Y,D}\left(  y,d|z\right)  \leq
	f_{Y_{d}(z_{k},z_{k^{\prime}})}\left(  y\right)  ,
\end{align}
and we obtain the first inequality \citep[p.~666]{kedagni2020generalized}:
\begin{align}\label{eq.first inequality KM}
\max_{d\in\mathcal{D}}\int_{\mathbb{R}}\max_{z\in\mathcal{Z}_{(k,k^{\prime})}%
}f_{Y,D}\left(  y,d|z\right)  \mathrm{d}y\leq1.
\end{align}
Also, for all $A_{1},\ldots,A_{J}\in\mathcal{B}_{\mathbb{R}}$,
\begin{align*}
	&  \mathbb{P}\left(  Y_{d_{1}}(z_{k},z_{k^{\prime}})\in A_{1},\ldots,Y_{d_{J}%
	}(z_{k},z_{k^{\prime}})\in A_{J}\right)  \\
	=&\,\min_{z\in\mathcal{Z}_{(k,k^{\prime})}}\mathbb{P}\left(  Y_{d_{1}}%
	(z_{k},z_{k^{\prime}})\in A_{1},\ldots,Y_{d_{J}}(z_{k},z_{k^{\prime}})\in
	A_{J}|Z=z\right)  \\
	=& \,  \min_{z\in\mathcal{Z}_{(k,k^{\prime})}}\sum_{j=1}^{J}\mathbb{P}\left(
	Y_{d_{1}}(z_{k},z_{k^{\prime}})\in A_{1},\ldots,Y_{d_{J}}(z_{k},z_{k^{\prime}%
	})\in A_{J},D=d_{j}|Z=z\right)  \\
	\leq&\,\min_{z\in\mathcal{Z}_{(k,k^{\prime})}}\sum_{j=1}^{J}\mathbb{P}\left(
	Y\in A_{j},D=d_{j}|Z=z\right)  .
\end{align*}
Let $P_{\mathbb{R}}^{j}$ be an arbitrary partition of $\mathbb{R}$ for
$j\in\left\{  1,\ldots,J\right\}  $, that is, $P_{\mathbb{R}}^{j}=\{C_{1}%
^{j},\ldots,C_{N_{j}}^{j}\}$ with $\cup_{l=1}^{N_{j}}C_{l}^{j}=\mathbb{R}$ and
$C_{l'}^{j}\cap C_{l}^{j}=\varnothing$ for all $l'\neq l$. Then
\begin{align*}
	1 &  =\sum_{A_{1}\in P_{\mathbb{R}}^{1}}\cdots\sum_{A_{J}\in P_{\mathbb{R}%
		}^{J}}\mathbb{P}\left(  Y_{d_{1}}(z_{k},z_{k^{\prime}})\in A_{1}%
	,\ldots,Y_{d_{J}}(z_{k},z_{k^{\prime}})\in A_{J}\right)  \\
	&  \leq\sum_{A_{1}\in P_{\mathbb{R}}^{1}}\cdots\sum_{A_{J}\in P_{\mathbb{R}%
		}^{J}}\min_{z\in\mathcal{Z}_{(k,k^{\prime})}}\sum_{j=1}^{J}\mathbb{P}\left(
	Y\in A_{j},D=d_{j}|Z=z\right)  .
\end{align*}
Then we obtain the second inequality  \citep[p.~666]{kedagni2020generalized}:
\begin{align}\label{eq.second inequality KM}
\inf_{\left\{  P_{\mathbb{R}}^{1},\ldots,P_{\mathbb{R}}^{J}\right\}  }%
\sum_{A_{1}\in P_{\mathbb{R}}^{1}}\cdots\sum_{A_{J}\in P_{\mathbb{R}}^{J}}%
\min_{z\in\mathcal{Z}_{(k,k^{\prime})}}\sum_{j=1}^{J}\mathbb{P}\left(  Y\in
A_{j},D=d_{j}|Z=z\right)  \geq1,
\end{align}
where the infimum is taken over all partitions $\{P_{\mathbb{R}}^{1},\ldots,P_{\mathbb{R}}^{J}\}  $. Next, for all $A_{1},\ldots
,A_{J}\in\mathcal{B}_{\mathbb{R}}$,
\begin{align*}
	&  \mathbb{P}\left(  Y_{d_j}(z_{k},z_{k^{\prime}})\in A_{j}\right)  \\
	=&\,\sum_{A_{1}\in P_{\mathbb{R}}^{1}}\cdots\sum_{A_{j-1}\in P_{\mathbb{R}%
		}^{j-1}}\sum_{A_{j+1}\in P_{\mathbb{R}}^{j+1}}\ldots\sum_{A_{J}\in
		P_{\mathbb{R}}^{J}}\mathbb{P}\left(  Y_{d_1}(z_{k},z_{k^{\prime}})\in
	A_{1},\ldots,Y_{d_J}(z_{k},z_{k^{\prime}})\in A_{J}\right)  \\
	\leq&\,\sum_{A_{1}\in P_{\mathbb{R}}^{1}}\cdots\sum_{A_{j-1}\in P_{\mathbb{R}%
		}^{j-1}}\sum_{A_{j+1}\in P_{\mathbb{R}}^{j+1}}\ldots\sum_{A_{J}\in
		P_{\mathbb{R}}^{J}}\min_{z\in\mathcal{Z}_{(k,k^{\prime})}}\sum_{\xi=1}%
	^{J}\mathbb{P}\left(  Y\in A_{\xi},D=d_{\xi}|Z=z\right)  ,
\end{align*}
which, together with \eqref{eq.density inequality}, implies the third inequality
 \citep[p.~666]{kedagni2020generalized}:
\begin{align}\label{eq.third inequality KM}
\sup_{\left\{  P_{\mathbb{R}}^{1},\ldots,P_{\mathbb{R}}^{J}\right\}  }%
\max_{j\in\{1,\ldots,J\}}\sup_{A_{j}\in\mathcal{B}_{\mathbb{R}}}\left\{
\int_{A_{j}}\max_{z\in\mathcal{Z}_{(k,k^{\prime})}}f_{Y,D}\left(
y,d_j|z\right)  \mathrm{d}y-\varphi_{j}\left(  A_{j},\mathcal{Z}_{(k,k^{\prime
	})},P_{\mathbb{R}}^{1},\ldots,P_{\mathbb{R}}^{J}\right)  \right\}
\leq0,    
\end{align}
where%
\begin{align*}
\varphi_{j}(  A_{j},\mathcal{W},P_{\mathbb{R}}^{1},\ldots,P_{\mathbb{R}
}^{J})=\sum_{A_{1}\in P_{\mathbb{R}}^{1}}\cdots\sum_{A_{j-1}\in
	P_{\mathbb{R}}^{j-1}}\sum_{A_{j+1}\in P_{\mathbb{R}}^{j+1}}\ldots\sum
_{A_{J}\in P_{\mathbb{R}}^{J}}\min_{z\in\mathcal{W}}\sum_{\xi=1}^{J}\int_{A_{\xi}
}f_{Y,D}\left(  y,d_{\xi}|z\right)  \mathrm{d}y
\end{align*}
for all $\mathcal{W}\subseteq\mathcal{Z}$.

\subsection{Definition and Estimation of $\mathscr{Z}_{0}$}\label{sec.estimation Z_0 ordered}

We estimate $\mathscr{Z}_{0}=\mathscr{Z}_1\cap\mathscr{Z}_2$ as $\widehat{\mathscr{Z}_0}=\widehat{\mathscr{Z}_1}\cap\widehat{\mathscr{Z}_2}$, where $\widehat{\mathscr{Z}_1}$ and $\widehat{\mathscr{Z}_2}$ are estimators of $\mathscr{Z}_1$ and  $\mathscr{Z}_2$, respectively.

\subsubsection{Definition and Estimation of $\mathscr{Z}_1$}\label{sec.Z1 estimation ordered}
The testable implications proposed by \citet{sun2021ivvalidity} are for full IV validity. Here we extend them to pairwise valid instruments (Definition \ref{def.partial validity pairwise}). We follow the notation of \citet{sun2021ivvalidity} to introduce the definition of $\mathscr{Z}_1$ and the corresponding estimator. Define conditional probabilities 
\begin{equation*}
	P_z\left( B,C\right) =\mathbb{P}\left( Y\in B,D\in C|Z=z\right)
\end{equation*}
for all Borel sets $B,C\in\mathcal{B}_{\mathbb{R}}$ and all $z\in\mathcal{Z}$. The testable implications extended from \citet{sun2021ivvalidity} for the conditions in Definition \ref{def.partial validity pairwise} are that for every $m\in\{1,\ldots,\bar{M}\}$, 
\begin{align}
	P_{z_{k_m}}\left( B,\{d_{J}\}\right) \leq P_{z_{k_m'}}\left(
	B,\{d_{J}\}\right)   \text{ and } P_{z_{k_m}}\left( B,\{d_{1}\}\right)  \geq P_{z_{k_m'}}\left(
	B,\{d_{1}\}\right) 
	\label{eq.testable implication multivalue}
\end{align}
for all $B\in\mathcal{B}_{\mathbb{R}}$, and 
\begin{align}  \label{eq.fosd multi}
	P_{z_{k_m}}\left( \mathbb{R},C\right) \ge P_{z_{k_m'}}\left( 
	\mathbb{R},C\right)
\end{align}
for all $C=(-\infty,c]$ with $c\in\mathbb{R}$.
Without loss of generality, we assume that $d_{1}=0$ and $d_{J}=1$. By definition, for all $%
B,C\in \mathcal{B}_{\mathbb{R}}$, 
\begin{equation*}
	\mathbb{P}\left( Y\in B,D\in C|Z=z\right) =\frac{\mathbb{P}\left( Y\in
		B,D\in C,Z=z\right) }{\mathbb{P}\left( Z=z\right) }.
\end{equation*}%

Next, we reformulate the testable restrictions to define $\mathscr{Z}_1$ and its estimator.
Define the following function spaces
\begin{align}\label{def.function spaces}
	& \mathcal{G}_P=\left\{ \left( 1_{\mathbb{R}\times \mathbb{R}\times \left\{
		z_{k}\right\} },1_{\mathbb{R}\times \mathbb{R}\times \left\{ z_{k^{\prime
		}}\right\} }\right) : k,k^{\prime }\in\{1,\ldots, K\}, k\neq k'\right\} , \notag \\
	& \mathcal{H}_{1}=\left\{ \left( -1\right) ^{d}\cdot 1_{B\times \left\{
		d\right\} \times \mathbb{R}}:B\text{ is a closed interval in }\mathbb{R}%
	,d\in \{0,1\}\right\} ,\notag \\
	& \bar{\mathcal{H}}_{1}=\left\{ \left( -1\right) ^{d}\cdot 1_{B\times
		\left\{ d\right\} \times \mathbb{R}}:B\text{ is a closed, open, or
		half-closed interval in }\mathbb{R},d\in \left\{ 0,1\right\} \right\} , \notag\\
	& \mathcal{H}_{2}=\left\{ 1_{\mathbb{R}\times C\times \mathbb{R}}:C=(-\infty
	,c],c\in \mathbb{R}\right\} ,\notag \\
	& \bar{\mathcal{H}}_{2}=\left\{ 1_{\mathbb{R}\times C\times \mathbb{R}%
	}:C=(-\infty ,c]\text{ or }C=(-\infty ,c),c\in \mathbb{R}\right\} ,\notag \\
	& \mathcal{H}=\mathcal{H}_{1}\cup \mathcal{H}_{2},\text{ and }\bar{\mathcal{H%
	}}=\bar{\mathcal{H}}_{1}\cup \bar{\mathcal{H}}_{2}.
\end{align}
Let $P$ and $\widehat{P}$ be defined as in Section \ref{sec.estimation validity set}. Let $\phi$, $\sigma^2$, $\widehat{\phi}$, and $\widehat{\sigma}^2$ be defined in a way similar to that in Section \ref{sec.estimation validity set} but for all $\left( h,g\right) \in \bar{	\mathcal{H}}\times \mathcal{G}_P$ in \eqref{def.function spaces}. Also, we let $\Lambda(P)=\prod_{k=1}^{K}P(1_{\mathbb{R}\times\mathbb{R}\times\{z_k\}})$  and $T_{n}=n\cdot \prod_{k=1}^K\widehat{P}(1_{\mathbb{R}\times \mathbb{R}
	\times \{z_k\}})$. By similar arguments as in the proof of Lemma 3.1 in \citet{sun2021ivvalidity}, ${\sigma}^{2}$ and $\widehat{\sigma}^{2}$ are uniformly bounded in $(h,g)\in \bar{	\mathcal{H}}\times \mathcal{G}_P$.

The following lemma reformulates the testable restrictions in terms of $\phi$.
\begin{lemma}\label{lemma.superset of Z pairwise}
	Suppose that the instrument $Z$ is pairwise valid for the treatment $D$ with the largest validity pair set $\mathscr{Z}_{\bar{M}}=\{(z_{k_1},z_{k_1^{\prime}}),\ldots,(z_{k_{\bar{M}}},z_{k'_{\bar{M}}})\}$. For every $m\in\{1,\ldots,\bar{M}\}$,  $\sup_{h\in {\mathcal{H}}}\phi \left( h,g\right) =0$ with $g=( 1_{\mathbb{R}\times \mathbb{R}\times \{	z_{k_m}\} },1_{\mathbb{R}\times \mathbb{R}\times \{ z_{k_{m}^{\prime}}\} })$.
\end{lemma}

\begin{proof}[Proof of Lemma \ref{lemma.superset of Z pairwise}]
	Note that for every $g\in\mathcal{G}_P$, we can always find some $a\in\mathbb{R}$ such that $\phi\left(  h,g\right)  =0$ with
	$h=1_{\{a\}\times\{0\}\times\mathbb{R}}$. So $\sup_{h\in\mathcal{H}}%
	\phi\left(  h,g\right)  \ge 0$ for every $g\in\mathcal{G}_P$. 
	Under assumption, for every $g=(1_{\mathbb{R}\times\mathbb{R}\times
		\{z_{k_{m}}\}},1_{\mathbb{R}\times\mathbb{R}\times\{z_{k_{m}'}\}})$, by Lemma 2.1 of \citet{sun2021ivvalidity},
	$\phi\left(  h,g\right)  \leq0$ for all $h\in\mathcal{H}$.
	Thus, $\sup_{h\in\mathcal{H}}	\phi\left(  h,g\right)  =0$.    
\end{proof}

Lemma \ref{lemma.superset of Z pairwise} reformulates the necessary conditions for $\mathscr{Z}_{\bar{M}}$.  
By Lemma \ref{lemma.superset of Z pairwise}, we define  
\begin{equation}\label{eq.G0 pair}
	\mathcal{G}_{1}=\left\{ g\in \mathcal{G}_P:\sup_{h\in {\mathcal{H}}}\phi\left(
	h,g\right) =0\right\} \text{ and } \widehat{\mathcal{G}_{1}}=\left\{ g\in 
	\mathcal{G}_P:\sqrt{T_n}\left\vert \sup_{h\in {\mathcal{H}} }\frac{\widehat{\phi}
		\left( h,g\right)}{\xi_{0}\vee \widehat{\sigma}(h,g)} \right\vert \leq \tau^g
	_{n}\right\}
\end{equation}
where $1/\min_{g\in\mathcal{G}_P}\tau_{n}^g\to0$ in probability with $\max_{g\in\mathcal{G}_P}\tau_{n}^g/\sqrt{n}\to0$ in probability as $n\to\infty$, and $\xi_{0}$ is a small positive number. 
We define ${\mathscr{Z}_1}$ as the collection of all $(z,z^{\prime})$ that are associated with some $g\in{\mathcal{G}_1}$:
\begin{align}\label{eq.Z1 pair}
	\mathscr{Z}_1=\left\{ (z_{k},z_{k^{\prime}})\in\mathscr{Z}: g=( 1_{\mathbb{R}\times \mathbb{R}\times \{	z_{k}\} },1_{\mathbb{R}\times \mathbb{R}\times \{ z_{k^{\prime}}\} })\in\mathcal{G}_1\right\}.
\end{align}
We use $\widehat{\mathcal{G}_1}$ to construct the estimator of $\mathscr{Z}_1$, denoted by $\widehat{\mathscr{Z}_1}$, which is defined as the set of all $(z,z^{\prime})$ that are associated with some $g\in\widehat{\mathcal{G}_1}$ in the same way $\mathscr{Z}_1$ is defined based on $\mathcal{G}_1$:
\begin{align}\label{eq.Z1_hat pair}
	\widehat{\mathscr{Z}_1}=\left\{ (z_{k},z_{k^{\prime}})\in\mathscr{Z}: g=( 1_{\mathbb{R}\times \mathbb{R}\times \{	z_{k}\} },1_{\mathbb{R}\times \mathbb{R}\times \{ z_{k^{\prime}}\} })\in\widehat{\mathcal{G}_1}\right\}.
\end{align}

To establish consistency of $\widehat{\mathscr{Z}_{1}}$, we state and prove an auxiliary lemma.

\begin{lemma}
	\label{lemma.uniform and weak convergence} Under Assumption \ref{ass.iid data},  $\widehat{\phi}\to \phi$, $T_n/n\to \Lambda(P)$, and $\widehat{\sigma}\to \sigma$ almost uniformly.\footnote{See the definition of almost uniform convergence in \citet[p.~52]{van1996weak}.} In addition, $\sqrt{T_n}(  \widehat{\phi}-\phi)  \leadsto\mathbb{G}$ for some random element
	$\mathbb{G}$, and for all $\left(  h,g\right)  \in\bar{\mathcal{H}}\times\mathcal{G}_P$ with $g=(g_1,g_2)$, the variance $Var\left(  \mathbb{G}\left(  h,g\right)  \right)=\sigma^2(h,g)$. 
\end{lemma}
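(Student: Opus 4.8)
The plan is to reduce everything to standard empirical process arguments applied to the finite-dimensional parameter driving $\widehat\phi$, $\widehat\sigma$, and $T_n/n$. First, I would observe that each of the functionals involved is a smooth transformation of a finite collection of moments of the form $P(h\cdot g_j)$, $P(h^2\cdot g_j)$, $P(g_j)$, and $P(1_{\mathbb{R}\times\mathbb{R}\times\{z_k\}})$, where $g=(g_1,g_2)\in\mathcal{G}_P$ ranges over a \emph{finite} set and $h$ ranges over $\bar{\mathcal{H}}$. Since $g_j$ is an indicator of an event of the form $\{Z=z_k\}$ and $h$ is (up to sign) an indicator of a product set $B\times\{d\}\times\mathbb{R}$ or $\mathbb{R}\times C\times\mathbb{R}$ with $B,C$ intervals, the relevant classes of functions $\{h\cdot g_j\}$ and $\{h^2\cdot g_j\}$ are indexed by intervals and are therefore $P$-Donsker (they are subsets of VC classes, exactly as in Lemma C.1 and the proofs of \citet{sun2021ivvalidity} and Lemma B.7 of \citet{kitagawa2015test}). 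Under Assumption \ref{ass.iid data}, all relevant moments exist, so the classes have square-integrable envelopes; this gives both a Glivenko--Cantelli statement ($\widehat P\to P$ uniformly over these classes, hence almost uniformly by the bounded, separable structure) and a Donsker statement ($\sqrt n(\widehat P-P)\leadsto$ a tight Gaussian process).

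Next, I would push these two conclusions through the explicit formulas. For the uniform (almost uniform) convergence: $\widehat\phi(h,g)$ is a ratio of the empirical moments $\widehat P(h\cdot g_j)$ and $\widehat P(g_j)$; since $\mathcal{G}_P$ is finite, $\min_j\inf_{h}\widehat P(g_j)\to\min_j P(g_j)>0$, and by the continuous mapping / uniform law of large numbers the ratio converges to $\phi(h,g)$ uniformly in $(h,g)$. Convergence ``almost uniformly'' in the van der Vaart--Wellner sense follows because the limit process is separable and the convergence is in fact uniform in probability with a.s.\ subsequences; I would cite \citet[p.~52]{van1996weak} for the equivalence. The same argument applied to $T_n/n = \prod_{k=1}^K \widehat P(1_{\mathbb{R}\times\mathbb{R}\times\{z_k\}})$ gives $T_n/n\to\Lambda(P)=\prod_k P(1_{\mathbb{R}\times\mathbb{R}\times\{z_k\}})$ (this is just the multivariate LLN for $K$ cell probabilities). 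For $\widehat\sigma\to\sigma$: $\widehat\sigma^2(h,g)$ equals $(T_n/n)$ times a fixed rational function of the moments $\widehat P(h^2\cdot g_j)$, $\widehat P(h\cdot g_j)$, $\widehat P(g_j)$; by the convention \eqref{eq.0timesinfinity} and the already-established boundedness of $\sigma^2$ and $\widehat\sigma^2$ (noted in the text via the analog of Lemma 3.1 of \citet{sun2021ivvalidity}), the denominators $P(g_j)$ are bounded away from zero wherever they matter, so the continuous mapping theorem again yields uniform convergence, and taking the nonnegative square root preserves it.

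For the weak convergence statement: write $\sqrt{T_n}(\widehat\phi-\phi) = \sqrt{T_n/n}\cdot\sqrt n(\widehat\phi-\phi)$. Since $\sqrt{T_n/n}\to\sqrt{\Lambda(P)}$ in probability, it suffices by Slutsky's lemma to show $\sqrt n(\widehat\phi-\phi)\leadsto\mathbb{G}_0$ for some tight Gaussian $\mathbb{G}_0$, and then $\mathbb{G}=\sqrt{\Lambda(P)}\,\mathbb{G}_0$. The map taking the vector of empirical processes $\{\sqrt n(\widehat P(h\cdot g_j)-P(h\cdot g_j)),\ \sqrt n(\widehat P(g_j)-P(g_j))\}_{j=1,2}$ to $\sqrt n(\widehat\phi(h,g)-\phi(h,g))$ is Hadamard-differentiable (the only nonlinearity is the ratio, with denominators bounded away from zero), so the functional delta method applies to the Donsker limit, giving a tight Gaussian limit $\mathbb{G}_0$ indexed by $(h,g)\in\bar{\mathcal{H}}\times\mathcal{G}_P$. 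Finally, I would compute $\mathrm{Var}(\mathbb{G}(h,g))$ directly: by the delta-method linearization, $\mathbb{G}_0(h,g)$ is the limit of $\sqrt n\{[\widehat P(h g_2)/P(g_2) - \text{(linearization)}] - [\cdots g_1 \cdots]\}$, i.e.\ an i.i.d.\ average of influence functions whose variance works out, after multiplying by $\Lambda(P)$, to exactly the expression \eqref{eq.stat variance multi}; this is the same bookkeeping as in \citet{sun2021ivvalidity}, so I would state the identity and indicate the one-line variance-of-a-ratio computation rather than redo it in full.

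The main obstacle is not any single hard estimate but making the measurability/tightness bookkeeping airtight: one must verify that $\bar{\mathcal{H}}$ (intervals of all types, times the finite index for $d$ and for $g$) really is a $P$-Donsker class with the needed envelope, handle the Hoffmann--J\o rgensen outer-expectation framework for the weak convergence $\leadsto$, and ensure the delta method is being applied to a map that is Hadamard-differentiable \emph{uniformly} over the index set. All of these are exactly parallel to arguments already carried out in \citet{sun2021ivvalidity} (their Lemma 3.1 and the surrounding development) and \citet{kitagawa2015test} (Lemma B.7), so in the write-up I would lean on those results and emphasize only the points where the pairwise/multivalued-$Z$ setup changes the indexing (namely, $\mathcal{G}_P$ is now the finite set of ordered pairs of cell indicators rather than a single pair, and $\Lambda(P)$ is a product over all $K$ cells).
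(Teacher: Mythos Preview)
Your proposal is correct and takes essentially the same approach as the paper: the paper's proof simply notes that $\mathcal{G}_P$ differs only slightly from the $\mathcal{G}$ in \citet{sun2021ivvalidity} and defers to the strategy of Lemmas~C.11 and~3.1 there, which is precisely the empirical-process/Donsker-plus-delta-method argument you spell out in detail. Your write-up is more explicit about the mechanics (VC classes, Hadamard differentiability of the ratio map, the Slutsky step for $\sqrt{T_n/n}$), but the route is the same.
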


\begin{proof}[Proof of Lemma \ref{lemma.uniform and weak convergence}]
	Note that the $\mathcal{ G}_P$ defined in \eqref{def.function spaces} is only slightly different from the $\mathcal{G}$ defined in (7) of \citet{sun2021ivvalidity}. The lemma can be proved following a strategy similar to that of the proofs of Lemmas C.11 and 3.1 of \citet{sun2021ivvalidity}.
\end{proof}

The following proposition establishes consistency of $\widehat{\mathscr{Z}_{1}}$.

\begin{proposition}
	\label{prop.consistent G hat pairwise Z1} Under Assumption \ref{ass.iid data},  $\mathbb{P}(\widehat{\mathcal{G}_1}=\mathcal{G}_1)\rightarrow 1$, and thus $\mathbb{P}(\widehat{\mathscr{Z}_{1}}=\mathscr{Z}_{1})\rightarrow 1$.
\end{proposition}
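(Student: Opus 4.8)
The plan is to reduce the statement to a pointwise claim using the finiteness of $\mathcal{G}_P$. Since $\mathcal{G}_P$ contains only $K(K-1)$ elements, it suffices to show that for each fixed $g\in\mathcal{G}_P$ we have $1\{g\in\widehat{\mathcal{G}_1}\}=1\{g\in\mathcal{G}_1\}$ with probability tending to one; a union bound over $\mathcal{G}_P$ then yields $\mathbb{P}(\widehat{\mathcal{G}_1}=\mathcal{G}_1)\to1$, and the second conclusion $\mathbb{P}(\widehat{\mathscr{Z}_1}=\mathscr{Z}_1)\to1$ is immediate from the definitions \eqref{eq.Z1 pair}--\eqref{eq.Z1_hat pair}, because on the event $\{\widehat{\mathcal{G}_1}=\mathcal{G}_1\}$ the associated sets of pairs of instrument values coincide. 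Throughout I would rely on Lemma \ref{lemma.uniform and weak convergence}, which gives $\widehat\phi\to\phi$, $T_n/n\to\Lambda(P)$ and $\widehat\sigma\to\sigma$ almost uniformly, together with $\sqrt{T_n}(\widehat\phi-\phi)\leadsto\mathbb{G}$ in $\ell^\infty(\bar{\mathcal{H}}\times\mathcal{G}_P)$; I would also use that $\sup_{h\in\mathcal{H}}\phi(h,g)\ge0$ and $\sup_{h\in\mathcal{H}}\widehat\phi(h,g)\ge0$ always hold (taking $h=1_{\{a\}\times\{0\}\times\mathbb{R}}$ with $a$ outside the empirical support of $Y$, as in the proof of Lemma \ref{lemma.superset of Z pairwise}), that $\xi_0\vee\widehat\sigma(h,g)\ge\xi_0>0$, and that $\widehat\sigma$ is uniformly bounded above, say by $\bar\sigma$.

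For $g\in\mathcal{G}_1$ we have $\phi(h,g)\le0$ for all $h$, hence $\widehat\phi(h,g)\le\widehat\phi(h,g)-\phi(h,g)$, and therefore
\[
0\le\sqrt{T_n}\Bigl|\sup_{h\in\mathcal{H}}\frac{\widehat\phi(h,g)}{\xi_0\vee\widehat\sigma(h,g)}\Bigr|\le\xi_0^{-1}\bigl\|\sqrt{T_n}(\widehat\phi(\cdot,g)-\phi(\cdot,g))\bigr\|_{\infty}=O_p(1),
\]
where the $O_p(1)$ follows from the weak convergence in Lemma \ref{lemma.uniform and weak convergence} and continuity of the supremum functional. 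Since $\tau_n\to\infty$, the left-hand side is $\le\tau_n$ with probability tending to one, i.e., $\mathbb{P}(g\in\widehat{\mathcal{G}_1})\to1$.

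For $g\notin\mathcal{G}_1$, set $c_0:=\sup_h\phi(h,g)>0$ and pick $h^*\in\mathcal{H}$ with $\phi(h^*,g)>c_0/2$. Almost uniform (hence pointwise in probability) convergence gives $\widehat\phi(h^*,g)>c_0/4>0$ with probability tending to one, so on that event $\sup_h\widehat\phi(h,g)/(\xi_0\vee\widehat\sigma(h,g))\ge(c_0/4)/(\xi_0\vee\bar\sigma)=:c_1>0$. Because $T_n/n\to\Lambda(P)>0$ (all $z_k$ have positive probability), $\sqrt{T_n}$ diverges at rate $\sqrt{n}$, whereas $\tau_n=o(\sqrt{n})$; hence $\sqrt{T_n}c_1/\tau_n\to\infty$ and $\mathbb{P}(\sqrt{T_n}|\sup_h\widehat\phi(h,g)/(\xi_0\vee\widehat\sigma(h,g))|>\tau_n)\to1$, i.e., $\mathbb{P}(g\notin\widehat{\mathcal{G}_1})\to1$. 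Combining the two cases and taking a union bound over the finitely many $g\in\mathcal{G}_P$ finishes the argument.

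The main obstacle is the uniform control over the infinite index class $\mathcal{H}$ (equivalently its $L^2(P)$-closure $\bar{\mathcal{H}}$): one needs $\sqrt{T_n}(\widehat\phi-\phi)$ to converge weakly in $\ell^\infty(\bar{\mathcal{H}}\times\mathcal{G}_P)$ and the supremum functional to act continuously, so that the $O_p(1)$ rate survives after taking suprema over $h$. This is exactly what Lemma \ref{lemma.uniform and weak convergence} supplies, adapting Lemmas C.11 and 3.1 of \citet{sun2021ivvalidity}; the regularization $\xi_0\vee\widehat\sigma$ introduces no additional difficulty because $\xi_0>0$ bounds the denominator away from zero and $\widehat\sigma$ is uniformly bounded above. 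The remaining steps are the elementary rate comparison $\tau_n=o(\sqrt{n})=o(\sqrt{T_n})$ and the union bound over the finite set $\mathcal{G}_P$.
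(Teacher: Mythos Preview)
Your proof is correct and follows essentially the same approach as the paper's: both rely on the finiteness of $\mathcal{G}_P$, Lemma~\ref{lemma.uniform and weak convergence} (weak convergence of $\sqrt{T_n}(\widehat\phi-\phi)$ and almost uniform convergence of $\widehat\phi,\widehat\sigma$), and the rate conditions $\tau_n\to\infty$, $\tau_n/\sqrt{n}\to0$. The only cosmetic difference is that you argue element by element over $g\in\mathcal{G}_P$ and finish with a union bound, whereas the paper works directly with the set differences $\mathcal{G}_1\setminus\widehat{\mathcal{G}_1}$ and $\widehat{\mathcal{G}_1}\setminus\mathcal{G}_1$ via maxima over $g$ (and separately disposes of the case $\mathcal{G}_1=\varnothing$, which your element-wise formulation handles automatically).
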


\begin{proof}[Proof of Proposition \ref{prop.consistent G hat pairwise Z1}]
	First, suppose $\mathcal{G}_1\neq\varnothing$. Under the constructions, we
	have that  
	\begin{align*}
		& \lim_{n\rightarrow\infty}\mathbb{P}\left( \mathcal{G}_1\setminus 
		\widehat{\mathcal{G}_1}\neq\varnothing\right) \\
		\leq&\lim_{n\rightarrow\infty}\mathbb{P}\left( \max_{g\in\mathcal{G}_1} \sqrt{%
			T_{n}}\left\vert \sup_{h\in\mathcal{H}}\left( \frac{\widehat{\phi}\left(
			h,g\right) }{\xi_{0}\vee\widehat{\sigma}\left( h,g\right) }\right) -\sup _{h\in%
			\mathcal{H}}\left( \frac{\phi\left( h,g\right) }{\xi_{0}\vee \widehat{\sigma}%
			\left( h,g\right) }\right) \right\vert >\min_{g\in\mathcal{G}_P}\tau_{n}^g\right) \\
		\leq&\lim_{n\rightarrow\infty}\mathbb{P}\left( \max_{g\in\mathcal{G}_1}
		\sup_{h\in\mathcal{H}}\sqrt{T_{n}}\left\vert \frac{\widehat{\phi}\left(
			h,g\right) -\phi\left( h,g\right) }{\xi_{0}\vee\widehat{\sigma}\left( h,g\right) 
		}\right\vert >\min_{g\in\mathcal{G}_P}\tau_{n}^g\right) .
	\end{align*}
	By Lemma \ref{lemma.uniform and weak convergence}, $\sqrt{T_{n}}(\widehat{\phi}-\phi)\leadsto\mathbb{G}$ and  $\widehat{\sigma}\rightarrow\sigma$ almost uniformly, which implies
	that  $\widehat{\sigma}\leadsto\sigma$ by Lemmas 1.9.3(ii) and 1.10.2(iii) of  %
	\citet{van1996weak}. Consequently, by Example 1.4.7 (Slutsky's lemma) and Theorem
	1.3.6  (continuous mapping) of \citet{van1996weak},  
	\begin{equation*}
		\max_{g\in\mathcal{G}_1}\sup_{h\in\mathcal{H}}\sqrt{T_{n}}\left\vert \frac {%
			\widehat{\phi}\left( h,g\right) -\phi\left( h,g\right) }{\xi_{0}\vee \widehat{\sigma}%
			\left( h,g\right) }\right\vert \leadsto\max_{g\in\mathcal{G}_1 }\sup_{h\in%
			\mathcal{H}}\left\vert \frac{\mathbb{G}\left( h,g\right) } {%
			\xi_{0}\vee\sigma\left( h,g\right) }\right\vert .  
	\end{equation*}
	Since $1/\min_{g\in\mathcal{G}_P}\tau_{n}^g\rightarrow0$ in probability, we have that $\lim_{n\rightarrow\infty }%
	\mathbb{P}(\mathcal{G}_1\setminus\widehat{\mathcal{G}_1}\neq 
	\varnothing)=0$.
	
	If $\mathcal{G}_1=\mathcal{G}_P$, then clearly $\lim_{n\rightarrow\infty }%
	\mathbb{P(}\widehat{\mathcal{G}_1}\setminus\mathcal{G}_1\neq 
	\varnothing)=0$. Suppose $\mathcal{G}_1\neq\mathcal{G}_P$. Since  $\mathcal{G}_P$ is a finite set and $\widehat{\sigma}$ is uniformly bounded in $(h,g)$ by construction, then there  is a $\delta>0$ such that $\min_{g\in\mathcal{G}_P\setminus\mathcal{G}_1
	}\left\vert \sup_{h\in\mathcal{H}}\phi\left( h,g\right) /(\xi_{0}\vee  \widehat{\sigma}\left( h,g\right) )\right\vert >\delta$. Thus, we have that  
	\begin{align*}
		& \lim_{n\rightarrow\infty}\mathbb{P}\left( \widehat{\mathcal{G}_1 }%
		\setminus\mathcal{G}_1\neq\varnothing\right) \\
		\leq & \lim_{n\rightarrow\infty}\mathbb{P}\left( \max_{g\in\widehat {%
				\mathcal{G}_1}\backslash\mathcal{G}_1}\left\vert \sup_{h\in\mathcal{H} }%
		\frac{\phi\left( h,g\right) }{\xi_{0}\vee\widehat{\sigma}\left( h,g\right) }%
		\right\vert >\delta,\max_{g\in\widehat{\mathcal{G}_1}\backslash \mathcal{G}_{1}}\sqrt{T_{n}}\left\vert \sup_{h\in\mathcal{H}}\frac{\widehat{\phi }\left(
			h,g\right) }{\xi_{0}\vee\widehat{\sigma}\left( h,g\right) }\right\vert
		\leq\max_{g\in\mathcal{G}_P}\tau_{n}^g\right) .
	\end{align*}
	By Lemma \ref{lemma.uniform and weak convergence}, $\widehat{\phi}\rightarrow\phi$ almost uniformly. Thus, for every $\varepsilon>0$, there is a measurable set  
	$A$ with $\mathbb{P}(A)\geq1-\varepsilon$ such that for sufficiently large  $%
	n$,  
	\begin{equation*}
	\max_{g\in\mathcal{G}_P}\left\vert\left\vert 
		\sup_{h\in\mathcal{H}}\frac{\widehat{\phi}\left( h,g\right) }{\xi_{0}\vee \widehat{%
				\sigma}\left( h,g\right) }\right\vert -\left\vert \sup_{h\in\mathcal{H} }\frac{{\phi}%
			\left( h,g\right) }{\xi_{0}\vee\widehat{\sigma}\left( h,g\right) }\right\vert \right\vert \le
		\frac{\delta}{2} 
	\end{equation*}
	uniformly on $A$. We now have that  
	\begin{align*}
		& \lim_{n\rightarrow\infty}\mathbb{P}\left( \widehat{\mathcal{G}_1 }%
		\setminus\mathcal{G}_1\neq\varnothing\right) \\
		\leq & \lim_{n\rightarrow\infty}\mathbb{P}\left( 
		\begin{array}{c}
			\left\{ \max_{g\in\widehat{\mathcal{G}_1}\backslash\mathcal{G}_1
			}\left\vert \sup_{h\in\mathcal{H}}\frac{\phi\left( h,g\right) }{\xi_{0} \vee%
				\widehat{\sigma}\left( h,g\right) }\right\vert >\delta\right\} \\ 
			\cap\left\{ \max_{g\in\widehat{\mathcal{G}_1}\backslash\mathcal{G}_1 }%
			\sqrt{T_{n}}\left\vert \sup_{h\in\mathcal{H}}\frac{\widehat{\phi}\left(
				h,g\right) }{\xi_{0}\vee\widehat{\sigma}\left( h,g\right) }\right\vert \leq
			\max_{g\in\mathcal{G}_P}\tau_{n}^g\right\} \cap A%
		\end{array}
		\right) +\mathbb{P}(A^{c}) \\
		\leq & \lim_{n\rightarrow\infty}\mathbb{P}\left( \sqrt{\frac{T_{n}}{n}} 
		\frac{\delta}{2}<\max_{g\in\widehat{\mathcal{G}_1}\backslash\mathcal{G}
			_{1} }\sqrt{\frac{{T_{n}}}{{n}}}\left\vert \sup_{h\in\mathcal{H}}\frac{\widehat{%
				\phi }\left( h,g\right) }{\xi_{0}\vee\widehat{\sigma}\left( h,g\right) }%
		\right\vert \leq\frac{\max_{g\in\mathcal{G}_P}\tau_{n}^g}{\sqrt{n}}\right) +\varepsilon=\varepsilon,
	\end{align*}
	because $\max_{g\in\mathcal{G}_P}\tau_{n}^g/\sqrt{n}\rightarrow0$ in probability as $n\rightarrow\infty$. Since  $\varepsilon$ can be arbitrarily small, we have that $\mathbb{P}( 
	\widehat{\mathcal{G}_1}=\mathcal{G}_1)\rightarrow1$, because  $\mathbb{P(%
	}\mathcal{G}_1\backslash\widehat{\mathcal{G}_1}\neq 
	\varnothing)\rightarrow0$ and $\mathbb{P(}\widehat{\mathcal{G}_1} \setminus%
	\mathcal{G}_1\neq\varnothing)\rightarrow0$.
	
	Second, suppose $\mathcal{G}_1=\varnothing $. This implies that $%
	\min_{g\in \mathcal{G}_P}\left\vert \sup_{h\in \mathcal{H}}\phi \left(
	h,g\right) /(\xi _{0}\vee \widehat{\sigma}\left( h,g\right)) \right\vert >\delta $
	for some $\delta >0$. Since by Lemma \ref{lemma.uniform and weak convergence}, $\widehat{\phi}\rightarrow \phi $ almost uniformly, then there is a
	measurable set $A$ with $\mathbb{P}(A)\geq 1-\varepsilon $ such that for
	sufficiently large $n$, 
	\begin{equation*}
	\max_{g\in\mathcal{G}_P}\left\vert\left\vert 
		\sup_{h\in\mathcal{H}}\frac{\widehat{\phi}\left( h,g\right) }{\xi_{0}\vee \widehat{%
				\sigma}\left( h,g\right) }\right\vert -\left\vert \sup_{h\in\mathcal{H} }\frac{{\phi}%
			\left( h,g\right) }{\xi_{0}\vee\widehat{\sigma}\left( h,g\right) }\right\vert \right\vert \le
		\frac{\delta}{2} 
	\end{equation*}
	uniformly on $A$. Thus we now have that 
	\begin{align*}
		&\lim_{n\rightarrow \infty }\mathbb{P}\left( \widehat{\mathcal{G}_1}\neq
		\varnothing \right)\\  
		\leq & \lim_{n\rightarrow \infty }\mathbb{P}\left( 
		\begin{array}{c}
			\left\{ \max_{g\in \widehat{\mathcal{G}_1}}\left\vert \sup_{h\in \mathcal{H%
			}}\frac{\phi \left( h,g\right) }{\xi _{0}\vee \widehat{\sigma}\left( h,g\right) }%
			\right\vert >\delta \right\}  \\ 
			\cap \left\{ \max_{g\in \widehat{\mathcal{G}_1}}\sqrt{T_{n}}\left\vert
			\sup_{h\in \mathcal{H}}\frac{\widehat{\phi}\left( h,g\right) }{\xi _{0}\vee \widehat{%
					\sigma}\left( h,g\right) }\right\vert \leq \max_{g\in\mathcal{G}_P}\tau_{n}^g\right\} \cap A%
		\end{array}%
		\right) +\mathbb{P}(A^{c}) \\
		\leq & \lim_{n\rightarrow \infty }\mathbb{P}\left( \sqrt{\frac{T_{n}}{n}}%
		\frac{\delta }{2}<\max_{g\in \widehat{\mathcal{G}_1}}\sqrt{\frac{{T_{n}}}{{%
					n}}}\left\vert \sup_{h\in \mathcal{H}}\frac{\widehat{\phi}\left( h,g\right) }{%
			\xi _{0}\vee \widehat{\sigma}\left( h,g\right) }\right\vert \leq \frac{\max_{g\in\mathcal{G}_P}\tau_{n}^g
		}{\sqrt{n}}\right) +\varepsilon =\varepsilon ,
	\end{align*}%
	because $\max_{g\in\mathcal{G}_P}\tau_{n}^g/\sqrt{n}\rightarrow 0$ in probability as $n\rightarrow \infty $. Since $\varepsilon $ can be arbitrarily small, we have that $\mathbb{P}(\widehat{\mathcal{G%
		}_{1}}=\mathcal{G}_1)=1-\mathbb{P(}\widehat{\mathcal{G}_1}\neq
	\varnothing )\rightarrow 1$.
\end{proof}

As mentioned after Proposition \ref{prop.consistent G hat pairwise Z1 binary D}, Proposition \ref{prop.consistent G hat pairwise Z1} and its proof are related to the contact set estimation in \citet{sun2021ivvalidity}. Since  $\mathcal{G}_1\subseteq\mathcal{G}_P$ and $\mathcal{G}_P$ is a finite set, we can use techniques similar to those in \citet{sun2021ivvalidity} to obtain the stronger result in Proposition \ref{prop.consistent G hat pairwise Z1}, that is, $\mathbb{P}(\widehat{\mathcal{G}_1}=\mathcal{G}_1)\rightarrow 1$.

\subsubsection{Definition and Estimation of $\mathscr{Z}_2$}\label{sec.Z2 multi ordered}

The definition of $\mathscr{Z}_{2}$ relies on the testable implications in
\citet{kedagni2020generalized}. Under Conditions (i) and (ii) in Definition \ref{def.partial validity pairwise}, we can define $Y_{d}(z,z^{\prime})$
for every $d\in\mathcal{D}$ and every $(z,z^{\prime})\in\mathscr{Z}_{\bar{M}}$
such that $Y_{d}(z,z^{\prime})=Y_{dz}=Y_{dz^{\prime}}$ a.s.
We consider the case where $Y$ is continuous. Similar results can be obtained
easily when $Y$ is discrete. To avoid theoretical and computational complications, we introduce the following testable implications
that are slightly weaker than (and implied by) the original testable restrictions in \citet{kedagni2020generalized} (see Appendix \ref{sec.testable implication kedagni}).

Let $\mathcal{R}$ denote the collection of all subsets $C\subseteq\mathbb{R}$
such that $C=(a,b]$ or $C=(a,\infty)$ with $-\infty\le a<b<\infty$. For every $\mathcal{Z}_{(k,k^{\prime})}=(z_{k},z_{k^{\prime}})\in
\mathscr{Z}_{\bar{M}}$, every $A\in\mathcal{B}_{\mathbb{R}}$, every
$d\in\mathcal{D}$, and each $z\in\mathcal{Z}_{(k,k^{\prime})}$,
\[
\mathbb{P}\left(  Y\in A,D=d|Z=z\right)  \leq\mathbb{P}\left(  Y_{dz}\in
A|Z=z\right)  =\mathbb{P}\left(  Y_{d}(z_{k},z_{k^{\prime}})\in A\right)  ,
\]
which implies that
\begin{align}\label{eq.density inequality simplified}
	\max_{z\in\mathcal{Z}_{(k,k^{\prime})}}\mathbb{P}\left(  Y\in
	A,D=d|Z=z\right)  \leq\mathbb{P}\left(  Y_{d}(z_{k},z_{k^{\prime}})\in
	A\right)  .
\end{align}
Let $\mathscr{P}$ be a prespecified finite collection of partitions $P_{\mathbb{R}}$ of $\mathbb{R}$ such that
$P_{\mathbb{R}}=\{C_{1},\ldots,C_{N}\}$ for some $N$ with $C_{k}\in\mathcal{R}$ for all
$k$, $\cup_{k=1}^{N}C_{k}=\mathbb{R}$, and $C_{k}\cap C_{l}=\varnothing$ for
all $k\neq l$. Then we obtain the first condition:
\begin{align}\label{eq.first condition ordered multi simplified}
	\max_{P_{\mathbb{R}}\in\mathscr{P}}\max_{d\in\mathcal{D}}\sum_{A\in
		P_{\mathbb{R}}}\max_{z\in\mathcal{Z}_{(k,k^{\prime})}}\mathbb{P}\left(  Y\in
	A,D=d|Z=z\right)  \leq\max_{P_{\mathbb{R}}\in\mathscr{P}}\max_{d\in
		\mathcal{D}}\sum_{A\in P_{\mathbb{R}}}\mathbb{P}\left(  Y_{d}\left(
	z_{k},z_{k^{\prime}}\right)  \in A\right)  =1.
\end{align}
Also, for all $A_{1},\ldots,A_{J}\in\mathcal{B}_{\mathbb{R}}$,
\begin{align*}
	&  \mathbb{P}\left(  Y_{d_{1}}(z_{k},z_{k^{\prime}})\in A_{1},\ldots,Y_{d_{J}%
	}(z_{k},z_{k^{\prime}})\in A_{J}\right)  \\
	=&\,\min_{z\in\mathcal{Z}_{(k,k^{\prime})}}\mathbb{P}\left(  Y_{d_{1}}%
	(z_{k},z_{k^{\prime}})\in A_{1},\ldots,Y_{d_{J}}(z_{k},z_{k^{\prime}})\in
	A_{J}|Z=z\right)  \\
	=&\,  \min_{z\in\mathcal{Z}_{(k,k^{\prime})}}\sum_{j=1}^{J}\mathbb{P}\left(
	Y_{d_{1}}(z_{k},z_{k^{\prime}})\in A_{1},\ldots,Y_{d_{J}}(z_{k},z_{k^{\prime}%
	})\in A_{J},D=d_{j}|Z=z\right)  \\
	\leq&\,\min_{z\in\mathcal{Z}_{(k,k^{\prime})}}\sum_{j=1}^{J}\mathbb{P}\left(
	Y\in A_{j},D=d_{j}|Z=z\right)  .
\end{align*}
Let $P_{\mathbb{R}}^{1},\ldots,P_{\mathbb{R}}^{J}\in\mathscr{P}$. It follows that
\begin{align*}
	1 &  =\sum_{A_{1}\in P_{\mathbb{R}}^{1}}\cdots\sum_{A_{J}\in P_{\mathbb{R}%
		}^{J}}\mathbb{P}\left(  Y_{d_{1}}(z_{k},z_{k^{\prime}})\in A_{1}%
	,\ldots,Y_{d_{J}}(z_{k},z_{k^{\prime}})\in A_{J}\right)  \\
	&  \leq\sum_{A_{1}\in P_{\mathbb{R}}^{1}}\cdots\sum_{A_{J}\in P_{\mathbb{R}%
		}^{J}}\min_{z\in\mathcal{Z}_{(k,k^{\prime})}}\sum_{j=1}^{J}\mathbb{P}\left(
	Y\in A_{j},D=d_{j}|Z=z\right)  .
\end{align*}
Then we obtain the second condition:
\begin{align}\label{eq.second condition ordered multi simplified}
	\min_{P_{\mathbb{R}}^{1},\ldots,P_{\mathbb{R}}^{J}\in\mathscr{P}}\sum
	_{A_{1}\in P_{\mathbb{R}}^{1}}\cdots\sum_{A_{J}\in P_{\mathbb{R}}^{J}}%
	\min_{z\in\mathcal{Z}_{(k,k^{\prime})}}\sum_{j=1}^{J}\mathbb{P}\left(  Y\in
	A_{j},D=d_{j}|Z=z\right)  \geq1.
\end{align}
Next, for every $j$ and every $A_{j}\in\mathcal{B}_{\mathbb{R}}$,
\begin{align*}
	&  \mathbb{P}\left(  Y_{d_{j}}(z_{k},z_{k^{\prime}})\in A_{j}\right)  \\
	 =&\,\sum_{A_{1}\in P_{\mathbb{R}}^{1}}\cdots\sum_{A_{j-1}\in P_{\mathbb{R}%
		}^{j-1}}\sum_{A_{j+1}\in P_{\mathbb{R}}^{j+1}}\ldots\sum_{A_{J}\in
		P_{\mathbb{R}}^{J}}\mathbb{P}\left(  Y_{d_1}(z_{k},z_{k^{\prime}})\in
	A_{1},\ldots,Y_{d_J}(z_{k},z_{k^{\prime}})\in A_{J}\right)  \\
	 \leq&\,\sum_{A_{1}\in P_{\mathbb{R}}^{1}}\cdots\sum_{A_{j-1}\in P_{\mathbb{R}%
		}^{j-1}}\sum_{A_{j+1}\in P_{\mathbb{R}}^{j+1}}\ldots\sum_{A_{J}\in
		P_{\mathbb{R}}^{J}}\min_{z\in\mathcal{Z}_{(k,k^{\prime})}}\sum_{\xi=1}%
	^{J}\mathbb{P}\left(  Y\in A_{\xi},D=d_{\xi}|Z=z\right)  ,
\end{align*}
which, together with \eqref{eq.density inequality simplified}, implies the third condition:
\begin{align}\label{eq.third condition ordered multi simplified}
	\max_{P_{\mathbb{R}}^{1},\ldots,P_{\mathbb{R}}^{J}\in\mathscr{P}}\max
	_{j\in\left\{  1,\ldots,J\right\}  }\sup_{A_{j}\in\mathcal{R}
	}\bigg\{  \max_{z\in\mathcal{Z}_{(k,k^{\prime})}}\mathbb{P}(  Y\in
	A_j,D=d_{j}|Z=z)  -\varphi_{j}\left(  A_{j},\mathcal{Z}_{(k,k^{\prime})},P_{\mathbb{R}}^{1},\ldots,P_{\mathbb{R}}^{J}\right)  \bigg\}
	\leq0,
\end{align}
where%
\begin{align*}
	&\varphi_{j}\left(  A_{j},\mathcal{W},P_{\mathbb{R}}^{1},\ldots,P_{\mathbb{R}%
	}^{J}\right) \\
	=&\,\sum_{A_{1}\in P_{\mathbb{R}}^{1}}\cdots\sum_{A_{j-1}\in
		P_{\mathbb{R}}^{j-1}}\sum_{A_{j+1}\in P_{\mathbb{R}}^{j+1}}\ldots\sum
	_{A_{J}\in P_{\mathbb{R}}^{J}}\min_{z\in\mathcal{W}}\sum
	_{\xi=1}^{J}\mathbb{P}\left(  Y\in A_{\xi},D=d_{\xi}|Z=z\right)
\end{align*}
for all $\mathcal{W}\subseteq\mathcal{Z}$.

Next, we reformulate the testable implications in
\eqref{eq.first condition ordered multi simplified}--\eqref{eq.third condition ordered multi simplified} to define
$\mathscr{Z}_{2}$ and $\widehat{\mathscr{Z}_{2}}$. Define the function spaces
\begin{align}\label{def.function spaces KM multi}
	&  \mathcal{G}_{Z}=\left\{  1_{\mathbb{R}\times\mathbb{R}\times\left\{
		z_{k}\right\}  }:1\leq k\leq K\right\}  ,\mathcal{H}_{D}=\{1_{\mathbb{R}%
		\times\{d\}\times\mathbb{R}},d\in\mathcal{D}\},\mathcal{H}_{B}=\left\{
	1_{B\times\mathbb{R}\times\mathbb{R}}:B\in\mathcal{R}\right\}
	,\nonumber\\
	&  \text{and }\bar{\mathcal{H}}_{B}=\left\{  1_{B\times\mathbb{R}%
		\times\mathbb{R}}:B\text{ is a closed, open, or half-closed interval in
	}\mathbb{R}\right\}  .
\end{align}
Let $P$ and $\widehat{P}$ be defined as in Section \ref{sec.estimation validity set}. Define a map $\psi:\bar{\mathcal{H}}_{B}\times\mathcal{H}%
_{D}\times\mathcal{G}_{Z}\rightarrow\mathbb{R}$ such that
\[
\psi(h,f,g)=\frac{P(h\cdot f\cdot g)}{P(g)}%
\]
for every $(h,f,g)\in\bar{\mathcal{H}}_{B}\times\mathcal{H}_{D}\times
\mathcal{G}_{Z}$. 
Moreover, define a map $\mathbb{H}$ such that if $P_{\mathbb{R}}\in\mathscr{P}$ with $P_{\mathbb{R}}=\{C_1,\ldots,C_N\}$ and $C_{k}\in\mathcal{R}$ for all $k\in\{1,\ldots,N\}$, then 
\begin{align}
    \mathbb{H}(P_{\mathbb{R}})=\{1_{C\times\mathbb{R}\times\mathbb{R}}:C\in P_{\mathbb{R}}\}.
\end{align}
Let $\mathrm{P}\left(  \mathcal{G}_{Z}\right)  $ be the collection of all nonempty subsets of $\mathcal{G}_Z$.
Then for every $\mathcal{G}_{S}\in\mathrm{P}\left(  \mathcal{G}_{Z}\right)  $,
define
\[
\psi_{1}\left(  \mathcal{G}_{S}\right)  =\max_{P_{\mathbb{R}}\in
	\mathscr{P}}\max_{f\in\mathcal{H}_{D}}\sum_{h\in\mathbb{H}\left(
	P_{\mathbb{R}}\right)  }\max_{g\in\mathcal{G}_{S}}\psi\left(  h,f,g\right)
-1,
\]%
\[
\psi_{2}\left(  \mathcal{G}_{S}\right)  =1-\min_{P_{\mathbb{R}}^{1}%
	,\ldots,P_{\mathbb{R}}^{J}\in\mathscr{P}}\sum_{h_{1}\in\mathbb{H}\left(
	P_{\mathbb{R}}^{1}\right)  }\cdots\sum_{h_{J}\in\mathbb{H}\left(
	P_{\mathbb{R}}^{J}\right)  }\min_{g\in\mathcal{G}_{S}}\sum_{j=1}^{J}%
\psi\left(  h_{j},f_{j},g\right)  ,
\]
and%
\[
\psi_{3}\left(  \mathcal{G}_{S}\right)  =\max_{P_{\mathbb{R}}^{1}%
	,\ldots,P_{\mathbb{R}}^{J}\in\mathscr{P}}\max_{j\in\left\{  1,\ldots
	,J\right\}  }\sup_{h_{j}\in\mathcal{H}_{B}}\left\{  \max_{g\in\mathcal{G}_{S}%
}\psi\left(  h_{j},f_{j},g\right)  -\tilde{\varphi}_{j}\left(  h_{j}%
,\mathcal{G}_{S},P_{\mathbb{R}}^{1},\ldots,P_{\mathbb{R}}^{J}\right)
\right\}  ,
\]
where $f_{j}=1_{\mathbb{R}\times\left\{  d_{j}\right\}  \times\mathbb{R}}$
and
\begin{align*}
&\tilde{\varphi}_{j}\left(  h_{j},\mathcal{G}_{S},P_{\mathbb{R}}^{1},\dots
,P_{\mathbb{R}}^{J}\right) \\
=&\,\sum_{h_{1}\in\mathbb{H}\left(  P_{\mathbb{R}%
	}^{1}\right)  }\cdots\sum_{h_{j-1}\in\mathbb{H}\left(  P_{\mathbb{R}}%
	^{j-1}\right)  }\sum_{h_{j+1}\in\mathbb{H}\left(  P_{\mathbb{R}}^{j+1}\right)
}\cdots\sum_{h_{J}\in\mathbb{H}\left(  P_{\mathbb{R}}^{J}\right)  }\min
_{g\in\mathcal{G}_{S}}\sum_{\xi=1}^{J}\psi\left(  h_{\xi},f_{\xi},g\right)  .
\end{align*}
For every $\mathcal{Z}_{(k,k^{\prime})}\in\mathscr{Z}$, let
$\mathcal{G}(\mathcal{Z}_{(k,k^{\prime})})=\{  1_{\mathbb{R}%
	\times\mathbb{R}\times\{z_{k}\}},1_{\mathbb{R}\times\mathbb{R}\times
	\{z_{k^{\prime}}\}}\}$.
The conditions in \eqref{eq.first condition ordered multi simplified}--\eqref{eq.third condition ordered multi simplified}
imply that for every $\mathcal{Z}_{(k,k^{\prime})}\in\mathscr{Z}_{\bar{M}}$, $\psi_{l}(\mathcal{G}(\mathcal{Z}_{(k,k^{\prime})}))\leq0$ for all
$l\in\{1,2,3\}$. Thus, we define $\mathscr{Z}_{2}$ by
\[
\mathscr{Z}_{2}=\left\{  \mathcal{Z}_{(k,k^{\prime})}\in\mathscr{Z}:\psi
_{l}(\mathcal{G}(\mathcal{Z}_{(k,k^{\prime})}))\leq0,l\in\{1,2,3\}\right\}  .
\]
Let $\widehat{\psi}:\bar{\mathcal{H}}_{B}\times\mathcal{H}
_{D}\times\mathcal{G}_{Z}\rightarrow\mathbb{R}$ be the sample analog of $\psi$ such that
\[
\widehat{\psi}(h,f,g)=\frac{\widehat{P}(h\cdot f\cdot g)}{\widehat{P}(g)}%
\]
for every $(h,f,g)\in\bar{\mathcal{H}}_{B}\times\mathcal{H}_{D}\times
\mathcal{G}_{Z}$. Let $\widehat{\psi}_l$ be the sample analog of $\psi_l$ for $l\in\{1,2,3\}$, which replaces $\psi$ in $\psi_l$ by $\widehat{\psi}$. 
We define the estimator $\widehat{\mathscr{Z}_{2}}$ for $\mathscr{Z}_{2}$ by
\[
\widehat{\mathscr{Z}_{2}}=\left\{  \mathcal{Z}_{(k,k^{\prime})}\in
\mathscr{Z}:\sqrt{T_{n}}\widehat{\psi}_{l}(\mathcal{G}(\mathcal{Z}%
_{(k,k^{\prime})}))\leq t_{n},l\in\{1,2,3\}\right\}  ,
\]
where $T_{n}=n\cdot \prod_{k=1}^K\widehat{P}(1_{\mathbb{R}\times \mathbb{R}
	\times \{z_k\}})$, $t_{n}\rightarrow\infty$, and $t_{n}/\sqrt{n}\rightarrow0$ as
$n\rightarrow\infty$.

To establish consistency of $\widehat{\mathscr{Z}_{2}}$, we state and prove some auxiliary lemmas.

\begin{lemma}\label{lemma.VC class}
	The function space $\mathcal{H}_B$ is a VC class with VC index $V(\mathcal{H}_B)=3$.
\end{lemma}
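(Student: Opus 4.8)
The plan is to show that the class of indicator functions $\mathcal{H}_B=\{1_{B\times\mathbb{R}\times\mathbb{R}}:B\in\mathcal{R}\}$, where $\mathcal{R}$ is the collection of half-open intervals $(a,b]\subset\mathbb{R}$, is a Vapnik--Chervonenkis class with VC index exactly $3$. Since the third coordinate of the product space $\mathbb{R}^3$ and the second coordinate are irrelevant here (the sets have the form $B\times\mathbb{R}\times\mathbb{R}$), shattering a finite set of points in $\mathbb{R}^3$ by $\mathcal{H}_B$ is equivalent to shattering the projections of those points onto the first coordinate by the class of half-open intervals $\mathcal{R}$ in $\mathbb{R}$. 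So the problem reduces to computing the VC index of $\mathcal{R}$ as a collection of subsets of $\mathbb{R}$.

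First I would recall that $V(\mathcal{H}_B)$ is defined (following \citealp{van1996weak}) as the smallest $n$ such that no set of $n$ points is shattered, so that the maximal shattered set has $V(\mathcal{H}_B)-1$ points. I would then verify two things: (i) some set of two points in $\mathbb{R}$ is shattered by $\mathcal{R}$, and (ii) no set of three points in $\mathbb{R}$ is shattered by $\mathcal{R}$. For (i), take any two points $x_1<x_2$; the four subsets $\varnothing,\{x_1\},\{x_2\},\{x_1,x_2\}$ are picked out by, e.g., $(x_2,x_2+1]$, $(x_1-1,x_1]$, $(x_1,x_2]$, and $(x_1-1,x_2]$ respectively. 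For (ii), take any three points $x_1<x_2<x_3$; the subset $\{x_1,x_3\}$ cannot be written as $(a,b]\cap\{x_1,x_2,x_3\}$, because any interval $(a,b]$ containing both $x_1$ and $x_3$ must contain every point between them, in particular $x_2$. Hence no three-point set is shattered, so $V(\mathcal{H}_B)=3$.

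A minor point to handle carefully is the passage from $\mathbb{R}$ back to $\mathbb{R}^3$: if the chosen points in $\mathbb{R}^3$ have coinciding first coordinates, then even a two-point set need not be shattered, but this does not matter — the VC index is governed by the best case over configurations, and we have exhibited a shattered two-point configuration and shown every three-point configuration fails. The main obstacle is essentially bookkeeping: making sure the definition of VC index used in the paper (via \citealp{van1996weak}, where $V$ counts points and the index of the halfspace-type class is one more than the dimension of the maximal shattered set) matches the claimed value $3$, rather than an off-by-one discrepancy. Once the convention is pinned down, the argument is the standard "intervals shatter $2$ but not $3$ points" fact, lifted trivially to the product space.
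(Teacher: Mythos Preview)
Your argument is correct and complete: reducing to the first coordinate, exhibiting a shattered two-point set, and showing that $\{x_1,x_3\}$ cannot be picked out from any three points $x_1<x_2<x_3$ is exactly the standard computation, and your handling of the van der Vaart--Wellner convention (VC index equals one plus the maximal shattered cardinality) is accurate.

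The paper does not actually prove the lemma; it simply writes that the proof ``closely follows the strategy of the proof of Lemma C.2 of \citet{sun2021ivvalidity}.'' So your self-contained argument is in fact more than the paper provides. Presumably Lemma C.2 of that reference carries out essentially the same interval-shattering computation you give, so there is no substantive difference in approach---you have just written it out rather than deferred to a citation.
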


\begin{proof}[Proof of Lemma \ref{lemma.VC class}]
	The proof closely follows the strategy of the proof of Lemma C.2 of \citet{sun2021ivvalidity}.
\end{proof}

We define 
\begin{align}\label{eq.V}
	\mathcal{V}=\{h\cdot f\cdot g:h\in\bar{\mathcal{H}}_{B},f\in\mathcal{H}%
	_{D},g\in\mathcal{G}_{Z}\}	\text{ and }\mathcal{\tilde{V}}=\mathcal{V}%
	\cup\mathcal{G}_{Z}.
\end{align}

\begin{lemma}\label{lemma.Donsker and Glivenko-Cantelli}
	The function space $\tilde{\mathcal{V}}$ defined in \eqref{eq.V} is Donsker and pre-Gaussian uniformly in $Q\in\mathcal{P}$, and $\tilde{\mathcal{V}}$ is Glivenko--Cantelli uniformly in $Q\in\mathcal{P}$.
\end{lemma}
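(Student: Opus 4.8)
The plan is to recognize $\tilde{\mathcal{V}}$ as a finite union of VC-subgraph classes with a common bounded envelope, and then to invoke the standard uniform-in-$Q$ Donsker and Glivenko--Cantelli theorems of \citet{van1996weak}. Every function in $\tilde{\mathcal{V}}$ is a $\{0,1\}$-valued indicator, so $F\equiv 1$ is a measurable envelope with $\sup_{Q\in\mathcal{P}}Q(F^{2})=1<\infty$, and the uniform integrability of the envelope is immediate. Since $\mathcal{H}_{D}=\{1_{\mathbb{R}\times\{d\}\times\mathbb{R}}:d\in\mathcal{D}\}$ and $\mathcal{G}_{Z}=\{1_{\mathbb{R}\times\mathbb{R}\times\{z_{k}\}}:1\leq k\leq K\}$ are finite, write $\mathcal{V}=\bigcup_{f\in\mathcal{H}_{D}}\bigcup_{g\in\mathcal{G}_{Z}}\mathcal{V}_{f,g}$ with $\mathcal{V}_{f,g}=\{h\cdot f\cdot g:h\in\bar{\mathcal{H}}_{B}\}$; for $f=1_{\mathbb{R}\times\{d\}\times\mathbb{R}}$ and $g=1_{\mathbb{R}\times\mathbb{R}\times\{z_{k}\}}$, the elements of $\mathcal{V}_{f,g}$ are exactly the indicators $1_{B\times\{d\}\times\{z_{k}\}}$ with $B$ a closed, open, or half-closed interval.

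Next I would establish the VC-type property with constants uniform in $Q$. By Lemma \ref{lemma.VC class}, together with the routine extension from $\mathcal{H}_{B}$ to $\bar{\mathcal{H}}_{B}$ (adding open and half-closed intervals enlarges the class but keeps the VC index bounded, following the strategy of the corresponding lemma of \citet{sun2021ivvalidity}), $\bar{\mathcal{H}}_{B}$ is a VC-subgraph class with bounded VC index. Intersecting the sets $B\times\mathbb{R}\times\mathbb{R}$ with the fixed set $\mathbb{R}\times\{d\}\times\{z_{k}\}$ preserves the VC property (\citealp[][Lemma 2.6.17]{van1996weak}), so each $\mathcal{V}_{f,g}$ is a VC-subgraph class with envelope $1$ and VC index no larger than $V(\bar{\mathcal{H}}_{B})$. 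A finite union of VC-type classes is again VC-type, so by \citet[][Theorem 2.6.7]{van1996weak} and a union bound there are constants $A>0$ and $v\geq 1$, depending only on $J$, $K$, and $V(\bar{\mathcal{H}}_{B})$, with
\[
\sup_{Q}N\bigl(\varepsilon,\tilde{\mathcal{V}},L_{2}(Q)\bigr)\leq (A/\varepsilon)^{v}\quad\text{for all }0<\varepsilon<1,
\]
the supremum being over all probability measures $Q$; adjoining the finite class $\mathcal{G}_{Z}$ leaves $\tilde{\mathcal{V}}=\mathcal{V}\cup\mathcal{G}_{Z}$ a finite union of VC-type classes, so the bound persists (with possibly different constants). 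The class $\tilde{\mathcal{V}}$ is pointwise measurable, so no measurability pathologies arise.

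Finally, the polynomial uniform covering number bound yields the uniform entropy integral condition $\sup_{Q}\int_{0}^{1}\sqrt{\log N(\varepsilon,\tilde{\mathcal{V}},L_{2}(Q))}\,\mathrm{d}\varepsilon<\infty$; combined with $\sup_{Q}Q(F^{2})<\infty$ and pointwise measurability, \citet[][Theorems 2.8.3 and 2.8.4]{van1996weak} give that $\tilde{\mathcal{V}}$ is Donsker and pre-Gaussian uniformly in $Q\in\mathcal{P}$. For the Glivenko--Cantelli property, the polynomial covering numbers give $\sup_{Q}\log N(\varepsilon,\tilde{\mathcal{V}},L_{1}(Q))<\infty$ for every $\varepsilon>0$, and since the envelope is bounded and uniformly integrable, \citet[][Theorem 2.8.1]{van1996weak} yields that $\tilde{\mathcal{V}}$ is Glivenko--Cantelli uniformly in $Q\in\mathcal{P}$. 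The only real work is the bookkeeping in the second paragraph---extending Lemma \ref{lemma.VC class} to $\bar{\mathcal{H}}_{B}$ and checking that intersection with a fixed set and finite unions preserve the VC-type property with constants uniform in $Q$, together with the measurability check; there are no substantive analytic difficulties because all functions are bounded indicators and the index classes are essentially intervals times finite sets.
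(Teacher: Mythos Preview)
Your proposal is correct and takes essentially the same approach as the paper: the paper's proof simply defers to Lemmas C.5 and C.6 of \citet{sun2021ivvalidity}, whose strategy is exactly what you outline---recognize the class as a finite union of VC-subgraph indicator classes with common bounded envelope, derive the polynomial uniform covering-number bound, and then invoke the standard uniform Donsker/pre-Gaussian and Glivenko--Cantelli theorems from \citet{van1996weak}. You have merely filled in the details the paper omits.
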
 

\begin{proof}[Proof of Lemma \ref{lemma.Donsker and Glivenko-Cantelli}]
	The proof closely follows the strategies of the proofs of Lemmas C.5 and C.6 of \citet{sun2021ivvalidity}.
\end{proof}

The following proposition establishes consistency of $\widehat{\mathscr{Z}_{2}}$. 
\begin{proposition}
	\label{prop.consistent G hat pairwise Z2} Under Assumption \ref{ass.iid data}, $\mathbb{P}(\widehat{\mathscr{Z}_{2}}=\mathscr{Z}_{2}%
	)\rightarrow1$.
\end{proposition}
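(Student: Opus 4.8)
The plan is to mimic the structure of the proof of Proposition \ref{prop.consistent G hat pairwise Z1}, adapted to the functionals $\psi_l$ and the estimator $\widehat{\mathscr{Z}_2}$ built from $\sqrt{T_n}\widehat{\psi}_l$. The key ingredients are: (i) $\widehat{P}\to P$ uniformly over $\tilde{\mathcal{V}}$ (Glivenko--Cantelli uniformly in $Q\in\mathcal{P}$ by Lemma \ref{lemma.Donsker and Glivenko-Cantelli}), (ii) $\sqrt{T_n}(\widehat{P}-P)$ is asymptotically tight on $\tilde{\mathcal{V}}$ (Donsker uniformly in $Q$, again Lemma \ref{lemma.Donsker and Glivenko-Cantelli}, together with $T_n/n\to\Lambda(P)=\prod_{k=1}^K P(1_{\mathbb{R}\times\mathbb{R}\times\{z_k\}})$, which follows by the law of large numbers as in Lemma \ref{lemma.uniform and weak convergence}), (iii) $\mathscr{Z}$ is a finite set, and (iv) the maps $\psi_l$, $l\in\{1,2,3\}$, are built from $\psi(h,f,g)=P(h\cdot f\cdot g)/P(g)$ by finitely many applications of $\max$, $\min$, $\sum$, and a supremum over $h_j\in\mathcal{H}_B$; since $\mathscr{P}$ is a \emph{finite} prespecified collection of partitions, every $\max_{P_{\mathbb{R}}\in\mathscr{P}}$ and every $\sum_{h\in\mathbb{H}(P_{\mathbb{R}})}$ ranges over a finite index set, so the only infinite operation is $\sup_{h_j\in\mathcal{H}_B}$ inside $\psi_3$, and $\mathcal{H}_B$ is VC of index $3$ by Lemma \ref{lemma.VC class}. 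Consequently each $\widehat{\psi}_l$ is a uniformly (over the finite set of relevant $\mathcal{G}_S = \mathcal{G}(\mathcal{Z}_{(k,k')})$, $\mathcal{Z}_{(k,k')}\in\mathscr{Z}$) Lipschitz-continuous functional of $(\widehat{P}(v))_{v\in\tilde{\mathcal{V}}}$, bounded away from the degenerate region because $P(g)=\mathbb{P}(Z=z_k)>0$ for every $z_k\in\mathcal{Z}$ (a consequence of Assumption \ref{ass.iid data}, which guarantees all relevant moments, hence all $z_k$ have positive probability by the implicit sampling-from-$\mathcal{Z}$ convention; if some $\mathbb{P}(Z=z_k)=0$ that pair is irrelevant). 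This gives $\widehat{\psi}_l(\mathcal{G}(\mathcal{Z}_{(k,k')}))\to\psi_l(\mathcal{G}(\mathcal{Z}_{(k,k')}))$ almost uniformly and $\sqrt{T_n}\{\widehat{\psi}_l(\mathcal{G}(\mathcal{Z}_{(k,k')}))-\psi_l(\mathcal{G}(\mathcal{Z}_{(k,k')}))\}$ asymptotically tight for each of the finitely many pairs.

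The argument then splits into the two standard inclusions. \textbf{No valid pair is dropped:} if $\mathcal{Z}_{(k,k')}\in\mathscr{Z}_2$, then $\psi_l(\mathcal{G}(\mathcal{Z}_{(k,k')}))\le 0$ for all $l\in\{1,2,3\}$, so $\sqrt{T_n}\,\widehat{\psi}_l \le \sqrt{T_n}\,\psi_l + \sqrt{T_n}(\widehat{\psi}_l-\psi_l)\le \sqrt{T_n}(\widehat{\psi}_l-\psi_l) = O_p(1)$; since $t_n\to\infty$, $\mathbb{P}(\sqrt{T_n}\widehat{\psi}_l(\mathcal{G}(\mathcal{Z}_{(k,k')}))\le t_n\ \forall l)\to 1$. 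Taking the finite intersection over all $\mathcal{Z}_{(k,k')}\in\mathscr{Z}_2$ gives $\mathbb{P}(\mathscr{Z}_2\subset\widehat{\mathscr{Z}_2})\to1$. \textbf{No invalid pair is kept:} if $\mathcal{Z}_{(k,k')}\notin\mathscr{Z}_2$, then $\psi_{l_0}(\mathcal{G}(\mathcal{Z}_{(k,k')}))>0$ for some $l_0$, i.e.\ there is $\delta>0$ with $\psi_{l_0}(\mathcal{G}(\mathcal{Z}_{(k,k')}))>\delta$. By almost uniform convergence, on an event $A$ with $\mathbb{P}(A)\ge 1-\varepsilon$ we have $\widehat{\psi}_{l_0}(\mathcal{G}(\mathcal{Z}_{(k,k')}))>\delta/2$ for $n$ large, so $\sqrt{T_n}\widehat{\psi}_{l_0}(\mathcal{G}(\mathcal{Z}_{(k,k')}))>\sqrt{T_n/n}\cdot\sqrt n\cdot\delta/2$; since $T_n/n\to\Lambda(P)>0$ and $t_n/\sqrt n\to 0$, this exceeds $t_n$ with probability $\to 1$, hence $\mathbb{P}(\mathcal{Z}_{(k,k')}\in\widehat{\mathscr{Z}_2})\le \varepsilon + o(1)$, and $\varepsilon$ is arbitrary. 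Intersecting over the finitely many invalid pairs yields $\mathbb{P}(\widehat{\mathscr{Z}_2}\subset\mathscr{Z}_2)\to1$. Combining the two inclusions gives $\mathbb{P}(\widehat{\mathscr{Z}_2}=\mathscr{Z}_2)\to1$.

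The step I expect to be the main obstacle is establishing, cleanly and rigorously, the almost uniform convergence $\widehat{\psi}_l\to\psi_l$ and the $O_p(1)$ fluctuation bound for $\sqrt{T_n}(\widehat{\psi}_l-\psi_l)$, because $\psi_3$ involves a genuine supremum over the infinite class $\mathcal{H}_B$ nested inside several finite $\max$/$\min$/$\sum$ operations. The clean way to handle this is: (1) observe $v\mapsto \widehat{P}(v)$ converges to $P(v)$ uniformly over $\tilde{\mathcal{V}}$ (Lemma \ref{lemma.Donsker and Glivenko-Cantelli}) and $\widehat{P}(g)\ge c>0$ uniformly eventually, so $\widehat{\psi}\to\psi$ uniformly over $\bar{\mathcal{H}}_B\times\mathcal{H}_D\times\mathcal{G}_Z$ almost uniformly; (2) note $\psi_l$ (and $\widehat{\psi}_l$) are $1$-Lipschitz (in the sup-norm over $v\in\tilde{\mathcal{V}}$) compositions — each $\max$, $\min$, $\sum$ over finite sets and each $\sup_{h_j}$ preserves a uniform Lipschitz bound in the underlying $\psi$ values, so $|\widehat{\psi}_l-\psi_l|$ is controlled by the sup-norm distance $\sup_{v\in\tilde{\mathcal{V}}}|\widehat{P}(v)-P(v)|$ up to a constant depending only on $|\mathscr{P}|$, $J$, $K$; (3) for the rate, $\sqrt{T_n}\sup_{v\in\tilde{\mathcal{V}}}|\widehat{P}(v)-P(v)| = O_p(1)$ since $\tilde{\mathcal{V}}$ is Donsker uniformly in $Q$ and $T_n/n = O_p(1)$, whence $\sqrt{T_n}(\widehat{\psi}_l-\psi_l)=O_p(1)$ by the same Lipschitz bound. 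All the Donsker/Glivenko--Cantelli/VC facts needed are exactly Lemmas \ref{lemma.VC class} and \ref{lemma.Donsker and Glivenko-Cantelli}, and the $T_n/n\to\Lambda(P)$ fact is the analog of the corresponding statement in Lemma \ref{lemma.uniform and weak convergence}; so the proof is genuinely a routine-but-careful adaptation of the proof of Proposition \ref{prop.consistent G hat pairwise Z1}.
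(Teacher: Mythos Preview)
Your proposal is correct and follows essentially the same approach as the paper's own proof: both arguments reduce to the two inclusions $\mathscr{Z}_2\subset\widehat{\mathscr{Z}_2}$ and $\widehat{\mathscr{Z}_2}\subset\mathscr{Z}_2$, using (for the first) that $\sqrt{T_n}(\widehat{\psi}_l-\psi_l)=O_p(1)$ uniformly over the finitely many $\mathcal{G}(\mathcal{Z}_{(k,k')})$ together with $t_n\to\infty$, and (for the second) that $\widehat{\psi}_l\to\psi_l$ almost uniformly together with the strict positivity margin $\delta>0$ available because $\mathscr{Z}$ is finite, combined with $t_n/\sqrt{n}\to 0$ and $T_n/n\to\Lambda(P)>0$. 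The only cosmetic difference is that the paper establishes the $O_p(1)$ rate by writing out the bound $\sqrt{T_n}|\widehat{\psi}_l-\psi_l|\le \max_{P_{\mathbb{R}}}\max_f\sum_h\max_g\sqrt{T_n}|\widehat{\psi}-\psi|$ and invoking the continuous mapping theorem to get weak convergence to some limit $\mathbb{G}_1$, whereas you phrase the same control as a Lipschitz bound in $\sup_{v\in\tilde{\mathcal{V}}}|\widehat{P}(v)-P(v)|$; these are equivalent here since all the outer operations are finite and the only infinite supremum is over the VC class $\mathcal{H}_B$.
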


\begin{proof}[Proof of Proposition \ref{prop.consistent G hat pairwise Z2}]
	Let $\mathcal{C}_{2}$ be the set of all $\mathcal{G}(\mathcal{Z}
	_{(k,k^{\prime})})$ with $\mathcal{Z}_{(k,k^{\prime})}\in \mathscr{Z}_{2}$ and
	$\widehat{\mathcal{C}_{2}}$ be the set of all $\mathcal{G}(\mathcal{Z}
	_{(k,k^{\prime})})$ with $\mathcal{Z}_{(k,k^{\prime})}\in \widehat{\mathscr{Z}_{2}}$.
	First, we have that
	\begin{align*}
		\mathbb{P}\left(  \mathcal{C}_{2}\setminus\widehat{\mathcal{C}_{2}}%
		\neq\varnothing\right)  \leq &  \,\mathbb{P}\left(  \max_{\mathcal{G}%
			_{S}\in\mathcal{C}_{2}\setminus\widehat{\mathcal{C}_{2}}}\sqrt{T_{n}%
		}\left\{  \widehat{\psi}_{1}\left(  \mathcal{G}_{S}\right)  -\psi_{1}\left(
		\mathcal{G}_{S}\right)  \right\}  >t_{n}\right)  \\
		&  +\mathbb{P}\left(  \max_{\mathcal{G}_{S}\in\mathcal{C}_{2}%
			\setminus\widehat{\mathcal{C}_{2}}}\sqrt{T_{n}}\left\{  \widehat{\psi}%
		_{2}\left(  \mathcal{G}_{S}\right)  -\psi_{2}\left(  \mathcal{G}_{S}\right)
		\right\}  >t_{n}\right)  \\
		&  +\mathbb{P}\left(  \max_{\mathcal{G}_{S}\in\mathcal{C}_{2}%
			\setminus\widehat{\mathcal{C}_{2}}}\sqrt{T_{n}}\left\{  \widehat{\psi}%
		_{3}\left(  \mathcal{G}_{S}\right)  -\psi_{3}\left(  \mathcal{G}_{S}\right)
		\right\}  >t_{n}\right)  .
	\end{align*}
	By Theorem 1.3.6 (continuous mapping) of \citet{van1996weak},
	\begin{align*}
		&  \max_{\mathcal{G}_{S}\in\mathcal{C}_{2}}\sqrt{T_{n}}\left\vert
		\max_{P_{\mathbb{R}}\in\mathscr{P}}\max_{f\in\mathcal{H}_{D}}\sum
		_{h\in\mathbb{H}\left(  P_{\mathbb{R}}\right)  }\max_{g\in\mathcal{G}_{S}%
		}\widehat{\psi}\left(  h,f,g\right)  -\max_{P_{\mathbb{R}}\in\mathscr{P}}%
		\max_{f\in\mathcal{H}_{D}}\sum_{h\in\mathbb{H}\left(  P_{\mathbb{R}}\right)
		}\max_{g\in\mathcal{G}_{S}}\psi\left(  h,f,g\right)  \right\vert \\
		&  \leq\max_{\mathcal{G}_{S}\in\mathcal{C}_{2}}\max_{P_{\mathbb{R}}%
			\in\mathscr{P}}\max_{f\in\mathcal{H}_{D}}\sum_{h\in\mathbb{H}\left(
			P_{\mathbb{R}}\right)  }\max_{g\in\mathcal{G}_{S}}\sqrt{T_{n}}\left\vert
		\widehat{\psi}\left(  h,f,g\right)  -\psi\left(  h,f,g\right)  \right\vert
		\leadsto\mathbb{G}_{1}%
	\end{align*}
	for some random element $\mathbb{G}_{1}$. Then it follows that
	\begin{align*}
		\mathbb{P}\left(  \max_{\mathcal{G}_{S}\in\mathcal{C}_{2}%
			\setminus\widehat{\mathcal{C}_{2}}}\sqrt{T_{n}}\left\{  \widehat{\psi}%
		_{1}\left(  \mathcal{G}_{S}\right)  -\psi_{1}\left(  \mathcal{G}_{S}\right)
		\right\}  >t_{n}\right)  
		&\leq  \mathbb{P}\left(  \max_{\mathcal{G}_{S}\in\mathcal{C}_{2}}%
		\sqrt{T_{n}}\left\vert \widehat{\psi}_{1}\left(  \mathcal{G}_{S}\right)
		-\psi_{1}\left(  \mathcal{G}_{S}\right)  \right\vert >t_{n}\right)
		\\&\rightarrow0.
	\end{align*}
	Similarly, we have that
	\[
	\mathbb{P}\left(  \max_{\mathcal{G}_{S}\in\mathcal{C}_{2}\setminus
		\widehat{\mathcal{C}_{2}}}\sqrt{T_{n}}\left\{  \widehat{\psi}_{2}\left(
	\mathcal{G}_{S}\right)  -\psi_{2}\left(  \mathcal{G}_{S}\right)  \right\}
	>t_{n}\right)  \rightarrow0
	\]
	and
	\[
	\mathbb{P}\left(  \max_{\mathcal{G}_{S}\in\mathcal{C}_{2}\setminus
		\widehat{\mathcal{C}_{2}}}\sqrt{T_{n}}\left\{  \widehat{\psi}_{3}\left(
	\mathcal{G}_{S}\right)  -\psi_{3}\left(  \mathcal{G}_{S}\right)  \right\}
	>t_{n}\right)  \rightarrow0.
	\]
	Thus, $\mathbb{P(}\mathcal{C}_{2}\setminus\widehat{\mathcal{C}_{2}}%
	\neq\varnothing)\rightarrow0$.
	
	Next, let $\mathcal{C}$ be the set of all $\mathcal{G}(\mathcal{Z}_{(k,k')})$ with $\mathcal{Z}_{(k,k')}\in\mathscr{Z}$. Clearly, $\mathcal{C}$ is a finite set. If $\mathcal{C}\setminus\mathcal{C}_2\neq \varnothing$, there is some $\delta>0$ such that
	$ \min_{\mathcal{G}_{S}\in\mathcal{C}\setminus\mathcal{C}_{2}}\max_{l\in\{1,2,3\}}\psi_{l}\left(\mathcal{G}_{S}\right) >\delta$.
Then we have that
	\begin{align*}
		\mathbb{P}\left(  \widehat{\mathcal{C}_{2}}\setminus\mathcal{C}_{2}%
		\neq\varnothing\right)  \leq &  \,\mathbb{P}\left(  \max_{\mathcal{G}%
			_{S}\in\widehat{\mathcal{C}_{2}}\setminus\mathcal{C}_{2}}\psi_{1}\left(
		\mathcal{G}_{S}\right)  >\delta,\max_{\mathcal{G}_{S}\in\widehat
			{\mathcal{C}_{2}}\setminus\mathcal{C}_{2}}\sqrt{T_{n}}\widehat{\psi}%
		_{1}\left(  \mathcal{G}_{S}\right)  \leq t_{n}\right)  \\
		&  +\mathbb{P}\left(  \max_{\mathcal{G}_{S}\in\widehat{\mathcal{C}_{2}%
			}\setminus\mathcal{C}_{2}}\psi_{2}\left(  \mathcal{G}_{S}\right)  >\delta
		,\max_{\mathcal{G}_{S}\in\widehat{\mathcal{C}_{2}}\setminus\mathcal{C}%
			_{2}}\sqrt{T_{n}}\widehat{\psi}_{2}\left(  \mathcal{G}_{S}\right)  \leq
		t_{n}\right)  \\
		&  +\mathbb{P}\left(  \max_{\mathcal{G}_{S}\in\widehat{\mathcal{C}_{2}%
			}\setminus\mathcal{C}_{2}}\psi_{3}\left(  \mathcal{G}_{S}\right)  >\delta
		,\max_{\mathcal{G}_{S}\in\widehat{\mathcal{C}_{2}}\setminus\mathcal{C}%
			_{2}}\sqrt{T_{n}}\widehat{\psi}_{3}\left(  \mathcal{G}_{S}\right)  \leq
		t_{n}\right)  .
	\end{align*}
	By Lemma \ref{lemma.Donsker and Glivenko-Cantelli} and Lemma 1.9.3 of \citet{van1996weak},
	$
	\Vert \widehat{\psi}  -\psi	\Vert_{\infty} \rightarrow0
	$
	almost uniformly. Then we have that%
	\begin{align*}
		\,\max_{\mathcal{G}_{S}\in
			\mathcal{C}} &  \left\vert \widehat{\psi}_{1}\left(  \mathcal{G}%
		_{S}\right)  -\psi_{1}\left(  \mathcal{G}_{S}\right)  \right\vert \\
		= &  \,\max_{\mathcal{G}_{S}\in
			\mathcal{C}}\left\vert \max_{P_{\mathbb{R}}\in\mathscr{P}}\max
		_{f\in\mathcal{H}_{D}}\sum_{h\in\mathbb{H}\left(  P_{\mathbb{R}}\right)  }%
		\max_{g\in\mathcal{G}_{S}}\widehat{\psi}\left(  h,f,g\right)  -\max
		_{P_{\mathbb{R}}\in\mathscr{P}}\max_{f\in\mathcal{H}_{D}}\sum_{h\in
			\mathbb{H}\left(  P_{\mathbb{R}}\right)  }\max_{g\in\mathcal{G}_{S}}%
		\psi\left(  h,f,g\right)  \right\vert \\
		\leq &  \max_{\mathcal{G}_{S}\in
			\mathcal{C}}\max_{P_{\mathbb{R}}\in\mathscr{P}}\max_{f\in\mathcal{H}_{D}%
		}\sum_{h\in\mathbb{H}\left(  P_{\mathbb{R}}\right)  }\max_{g\in\mathcal{G}%
			_{S}}\left\vert \widehat{\psi}\left(  h,f,g\right)  -\psi\left(  h,f,g\right)
		\right\vert \rightarrow0
	\end{align*}
	almost uniformly. Similarly, it follows that%
	\[
	\max_{\mathcal{G}_{S}\in\mathcal{C}
	}\left\vert \widehat{\psi}_{2}\left(  \mathcal{G}_{S}\right)  -\psi_{2}\left(
	\mathcal{G}_{S}\right)  \right\vert \rightarrow0 \text{ and } \max_{\mathcal{G}_{S}\in\mathcal{C}
	}\left\vert \widehat{\psi}_{3}\left(  \mathcal{G}_{S}\right)  -\psi_{3}\left(
	\mathcal{G}_{S}\right)  \right\vert \rightarrow0
	\]
	almost uniformly. So for every $\varepsilon>0$, there is a measurable set $A\subseteq\Omega$ with $\mathbb{P}(A)\ge 1-\varepsilon$ such that for all large $n$,
	\[
	\max_{l\in\{1,2,3\}}\max_{\mathcal{G}_{S}\in\mathcal{C}
	}\left\vert \widehat{\psi}_{l}\left(  \mathcal{G}_{S}\right)  -\psi_{l}\left(
	\mathcal{G}_{S}\right)  \right\vert\le\frac{\delta}{2}
	\]
	uniformly on $A$.
	Thus, it follows that for every $l\in\{1,2,3\}$,
	\begin{align*}
		&\lim_{n\to\infty}\mathbb{P}\left(  \max_{\mathcal{G}_{S}\in\widehat{\mathcal{C}_{2}%
			}\setminus\mathcal{C}_{2}}\psi_{l}\left(  \mathcal{G}_{S}\right)  >\delta
		,\max_{\mathcal{G}_{S}\in\widehat{\mathcal{C}_{2}}\setminus\mathcal{C}%
			_{2}}\sqrt{T_{n}}\widehat{\psi}_{l}\left(  \mathcal{G}_{S}\right)  \leq
		t_{n}\right)\\
		\leq &\, \lim_{n\to\infty}\mathbb{P}\left( \left\{ \max_{\mathcal{G}_{S}\in\widehat{\mathcal{C}_{2}%
			}\setminus\mathcal{C}_{2}}\psi_{l}\left(  \mathcal{G}_{S}\right)  >\delta
		,\max_{\mathcal{G}_{S}\in\widehat{\mathcal{C}_{2}}\setminus\mathcal{C}%
			_{2}}\sqrt{T_{n}}\widehat{\psi}_{l}\left(  \mathcal{G}_{S}\right)  \leq
		t_{n}\right\}\cap A \right)+\mathbb{P}(A^c)\\
		 \leq &\, \lim_{n\to\infty} \mathbb{P}\left(  \frac{\delta}{2}\leq\max_{\mathcal{G}_{S}%
			\in\widehat{\mathcal{C}_{2}}\setminus\mathcal{C}_{2}}\widehat{\psi}
		_{l}\left(  \mathcal{G}_{S}\right)  \leq\frac{t_{n}}{\sqrt{T_{n}}}\right)+\varepsilon=\varepsilon.
	\end{align*}
	Since $\varepsilon$ can be arbitrarily small, we have that 
	\begin{align*}
	    \mathbb{P}\left(  \max_{\mathcal{G}_{S}\in\widehat{\mathcal{C}_{2}%
			}\setminus\mathcal{C}_{2}}\psi_{l}\left(  \mathcal{G}_{S}\right)  >\delta
		,\max_{\mathcal{G}_{S}\in\widehat{\mathcal{C}_{2}}\setminus\mathcal{C}%
			_{2}}\sqrt{T_{n}}\widehat{\psi}_{l}\left(  \mathcal{G}_{S}\right)  \leq
		t_{n}\right)\to0.
	\end{align*}
	This implies $\mathbb{P(}\widehat{\mathcal{C}_{2}}\setminus\mathcal{C}%
	_{2}\neq\varnothing)\rightarrow0$. Thus,%
	\[
	\mathbb{P}\left(  \widehat{\mathcal{C}_{2}}\neq\mathcal{C}_{2}\right)
	\leq\mathbb{P}\left(  \widehat{\mathcal{C}_{2}}\setminus\mathcal{C}_{2}%
	\neq\varnothing\right)  +\mathbb{P}\left(  \mathcal{C}_{2}\setminus
	\widehat{\mathcal{C}_{2}}\neq\varnothing\right)  \rightarrow0.
	\]
\end{proof}

\subsection{Partially Valid Instruments for Multivalued Ordered Treatments}\label{sec.ordered treatment partial} 

Here we extend the analysis in Section \ref{sec.partially valid instruments binary D} to multivalued ordered treatments. We follow the setup in Section \ref{sec.ordered treatment}. Consider the following generalized version of Definition \ref{def.partially validy instrument binary D}.
\begin{definition}
	\label{def.partial validity} Suppose the instrument $Z$ is pairwise valid for the (multivalued ordered) treatment $D$ with the largest validity pair set $\mathscr{Z}_{\bar{M}}$. If there is a validity pair set $$\mathscr{Z}_{{M}}=\{(z_{k_1},z_{k_2}),(z_{k_2},z_{k_3}),\ldots,(z_{k_{M-1}},z_{k_M})\}$$ for some $M>0$, then the instrument $Z$ is called a \textbf{partially valid instrument} for the treatment $D$. The set $\mathcal{Z}_M=\{z_{k_1},\ldots,z_{k_M}\}$ is called a \textbf{validity value set} of $Z$.
\end{definition}

\begin{assumption}\label{ass.first stage partial}
	The validity value set $\mathcal{Z}_M$ satisfies that 
	\begin{align}
	    E[g(Z_i)D_i|Z_i\in\mathcal{Z}_{M}]-E[D_i|Z_i\in\mathcal{Z}_{M}]\cdot E[g(Z_i)|Z_i\in\mathcal{Z}_{M}]\neq0.
	\end{align}
\end{assumption}

Suppose that we have access to a consistent estimator $\widehat{\mathcal{Z}_0}$ of the  validity value set $\mathcal{Z}_M$, that is, $\mathbb{P}(\widehat{\mathcal{Z}_{0}}=\mathcal{Z}_{M})\rightarrow 1$. Then we can use $\widehat{\mathcal{Z}_{0}}$ to construct a VSIV estimator, $\widehat{\theta}_{1}$,  for a weighted average of ACRs based on model \eqref{eq.VSIV estimation binary D}, where $D$ is now a multivalued ordered treatment. The following theorem presents the asymptotic properties of the VSIV estimator, generalizing Theorem \ref{thm.IV estimator asymptotics binary D}. Theorem \ref{thm.IV estimator asymptotics} is an extension of Theorem 2 of \citet{imbens1994identification} and Theorem 2 of \citet{angrist1995two} to the case where the instrument is partially but not fully valid.
\begin{theorem}\label{thm.IV estimator asymptotics}
	Suppose that the instrument $Z$ is partially valid for the treatment $D$ as defined in Definition \ref{def.partial validity} with a validity value set $\mathcal{Z}_M=\{z_{k_1},\dots,z_{k_M} \}$, and that the estimator $\widehat{\mathcal{Z}_0}$ for $\mathcal{Z}_M$ satisfies  $\mathbb{P}(\widehat{\mathcal{Z}_{0}}=\mathcal{Z}_{M})\rightarrow 1$. Under Assumptions \ref{ass.iid data} and \ref{ass.first stage partial}, it follows that $\widehat{\theta}_{1}\overset{p}\rightarrow\theta_1 $, where 
	\begin{align*}
		\theta_{1}=\frac{E\left[  g\left(  Z_{i}\right)
			Y_{i}|  Z_{i}\in\mathcal{Z}_{M}  \right]  -E\left[
			Y_{i}|  Z_{i}\in\mathcal{Z}_{M}  \right]  E\left[  g\left(
			Z_{i}\right)|  Z_{i}\in\mathcal{Z}_{M}  \right]  }{E\left[
			g\left(  Z_{i}\right)  D_{i}|  Z_{i}\in\mathcal{Z}_{M}
			\right]  -E\left[  D_{i}|  Z_{i}\in\mathcal{Z}_{M}  \right]
			E\left[  g\left(  Z_{i}\right)|  Z_{i}\in\mathcal{Z}_{M} \right]  }.
	\end{align*}
	Also, $\sqrt{n}( \widehat{\theta}_{1}-\theta_1 ) \overset{d}\rightarrow N\left( 0,\Sigma_1 \right) $, where 
	$\Sigma_1$ is provided in \eqref{eq.asymptotic beta1}. In addition, the quantity $\theta _{1}$ can be interpreted as
	the weighted average of $\{ \beta _{k_{2},k_{1}},\ldots ,\beta_{k_{M},k_{M-1}}\} $ defined in \eqref{eq.beta}. Specifically, $\theta _{1}=\sum_{m=1}^{M-1}\mu
	_{m}\beta _{k_{m+1},k_{m}}$ with 
	\begin{align*}
		\mu _{m}=
		\frac{\left[ p\left( z_{k_{m+1}}\right) -p\left( z_{k_{m}}\right) 
			\right] \sum_{l=m}^{M-1}\mathbb{P}\left( Z_{i}=z_{k_{l+1}}|Z_i\in\mathcal{Z}_M\right) \left\{
			g\left( z_{k_{l+1}}\right) -E\left[ g\left( Z_{i}\right) |Z_i\in\mathcal{Z}_M \right] \right\} }{\sum_{l=1}^{M}\mathbb{P}\left(
			Z_{i}=z_{k_{l}}|Z_i\in\mathcal{Z}_M\right) p\left( z_{k_{l}}\right) \left\{ g\left(
			z_{k_{l}}\right) -E\left[ g\left( Z_{i}\right)|Z_i\in\mathcal{Z}_M \right] \right\} },
	\end{align*}
	$p\left( z_{k}\right) =E\left[ D_{i}|Z_{i}=z_{k}\right] $, and $\sum_{m=1}^{M-1}\mu _{m}=1$.
\end{theorem}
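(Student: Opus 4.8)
The plan is to reduce Theorem \ref{thm.IV estimator asymptotics} to the partially-valid, binary-treatment result (Theorem \ref{thm.IV estimator asymptotics binary D}) wherever the arguments are identical, and to supply the new ingredient — the ACR weighting formula — by adapting the proof of Theorem 2 of \citet{angrist1995two}. First I would note that the consistency claim $\widehat{\theta}_1 \overset{p}{\to}\theta_1$ and the $\sqrt{n}$-asymptotic normality $\sqrt{n}(\widehat{\theta}_1-\theta_1)\overset{d}{\to}N(0,\Sigma_1)$ follow verbatim from the proof of Theorem \ref{thm.IV estimator asymptotics binary D}: under $\mathbb{P}(\widehat{\mathcal{Z}_0}=\mathcal{Z}_M)\to 1$ we may replace $\widehat{\mathcal{Z}_0}$ by $\mathcal{Z}_M$ up to an event of vanishing probability, then $\widehat{\theta}_1$ is, on that event, the canonical IV estimator computed on the subsample $\{Z_i\in\mathcal{Z}_M\}$, so a multivariate CLT for the vector of relevant sample moments plus the delta method applied to the same rational map $f$ used in the proof of Theorem \ref{thm.IV estimator asymptotics pairwise} deliver the limit, with $\Sigma_1$ being the resulting sandwich expression (this does not use the ordered structure of $D$ at all, only that the relevant moments exist by Assumption \ref{ass.iid data} and that the denominator is nonzero by Assumption \ref{ass.first stage partial}).

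The substantive step is the interpretation $\theta_1=\sum_{m=1}^{M-1}\mu_m\beta_{k_{m+1},k_m}$. Here I would condition throughout on $\{Z_i\in\mathcal{Z}_M\}$ and work with the conditional distribution. Writing $\bar g = E[g(Z_i)\mid Z_i\in\mathcal{Z}_M]$, the numerator of $\theta_1$ equals $\sum_{l=1}^{M}\mathbb{P}(Z_i=z_{k_l}\mid Z_i\in\mathcal{Z}_M)\,E[Y_i\mid Z_i=z_{k_l}]\,\{g(z_{k_l})-\bar g\}$ and the denominator equals the same expression with $Y_i$ replaced by $D_i$, i.e.\ $p(z_{k_l})$. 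The key algebraic device — exactly as in the proof of Theorem \ref{thm.IV estimator asymptotics pairwise}, cf.\ equation \eqref{eq.useful equality} — is that $\sum_{l=1}^{M}\mathbb{P}(Z_i=z_{k_l}\mid Z_i\in\mathcal{Z}_M)\{g(z_{k_l})-\bar g\}=0$, which lets me use Abel summation: introduce the partial sums $w_m=\sum_{l=m}^{M-1}\mathbb{P}(Z_i=z_{k_{l+1}}\mid Z_i\in\mathcal{Z}_M)\{g(z_{k_{l+1}})-\bar g\}$ and rewrite both numerator and denominator as telescoping sums in the successive differences $E[Y_i\mid Z_i=z_{k_{m+1}}]-E[Y_i\mid Z_i=z_{k_m}]$ and $p(z_{k_{m+1}})-p(z_{k_m})$, respectively. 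Since $(z_{k_m},z_{k_{m+1}})\in\mathscr{Z}_M$ for each $m$ by Definition \ref{def.partial validity}, Lemma \ref{lemma.partial beta} gives $E[Y_i\mid Z_i=z_{k_{m+1}}]-E[Y_i\mid Z_i=z_{k_m}]=\beta_{k_{m+1},k_m}\,(p(z_{k_{m+1}})-p(z_{k_m}))$; substituting this and dividing numerator by denominator yields $\theta_1=\sum_{m=1}^{M-1}\mu_m\beta_{k_{m+1},k_m}$ with $\mu_m$ as stated. Finally $\sum_m\mu_m=1$ is immediate because the numerators of the $\mu_m$ sum to the denominator by construction.

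I expect the main obstacle to be bookkeeping rather than anything deep: getting the Abel-summation rearrangement to land exactly on the stated form of $\mu_m$ requires care with which partial sums one forms (summing from $m$ to $M-1$ on the $g$-weights versus from $1$ to $m$), and one must double-check that the chained monotonicity $(z_{k_m},z_{k_{m+1}})\in\mathscr{Z}_M$ really does let one apply Lemma \ref{lemma.partial beta} pairwise — it does, because Definition \ref{def.partial validity} defines the validity pair set as exactly the set of consecutive pairs, so each consecutive pair satisfies the pairwise-validity conditions and hence $\beta_{k_{m+1},k_m}$ is well defined and identified (with the convention \eqref{eq.0timesinfinity} covering any consecutive pair with $p(z_{k_{m+1}})=p(z_{k_m})$). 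One should also remark that, as in the proof of Theorem \ref{thm.IV estimator asymptotics binary D}, the estimation error in $\widehat{\mathcal{Z}_0}$ contributes nothing to the asymptotic variance because $\mathbb{P}(\widehat{\mathcal{Z}_0}=\mathcal{Z}_M)\to 1$ implies $n^{\rho}\cdot 1\{\widehat{\mathcal{Z}_0}\neq\mathcal{Z}_M\}=o_p(1)$ for every $\rho>0$, so Slutsky's lemma (as invoked via Lemma 1.10.2(iii) and Example 1.4.7 of \citet{van1996weak} in the earlier proof) finishes the argument.
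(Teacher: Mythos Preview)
Your proposal is correct and follows essentially the same route as the paper's proof: the consistency and asymptotic-normality steps are handled exactly as you describe (replace $\widehat{\mathcal{Z}_0}$ by $\mathcal{Z}_M$ via $n^{\rho}1\{\widehat{\mathcal{Z}_0}\neq\mathcal{Z}_M\}=o_p(1)$, then CLT plus delta method with the six-argument rational map $f$), and the weighted-average interpretation is obtained by the same algebra you sketch. The only cosmetic difference is that the paper writes $E[Y_i\mid Z_i=z_{k_{m+1}}]=\sum_{l=1}^{m}\beta_{k_{l+1},k_l}\{p(z_{k_{l+1}})-p(z_{k_l})\}+E[Y_i\mid Z_i=z_{k_1}]$, substitutes, kills the constant via the zero-sum identity, and then interchanges the order of summation, whereas you package the same manipulation as Abel summation on the pair $\big(\pi_l\{g(z_{k_l})-\bar g\},\,E[Y_i\mid Z_i=z_{k_l}]\big)$ --- these are the identical calculation, and your observation that $\sum_m\mu_m=1$ is ``immediate by construction'' is indeed what the paper verifies by applying the same rearrangement to the denominator.
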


\begin{proof}[Proof of Theorem \ref{thm.IV estimator asymptotics}]
	By the formula of the VSIV estimator in \eqref{eq.VSIV estimator binary D},%
	\begin{align*}
		\widehat{\theta}_{1}  =\frac{\frac{n_z}{n}\frac{1}{n}\sum_{i=1}^{n}g\left(  Z_{i}\right)  Y_{i}1\left\{
			Z_{i}\in\widehat{\mathcal{Z}_0}\right\}  -\bar{Y}_{\widehat{\mathcal{Z}_0%
			}}\frac{1}{n}\sum_{i=1}^{n}g\left(  Z_{i}\right)  1\left\{  Z_{i}\in
			\widehat{\mathcal{Z}_0}\right\}  }{\frac{n_z}{n}\frac{1}{n}\sum_{i=1}^{n}g\left(
			Z_{i}\right)  D_{i}1\left\{  Z_{i}\in\widehat{\mathcal{Z}_0}\right\}
			-\bar{D}_{\widehat{\mathcal{Z}_0}}\frac{1}{n}\sum_{i=1}^{n}g\left(
			Z_{i}\right)  1\left\{  Z_{i}\in\widehat{\mathcal{Z}_0}\right\}  },
	\end{align*}
	where%
	\[
	\bar{Y}_{\widehat{\mathcal{Z}_0}}=\frac{1}{n}\sum_{i=1}^{n}Y_{i}1\left\{
	Z_{i}\in\widehat{\mathcal{Z}_0}\right\}  \text{ and }\bar{D}_{\widehat
		{\mathcal{Z}_0}}=\frac{1}{n}\sum_{i=1}^{n}D_{i}1\left\{  Z_{i}\in
	\widehat{\mathcal{Z}_0}\right\}  .
	\]
	We first have%
	\begin{align*}
		&  \frac{1}{n}\sum_{i=1}^{n}g\left(  Z_{i}\right)  Y_{i}1\left\{  Z_{i}%
		\in\widehat{\mathcal{Z}_0}\right\}  \\
		= &\, \frac{1}{n}\sum_{i=1}^{n}g\left(  Z_{i}\right)  Y_{i}1\left\{  Z_{i}%
		\in\mathcal{Z}_M\right\}  +\left[  \frac{1}{n}\sum_{i=1}^{n}g\left(
		Z_{i}\right)  Y_{i}\left\{  1\left\{  Z_{i}\in\widehat{\mathcal{Z}_0%
		}\right\}  -1\left\{  Z_{i}\in\mathcal{Z}_M\right\}  \right\}  \right]
	\end{align*}
	with%
	\begin{align*}
		\left\vert \frac{1}{n}\sum_{i=1}^{n}g\left(  Z_{i}\right)  Y_{i}\left\{
		1\left\{  Z_{i}\in\widehat{\mathcal{Z}_0}\right\}  -1\left\{  Z_{i}%
		\in\mathcal{Z}_M\right\}  \right\}  \right\vert \leq\frac{1}{n}\sum_{i=1}^{n}\left\vert g\left(  Z_{i}\right)
		Y_{i}\right\vert 1\left\{  \widehat{\mathcal{Z}_0}\neq\mathcal{Z}
		_{M}\right\}  .
	\end{align*}
	Since $n^{-1}\sum_{i=1}^n\left\vert g\left(  Z_{i}\right)  Y_{i}\right\vert
	 \overset{p}\rightarrow E\left[  \left\vert g\left(  Z_{i}\right)  Y_{i}\right\vert
	\right]  $ and for every small $\varepsilon>0$,
	\[
	\mathbb{P}\left(  1\left\{  \widehat{\mathcal{Z}_0}\neq\mathcal{Z}%
	_{M}\right\}  >\varepsilon\right)  =\mathbb{P}\left(  \widehat{\mathcal{Z}%
		_{0}}\neq\mathcal{Z}_M\right)  \rightarrow0,
	\]
	we have that%
	\begin{align*}
		\frac{1}{n}\sum_{i=1}^{n}g\left(  Z_{i}\right)  Y_{i}1\left\{  Z_{i}%
		\in\widehat{\mathcal{Z}_0}\right\}   &  =\frac{1}{n}\sum_{i=1}^{n}g\left(
		Z_{i}\right)  Y_{i}1\left\{  Z_{i}\in\mathcal{Z}_M\right\}  +o_{p}\left(
		1\right)  \\
		&   \overset{p}\rightarrow E\left[  g\left(  Z_{i}\right)  Y_{i}1\left\{  Z_{i}%
		\in\mathcal{Z}_M\right\}  \right]  .
	\end{align*}
	Recall that $n_z=\sum_{i=1}^{n}1\{Z_i\in\widehat{\mathcal{Z}_0}\}$. Then we can show that $n_z/n\overset{p}\to \mathbb{P}(Z_i\in\mathcal{Z}_M)$ as $n\to\infty$. 
	Similarly, we have that $\bar{Y}_{\widehat{\mathcal{Z}_0}}\overset{p}\rightarrow
	E\left[  Y_{i}1\left\{  Z_{i}\in\mathcal{Z}_M\right\}  \right]  $,
	$\bar{D}_{\widehat{\mathcal{Z}_0}} \overset{p}\rightarrow E\left[  D_{i}1\left\{
	Z_{i}\in\mathcal{Z}_M\right\}  \right]  $, $n^{-1}\sum_{i=1}^n g\left(
	Z_{i}\right)  1\{Z_{i}\in\widehat{\mathcal{Z}_0}\} \overset{p}\rightarrow E\left[
	g\left(  Z_{i}\right)  1\left\{  Z_{i}\in\mathcal{Z}_M\right\}  \right]  $,
	and $n^{-1}\sum_{i=1}^n g\left(  Z_{i}\right)  D_{i}1\{Z_{i}\in\widehat{\mathcal{Z}%
		_{0}}\} \overset{p}\rightarrow E\left[  g\left(  Z_{i}\right)  D_{i}1\left\{  Z_{i}%
	\in\mathcal{Z}_M\right\}  \right]  $. Thus, it follows that%
	\[
	\widehat{\theta}_{1} \overset{p}\rightarrow \frac{ \frac{E\left[  g\left(  Z_{i}\right)
			Y_{i}1\left\{  Z_{i}\in\mathcal{Z}_M\right\}  \right]}{\mathbb{P}(Z_i\in\mathcal{Z}_M)}  -\frac{E\left[
			Y_{i}1\left\{  Z_{i}\in\mathcal{Z}_M\right\}  \right]}{\mathbb{P}(Z_i\in\mathcal{Z}_M)}   \frac{E\left[  g\left(
			Z_{i}\right)  1\left\{  Z_{i}\in\mathcal{Z}_M\right\}  \right]}{\mathbb{P}(Z_i\in\mathcal{Z}_M)}   }{\frac{E\left[
			g\left(  Z_{i}\right)  D_{i}1\left\{  Z_{i}\in\mathcal{Z}_M\right\}
			\right]}{\mathbb{P}(Z_i\in\mathcal{Z}_M)}   -\frac{E\left[  D_{i}1\left\{  Z_{i}\in\mathcal{Z}_M\right\}  \right]}{\mathbb{P}(Z_i\in\mathcal{Z}_M)} 
		\frac{E\left[  g\left(  Z_{i}\right)  1\left\{  Z_{i}\in\mathcal{Z}_M\right\}
			\right]}{\mathbb{P}(Z_i\in\mathcal{Z}_M)}   }=\theta_{1}.
	\]

	Next, we derive the asymptotic distribution of $\sqrt{n}(\widehat{\theta}_{1}%
	-\theta_{1})$. Define a function $f:\mathbb{R}^{6}\rightarrow\mathbb{R}$ by
	\[
	f\left(  x\right)  =\frac{x_{1}/x_{6}-x_{2}x_{3}/x_{6}^{2}}{x_{4}/x_{6}%
		-x_{5}x_{3}/x_{6}^{2}}
	\]
	for every $x\in\mathbb{R}^{6}$ with $x=\left(  x_{1},x_{2},x_{3},x_{4},x_{5},x_6\right)  ^{T}$ such that $f(x)$ is well defined. We can obtain the gradient
	of $f$, denoted $f^{\prime}$, by $f^{\prime}\left(  x\right)  =\left(
	f_{1}^{\prime}\left(  x\right)  ,f_{2}^{\prime}\left(  x\right)
	,f_{3}^{\prime}\left(  x\right)  ,f_{4}^{\prime}\left(  x\right)
	,f_{5}^{\prime}\left(  x\right)  ,f_{6}^{\prime}\left(  x\right)  \right)
	^{T}$, where%
	\begin{align*}
		f_{1}^{\prime}\left(  x\right)   &  =\frac{x_{6}}{x_{4}x_{6}-x_{5}
			x_{3}},f_{2}^{\prime}\left(  x\right)  =\frac{-x_{3}
		}{x_{4}x_{6}-x_{5}x_{3}},f_{3}^{\prime}\left(  x\right)
		=\frac{-x_{2}x_{4}x_{6}+x_{5}x_{1}x_{6}}{\left(  x_{4}x_{6}-x_{5}x_{3}\right)
			^{2}},\\
		f_{4}^{\prime}\left(  x\right)   &  =-\frac{\left(  x_{1}x_{6}-x_{2}%
			x_{3}\right)  x_{6}}{\left(  x_{4}x_{6}-x_{5}x_{3}\right)  ^{2}},f_{5}%
		^{\prime}\left(  x\right)  =\frac{x_{3}\left(  x_{1}x_{6}-x_{2}x_{3}\right)
		}{\left(  x_{4}x_{6}-x_{5}x_{3}\right)  ^{2}},\text{ and }f_{6}^{\prime
		}\left(  x\right)  =\frac{-x_{1}x_{5}x_{3}+x_{2}x_{3}x_{4}}{\left(  x_{4}%
			x_{6}-x_{5}x_{3}\right)  ^{2}}
	\end{align*}
	for every $x=(x_1,x_2,x_3,x_4,x_5,x_6)^T$ such that all the above derivatives are well defined.
	Then we can rewrite
	\[
	\sqrt{n}(\widehat{\theta}_{1}-\theta_{1})=\sqrt{n}\left\{  f\left(  \widehat{W}%
	_{n}\right)  -f\left(  W\right)  \right\}  ,
	\]
	where
	\[
	\widehat{W}_{n}=\left(
	\begin{array}
		[c]{c}%
		\frac{1}{n}\sum_{i=1}^{n}g\left(  Z_{i}\right)  Y_{i}1\left\{  Z_{i}%
		\in\widehat{\mathcal{Z}_0}\right\}  \\
		\bar{Y}_{\widehat{\mathcal{Z}_0}}\\
		\frac{1}{n}\sum_{i=1}^{n}g\left(  Z_{i}\right)  1\left\{  Z_{i}\in
		\widehat{\mathcal{Z}_0}\right\}  \\
		\frac{1}{n}\sum_{i=1}^{n}g\left(  Z_{i}\right)  D_{i}1\left\{  Z_{i}%
		\in\widehat{\mathcal{Z}_0}\right\}  \\
		\bar{D}_{\widehat{\mathcal{Z}_0}}\\
		\frac{1}{n}\sum_{i=1}^{n}1\left\{  Z_{i}\in
		\widehat{\mathcal{Z}_0}\right\}
	\end{array}
	\right)  \text{ and }W=\left(
	\begin{array}
		[c]{c}%
		E\left[  g\left(  Z_{i}\right)  Y_{i}1\left\{  Z_{i}\in\mathcal{Z}_{M}\right\}  \right]  \\
		E\left[  Y_{i}1\left\{  Z_{i}\in\mathcal{Z}_{M}\right\}  \right]  \\
		E\left[  g\left(  Z_{i}\right)  1\left\{  Z_{i}\in\mathcal{Z}_{M}\right\}
		\right]  \\
		E\left[  g\left(  Z_{i}\right)  D_{i}1\left\{  Z_{i}\in\mathcal{Z}%
		_{M}\right\}  \right]  \\
		E\left[  D_{i}1\left\{  Z_{i}\in\mathcal{Z}_{M}\right\}  \right]  \\
		E\left[  1\left\{  Z_{i}\in\mathcal{Z}_{M}\right\}  \right]
	\end{array}
	\right)  .
	\]
	For every small $\varepsilon>0$, we have $\mathbb{P}(\sqrt{n}%
	1\{\widehat{\mathcal{Z}_0}\neq\mathcal{Z}_{M}\}>\varepsilon
	)=\mathbb{P}(\widehat{\mathcal{Z}_0}\neq\mathcal{Z}_{M})\rightarrow0.$
	By assumption, $n^{-1}\sum_{i=1}^n\left\vert g\left(  Z_{i}\right)  Y_{i}\right\vert
	 \overset{p}\rightarrow E\left[  \left\vert g\left(  Z_{i}\right)  Y_{i}\right\vert
	\right]  $, and we have that
	\begin{align*}
		&  \sqrt{n}\left\vert \frac{1}{n}\sum_{i=1}^{n}g\left(  Z_{i}\right)
		Y_{i}1\left\{  Z_{i}\in\widehat{\mathcal{Z}_0}\right\}  -\frac{1}{n}%
		\sum_{i=1}^{n}g\left(  Z_{i}\right)  Y_{i}1\left\{  Z_{i}\in\mathcal{Z}%
		_{M}\right\}  \right\vert \\
		= &  \,\sqrt{n}\left\vert \frac{1}{n}\sum_{i=1}^{n}g\left(  Z_{i}\right)
		Y_{i}\left[  1\left\{  Z_{i}\in\widehat{\mathcal{Z}_0}\right\}
		-1\left\{  Z_{i}\in\mathcal{Z}_{M}\right\}  \right]  \right\vert \\
		\leq &  \,\frac{1}{n}\sum_{i=1}^{n}\left\vert g\left(  Z_{i}\right)
		Y_{i}\right\vert \left(  \sqrt{n}1\left\{  \widehat{\mathcal{Z}_0}\neq
		\mathcal{Z}_{M}\right\}  \right)  =o_{p}\left(  1\right)  .
	\end{align*}
	Similarly, we have that
	\begin{align*}
		&  \sqrt{n}\left(  \widehat{W}_{n}-W\right)  \\
		= &  \,\sqrt{n}\frac{1}{n}\sum_{i=1}^{n}\left(
		\begin{array}
			[c]{c}%
			g\left(  Z_{i}\right)  Y_{i}1\left\{  Z_{i}\in\mathcal{Z}_M\right\}
			-E\left[  g\left(  Z_{i}\right)  Y_{i}1\left\{  Z_{i}\in\mathcal{Z}%
			_{M}\right\}  \right]  \\
			Y_{i}1\left\{  Z_{i}\in\mathcal{Z}_M\right\}  -E\left[  Y_{i}1\left\{
			Z_{i}\in\mathcal{Z}_{M}\right\}  \right]  \\
			g\left(  Z_{i}\right)  1\left\{  Z_{i}\in\mathcal{Z}_M\right\}  -E\left[
			g\left(  Z_{i}\right)  1\left\{  Z_{i}\in\mathcal{Z}_{M}\right\}  \right]  \\
			g\left(  Z_{i}\right)  D_{i}1\left\{  Z_{i}\in\mathcal{Z}_M\right\}
			-E\left[  g\left(  Z_{i}\right)  D_{i}1\left\{  Z_{i}\in\mathcal{Z}%
			_{M}\right\}  \right]  \\
			D_{i}1\left\{  Z_{i}\in\mathcal{Z}_M\right\}  -E\left[  D_{i}1\left\{
			Z_{i}\in\mathcal{Z}_{M}\right\}  \right]  \\
			1\left\{  Z_{i}\in\mathcal{Z}_M\right\}  -E\left[  1\left\{  Z_{i}%
			\in\mathcal{Z}_{M}\right\}  \right]
		\end{array}
		\right)  +o_{p}\left(  1\right)  \overset{d}\rightarrow \,N\left(  0,\Sigma\right) ,
	\end{align*}
	where $\Sigma=E[  VV^{T}]  $ and
	\[
	V=\left(
	\begin{array}
		[c]{c}%
		g\left(  Z_{i}\right)  Y_{i}1\left\{  Z_{i}\in\mathcal{Z}_M\right\}
		-E\left[  g\left(  Z_{i}\right)  Y_{i}1\left\{  Z_{i}\in\mathcal{Z}%
		_{M}\right\}  \right]  \\
		Y_{i}1\left\{  Z_{i}\in\mathcal{Z}_M\right\}  -E\left[  Y_{i}1\left\{
		Z_{i}\in\mathcal{Z}_{M}\right\}  \right]  \\
		g\left(  Z_{i}\right)  1\left\{  Z_{i}\in\mathcal{Z}_M\right\}  -E\left[
		g\left(  Z_{i}\right)  1\left\{  Z_{i}\in\mathcal{Z}_{M}\right\}  \right]  \\
		g\left(  Z_{i}\right)  D_{i}1\left\{  Z_{i}\in\mathcal{Z}_M\right\}
		-E\left[  g\left(  Z_{i}\right)  D_{i}1\left\{  Z_{i}\in\mathcal{Z}%
		_{M}\right\}  \right]  \\
		D_{i}1\left\{  Z_{i}\in\mathcal{Z}_M\right\}  -E\left[  D_{i}1\left\{
		Z_{i}\in\mathcal{Z}_{M}\right\}  \right]  \\
		1\left\{  Z_{i}\in\mathcal{Z}_M\right\}  -E\left[  1\left\{  Z_{i}%
		\in\mathcal{Z}_{M}\right\}  \right]
	\end{array}
	\right)  .
	\]
	By the multivariate delta method, we have that
	\begin{align}\label{eq.asymptotic beta1}
		\sqrt{n}(\widehat{\theta}_{1}-\theta_{1})=\sqrt{n}\left\{  f\left(  \widehat{W}%
		_{n}\right)  -f\left(  W\right)  \right\}  \overset{d}\rightarrow f^{\prime}\left(
		W\right)  ^{T}\cdot N\left(  0,\Sigma\right)  .
	\end{align}

	Now we follow the strategy  of \citet{imbens1994identification} and have that
	\begin{align*}
		&  \frac{E\left[  g\left(  Z_{i}\right)  Y_{i}1\left\{  Z_{i}\in
			\mathcal{Z}_M\right\}  \right]  }{\mathbb{P}\left(  Z_{i}\in\mathcal{Z}%
			_M\right)  }-\frac{E\left[  Y_{i}1\left\{  Z_{i}\in\mathcal{Z}_M\right\}
			\right]  }{\mathbb{P}\left(  Z_{i}\in\mathcal{Z}_M\right)  }\frac{E\left[
			g\left(  Z_{i}\right)  1\left\{  Z_{i}\in\mathcal{Z}_M\right\}  \right]
		}{\mathbb{P}\left(  Z_{i}\in\mathcal{Z}_M\right)  }\\
		= &  \frac{\sum_{k=1}^{K}\mathbb{P}\left(  Z_{i}=z_{k}\right)  E\left[
			Y_{i}1\left\{  Z_{i}\in\mathcal{Z}_M\right\}  |Z_{i}=z_{k}\right]  \left\{
			g\left(  z_{k}\right)  1\left\{  z_{k}\in\mathcal{Z}_M\right\}
			-\frac{E\left[  g\left(  Z_{i}\right)  1\left\{  Z_{i}\in\mathcal{Z}%
				_M\right\}  \right]  }{\mathbb{P}\left(  Z_{i}\in\mathcal{Z}_M\right)
			}\right\}  }{\mathbb{P}\left(  Z_{i}\in\mathcal{Z}_M\right)  }\\
		= &  \sum_{m=1}^{M}\mathbb{P}\left(  Z_{i}=z_{k_{m}}|Z_{i}\in\mathcal{Z}%
		_M\right)  E\left[  Y_{i}|Z_{i}=z_{k_{m}}\right]  \left\{  g\left(
		z_{k_{m}}\right)  -E\left[  g\left(  Z_{i}\right)  |Z_{i}\in\mathcal{Z}%
		_M\right]  \right\}  .
	\end{align*}
	Then we write
	\begin{align}
		&  \sum_{m=1}^{M}\mathbb{P}\left(  Z_{i}=z_{k_{m}}|Z_{i}\in\mathcal{Z}%
		_M\right)  E\left[  Y_{i}|Z_{i}=z_{k_{m}}\right]  \left\{  g\left(
		z_{k_{m}}\right)  -E\left[  g\left(  Z_{i}\right)  |Z_{i}\in\mathcal{Z}%
		_M\right]  \right\}  \nonumber\label{eq.numerator 1}\\
		= &  \sum_{m=1}^{M-1}\mathbb{P}\left(  Z_{i}=z_{k_{m+1}}|Z_{i}\in
		\mathcal{Z}_M\right)  E\left[  Y_{i}|Z_{i}=z_{k_{m+1}}\right]  \left\{
		g\left(  z_{k_{m+1}}\right)  -E\left[  g\left(  Z_{i}\right)  |Z_{i}%
		\in\mathcal{Z}_M\right]  \right\}  \nonumber\\
		&  +\mathbb{P}\left(  Z_{i}=z_{k_{1}}|Z_{i}\in\mathcal{Z}_M\right)  E\left[
		Y_{i}|Z_{i}=z_{k_{1}}\right]  \left\{  g\left(  z_{k_{1}}\right)  -E\left[
		g\left(  Z_{i}\right)  |Z_{i}\in\mathcal{Z}_M\right]  \right\}  .
	\end{align}
	By \eqref{eq.beta}, we have
	\begin{align*}
		E\left[  Y_{i}|Z_{i}=z_{k_{m+1}}\right]   &  =\beta_{k_{m+1},k_{m}}\left(
		E\left[  D_{i}|Z_{i}=z_{k_{m+1}}\right]  -E\left[  D_{i}|Z_{i}=z_{k_{m}%
		}\right]  \right)  +E\left[  Y_{i}|Z_{i}=z_{k_{m}}\right]  \\
		&  =\sum_{l=1}^{m}\beta_{k_{l+1},k_{l}}\left(  E\left[  D_{i}|Z_{i}%
		=z_{k_{l+1}}\right]  -E\left[  D_{i}|Z_{i}=z_{k_{l}}\right]  \right)
		+E\left[  Y_{i}|Z_{i}=z_{k_{1}}\right]  ,
	\end{align*}
	and thus it follows that
	\begin{align*}
		&  \sum_{m=1}^{M-1}\mathbb{P}\left(  Z_{i}=z_{k_{m+1}}|Z_{i}\in\mathcal{Z}%
		_M\right)  E\left[  Y_{i}|Z_{i}=z_{k_{m+1}}\right]  \left\{  g\left(
		z_{k_{m+1}}\right)  -E\left[  g\left(  Z_{i}\right)  |Z_{i}\in\mathcal{Z}%
		_M\right]  \right\}  \\
		= &  \sum_{m=1}^{M-1}\bigg\{\mathbb{P}\left(  Z_{i}=z_{k_{m+1}}|Z_{i}%
		\in\mathcal{Z}_M\right)  \big\{\sum_{l=1}^{m}\beta_{k_{l+1},k_{l}}\left[
		p\left(  z_{k_{l+1}}\right)  -p\left(  z_{k_{l}}\right)  \right]  \big\}\\
		&  \cdot\left\{  g\left(  z_{k_{m+1}}\right)  -E\left[  g\left(  Z_{i}\right)|Z_{i}\in\mathcal{Z}_M\right]  \right\}  \bigg\}\\
		&  +\sum_{m=1}^{M-1}\mathbb{P}\left(  Z_{i}=z_{k_{m+1}}|Z_{i}\in
		\mathcal{Z}_M\right)  E[Y_{i}|Z_{i}=z_{k_{1}}]\left\{  g\left(  z_{k_{m+1}%
		}\right)  -E\left[  g\left(  Z_{i}\right)  |Z_{i}\in\mathcal{Z}_M\right]
		\right\}  .
	\end{align*}
	By \eqref{eq.numerator 1}, this implies that
	\begin{align*}
		&  \sum_{m=1}^{M}\mathbb{P}\left(  Z_{i}=z_{k_{m}}|Z_{i}\in\mathcal{Z}%
		_M\right)  E\left[  Y_{i}|Z_{i}=z_{k_{m}}\right]  \left\{  g\left(
		z_{k_{m}}\right)  -E\left[  g\left(  Z_{i}\right)  |Z_{i}\in\mathcal{Z}%
		_M\right]  \right\}  \nonumber\label{eq.numerator 2}\\
		= &  \sum_{m=1}^{M-1}\bigg\{\mathbb{P}\left(  Z_{i}=z_{k_{m+1}}|Z_{i}%
		\in\mathcal{Z}_M\right)  \big\{\sum_{l=1}^{m}\beta_{k_{l+1},k_{l}}\left[
		p\left(  z_{k_{l+1}}\right)  -p\left(  z_{k_{l}}\right)  \right]
		\big\}\nonumber\\
		&  \cdot\left\{  g\left(  z_{k_{m+1}}\right)  -E\left[  g\left(  Z_{i}\right)
		|Z_{i}\in\mathcal{Z}_M\right]  \right\}  \bigg\},
	\end{align*}
	where we use $\sum_{m=1}^{M}\mathbb{P}\left(  Z_{i}%
	=z_{k_{m}}|Z_{i}\in\mathcal{Z}_M\right)  \left\{  g\left(  z_{k_{m}%
	}\right)  -E\left[  g\left(  Z_{i}\right)  |Z_{i}\in\mathcal{Z}_M\right]
	\right\}  =0$. Furthermore, we obtain
	\begin{align*}
		&  \sum_{m=1}^{M-1}\mathbb{P}\left(  Z_{i}=z_{k_{m+1}}|Z_{i}\in\mathcal{Z}%
		_M\right)  \left\{  \sum_{l=1}^{m}\beta_{k_{l+1},k_{l}}\left[  p\left(
		z_{k_{l+1}}\right)  -p\left(  z_{k_{l}}\right)  \right]  \right\}  \tilde
		{g}\left(  z_{k_{m+1}}\right)  \\
		= &  \,\mathbb{P}\left(  Z_{i}=z_{k_{2}}|Z_{i}\in\mathcal{Z}_M\right)
		\left\{  \beta_{k_{2},k_{1}}\left[  p\left(  z_{k_{2}}\right)  -p\left(
		z_{k_{1}}\right)  \right]  \right\}  \tilde{g}\left(  z_{k_{2}}\right)
		+\cdots\\
		&  +\mathbb{P}\left(  Z_{i}=z_{k_{M}}|Z_{i}\in\mathcal{Z}_M\right)  \left\{
		\sum_{l=1}^{M-1}\beta_{k_{l+1},k_{l}}\left[  p\left(  z_{k_{l+1}}\right)
		-p\left(  z_{k_{l}}\right)  \right]  \right\}  \tilde{g}\left(  z_{k_{M}%
		}\right)  \\
		= &  \,\sum_{m=1}^{M-1}\left\{\beta_{k_{m+1},k_{m}}\left[  p\left(  z_{k_{m+1}%
		}\right)  -p\left(  z_{k_{m}}\right)  \right]  \sum_{l=m}^{M-1}\mathbb{P}%
		\left(  Z_{i}=z_{k_{l+1}}|Z_{i}\in\mathcal{Z}_M\right)  \tilde{g}\left(
		z_{k_{l+1}}\right)\right\}  ,
	\end{align*}
	where $\tilde{g}\left(  z\right)  =g\left(  z\right)  -E\left[  g\left(
	Z_{i}\right)  |Z_{i}\in\mathcal{Z}_M\right]  $ for all $z$. Similarly, we
	have
	\begin{align*}
		&  \frac{E\left[  g\left(  Z_{i}\right)  D_{i}1\left\{  Z_{i}\in
			\mathcal{Z}_M\right\}  \right]  }{\mathbb{P}\left(  Z_{i}\in\mathcal{Z}%
			_M\right)  }-\frac{E\left[  D_{i}1\left\{  Z_{i}\in\mathcal{Z}_M\right\}
			\right]  }{\mathbb{P}\left(  Z_{i}\in\mathcal{Z}_M\right)  }\frac{E\left[
			g\left(  Z_{i}\right)  1\left\{  Z_{i}\in\mathcal{Z}_M\right\}  \right]
		}{\mathbb{P}\left(  Z_{i}\in\mathcal{Z}_M\right)  }\\
		= &  \sum_{m=1}^{M}\mathbb{P}\left(  Z_{i}=z_{k_{m}}|Z_{i}\in\mathcal{Z}%
		_M\right)  p\left(  z_{k_{m}}\right)  \left\{  g\left(  z_{k_{m}}\right)
		-E\left[  g\left(  Z_{i}\right)  |Z_{i}\in\mathcal{Z}_M\right]  \right\},
	\end{align*}
	which is nonzero by Assumption \ref{ass.first stage partial}.
	Thus, we have $\theta_{1}=\sum_{m=1}^{M-1}\mu_{m}\beta_{k_{m+1},k_{m}}$ with%
	\begin{align*}
		\mu_{m}=
		\frac{\left[  p\left(  z_{k_{m+1}}\right)  -p\left(  z_{k_{m}}\right)
			\right]  \sum_{l=m}^{M-1}\mathbb{P}\left(  Z_{i}=z_{k_{l+1}}|Z_{i}%
			\in\mathcal{Z}_M\right)  \left\{  g\left(  z_{k_{l+1}}\right)  -E\left[
			g\left(  Z_{i}\right)  |Z_{i}\in\mathcal{Z}_M\right]  \right\}  }{\sum
			_{l=1}^{M}\mathbb{P}\left(  Z_{i}=z_{k_{l}}|Z_{i}\in\mathcal{Z}_M\right)
			p\left(  z_{k_{l}}\right)  \left\{  g\left(  z_{k_{l}}\right)  -E\left[
			g\left(  Z_{i}\right)  |Z_{i}\in\mathcal{Z}_M\right]  \right\}  }.
	\end{align*}
	Now we show that $\sum_{m=1}^{M-1}\mu_{m}=1$. First, we have that
	\begin{align*}
		&  \sum_{m=1}^{M-1}\left[  p\left(  z_{k_{m+1}}\right)  -p\left(  z_{k_{m}%
		}\right)  \right]  \sum_{l=m}^{M-1}\mathbb{P}\left(  Z_{i}=z_{k_{l+1}}%
		|Z_{i}\in\mathcal{Z}_M\right)  \left\{  g\left(  z_{k_{l+1}}\right)
		-E\left[  g\left(  Z_{i}\right)  |Z_{i}\in\mathcal{Z}_M\right]  \right\}  \\
		= &  \left[  p\left(  z_{k_{2}}\right)  -p\left(  z_{k_{1}}\right)  \right]
		\sum_{l=1}^{M-1}\mathbb{P}\left(  Z_{i}=z_{k_{l+1}}|Z_{i}\in\mathcal{Z}%
		_M\right)  \left\{  g\left(  z_{k_{l+1}}\right)  -E\left[  g\left(
		Z_{i}\right)  |Z_{i}\in\mathcal{Z}_M\right]  \right\}  +\cdots\\
		&  +\left[  p\left(  z_{k_{M}}\right)  -p\left(  z_{k_{M-1}}\right)  \right]
		\mathbb{P}\left(  Z_{i}=z_{k_{M}}|Z_{i}\in\mathcal{Z}_M\right)  \left\{
		g\left(  z_{k_{M}}\right)  -E\left[  g\left(  Z_{i}\right)  |Z_{i}%
		\in\mathcal{Z}_M\right]  \right\}  \\
		= &  \sum_{l=2}^{M}\mathbb{P}\left(  Z_{i}=z_{k_{l}}|Z_{i}\in\mathcal{Z}%
		_M\right)  p\left(  z_{k_{l}}\right)  \left\{  g\left(  z_{k_{l}}\right)
		-E\left[  g\left(  Z_{i}\right)  |Z_{i}\in\mathcal{Z}_M\right]  \right\}  \\
		&  -p\left(  z_{k_{1}}\right)  \sum_{l=2}^{M}\mathbb{P}\left(  Z_{i}=z_{k_{l}%
		}|Z_{i}\in\mathcal{Z}_M\right)  \left\{  g\left(  z_{k_{l}}\right)
		-E\left[  g\left(  Z_{i}\right)  |Z_{i}\in\mathcal{Z}_M\right]  \right\}  \\
		= &  \sum_{l=1}^{M}\mathbb{P}\left(  Z_{i}=z_{k_{l}}|Z_{i}\in\mathcal{Z}%
		_M\right)  p\left(  z_{k_{l}}\right)  \left\{  g\left(  z_{k_{l}}\right)
		-E\left[  g\left(  Z_{i}\right)  |Z_{i}\in\mathcal{Z}_M\right]  \right\}  ,
	\end{align*}
	where we use the equality that $\sum_{l=1}^{M}\mathbb{P}\left(  Z_{i}%
	=z_{k_{l}}|Z_{i}\in\mathcal{Z}_M\right)  \left\{  g\left(  z_{k_{l}}\right)
	-E\left[  g\left(  Z_{i}\right)  |Z_{i}\in\mathcal{Z}_M\right]  \right\}
	=0$. This implies that $\sum_{m=1}^{M-1}\mu_{m}=1$.	
\end{proof}

\subsection{Varying Underlying Distributions}
In the main text and Appendix \ref{sec.multivalued treatment}, we consider a fixed underlying distribution for the data. In this section, we extend the results to varying underlying distributions. 

\begin{assumption}\label{ass.iid data varying DGPs}
For each $n$, $\{(Y_i,D_i,Z_i)\}_{i=1}^{n}$ is an i.i.d.\ sample distributed according to some probability measure $P_n\in\mathcal{P}$ such that all relevant moments exist.
\end{assumption}

Assumption \ref{ass.iid data varying DGPs} allows the underlying distribution $P_n$ to change as $n$ increases. The following assumption provides a limit for the sequence of the probability measures $\{P_n\}$.

\begin{assumption}\label{ass.probability path}
	There is a probability measure $P\in\mathcal{P}$ such that 
	\begin{equation}\label{eq.probaility path}
	\lim_{n\rightarrow\infty}\int\left(  \sqrt{n}\left\{  \mathrm{d}P_{n}^{1/2}-\mathrm{d}P^{1/2}\right\}  -\frac{1}{2}v_0\mathrm{d}P^{1/2}\right)  ^{2}=0
	\end{equation}
	for some measurable function $v_0$, where $\mathrm{d}P_{n}^{1/2}$ and $\mathrm{d}P^{1/2}$ denote the square roots of the densities of $P_{n}$ and $P$, respectively.
\end{assumption}

Assumption \ref{ass.probability path} requires that the sequence of probability distributions $\{P_n\}$ converge to $P$, following the setup in \citet[p.~406]{van1996weak}. It corresponds to (29) in \citet{fang2014inference} and Assumption 3.2 in \citet{sun2021ivvalidity}. This assumption allows the data generating process to change with the sample size. It trivially holds if the data generating process does not change with $n$ so that $P_n=P$ for all $n$, as in the main text. 
For simplicity of notation, when there is no confusion, $E$ denotes the expectation under $P_n$ for every $n$.

For every $\mathcal{Z}_{(k,k^{\prime})}$, define a function space $$\mathcal{H}_{\mathcal{Z}_{(k,k^{\prime})}}%
=\{h_{1(k,k^{\prime})},h_{2(k,k^{\prime})},h_{3(k,k^{\prime})}%
,h_{4(k,k^{\prime})},h_{5(k,k^{\prime})},h_{6(k,k^{\prime})}\}$$ such that
\begin{align*}
h_{1(k,k^{\prime})}\left(  y,d,z\right)   &  =g\left(  z\right)  y1\left\{z\in\mathcal{Z}_{(k,k^{\prime})}\right\}  ,\\
h_{2(k,k^{\prime})}\left(  y,d,z\right)   &  =y1\left\{  z\in\mathcal{Z}%
_{(k,k^{\prime})}\right\}  ,\\
h_{3(k,k^{\prime})}\left(  y,d,z\right)   &  =g\left(  z\right)  1\left\{
z\in\mathcal{Z}_{(k,k^{\prime})}\right\}  ,\\
h_{4(k,k^{\prime})}\left(  y,d,z\right)   &  =g\left(  z\right)  d1\left\{
z\in\mathcal{Z}_{(k,k^{\prime})}\right\}  ,\\
h_{5(k,k^{\prime})}\left(  y,d,z\right)   &  =d1\left\{  z\in\mathcal{Z}%
_{(k,k^{\prime})}\right\}  ,\text{ and}\\
h_{6(k,k^{\prime})}\left(  y,d,z\right)   &  =1\left\{  z\in\mathcal{Z}%
_{(k,k^{\prime})}\right\}  .
\end{align*}

\begin{assumption}\label{ass.first stage varying DGPs}
	For every $\mathcal{Z}_{(k,k')}\in\mathscr{Z}_{\bar{M}}$, 
	\begin{align}\label{eq.first stage varing DGP}
     P(h_{4(k,k')})/P(h_{6(k,k')})-P(h_{5(k,k')})/P(h_{6(k,k')})\cdot P(h_{3(k,k')})/P(h_{6(k,k')})\neq0.
	\end{align}
\end{assumption}

Assumption \ref{ass.first stage varying DGPs} imposes a first-stage condition under $P$ for every $\mathcal{Z}_{(k,k')}\in\mathscr{Z}_{\bar{M}}$. The following theorem is an extension of Theorem \ref{thm.IV estimator asymptotics pairwise} with a convergent sequence of probability distributions $\{P_n\}$. Note that under Assumption \ref{ass.probability path}, for every $(z_k,z_{k'})$, $\beta_{(  k,k^{\prime})  }^{1}$ defined in \eqref{eq.VSIV true beta} and $\beta_{k^{\prime},k}$ defined in \eqref{eq.beta}
could be different for every $n$ since the expectations under $P_n$ in $\beta_{(  k,k^{\prime})  }^{1}$ and $\beta_{k^{\prime},k}$ could be different for every $n$. Let $\widehat\beta_1$ and $\beta_1$ be defined as in Appendix \ref{sec.ordered treatment}.

\begin{theorem}\label{thm.IV estimator asymptotics pairwise varying DGPs}
	Suppose that the instrument $Z$ is pairwise valid for the treatment $D$ as defined in Definition \ref{def.partial validity pairwise} with the largest validity pair set $\mathscr{Z}_{\bar{M}}=\{(z_{k_1},z_{k_1^{\prime}}),\ldots,(z_{k_{\bar{M}}},z_{k_{\bar{M}}^{\prime}})\}$ for every $n$, and that the estimator $\widehat{\mathscr{Z}_0}$ satisfies $\mathbb{P}(\widehat{\mathscr{Z}_0}=\mathscr{Z}_{\bar{M}})\to 1$. Under Assumptions \ref{ass.iid data varying DGPs}--\ref{ass.first stage varying DGPs}, $\sqrt{n}( \widehat{\beta}_{1}-\beta_1 ) \overset{d}\to N\left( 0,\Sigma \right) $, where 
	$\Sigma$ is defined in \eqref{eq.weak convergence pairwise2 varying DGPs}. In addition, when $n$ is sufficiently large, $\beta_{(  k,k^{\prime})  }^{1}=\beta_{k^{\prime},k}$ for every $(z_k,z_{k^{\prime}})\in\mathscr{Z}_{\bar{M}}$.
\end{theorem}

\begin{proof}[Proof of Theorem \ref{thm.IV estimator asymptotics pairwise varying DGPs}]
This proof modifies that of Theorem \ref{thm.IV estimator asymptotics pairwise} under the convergent sequence of probability distributions $\{P_n\}$.

Let $\mathcal{H}_{\mathcal{Z}}=\cup_{\mathcal{Z}_{(k,k^{\prime})}}\mathcal{H}_{\mathcal{Z}_{(k,k^{\prime})}}$. Clearly, since $\mathcal{H}_{\mathcal{Z}}$ is a finite set,
it is a Donsker class. Then we define a map $\eta_{(k,k^{\prime})}%
:\ell^{\infty}(  \mathcal{H}_{\mathcal{Z}_{(k,k^{\prime})}})
\rightarrow\mathbb{R}^{6}$ such that
\[
\eta_{(k,k^{\prime})}\left(  \psi\right)  =\left(  \psi\left(
h_{1(k,k^{\prime})}\right)  ,\psi\left(  h_{2(k,k^{\prime})}\right)
,\psi\left(  h_{3(k,k^{\prime})}\right)  ,\psi\left(  h_{4(k,k^{\prime}%
)}\right)  ,\psi\left(  h_{5(k,k^{\prime})}\right)  ,\psi\left(
h_{6(k,k^{\prime})}\right)  \right)  ^{T}.
\]
Then define another map $\eta:\ell^{\infty}\left(  \mathcal{H}_{\mathcal{Z}}\right)
\rightarrow\mathbb{R}^{6K\left(  K-1\right)  }$ by
\[
\eta\left(  \psi\right)  =\left(  \eta_{(1,2)}\left(  \psi\right)  ^{T}%
,\ldots,\eta_{(1,K)}\left(  \psi\right)  ^{T},\ldots,\eta_{(K,1)}\left(
\psi\right)  ^{T},\ldots,\eta_{(K,K-1)}\left(  \psi\right)  ^{T}\right)  ^{T}.
\]

 For every $\mathcal{Z}_{(  k,k^{\prime})  }\in\mathscr{Z}$, we define
	\[
	W_{i}\left(  \mathcal{Z}_{(  k,k^{\prime})  }\right)  =\left(
	\begin{array}
		[c]{c}%
		g\left(  Z_{i}\right)  Y_{i}1\left\{  Z_{i}\in\mathcal{Z}_{\left(
			k,k^{\prime}\right)  }\right\}  \\
		Y_{i}1\left\{  Z_{i}\in\mathcal{Z}_{\left(
			k,k^{\prime}\right)  }\right\}  \\
		g\left(  Z_{i}\right)  1\left\{  Z_{i}\in\mathcal{Z}_{\left(
			k,k^{\prime}\right)  }\right\}  \\
		g\left(  Z_{i}\right)  D_{i}1\left\{  Z_{i}\in\mathcal{Z}_{\left(
			k,k^{\prime}\right)  }\right\}  \\
		D_{i}1\left\{  Z_{i}\in\mathcal{Z}_{\left(
			k,k^{\prime}\right)  }\right\}  \\
		1\left\{  Z_{i}\in\mathcal{Z}_{\left(
			k,k^{\prime}\right)  }\right\}
	\end{array}
	\right)  ,
	\]%
	\[
	\widehat{W}_{n}\left(  \mathcal{Z}_{(  k,k^{\prime})  }\right)
	=\,\frac{1}{n}\sum_{i=1}^{n}W_{i}\left(  \mathcal{Z}_{(  k,k^{\prime})  }\right)
	, \text{ and } W\left(  \mathcal{Z}_{(  k,k^{\prime})
	}\right)  = E[W_{i}\left(  \mathcal{Z}_{(  k,k^{\prime})  }\right)].
	\]
	Also, we let
	\begin{align*}
		&\widehat{W}_{n}   =\left(  \widehat{W}_{n}\left(  \mathcal{Z}_{\left(
			1,2\right)  }\right)^{T}  ,\ldots,\widehat{W}_{n}\left(  \mathcal{Z}_{\left(
			1,K\right)  }\right)^{T}  ,\ldots,\widehat{W}_{n}\left(  \mathcal{Z}_{\left(
			K,1\right)  }\right)^{T}  ,\ldots,\widehat{W}_{n}\left(  \mathcal{Z}_{\left(
			K,K-1\right)  }\right)^{T}  \right)  ^{T}\\
		&\text{and }W  =\left(  W\left(  \mathcal{Z}_{\left(  1,2\right)
		}\right)^{T}  ,\ldots,W\left(  \mathcal{Z}_{\left(  1,K\right)  }\right)^{T}
		,\ldots,W\left(  \mathcal{Z}_{\left(  K,1\right)  }\right)^{T}  ,\ldots
		,W\left(  \mathcal{Z}_{\left(  K,K-1\right)  }\right)^{T}  \right)  ^{T}.
	\end{align*}
	
Let $\widehat{P}_n$ denote the empirical probability measure of $P_n$ for every $n$. By Theorem 3.10.12 of \citet{van1996weak}, we have that $\sqrt{n}(\widehat{P}_n-P_n)\leadsto \mathbb{G}_P$ under $P_n$, where $\mathbb{G}_P$ is a tight Brownian bridge. Theorem 3.10.12 of \citet{van1996weak} also implies $\widehat{P}_n\to P$ in probability.
Since $\eta$ is linear and continuous, by continuous mapping theorem,%
\begin{align}
\sqrt{n}\left(  \widehat{W}_{n}-W\right)  =\sqrt{n}\left(  \eta(
\widehat{P}_n)  -\eta\left(  P_n\right)  \right)  =\eta\left(  \sqrt{n}(
\widehat{P}_n-P_n)  \right)  \leadsto\eta\left(  \mathbb{G}_P\right) =N(0,\Sigma_P),
\end{align}
where
\[
\Sigma_{P}=E\left[  \eta\left(  \mathbb{G}_P\right)  \eta\left(  \mathbb{G}_P
\right)  ^{T}\right]  .
\] 	
Since $P_n\to P$ by Theorem 3.10.12 of \citet{van1996weak}, $\eta(P_n)\to \eta(P)$ and we denote $\eta(P)$ by $W_P$. 
 
	Define a function $f:\mathbb{R}^{6}\rightarrow\bar{\mathbb{R}}$ by
	\[
	f\left(  x\right)  =\frac{x_{1}/x_{6}-x_{2}x_{3}/x_{6}^{2}}{x_{4}/x_{6}%
		-x_{5}x_{3}/x_{6}^{2}}
	\]
	for every $x\in\mathbb{R}^{6}$ with $x=\left(  x_{1},x_{2},x_{3},x_{4}
	,x_{5},x_6\right)  ^{T}$ such that $f(x)$ is well defined.
	We can obtain the gradient
	of $f$, denoted $f^{\prime}$, by $f^{\prime}\left(  x\right)  =\left(
	f_{1}^{\prime}\left(  x\right)  ,f_{2}^{\prime}\left(  x\right)
	,f_{3}^{\prime}\left(  x\right)  ,f_{4}^{\prime}\left(  x\right)
	,f_{5}^{\prime}\left(  x\right)  ,f_{6}^{\prime}\left(  x\right)  \right)
	^{T}$ with
	\begin{align*}
		f_{1}^{\prime}\left(  x\right)   &  =\frac{x_{6}}{x_{4}x_{6}-x_{5}
			x_{3}},f_{2}^{\prime}\left(  x\right)  =\frac{-x_{3}
		}{x_{4}x_{6}-x_{5}x_{3}},f_{3}^{\prime}\left(  x\right)
		=\frac{-x_{2}x_{4}x_{6}+x_{5}x_{1}x_{6}}{\left(  x_{4}x_{6}-x_{5}x_{3}\right)
			^{2}},\\
		f_{4}^{\prime}\left(  x\right)   &  =-\frac{\left(  x_{1}x_{6}-x_{2}%
			x_{3}\right)  x_{6}}{\left(  x_{4}x_{6}-x_{5}x_{3}\right)  ^{2}},f_{5}%
		^{\prime}\left(  x\right)  =\frac{x_{3}\left(  x_{1}x_{6}-x_{2}x_{3}\right)
		}{\left(  x_{4}x_{6}-x_{5}x_{3}\right)  ^{2}},\text{ and }f_{6}^{\prime
		}\left(  x\right)  =\frac{-x_{1}x_{5}x_{3}+x_{2}x_{3}x_{4}}{\left(  x_{4}%
			x_{6}-x_{5}x_{3}\right)  ^{2}}
	\end{align*}
	for every  $x=\left(  x_{1},x_{2},x_{3},x_{4},x_{5},x_6\right)  ^{T}$ such that all the above derivatives are well defined.

For every $\varepsilon>0$ and every $\mathcal{Z}_{(k,k^{\prime})}$, by assumption we have that for every $\rho\geq0$,
\begin{align}\label{eq.set consistency varying DGPs}
\mathbb{P}\left(  n^{\rho}\left\vert 1\left\{  \mathcal{Z}_{(k,k^{\prime}%
)}\in\widehat{\mathscr{Z}_{0}}\right\}  -1\left\{  \mathcal{Z}_{(k,k^{\prime
})}\in\mathscr{Z}_{\bar{M}}\right\}  \right\vert >\varepsilon\right)    \leq\mathbb{P}\left(  \widehat{\mathscr{Z}_{0}}\neq\mathscr{Z}_{\bar{M}%
}\right)  \rightarrow0.
\end{align}
This implies that if $1\{\mathcal{Z}_{(k,k^{\prime})}\in\mathscr{Z}_{\bar{M}%
}\}=0$, then
\begin{align}\label{eq.I consistency varying DGPs}
n^{\rho}1\{  \mathcal{Z}_{(k,k^{\prime})}\in\widehat{\mathscr{Z}_{0}%
}\}  =o_{p}\left(  1\right).
\end{align}

Without loss of generality, we suppose $\mathscr{Z}_{\bar{M}}=\{
\mathcal{Z}_{(1,2)},\mathcal{Z}_{(1,3)},\ldots,\mathcal{Z}_{(K-1,K)}\}
$ and $\mathscr{Z}\setminus\mathscr{Z}_{\bar{M}}=\{
\mathcal{Z}_{(2,1)},\mathcal{Z}_{(3,1)},\ldots,\mathcal{Z}_{(K,K-1)}\}
$ for simplicity. For every 
$\mathcal{Z}_{\left(  k,k'\right)  }\notin\mathscr{Z}_{\bar{M}}$, by
Assumption \ref{ass.first stage varying DGPs}, it is possible that
\begin{align}\label{eq.first stage eq varying DGPs}
P(h_{4(k,k')})/P(h_{6(k,k')})-P(h_{5(k,k')})/P(h_{6(k,k')})\cdot P(h_{3(k,k')})/P(h_{6(k,k')})=0.
\end{align}
For every $w=(  w_{1}^T,\ldots,w_{\left(  K-1\right)K  }^T)^T  $ with
$w_{j}=(  w_{j1},\ldots,w_{j6})^T  $ for every $j $, define
\begin{align*}
&\mathcal{F}_{1}\left(  w\right)  =\left(  f\left(  w_{1}\right)
,\ldots,f\left(  w_{(K-1)K/2  }\right)  \right)  ^{T}\text{ and
}\\
&\mathcal{F}_{0}\left(  w\right)  =\left(  f\left(  w_{K(K-1)/2+1}\right)
,\ldots,f\left(  w_{(K-1)K  }\right)  \right)  ^{T}
.
\end{align*}
For every
$\mathscr{Z}_{s}\subseteq\mathscr{Z}$, define
\begin{align*}
\mathcal{I}_{1}\left(  \mathscr{Z}_{s}\right)   =  \left(
\begin{array}
[c]{cccc}%
1\left\{  \mathcal{Z}_{(1,2)}\in\mathscr{Z}_{s}\right\}   &  &  & \\
& 1\left\{  \mathcal{Z}_{(1,3)}\in\mathscr{Z}_{s}\right\}   &  & \\
&  & \ddots & \\
&  &  & 1\left\{  \mathcal{Z}_{(K-1,K)}\in\mathscr{Z}_{s}\right\}
\end{array}
\right) 
\end{align*}
and 
\begin{align*}
\mathcal{I}_{0}\left(  \mathscr{Z}_{s}\right)   =  \left(
\begin{array}
[c]{cccc}%
1\left\{  \mathcal{Z}_{(2,1)}\in\mathscr{Z}_{s}\right\}   &  &  & \\
& 1\left\{  \mathcal{Z}_{(3,1)}\in\mathscr{Z}_{s}\right\}   &  & \\
&  & \ddots & \\
&  &  & 1\left\{  \mathcal{Z}_{(K,K-1)}\in\mathscr{Z}_{s}\right\}
\end{array}
\right)  .
\end{align*}
Then we can write
\[
\sqrt{n}\left(  \widehat{\beta}_{1}-\beta_{1}\right)  =\sqrt{n}\left\{
\left(\begin{array}
[c]{c}%
\mathcal{I}_{1}\left(  \widehat{\mathscr{Z}_{0}}\right)  \mathcal{F}%
_{1}\left(  \widehat{W}_{n}\right)  \\
\mathcal{I}_{0}\left(  \widehat{\mathscr{Z}_{0}}\right)  \mathcal{F}%
_{0}\left(  \widehat{W}_{n}\right)
\end{array}\right)
-%
\left(\begin{array}
[c]{c}%
\mathcal{I}_{1}\left(  \mathscr{Z}_{\bar{M}}\right)  \mathcal{F}_{1}\left(
W\right)  \\
\mathcal{I}_{0}\left(  \mathscr{Z}_{\bar{M}}\right)  \mathcal{F}_{0}\left(
W\right)
\end{array}\right)
\right\}  .
\]

First, we have that
\begin{align*}
  \sqrt{n}\left\{  \mathcal{I}_{1}\left(  \widehat{\mathscr{Z}_{0}}\right)
\mathcal{F}_{1}\left(  \widehat{W}_{n}\right)  -\mathcal{I}_{1}\left(
\mathscr{Z}_{\bar{M}}\right)  \mathcal{F}_{1}\left(  W\right)  \right\}  
  =&\,\sqrt{n}\left\{  \mathcal{I}_{1}\left(  \widehat{\mathscr{Z}_{0}}\right)
\mathcal{F}_{1}\left(  \widehat{W}_{n}\right)  -\mathcal{I}_{1}\left(
\widehat{\mathscr{Z}_{0}}\right)  \mathcal{F}_{1}\left(  W\right)  \right\}
\\
&  +\sqrt{n}\left\{  \mathcal{I}_{1}\left(  \widehat{\mathscr{Z}_{0}}\right)
\mathcal{F}_{1}\left(  W\right)  -\mathcal{I}_{1}\left(  \mathscr{Z}_{\bar{M}%
}\right)  \mathcal{F}_{1}\left(  W\right)  \right\}  .
\end{align*}
The Jacobian matrix $\mathcal{F}_{1}^{\prime}\left(  W\right)  $ of
$\mathcal{F}_{1}$ at $W$ can be obtained with the derivatives of $f$. Then by \eqref{eq.set consistency varying DGPs} and delta method, it is easy to show that
\begin{align*}
  \sqrt{n}\left\{  \mathcal{I}_{1}\left(  \widehat{\mathscr{Z}_{0}}\right)
\mathcal{F}_{1}\left(  \widehat{W}_{n}\right)  -\mathcal{I}_{1}\left(
\mathscr{Z}_{\bar{M}}\right)  \mathcal{F}_{1}\left(  W\right)  \right\}   &=\mathcal{I}_{1}\left(  \widehat{\mathscr{Z}_{0}}\right)  \sqrt{n}\left\{
\mathcal{F}_{1}\left(  \widehat{W}_{n}\right)  -\mathcal{F}_{1}\left(
W\right)  \right\}  +o_{p}\left(  1\right)  \\
&  \overset{d}{\rightarrow}\mathcal{I}_{1}\left(  \mathscr{Z}_{\bar{M}%
}\right)  \mathcal{F}_{1}^{\prime}\left(  W_P\right)  N\left(  0,\Sigma
_{P}\right)  .
\end{align*}
Second, by assumption and \eqref{eq.0timesinfinity}, 
\begin{align*}
\sqrt{n}\left\{  \mathcal{I}_{0}\left(  \widehat{\mathscr{Z}_{0}}\right)
\mathcal{F}_{0}\left(  \widehat{W}_{n}\right)  -\mathcal{I}_{0}\left(
\mathscr{Z}_{\bar{M}}\right)  \mathcal{F}_{0}\left(  W\right)  \right\}  
=\sqrt{n}\mathcal{I}_{0}\left(  \widehat{\mathscr{Z}_{0}}\right)  \mathcal{F}_{0}\left(\widehat{W}_{n}  \right)  .
\end{align*}
For every $\mathcal{Z}_{(k,k')}\notin\mathscr{Z}_{\bar{M}}$ such that \eqref{eq.first stage eq varying DGPs} holds, 
\begin{align*}
\sqrt{n}1\{\mathcal{Z}_{(k,k')}\in\widehat{\mathscr{Z}_0}\}f(\widehat{W}_n(\mathcal{Z}_{(k,k')}))= {n}1\{\mathcal{Z}_{(k,k')}\in\widehat{\mathscr{Z}_0}\} \frac{A_{n}}{\sqrt{n}B_{n}},
\end{align*}
where%
\begin{align*}
A_{n} =&\,\frac{1}{n}\sum_{i=1}^{n}g\left(  Z_{i}\right)  Y_{i}1\left\{
Z_{i}\in\mathcal{Z}_{\left(  k,k'\right)  }\right\}  \frac{1}{n}\sum
_{i=1}^{n}1\left\{  Z_{i}\in\mathcal{Z}_{\left(  k,k'\right)  }\right\}  \\
&  -\frac{1}{n}\sum_{i=1}^{n}g\left(  Z_{i}\right)  1\left\{  Z_{i}%
\in\mathcal{Z}_{\left(  k,k'\right)  }\right\}  \frac{1}{n}\sum_{i=1}%
^{n}Y_{i}1\left\{  Z_{i}\in\mathcal{Z}_{\left(  k,k'\right)  }\right\} 
\end{align*}
and%
\begin{align*}
B_{n}  =&\,\frac{1}{n}\sum_{i=1}^{n}g\left(  Z_{i}\right)  D_{i}1\left\{
Z_{i}\in\mathcal{Z}_{\left(  k,k'\right)  }\right\}  \frac{1}{n}\sum
_{i=1}^{n}1\left\{  Z_{i}\in\mathcal{Z}_{\left(  k,k'\right)  }\right\}  \\
&  -\frac{1}{n}\sum_{i=1}^{n}g\left(  Z_{i}\right)  1\left\{  Z_{i}%
\in\mathcal{Z}_{\left(  k,k'\right)  }\right\}  \frac{1}{n}\sum_{i=1}%
^{n}D_{i}1\left\{  Z_{i}\in\mathcal{Z}_{\left(  k,k'\right)  }\right\}  .
\end{align*}
Define a map $h$ such that for every $x\in\mathbb{R}^6$ with $x=\left(  x_{1},\ldots,x_{6}\right)^T  $,
\[
h\left(  x\right)  =x_{4}x_{6}-x_{3}x_{5}.
\]
Let $W_P(  \mathcal{Z}_{(  k,k')  })$ denote $\eta_{(k,k')}(P)$ and $h^{\prime}(  W_P(  \mathcal{Z}_{(  k,k')  }))  $ be the Jacobian matrix of $h$ at $W_P(  \mathcal{Z}_{\left(k,k'\right)  })  $. Then by delta method,
\begin{align*}
\sqrt{n}B_{n} &  =\sqrt{n}\left(  h\left(  \widehat{W}_{n}\left(
\mathcal{Z}_{\left(  k,k'\right)  }\right)  \right)  -h\left(  W_P\left(
\mathcal{Z}_{\left(  k,k'\right)  }\right)  \right)  \right)\\
&\overset{d}{\rightarrow}h^{\prime}\left(  W_P\left(  \mathcal{Z}_{\left(k,k'\right)  }\right)  \right)  N\left(  0,\Sigma_{\left(  k,k'\right)
}\right)  ,
\end{align*}
where by Theorem 3.10.12 of \citet{van1996weak}, $$\Sigma_{\left(  k,k'\right)  }=E\left[  \eta_{(k,k')}\left(  \tilde{\mathbb{G}}_P\right)  \eta_{(k,k')}\left(  \tilde{\mathbb{G}}_P
\right)  ^{T}\right] $$
and $\tilde{\mathbb{G}}_P$ is some random element such that $\tilde{\mathbb{G}}_P(u)=\mathbb{G}_P(u)+P(uv_0)$ for every measurable $u$.
Also, it is easy to show that
\begin{align*}
A_{n}\overset{p}{\rightarrow}
P(h_{1(k,k')})P(h_{6(k,k')})-P(h_{2(k,k')})P(h_{3(k,k')}).
\end{align*}
Note that by \eqref{eq.I consistency varying DGPs}, $n\mathcal{I}_{0}(\widehat{\mathscr{Z}_{0}})=o_{p}\left(
1\right)  $. Thus, $\sqrt{n}1\{\mathcal{Z}_{(k,k')}\in\widehat{\mathscr{Z}_0}\}f(\widehat{W}_n(\mathcal{Z}_{(k,k')}))\overset{p}{\rightarrow} 0$.
Similarly, for every $\mathcal{Z}_{(k,k')}\notin\mathscr{Z}_{\bar{M}}$ such that \eqref{eq.first stage eq varying DGPs} does not hold, it is easy to show that
\begin{align*}
\sqrt{n}1\{\mathcal{Z}_{(k,k')}\in\widehat{\mathscr{Z}_0}\}f(\widehat{W}_n(\mathcal{Z}_{(k,k')}))= \sqrt{n}1\{\mathcal{Z}_{(k,k')}\in\widehat{\mathscr{Z}_0}\} \frac{A_{n}}{B_{n}}\overset{p}{\rightarrow} 0.
\end{align*}
This implies that
\[
\sqrt{n}\left\{  \mathcal{I}_{0}\left(  \widehat{\mathscr{Z}_{0}}\right)
\mathcal{F}_{0}\left(  \widehat{W}_{n}\right)  -\mathcal{I}_{0}\left(
\mathscr{Z}_{\bar{M}}\right)  \mathcal{F}_{0}\left(  W\right)  \right\}
\overset{p}{\rightarrow}0.
\]
By Lemma 1.10.2(iii) and Example 1.4.7 (Slutsky's lemma) of \citet{van1996weak},
\begin{align}\label{eq.weak convergence pairwise2 varying DGPs}
\sqrt{n}\left(  \widehat{\beta}_{1}-\beta_{1}\right) & =\sqrt{n}\left\{
\left(\begin{array}
[c]{c}%
\mathcal{I}_{1}\left(  \widehat{\mathscr{Z}_{0}}\right)  \mathcal{F}%
_{1}\left(  \widehat{W}_{n}\right)  \\
\mathcal{I}_{0}\left(  \widehat{\mathscr{Z}_{0}}\right)  \mathcal{F}%
_{0}\left(  \widehat{W}_{n}\right)
\end{array}\right)
-%
\left(\begin{array}
[c]{c}%
\mathcal{I}_{1}\left(  \mathscr{Z}_{\bar{M}}\right)  \mathcal{F}_{1}\left(
W\right)  \\
\mathcal{I}_{0}\left(  \mathscr{Z}_{\bar{M}}\right)  \mathcal{F}_{0}\left(
W\right)
\end{array}\right)
\right\}  \notag\\
&  \overset{d}{\rightarrow}
\left(\begin{array}
[c]{c}%
\mathcal{I}_{1}\left(  \mathscr{Z}_{\bar{M}}\right)  \mathcal{F}_{1}^{\prime
}\left(  W_P\right)  N\left(  0,\Sigma_{P}\right)  \\
0
\end{array}\right).
\end{align}

Since $P_n\to P$ by Theorem 3.10.12 of \citet{van1996weak}, Assumption \ref{ass.first stage varying DGPs} gives that for every $\mathcal{Z}_{(k,k')}\in\mathscr{Z}_{\bar{M}}$, for sufficiently large $n$,
\begin{align*}
	&  \frac{E\left[  g\left(  Z_{i}\right)  D_{i}1\left\{  Z_{i}\in
		\mathcal{Z}_{(k,k^{\prime})}\right\}  \right]  }{\mathbb{P}\left(  Z_{i}%
		\in\mathcal{Z}_{(k,k^{\prime})}\right)  }-\frac{E\left[  D_{i}1\left\{
		Z_{i}\in\mathcal{Z}_{(k,k^{\prime})}\right\}  \right]  }{\mathbb{P}\left(
		Z_{i}\in\mathcal{Z}_{(k,k^{\prime})}\right)  }\frac{E\left[  g\left(
		Z_{i}\right)  1\left\{  Z_{i}\in\mathcal{Z}_{(k,k^{\prime})}\right\}  \right]
	}{\mathbb{P}\left(  Z_{i}\in\mathcal{Z}_{(k,k^{\prime})}\right)  }\neq0.
	\end{align*}
Then we can follow the proof of Theorem \ref{thm.IV estimator asymptotics pairwise} to show that when $n$ is sufficiently large, $\beta_{(  k,k^{\prime})  }^{1}=\beta_{k^{\prime},k}$ for every $(z_k,z_{k^{\prime}})\in\mathscr{Z}_{\bar{M}}$.
\end{proof}

Appendix \ref{sec.estimation Z_0 ordered} provides the consistent estimation for $\mathscr{Z}_0$ under a fixed underlying distribution. It is straightforward to extend the results to varying underlying distributions $\{P_n\}$ under Assumption \ref{ass.probability path} by applying Theorem 3.10.12 of \citet{van1996weak}. Similarly, we can also extend the results to multivalued unordered treatments.
We omit the proofs of these results.

\subsection{VSIV Estimation vs. Pretest}
\label{app:pretest}
As an alternative to VSIV estimation, we could first test the IV validity assumptions for every pair of values of $Z$ using the methods of \citet{huber2015testing}, \citet{kitagawa2015test}, \citet{mourifie2016testing}, and \citet{sun2021ivvalidity}. Only if a pair passes the test, we then proceed and estimate the corresponding LATE.

We first consider this pretest procedure in a simple case where the instrument $Z$ is binary with $\mathcal{Z}=\{z_1,z_2\}$. 
Let $\Phi_n$ denote the test function of a test for IV validity, such that $\Phi_n=1$ indicates rejection and $\Phi_n=0$ indicates non-rejection. 
Let $\widehat{\gamma}$ be the estimator for LATE $\gamma$ proposed by \citet{imbens1994identification}. Then we have that for all $a\in\mathbb{R}$,
\begin{align*}
    \mathbb{P}(\sqrt{n}(\widehat{\gamma}-\gamma)\le a|\Phi_n=0)=\frac{\mathbb{P}(\Phi_n=0, \sqrt{n}(\widehat{\gamma}-\gamma)\le a)}{\mathbb{P}(\Phi_n=0)}.
\end{align*}
Suppose $Z$ is valid. If $\mathbb{P}(\Phi_n=0)\not\to 1$ as in tests with a fixed significance level $\alpha>0$ \citep{huber2015testing,kitagawa2015test,mourifie2016testing,sun2021ivvalidity}, then the limit of the conditional probability $\mathbb{P}(\sqrt{n}(\widehat{\gamma}-\gamma)\le a|\Phi_n=0)$ may be different from that of $\mathbb{P}(\sqrt{n}(\widehat{\gamma}-\gamma)\le a)$. This implies that after the pretest we would need to consider the distribution of $\sqrt{n}(\widehat{\gamma}-\gamma)$ conditional on $\Phi_n=0$, which is not tractable given the non-standard nature of the tests for IV validity. If we use the unconditional limiting distribution of $\sqrt{n}(\widehat{\gamma}-\gamma)$ instead to do the inference, the result could be misleading. 

The cases where $Z$ is multivalued are analogous but more complicated. Suppose now $\mathcal{Z}=\{z_1,z_2,z_3\}$. 
Let $\Phi_n^{(z,z')}$ denote the test function for IV validity for pair $(z,z')$, and $\widehat{\gamma}_{(z,z')}$ be the estimator for LATE ${\gamma}_{(z,z')}$ for pair $(z,z')$. Suppose in this case $\mathscr{Z}_{\bar{M}}=\{(z_1,z_2), (z_1,z_3)\}$. Let $A_n=\{\Phi_n^{(z_1,z_2)}=0,\Phi_n^{(z_1,z_3)}=0,\Phi_n^{(z_2,z_3)}=1\}$. Then the limit of the conditional probability
\begin{align}\label{eq.gamma conditional distribution}
    &\mathbb{P}(\sqrt{n}(\widehat{\gamma}_{(z_1,z_2)}-{\gamma}_{(z_1,z_2)})\le a_1,\sqrt{n}(\widehat{\gamma}_{(z_1,z_3)}-{\gamma}_{(z_1,z_3)})\le a_2|A_n)\notag\\
    =&\,{\mathbb{P}(\sqrt{n}(\widehat{\gamma}_{(z_1,z_2)}-{\gamma}_{(z_1,z_2)})\le a_1,\sqrt{n}(\widehat{\gamma}_{(z_1,z_3)}-{\gamma}_{(z_1,z_3)})\le a_2,A_n)}/{\mathbb{P}(A_n)}
\end{align}
could be different from that of $\mathbb{P}(\sqrt{n}(\widehat{\gamma}_{(z_1,z_2)}-{\gamma}_{(z_1,z_2)})\le a_1,\sqrt{n}(\widehat{\gamma}_{(z_1,z_3)}-{\gamma}_{(z_1,z_3)})\le a_2)$ for some $a_1,a_2\in\mathbb{R}$ if $\mathbb{P}(\Phi_n^{(z_1,z_2)}=0,\Phi_n^{(z_1,z_3)}=0,\Phi_n^{(z_2,z_3)}=1)\not\to1$.

In our framework, the VSIV estimator for every pair $(z_k,z_{k'})$ could also be constructed with
$$1\{(z_k,z_{k'})\in\widehat{\mathscr{Z}_0}\}=1-\Phi_n^{(z_k,z_{k'})},$$
which means if the pair $(z_k,z_{k'})$ is not rejected ($\Phi_n^{(z_k,z_{k'})}=0$), then we include $(z_k,z_{k'})$ in $\widehat{\mathscr{Z}_0}$. The VSIV estimator can then be written as 
$\widehat{\beta}^1_{(k,k')}=(1-\Phi_n^{(z_k,z_{k'})})\cdot\widehat{\gamma}_{(z_k,z_{k'})}$,
where $\widehat{\gamma}_{(z_k,z_{k'})}$ denotes the traditional LATE estimator.
This estimator is similar to that of \citet{leeb2005model} in the model selection context. That is, we select the valid pairs of values of $Z$ and construct the estimator based on pretest selection.  
Under this setting, if $(z_k,z_{k'})$ is valid, we may have $\mathbb{P}((z_k,z_{k'})\in\widehat{\mathscr{Z}_0})=\mathbb{P}(\Phi_n^{(z_k,z_{k'})}=0)\not\to1$, which corresponds to the conservative model selection discussed in \citet{leeb2005model}. It follows that the weak convergence result in Theorem \ref{thm.IV estimator asymptotics pairwise} fails.

The above discussion assumes that $\alpha$ is fixed. If we allow $\alpha$ to depend on $n$ and let $\alpha\to0$ at some particular rate, we may have that in \eqref{eq.gamma conditional distribution},
\begin{align*}
    \lim_{n\to\infty}\mathbb{P}(\Phi_n^{(z_1,z_2)}=0,\Phi_n^{(z_1,z_3)}=0,\Phi_n^{(z_2,z_3)}=1)= 1
\end{align*}
and therefore
\begin{align*}
    &\lim_{n\to\infty}\mathbb{P}(\sqrt{n}(\widehat{\gamma}_{(z_1,z_2)}-{\gamma}_{(z_1,z_2)})\le a_1,\sqrt{n}(\widehat{\gamma}_{(z_1,z_3)}-{\gamma}_{(z_1,z_3)})\le a_2|A_n)\notag\\
    =&\lim_{n\to\infty}{\mathbb{P}(\sqrt{n}(\widehat{\gamma}_{(z_1,z_2)}-{\gamma}_{(z_1,z_2)})\le a_1,\sqrt{n}(\widehat{\gamma}_{(z_1,z_3)}-{\gamma}_{(z_1,z_3)})\le a_2)}.
\end{align*}
In this case, it might be possible to derive a result as in Theorem \ref{thm.IV estimator asymptotics pairwise} with $\widehat{\beta}^1_{(k,k')}=(1-\Phi_n^{(z_k,z_{k'})})\cdot\widehat{\gamma}_{(z_k,z_{k'})}$. The significance level $\alpha$ would be the tuning parameter that needs to be determined in practice. We leave the derivation of such a result for future research.

\section{Proofs and Supplementary Results for Appendix \ref{sec.unordered treatment}}\label{sec.proofs unordered treatment}

\subsection{Proofs for Appendix \ref{sec.unordered treatment}}

\begin{proof}[Proof of Lemma \ref{lemma.equivalent characterizations unordered pairwise}]
	(i) $\Leftrightarrow$ (ii). We closely follow the proof for ``(i)
	$\Leftrightarrow$ (ii)'' in Theorem T-3 of \citet{heckman2018unordered}. By Lemma L-5
	of \citet{heckman2018unordered}, if ${B}_{d(k,k')} $ is lonesum, then
	no $2\times2$ sub-matrix of ${B}_{d(k,k')}$ takes the form
	\begin{align}\label{eq.binary form for lonesum}
		\left(
		\begin{array}
			[c]{cc}%
			1 & 0\\
			0 & 1
		\end{array}
		\right)  \text{ or }\left(
		\begin{array}
			[c]{cc}%
			0 & 1\\
			1 & 0
		\end{array}
		\right)  .
	\end{align}
	Since ${B}_{d(k,k')}=1\{  \mathcal{K}_{(k,k')}R=d\}  $, (i) $\Rightarrow$
	(ii). Suppose (ii) holds. Then no $2\times2$ sub-matrix of ${B}_{d(k,k')}$
	takes the form in \eqref{eq.binary form for lonesum} by the definition of ${B}_{d(k,k')}$. By Lemmas L-6 and L-8 of \citet{heckman2018unordered}, (i) holds.
	
	(i) $\Rightarrow$ (iii) $\Rightarrow$ (ii). If for every $d\in\mathcal{D}$,
	${B}_{d(k,k')}$ is lonesum, by Lemma L-9 of \citet{heckman2018unordered}, 
	\[
	{B}_{d(k,k')}\left(  1,l\right)  \leq {B}_{d(k,k')}\left(  2,l\right)
	\text{ for all }l,\text{ or } {B}_{d(k,k')}\left(  1,l\right)  \geq
	{B}_{d(k,k')}\left(  2,l\right)  \text{ for all }l\text{.}
	\]
	Because the value of
	$(  D_{z_k}  ,D_{z_{k'}}  )  $
	must be equal to $(  \mathcal{K}_{(k,k')}R\left(  1,l\right)  ,\mathcal{K}_{(k,k')}R\left(
	2,l\right)  )  $ for some $l$, it follows that
	\begin{align*}
		1\left\{  D_{z_k}  =d\right\}     \leq1\left\{
		D_{z_{k'}}  =d\right\}  
		\text{ or }1\left\{  D_{z_k}  =d\right\}   
		\geq1\left\{  D_{z_{k^{\prime}}}  =d\right\} .
	\end{align*}
	Thus the following sub-matrices will not appear in $\mathcal{K}_{(k,k')}R$:
	\[
	\left(
	\begin{array}
		[c]{cc}%
		d & d^{\prime}\\
		d^{\prime\prime} & d
	\end{array}
	\right)  \text{ or }\left(
	\begin{array}
		[c]{cc}%
		d^{\prime} & d\\
		d & d^{\prime\prime}%
	\end{array}
	\right),
	\]
	where $d'\neq d$ and $d''\neq d$.
\end{proof}

\begin{proof}[Proof of Theorem \ref{theorem.counterfactuals identified pairwise}]
	The proof follows a strategy similar to that of the proof of Theorem T-6 in
	\citet{heckman2018unordered}. We first write
	\begin{align}\label{eq.PKS}
	\mathbb{P}\left(  \mathcal{M}_{(k,k^{\prime})}S\in\Sigma_{d(k,k^{\prime}%
		)}\left(  t\right)  \right)  =b_{d(k,k^{\prime})}\left(  t\right)
	P_{S(k,k^{\prime})}.
	\end{align}
	Also, since%
	\begin{align*}
		&  E\left[  \kappa\left(  Y_d(z_k,z_{k'})  \right)  1\left\{  \mathcal{M}%
		_{(k,k^{\prime})}S\in\Sigma_{d(k,k^{\prime})}\left(  t\right)  \right\}
		\right]  \\
		=&\,E\left[  E\left[  \kappa\left(  Y_d(z_k,z_{k'})  \right)  1\left\{
		\mathcal{M}_{(k,k^{\prime})}S\in\Sigma_{d(k,k^{\prime})}\left(  t\right)
		\right\}  |1\left\{  \mathcal{M}_{(k,k^{\prime})}S\in\Sigma_{d(k,k^{\prime}%
			)}\left(  t\right)  \right\}  \right]  \right]  \\
		=&\,E\left[  \kappa\left(  Y_d(z_k,z_{k'})  \right)  |\mathcal{M}%
		_{(k,k^{\prime})}S\in\Sigma_{d(k,k^{\prime})}\left(  t\right)  \right]
		\cdot\mathbb{P}\left(  \mathcal{M}_{(k,k^{\prime})}S\in\Sigma_{d(k,k^{\prime
			})}\left(  t\right)  \right)
	\end{align*}
	and%
	\begin{align*}
		&  E\left[  \kappa\left(  Y_d(z_k,z_{k'})  \right)  1\left\{  \mathcal{M}%
		_{(k,k^{\prime})}S\in\Sigma_{d(k,k^{\prime})}\left(  t\right)  \right\}
		\right]  \\
		=&\,E\left[  \kappa\left(  Y_d(z_k,z_{k'})  \right)  \sum_{l=1}^{L_{(k,k')}}1\left\{
		\mathcal{M}_{(k,k^{\prime})}S=s_{l}\right\}  1\left\{  s_{l}\in\Sigma
		_{d(k,k^{\prime})}\left(  t\right)  \right\}  \right]  =b_{d(k,k^{\prime})}\left(  t\right)  Q_{S(k,k^{\prime})}\left(  d\right),
	\end{align*}
	we have that
	\begin{align}\label{eq.EY}
	E\left[  \kappa\left(  Y_d(z_k,z_{k'})  \right)  |\mathcal{M}_{(k,k^{\prime
		})}S\in\Sigma_{d(k,k^{\prime})}\left(  t\right)  \right]  =\frac
	{b_{d(k,k^{\prime})}\left(  t\right)  Q_{S(k,k^{\prime})}\left(  d\right)
	}{b_{d(k,k^{\prime})}\left(  t\right)  P_{S(k,k^{\prime})}}.
	\end{align}
	Now we suppose $(z_k,z_{k'})\in\mathscr{Z}_{\bar{M}}$. By definition,  $P_{Z(k,k^{\prime})}\left(  d\right)  =B_{d(k,k^{\prime
		})}P_{S(k,k^{\prime})}$ and $Q_{Z(k,k^{\prime})}\left(  d\right)
	=B_{d(k,k^{\prime})}Q_{S(k,k^{\prime})}\left(  d\right)  $, so by Lemma L-2 of
	\citet{heckman2018unordered},
	\begin{align*}
		&  b_{d(k,k^{\prime})}\left(  t\right)  P_{S(k,k^{\prime})}=b_{d(k,k^{\prime
			})}\left(  t\right)  \left[  B_{d(k,k^{\prime})}^{+}P_{Z(k,k^{\prime})}\left(
		d\right)  +\left(  I-B_{d(k,k^{\prime})}^{+}B_{d(k,k^{\prime})}\right)
		\lambda_{P}\right]  \text{ and }\\
		&  b_{d(k,k^{\prime})}\left(  t\right)  Q_{S(k,k^{\prime})}\left(  d\right)
		=b_{d(k,k^{\prime})}\left(  t\right)  \left[  B_{d(k,k^{\prime})}%
		^{+}Q_{Z(k,k^{\prime})}\left(  d\right)  +\left(  I-B_{d(k,k^{\prime})}%
		^{+}B_{d(k,k^{\prime})}\right)  \lambda_{Q}\right]  ,
	\end{align*}
	where $\lambda_{P}$ and $\lambda_{Q}$ are some real-valued vectors.
	
	We next show that $b_{d(k,k^{\prime})}\left(  t\right)  [I-B_{d(k,k^{\prime}%
		)}^{+}B_{d(k,k^{\prime})}]=0$. First, by the proof of Lemma L-16 of
	\citet{heckman2018unordered} and Lemma
	\ref{lemma.equivalent characterizations unordered pairwise} in this paper, if
	$B_{d(k,k^{\prime})}\left(  \cdot,l\right)  $ and $B_{d(k,k^{\prime})}\left(
	\cdot,l^{\prime}\right)  $ have the same sum, then these two vectors are
	identical. Thus, by the definition of the set $\Sigma_{d(k,k^{\prime})}\left(t\right)  $, for all $s_{l},s_{l^{\prime}}\in\Sigma_{d(k,k^{\prime}	)}\left(  t\right)  $, $B_{d(k,k^{\prime})}\left(  \cdot,l\right)
	=B_{d(k,k^{\prime})}\left(  \cdot,l^{\prime}\right)  $. Let $C_{d(k,k^{\prime
		})}\left(  t\right)  =B_{d(k,k^{\prime})}\left(  \cdot,l\right)  $ with $l$
	satisfying that $s_{l}\in\Sigma_{d(k,k^{\prime})}\left(  t\right)  $, where
	$s_{l}$ is the $l$th column of $\mathcal{K}_{(k,k^{\prime})}R$. Define by
	$C_{d(k,k^{\prime})}=(C_{d(k,k^{\prime})}(1),C_{d(k,k^{\prime})}(2))$ the
	matrix consisting of all unique nonzero vectors in $B_{d(k,k^{\prime})}$.\footnote{Without loss of generality, we assume that both $C_{d(k,k^{\prime})}(1)$ and $C_{d(k,k^{\prime})}(2)$ exist.}
	Then clearly $C_{d(k,k^{\prime})}$ has full column rank and
	$C_{d(k,k^{\prime})}^{T}C_{d(k,k^{\prime})}$ has full rank. Thus,
	$(C_{d(k,k^{\prime})}^{T}C_{d(k,k^{\prime})})^{-1}$ exists. Let
	$D_{d(k,k^{\prime})}=(b_{d(k,k^{\prime})}\left(  1\right)  ^{T}%
	,b_{d(k,k^{\prime})}(2)^{T})^{T}$. Since by the definition of
	$b_{d(k,k^{\prime})}\left(  t\right)  $, $b_{d(k,k^{\prime})}\left(  t\right)
	\cdot b_{d(k,k^{\prime})}\left(  t^{\prime}\right)  ^{T}=0$ for $t\neq
	t^{\prime}$, $D_{d(k,k^{\prime})}$ has full row rank and $(D_{d(k,k^{\prime}%
		)}D_{d(k,k^{\prime})}^{T})^{-1}$ exists. We then decompose $B_{d(k,k^{\prime
		})}=C_{d(k,k^{\prime})}\cdot D_{d(k,k^{\prime})}$.\footnote{See Remark A.3 of \citet{heckman2018unordered}.}
	
	Now by similar arguments as in the proof of Lemma L-17 of \citet{heckman2018unordered}, we can
	show that the Moore--Penrose pseudo inverse of ${B}_{d(k,k^{\prime})}$ is
	\[
	B_{d(k,k^{\prime})}^{+}=D_{d(k,k^{\prime})}^{T}(D_{d(k,k^{\prime}%
		)}D_{d(k,k^{\prime})}^{T})^{-1}(C_{d(k,k^{\prime})}^{T}C_{d(k,k^{\prime}%
		)})^{-1}C_{d(k,k^{\prime})}^{T}.
	\]
	For $t\in\left\{  1,2\right\}  $, we can write $b_{d(k,k^{\prime})}\left(
	t\right)  =e_{t}D_{d(k,k^{\prime})}$, where $e_{t}$ is a row vector in which
	the $t$th element is $1$ and the other element is $0$. 
 
 Then we have
	that
	\begin{align*}
		&b_{d(k,k^{\prime})}\left(  t\right)  [I-B_{d(k,k^{\prime})}^{+}%
		B_{d(k,k^{\prime})}] =b_{d(k,k^{\prime})}\left(  t\right)
		-b_{d(k,k^{\prime})}\left(  t\right)  B_{d(k,k^{\prime})}^{+}B_{d(k,k^{\prime
			})}\\
		=&\,b_{d(k,k^{\prime})}\left(  t\right)  -e_{t}D_{d(k,k^{\prime}%
			)}D_{d(k,k^{\prime})}^{T}(D_{d(k,k^{\prime})}D_{d(k,k^{\prime})}^{T}%
		)^{-1}(C_{d(k,k^{\prime})}^{T}C_{d(k,k^{\prime})})^{-1}C_{d(k,k^{\prime})}%
		^{T}C_{d(k,k^{\prime})}\cdot D_{d(k,k^{\prime})}\\
		=&\,0.
	\end{align*}
	This implies that $b_{d(k,k^{\prime})}\left(  t\right)  P_{S(k,k^{\prime})}$
	and $b_{d(k,k^{\prime})}\left(  t\right)  Q_{S(k,k^{\prime})}\left(  d\right)
	$ can be identified as
	\begin{align*}
		&b_{d(k,k^{\prime})}\left(  t\right)  P_{S(k,k^{\prime})}=b_{d(k,k^{\prime}%
			)}\left(  t\right)  B_{d(k,k^{\prime})}^{+}P_{Z(k,k^{\prime})}\left(
		d\right) \\ &\text{and }b_{d(k,k^{\prime})}\left(  t\right)  Q_{S(k,k^{\prime}%
			)}\left(  d\right)  =b_{d(k,k^{\prime})}\left(  t\right)  B_{d(k,k^{\prime}%
			)}^{+}Q_{Z(k,k^{\prime})}\left(  d\right)  .
	\end{align*}
	Thus, \eqref{eq.PKS} and \eqref{eq.EY} show that
	\begin{align*}
		&\mathbb{P}\left(  \mathcal{M}_{(k,k^{\prime})}S\in\Sigma_{d(k,k^{\prime}%
			)}\left(  t\right)  \right)  =b_{d(k,k^{\prime})}\left(  t\right)
		B_{d(k,k^{\prime})}^{+}P_{Z(k,k^{\prime})}\left(  d\right) \\ 
		&\text{and	}E\left[  \kappa\left(  Y_d(z_k,z_{k'})  \right)  |\mathcal{M}%
		_{(k,k^{\prime})}S\in\Sigma_{d(k,k^{\prime})}\left(  t\right)  \right]
		=\frac{b_{d(k,k^{\prime})}\left(  t\right)  B_{d(k,k^{\prime})}^{+}%
			Q_{Z(k,k')}\left(  d\right)  }{b_{d(k,k^{\prime})}\left(  t\right)  B_{d(k,k^{\prime
				})}^{+}P_{Z(k,k')}\left(  d\right)  }
	\end{align*}
	are identified. Define
	\[
Z_{Pi}=(1\left\{  Z_{i}=z_{1}\right\}  ,\ldots,1\left\{  Z_{i}=z_{K}\right\}
),
\]%
\[
P_{DZi}\left(  d\right)  =\left(  1\left\{  D_{i}=d,Z_{i}=z_{1}\right\}
,\ldots,1\left\{  D_{i}=d,Z_{i}=z_{K}\right\}  \right)  ^{T}\text{ for all }d,
\]%
\[
Q_{YDZi}\left(  d\right)  =\left(  \kappa\left(  Y_{i}\right)  1\left\{
D_{i}=d,Z_{i}=z_{1}\right\}  ,\ldots,\kappa\left(  Y_{i}\right)  1\left\{
D_{i}=d,Z_{i}=z_{K}\right\}  \right)  ^{T}\text{ for all }d,
\]
and
\[
W_{i}=\left(  Z_{Pi},P_{DZi}\left(  d_{1}\right)  ^{T},\ldots,P_{DZi}\left(
d_{J}\right)  ^{T},Q_{YDZi}\left(  d_{1}\right)  ^{T},\ldots,Q_{YDZi}\left(
d_{J}\right)  ^{T}\right)  ^{T}.
\]
	By multivariate central limit theorem, $\sqrt{n}(\widehat{W}-W)\overset{d}\rightarrow
	N\left(  0,\Sigma_{W}\right)  $, where
	\begin{align}\label{eq.sigmaW}
	\Sigma_W=E[(W_i-W)(W_i-W)^T],    
	\end{align}
	and therefore $\widehat{W} \overset{p}\rightarrow W$.
	Also, for every $\varepsilon>0$,  $\mathbb{P}(\sqrt{n}\Vert\mathds{1}(\widehat{\mathscr{Z}_0})-\mathds{1}({\mathscr{Z}_{\bar{M}}})\Vert_2>\varepsilon)\le \mathbb{P}(\widehat{\mathscr{Z}_0}\neq{\mathscr{Z}_{\bar{M}}})\to0$ by assumption.
	Then, by Lemma 1.10.2(iii) and Example 1.4.7 (Slutsky's lemma) of
	\citet{van1996weak},
	\begin{align*}
		\sqrt{n}\left\{  \left(  \widehat{W}^T,\mathds{1}(\widehat{\mathscr{Z}_0})^T\right)^T  -\left(  W^T,\mathds{1}({\mathscr{Z}_{\bar{M}}})^T\right)^T
		\right\}  
		 \overset{d}\rightarrow\left(  N\left(  0,\Sigma
		_{W}\right)^T  ,0^T\right)^T.
	\end{align*}
\end{proof}

\begin{proof}[Proof of Lemma \ref{lemma.beta unordered}]
If $(z_k,z_{k^{\prime}})\in\mathscr{Z}_{\bar{M}}$ and  $\Sigma_{d(k,k')}(t)=\Sigma_{d'(k,k')}(t')$, then $Y_{dz_k}=Y_d(z_k,z_{k'})$ a.s. and $Y_{d'z_{k'}}=Y_{d'}(z_k,z_{k'})$ a.s.
By \eqref{eq.counterfactuals pairwise}, it follows that
\begin{align}
		\beta_{(k,k')}(d,d',t,t')=\left\{\frac{b_{d(  k,k^{\prime})}\left(  t\right)  B_{d(  k,k^{\prime})}^{+}  Q_{Z(  k,k^{\prime})}\left(  d\right) }{b_{d(  k,k^{\prime})}\left(  t\right)  B_{d(  k,k^{\prime})}^{+}  P_{Z(  k,k^{\prime})}\left(  d\right) }
		-\frac{b_{d'(  k,k^{\prime})}\left(  t'\right)  B_{d'(  k,k^{\prime})}^{+}  Q_{Z(  k,k^{\prime})}\left(  d'\right) }{b_{d'(  k,k^{\prime})}\left(  t'\right)  B_{d'(  k,k^{\prime})}^{+}  P_{Z(  k,k^{\prime})}\left(  d'\right) }\right\}.
	\end{align}

If $(z_k,z_{k^{\prime}})\notin\mathscr{Z}_{\bar{M}}$ or  $\Sigma_{d(k,k')}(t)\neq\Sigma_{d'(k,k')}(t')$, clearly the lemma holds. 
\end{proof}

\begin{proof}[Proof of Theorem \ref{thm.bias reduction unordered}]
	The proof is similar to that of Theorem \ref{thm.bias reduction multi ordered}.
\end{proof}

\subsection{Definition and Estimation of $\mathscr{Z}_{0}$}\label{sec.estimation Z_0 unordered}

\subsubsection{Definition and Estimation of $\mathscr{Z}_1$}

Following \citet[main text and Appendix D]{sun2021ivvalidity}, we provide the definitions of $\mathscr{Z}_1$ and its estimator. Suppose the instrument $Z$ is pairwise valid with $\mathscr{Z}_{\bar
	{M}}=\{(z_{k_{1}},z_{k_{1}^{\prime}}),\ldots,(z_{k_{\bar{M}}},z_{k_{\bar{M}%
	}^{\prime}})\}$. 
Fix $(z,z^{\prime})\in\mathscr{Z}_{\bar{M}}$. For every $d\in\mathcal{D}$, if $1\{  D_{z'}
=d\}  \leq1\{  D_z  =d\}  $ a.s., we have that
\begin{align}\label{eq.inequality unordered}
	\mathbb{P}(  Y\in B,D=d|Z=z^{\prime})   &=E\left[  1\left\{
	Y_d(z,z')  \in B\right\}  \times1\left\{  D_{z'}
	=d\right\}  \right]\notag  \\
	&  \leq E\left[  1\left\{  Y_d(z,z')  \in B\right\}  \times1\left\{
	D_z  =d\right\}  \right]  =\mathbb{P}\left(  Y\in
	B,D=d|Z=z\right)
\end{align}
for all Borel sets $B$. 
Denote the $2^{J}$ $J$-dimensional different binary vectors by $v_{1}%
,\ldots,v_{2^{J}}$, where
\[
v_{1}=\left(
\begin{array}
	[c]{c}%
	0\\
	0\\
	\vdots\\
	0
\end{array}
\right)  ,v_{2}=\left(
\begin{array}
	[c]{c}%
	1\\
	0\\
	\vdots\\
	0
\end{array}
\right)  ,\ldots,v_{2^{J}}=\left(
\begin{array}
	[c]{c}%
	1\\
	1\\
	\vdots\\
	1
\end{array}
\right)  .
\]
Let $\mathcal{L}:\mathcal{D}\to \{1,\ldots,J\}$ map $d\in\mathcal{D}$ to $d$'s index in $\mathcal{D}$ so that if $d=d_j$, it follows that $\mathcal{L}(d)=j$. For every $q\in\{  1,\ldots,2^{J}\}  $, define $f_{q}:\left\{
d_{1},\ldots,d_{J}\right\}  \rightarrow\left\{  1,-1\right\}  $ by
$f_{q}\left(  d\right)  =\left(  -1\right)  ^{v_{q}\left(  \mathcal{L}(d)\right)  }$.
For every fixed $(z,z')\in\mathscr{Z}_{\bar{M}}$, there is $q\in\{1,\ldots,2^J\}$ such that
\begin{align*}
	f_q(d)\cdot\{\mathbb{P}\left(  Y\in B,D=d|Z=z'\right)  -\mathbb{P}\left(  Y\in B,D=d|Z=z\right)\}\le0
\end{align*} 
for all $d\in\mathcal{D}$ and all closed intervals $B$.
Then for all $q\in\{  1,\ldots,2^{J}\}  $, define%
\begin{align*}
	\mathrm{H}_{q}  &  =\left\{  f_{q}\left(  d\right)  \cdot1_{B\times\left\{
		d\right\}  \times\mathbb{R}}:B\text{ is a closed interval in }\mathbb{R}%
	\text{, }d\in\mathcal{D}\right\}  \text{ and}\\
	\mathrm{\bar{H}}_{q}  &  =\left\{  f_{q}\left(  d\right)  \cdot1_{B\times
		\left\{  d\right\}  \times\mathbb{R}}:B\text{ is a closed, open, or
		half-closed interval in }\mathbb{R}\text{, }d\in\mathcal{D}\right\}  .
\end{align*}
Furthermore, define the following function spaces
\begin{align}\label{def.function spaces unordered}
	{\mathrm{G}}=\left\{ \left( 1_{\mathbb{R}\times \mathbb{R}\times \left\{
		z_j\right\} },1_{\mathbb{R}\times \mathbb{R}\times \left\{ z_k\right\} }\right) :j,k\in\{1,\ldots,K\},j<k\right\} , \mathrm{H}=\cup_{q=1}^{2^{J}}\mathrm{H}_{q},\text{ and }\mathrm{\bar{H}}
	=\cup_{q=1}^{2^{J}}\mathrm{\bar{H}}_{q}.
\end{align}
Let $P$ and $\widehat{P}$ be defined as in Section \ref{sec.estimation validity set}. Let $\phi$, $\sigma^2$, $\widehat{\phi}$, and $\widehat{\sigma}^2$ be defined in a way similar to that in Section \ref{sec.estimation validity set} but for all $\left( h,g\right) \in {\bar{	\mathrm{H}}}\times\mathrm{G}$. Also, we let $\Lambda(P)=\prod_{k=1}^{K}P(1_{\mathbb{R}\times\mathbb{R}\times\{z_k\}})$  and $T_{n}=n\cdot \prod_{k=1}^K\widehat{P}(1_{\mathbb{R}\times \mathbb{R}
	\times \{z_k\}})$.
By similar arguments as in the proof of Lemma 3.1 in \citet{sun2021ivvalidity}, ${\sigma}^{2}$ and $\widehat{\sigma}^{2}$ are uniformly bounded in $(h,g)\in {\bar{	\mathrm{H}}}\times\mathrm{G}$.

The following lemma reformulates the testable restrictions in terms of $\phi$.
\begin{lemma}\label{lemma.superset of Z unordered pairwise}
	Suppose that the instrument $Z$ is pairwise valid for the treatment $D$ with the largest validity pair set $\mathscr{Z}_{\bar{M}}=\{(z_{k_1},z_{k_1^{\prime}}),\ldots,(z_{k_{\bar{M}}},z_{k_{\bar{M}}^{\prime}})\}$. For every $m\in\{1,\ldots,\bar{M}\}$, it follows that $\min_{q\in\{  1,\ldots,2^{J}\}  }\sup_{h\in\mathrm{H}_{q}}\phi ( h,g) =0$ with $g=( 1_{\mathbb{R}\times \mathbb{R}\times \{	z_{k_m}\} },1_{\mathbb{R}\times \mathbb{R}\times \{ z_{k_{m}^{\prime}}\} })$.
\end{lemma}

\begin{proof}[Proof of Lemma \ref{lemma.superset of Z unordered pairwise}]
	Since we can find $a\in\mathbb{R}$ and $d\in\mathcal{D}$ such that
	$P\left(  1_{\left\{  a\right\}  \times\left\{  d\right\}  \times\mathbb{R}%
	}\right)  =0$, then we have $\sup_{h\in\mathrm{H}_{q}}\phi\left(
	h,g\right)  \geq0$ for every $q$ and every $g\in\mathrm{G}$. So for every $g\in\mathrm{G}$, it follows that $\min_{q\in\left\{  1,\ldots,2^{J}\right\}  }\sup_{h\in\mathrm{H}_{q}}	\phi\left(  h,g\right)  \ge0$.
	Let $h_{Bd}=1_{B\times\left\{  d\right\}  \times\mathbb{R}}$ for every closed interval $B$ and every $d\in\mathcal{D}$. Fix $m\in\{1,\ldots,\bar{M}\}$. By assumption, for every $d\in\mathcal{D}$, we have
	\begin{align*}
		&\phi\left(  h_{Bd},g\right)     =\frac{P\left(  h_{Bd}\cdot g_{2}\right)
		}{P\left(  g_{2}\right)  }-\frac{P\left(  h_{Bd}\cdot g_{1}\right)  }{P\left(
			g_{1}\right)  }\leq0\text{ for every closed interval }B,\text{ }\\
		&\text{or }\phi\left(  -h_{Bd},g\right)    =\frac{-P\left(  h_{Bd}\cdot
			g_{2}\right)  }{P\left(  g_{2}\right)  }-\frac{-P\left(  h_{Bd}\cdot
			g_{1}\right)  }{P\left(  g_{1}\right)  }\leq0\text{ for every closed interval
		}B,
	\end{align*}
	where $g_1=1_{\mathbb{R}\times\mathbb{R}\times\{z_{k_m}\}}$, $g_2=1_{\mathbb{R}\times\mathbb{R}\times\{z_{k_{m}^{\prime}}\}}$, and $g=(g_1,g_2)$.
	This implies
	that there is $\mathrm{H}_{q}$ such that $\sup_{h\in\mathrm{H}_{q}}\phi\left(
	h,g\right)  \leq0$. Thus, it follows that $\min_{q\in\left\{  1,\ldots,2^{J}\right\}  }\sup_{h\in\mathrm{H}_{q}}	\phi\left(  h,g\right)  =0$.
\end{proof}

By Lemma \ref{lemma.superset of Z unordered pairwise}, we define  
\begin{align}\label{eq.G0 pair unordered}
	&\mathrm{G}_{1}=\left\{ g\in \mathrm{G}:\min_{q\in\left\{  1,\ldots,2^{J}\right\}  }\sup_{h\in\mathrm{H}_{q}}\phi\left(
	h,g\right) =0\right\} \text{ and }\notag\\
	&\widehat{\mathrm{G}_{1}}=\left\{ g\in 
	\mathrm{G}:\sqrt{T_n}\left\vert \min_{q\in\left\{  1,\ldots,2^{J}\right\}  }\sup_{h\in\mathrm{H}_{q}}\frac{\widehat{\phi}
		\left( h,g\right)}{\xi_{0}\vee \widehat{\sigma}(h,g)} \right\vert \leq \tau
	_{n}\right\}
\end{align}
with $\tau_{n}\to\infty$ and $\tau_{n}/\sqrt{n}\to0$ as $n\to\infty$, where $\xi_{0}$ is a small positive number. 
We define ${\mathscr{Z}_1}$ as the collection of all $(z,z^{\prime})$ that are associated with some $g\in{\mathrm{G}_1}$:
\begin{align}\label{eq.Z1 pair unordered}
	\mathscr{Z}_1=\left\{ (z_{k},z_{k^{\prime}})\in\mathscr{Z}: g=( 1_{\mathbb{R}\times \mathbb{R}\times \{	z_{k}\} },1_{\mathbb{R}\times \mathbb{R}\times \{ z_{k^{\prime}}\} })\in\mathrm{G}_1\right\}.
\end{align}
We use $\widehat{\mathrm{G}_1}$ to construct the estimator of $\mathscr{Z}_1$, denoted by $\widehat{\mathscr{Z}_1}$, which is defined as the set of all $(z,z^{\prime})$ that are associated with some $g\in\widehat{\mathrm{G}_1}$ in the same way $\mathscr{Z}_1$ is defined based on $\mathrm{G}_1$:
\begin{align}\label{eq.Z1_hat pair unordered}
	\widehat{\mathscr{Z}_1}=\left\{ (z_{k},z_{k^{\prime}})\in\mathscr{Z}: g=( 1_{\mathbb{R}\times \mathbb{R}\times \{	z_{k}\} },1_{\mathbb{R}\times \mathbb{R}\times \{ z_{k^{\prime}}\} })\in\widehat{\mathrm{G}_1}\right\}.
\end{align}

To derive the desired consistency result, we state and prove an additional auxiliary lemma.

\begin{lemma}
	\label{lemma.uniform and weak convergence unordered} Under Assumption \ref{ass.iid data unordered},  $\widehat{\phi}\to \phi$, $T_n/n\to \Lambda(P)$, and $\widehat{\sigma}\to \sigma$ almost uniformly. In addition, $\sqrt{T_n}(  \widehat{\phi}-\phi)  \leadsto\mathbb{G}$ for some random element
	$\mathbb{G}$, and for all $\left(  h,g\right)  \in\bar{\mathrm{H}}\times\mathrm{G}$ with $g=(g_1,g_2)$, the variance $Var\left(  \mathbb{G}\left(  h,g\right)  \right)=\sigma^2(h,g)$.
\end{lemma}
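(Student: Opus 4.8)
The plan is to mirror the proof of Lemma \ref{lemma.uniform and weak convergence} (the ordered-treatment analog), which itself adapts Lemmas C.11 and 3.1 of \citet{sun2021ivvalidity}. The only structural difference is that the index class $\bar{\mathrm{H}}$ in \eqref{def.function spaces unordered} is a finite union $\cup_{q=1}^{2^{J}}\bar{\mathrm{H}}_{q}$ of sign-flipped interval–treatment indicator classes (with the constants $f_{q}(d)\in\{-1,1\}$) rather than the two-sided class used in the ordered case; since a finite union of Donsker (resp. Glivenko--Cantelli) classes is again Donsker (resp. Glivenko--Cantelli), this causes no essential complication. Throughout I would work with $(h,g)\in\bar{\mathrm{H}}\times\mathrm{G}$ and the convention \eqref{eq.0timesinfinity} to handle instrument values with $\mathbb{P}(Z=z_{k})=0$.

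First I would establish the uniform laws for the building blocks. I would show that the class $\{h\cdot g_{j}:h\in\bar{\mathrm{H}},\,(g_{1},g_{2})\in\mathrm{G},\,j\in\{1,2\}\}$, together with $\{h^{2}\cdot g_{j}\}$ and $\mathrm{G}$ itself, is Donsker and pre-Gaussian uniformly in $Q\in\mathcal{P}$, and Glivenko--Cantelli uniformly in $Q\in\mathcal{P}$: intervals in $\mathbb{R}$ form a VC class (cf.\ Lemma \ref{lemma.VC class}), products with the fixed finite family $\{1_{\mathbb{R}\times\mathbb{R}\times\{z_{k}\}}\}$ and with the sign constants preserve the VC/Donsker property, and all envelopes are bounded by $1$; this is exactly the argument behind Lemmas C.5--C.6 and C.11 of \citet{sun2021ivvalidity} and Lemma \ref{lemma.Donsker and Glivenko-Cantelli}. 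Consequently $\widehat{P}\to P$ almost uniformly on all these classes and $\sqrt{n}(\widehat{P}-P)\leadsto\mathbb{B}$ for a tight Gaussian process. Since $T_{n}/n=\prod_{k=1}^{K}\widehat{P}(1_{\mathbb{R}\times\mathbb{R}\times\{z_{k}\}})$ and each factor converges almost uniformly to $P(1_{\mathbb{R}\times\mathbb{R}\times\{z_{k}\}})$, the product converges almost uniformly to $\Lambda(P)$. The maps defining $\phi$ and $\sigma^{2}$ are ratios and polynomials in $P(h\cdot g_{j})$, $P(h^{2}\cdot g_{j})$, $P(g_{j})$, and $\Lambda(P)$; on the region where the denominators $P(g_{j})=\mathbb{P}(Z=z_{k})$ are positive these maps are continuous, and where a denominator vanishes \eqref{eq.0timesinfinity} makes $\widehat{\phi},\widehat{\sigma}$ well defined and the convergence still holds, exactly as in \citet{sun2021ivvalidity}. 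Hence $\widehat{\phi}\to\phi$ and $\widehat{\sigma}\to\sigma$ almost uniformly.

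Next I would prove the weak convergence and identify the variance. Writing $\sqrt{T_{n}}(\widehat{\phi}-\phi)=\sqrt{T_{n}/n}\cdot\sqrt{n}(\widehat{\phi}-\phi)$ and using $\sqrt{T_{n}/n}\to\sqrt{\Lambda(P)}$ almost uniformly, it suffices to show $\sqrt{n}(\widehat{\phi}-\phi)$ converges weakly and then invoke Slutsky's lemma (Example 1.4.7 of \citet{van1996weak}). Viewing $\phi$ as a function of the sub-vector $(P(h\cdot g_{1}),P(h\cdot g_{2}),P(g_{1}),P(g_{2}))$, this map is Hadamard differentiable tangentially to the relevant subspace wherever $P(g_{1}),P(g_{2})>0$, so the functional delta method applied to the empirical-process convergence above gives $\sqrt{n}(\widehat{\phi}-\phi)\leadsto\mathbb{B}'$ for a Gaussian process $\mathbb{B}'$; therefore $\sqrt{T_{n}}(\widehat{\phi}-\phi)\leadsto\mathbb{G}:=\sqrt{\Lambda(P)}\,\mathbb{B}'$. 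Finally, computing $Var(\mathbb{G}(h,g))$ by linearizing $\phi$ around $W$: the $P(h\cdot g_{2})/P(g_{2})$ and $P(h\cdot g_{1})/P(g_{1})$ pieces contribute the four summands appearing in \eqref{eq.stat variance multi} and the $\sqrt{\Lambda(P)}$ scaling contributes the prefactor $\Lambda(P)$, yielding exactly $Var(\mathbb{G}(h,g))=\sigma^{2}(h,g)$.

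The main obstacle is essentially bookkeeping: verifying that the union-of-$2^{J}$-sign-classes structure of $\bar{\mathrm{H}}$ neither destroys the uniform Donsker/Glivenko--Cantelli property nor interferes with the delta-method differentiability, and carefully matching the linearized covariance to the closed form in \eqref{eq.stat variance multi}, including the $\Lambda(P)$ factor and the degenerate cases covered by \eqref{eq.0timesinfinity}. All the heavy machinery—VC bounds, uniform Donsker theorems, the functional delta method, and continuous mapping—is already available from \citet{sun2021ivvalidity} and \citet{van1996weak}, so the argument is a near-verbatim transcription of the proof of Lemma \ref{lemma.uniform and weak convergence}.
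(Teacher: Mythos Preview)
Your proposal is correct and follows essentially the same approach as the paper: the paper's proof simply notes that $\bar{\mathrm{H}}$ and $\mathrm{G}$ are similar to $\bar{\mathcal{H}}$ and $\mathcal{G}_P$ from the ordered case and defers to the strategy of Lemma \ref{lemma.uniform and weak convergence}, which in turn defers to Lemmas C.11 and 3.1 of \citet{sun2021ivvalidity}. Your write-up is considerably more explicit than the paper's, but the underlying argument---VC/Donsker properties preserved under finite unions and products with finite families, almost uniform convergence via Glivenko--Cantelli, and the functional delta method plus Slutsky for the weak limit and variance identification---is exactly what the paper intends.
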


\begin{proof}[Proof of Lemma \ref{lemma.uniform and weak convergence unordered}]
	Note that the spaces $\bar{\mathrm{H}}$ and $\mathrm{ G}$ defined in \eqref{def.function spaces unordered} are similar to the spaces $\bar{\mathcal{H}}$ and $\mathcal{G}_P$ defined in \eqref{def.function spaces}. The lemma can be proved following a strategy similar to that of the proof of Lemma \ref{lemma.uniform and weak convergence}.
\end{proof}	

\begin{proposition}
	\label{prop.consistent G hat unordered pairwise}  Suppose the instrument $Z$ is pairwise valid for the treatment $D$ as defined in Definition \ref{def.partial validity unordered pairwise}. Under Assumption \ref{ass.iid data unordered},  $\mathbb{P}(\widehat{\mathrm{G}_{1}}=\mathrm{G}_{1})\rightarrow 1$, and thus $\mathbb{P}(\widehat{\mathscr{Z}_{1}}=\mathscr{Z}_{1})\rightarrow 1$.
\end{proposition}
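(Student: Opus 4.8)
\textbf{Proof proposal for Proposition \ref{prop.consistent G hat unordered pairwise}.}
The plan is to mimic the argument already used for Proposition \ref{prop.consistent G hat pairwise Z1}, which in turn follows the contact-set estimation strategy of \citet{sun2021ivvalidity}. The key point is that $\mathrm{G}$ is a \emph{finite} set, so once we control the sampling error of the relevant functionals uniformly, consistency of the estimated subset follows from a standard thresholding argument. First I would fix notation: for each $g\in\mathrm{G}$ write $\Phi(g)=\min_{q\in\{1,\ldots,2^J\}}\sup_{h\in\mathrm{H}_q}\phi(h,g)$ and $\widehat\Phi(g)=\min_{q}\sup_{h\in\mathrm{H}_q}\widehat\phi(h,g)$, and similarly with the standardization $\widehat\sigma$ inserted, so that $\widehat{\mathrm{G}_1}=\{g:\sqrt{T_n}\,|\widehat\Phi_\sigma(g)|\le\tau_n\}$ with $\widehat\Phi_\sigma(g)=\min_q\sup_h \widehat\phi(h,g)/(\xi_0\vee\widehat\sigma(h,g))$ and $\mathrm{G}_1=\{g:\Phi_\sigma(g)=0\}$, where $\Phi_\sigma$ is the population analogue. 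Lemma \ref{lemma.superset of Z unordered pairwise} guarantees $\Phi_\sigma(g)\ge 0$ for every $g\in\mathrm{G}$, which is what makes the one-sided reasoning work.

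Next I would invoke Lemma \ref{lemma.uniform and weak convergence unordered}: under Assumption \ref{ass.iid data unordered} we have $\widehat\phi\to\phi$ and $\widehat\sigma\to\sigma$ almost uniformly, $T_n/n\to\Lambda(P)$, and $\sqrt{T_n}(\widehat\phi-\phi)\leadsto\mathbb{G}$ in $\ell^\infty(\bar{\mathrm{H}}\times\mathrm{G})$. Since $\mathrm{G}$ is finite and each $\mathrm{H}_q$ is a VC-type class, the maps $g\mapsto\sup_{h\in\mathrm{H}_q}(\cdot)$ and $q\mapsto\min_q(\cdot)$ are continuous for the relevant topologies, so by the continuous mapping theorem (Theorem 1.3.6 of \citet{van1996weak}) and Slutsky's lemma (Example 1.4.7 of \citet{van1996weak}), $\max_{g\in\mathrm{G}}\sqrt{T_n}\big|\widehat\Phi_\sigma(g)-\Phi_{\widehat\sigma}(g)\big|$ converges weakly to a tight limit, and $\|\widehat\Phi_\sigma-\Phi_\sigma\|_\infty\to 0$ almost uniformly. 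I would then run the two inclusions exactly as in the proof of Proposition \ref{prop.consistent G hat pairwise Z1}. For $\mathbb{P}(\mathrm{G}_1\setminus\widehat{\mathrm{G}_1}\neq\varnothing)\to 0$: on $\mathrm{G}_1$ we have $\Phi_\sigma(g)=0$, so $\sqrt{T_n}|\widehat\Phi_\sigma(g)|\le\max_{g\in\mathrm{G}}\sqrt{T_n}|\widehat\Phi_\sigma(g)-\Phi_\sigma(g)|=O_p(1)$, which is eventually below $\tau_n\to\infty$. For $\mathbb{P}(\widehat{\mathrm{G}_1}\setminus\mathrm{G}_1\neq\varnothing)\to 0$: finiteness of $\mathrm{G}$ and uniform boundedness of $\widehat\sigma$ give a $\delta>0$ with $\min_{g\in\mathrm{G}\setminus\mathrm{G}_1}|\Phi_\sigma(g)|>\delta$; combining the almost-uniform approximation $\|\widehat\Phi_\sigma-\Phi_\sigma\|_\infty\le\delta/2$ on a set of probability at least $1-\varepsilon$ with $\tau_n/\sqrt{n}\to 0$ and $T_n/n\to\Lambda(P)>0$ forces any $g\notin\mathrm{G}_1$ out of $\widehat{\mathrm{G}_1}$ with probability tending to one, handling separately the cases $\mathrm{G}_1=\varnothing$ and $\mathrm{G}_1\neq\varnothing$ exactly as in Proposition \ref{prop.consistent G hat pairwise Z1}. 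This yields $\mathbb{P}(\widehat{\mathrm{G}_1}=\mathrm{G}_1)\to 1$. Finally, since $\widehat{\mathscr{Z}_1}$ and $\mathscr{Z}_1$ are defined by the same bijection between pairs $(z_k,z_{k'})$ and elements $g\in\mathrm{G}$ (display \eqref{eq.Z1 pair unordered} and \eqref{eq.Z1_hat pair unordered}), the event $\{\widehat{\mathrm{G}_1}=\mathrm{G}_1\}$ coincides with $\{\widehat{\mathscr{Z}_1}=\mathscr{Z}_1\}$, so $\mathbb{P}(\widehat{\mathscr{Z}_1}=\mathscr{Z}_1)\to 1$.

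The only mild obstacle is bookkeeping rather than substance: one must check that the extra outer operation $\min_{q\in\{1,\ldots,2^J\}}$ present in the unordered case (compared with the bare $\sup_h$ in Proposition \ref{prop.consistent G hat pairwise Z1 binary D} or the single $\sup_h$ in Proposition \ref{prop.consistent G hat pairwise Z1}) does not disturb the continuous-mapping and uniform-approximation steps. This is immediate because a minimum over a fixed finite index set is $1$-Lipschitz in the sup norm and commutes with the almost-uniform convergence, so $|\min_q a_q-\min_q b_q|\le\max_q|a_q-b_q|$ gives the needed bound with $\mathcal{H}$ replaced by $\cup_q\mathrm{H}_q=\mathrm{H}$. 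Everything else is verbatim the argument of Proposition \ref{prop.consistent G hat pairwise Z1}, so I would state the proof compactly, citing that proof for the repeated thresholding estimates and only spelling out the $\min_q$ modification.
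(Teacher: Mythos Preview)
Your proposal is correct and follows essentially the same approach as the paper: the paper's proof is precisely the argument of Proposition \ref{prop.consistent G hat pairwise Z1} rewritten with the extra outer $\min_{q\in\{1,\ldots,2^J\}}$, using Lemma \ref{lemma.uniform and weak convergence unordered}, the Lipschitz bound $|\min_q a_q-\min_q b_q|\le\max_q|a_q-b_q|$, and the same two-inclusion thresholding plus the separate $\mathrm{G}_1=\varnothing$ case. The only minor cosmetic point is that the paper defines $\mathrm{G}_1$ via $\min_q\sup_h\phi(h,g)=0$ (unstandardized), but since $\phi(h,g)\le 0$ for some $q$ and there is always an $h$ with $\phi(h,g)=0$, this is equivalent to your standardized $\Phi_\sigma(g)=0$, so the identification of $\mathrm{G}_1$ is unaffected.
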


\begin{proof}[Proof of Proposition \ref{prop.consistent G hat unordered pairwise}]
	First, suppose $\mathrm{G}_{1}\neq\varnothing$. Then we have that 
	\begin{align*}
		\min
		_{q\in\left\{  1,\ldots,2^{J}\right\}  }\sup_{h\in\mathrm{H}_{q}}\{{\phi\left(
			h,g\right)  }/({\xi_{0}\vee\widehat{\sigma}\left(  h,g\right)  })\}=0 
	\end{align*}
	for all
	$g\in\mathrm{G}_{1}$.	
	Under the constructions, we have that
	\begin{align*}
		&  \lim_{n\rightarrow\infty}\mathbb{P}\left(  \mathrm{G}_{1}\setminus
		\widehat{{\mathrm{G}}_{1}}\neq\varnothing\right)  \\
		\leq &  \lim_{n\rightarrow\infty}\mathbb{P}\left(  \max_{g\in\mathrm{G}_{1}%
		}\sqrt{T_{n}}\left\vert
		\begin{array}
			[c]{c}%
			\min_{q\in\left\{  1,\ldots,2^{J}\right\}  }\sup_{h\in\mathrm{H}_{q}}%
			\frac{\widehat{\phi}\left(  h,g\right)  }{\xi_{0}\vee\widehat{\sigma}\left(
				h,g\right)  }\\
			-\min_{q\in\left\{  1,\ldots,2^{J}\right\}  }\sup_{h\in\mathrm{H}_{q}}%
			\frac{\phi\left(  h,g\right)  }{\xi_{0}\vee\widehat{\sigma}\left(  h,g\right)
			}%
		\end{array}
		\right\vert >\tau_{n}\right)  \\
		= &  \lim_{n\rightarrow\infty}\mathbb{P}\left(  \max_{g\in\mathrm{G}_{1}}%
		\sqrt{T_{n}}\left\vert
		\begin{array}
			[c]{c}%
			-\max_{q\in\left\{  1,\ldots,2^{J}\right\}  }\left(  -\sup_{h\in\mathrm{H}%
				_{q}}\frac{\widehat{\phi}\left(  h,g\right)  }{\xi_{0}\vee\widehat{\sigma
				}\left(  h,g\right)  }\right)  \\
			+\max_{q\in\left\{  1,\ldots,2^{J}\right\}  }\left(  -\sup_{h\in\mathrm{H}%
				_{q}}\frac{\phi\left(  h,g\right)  }{\xi_{0}\vee\widehat{\sigma}\left(
				h,g\right)  }\right)
		\end{array}
		\right\vert >\tau_{n}\right)  \\
		\leq &  \lim_{n\rightarrow\infty}\mathbb{P}\left(  \max_{g\in\mathrm{G}_{1}%
		}\sup_{h\in\mathrm{H}}\sqrt{T_{n}}\left\vert \frac{\widehat{\phi}\left(
			h,g\right)  -\phi\left(  h,g\right)  }{\xi_{0}\vee\widehat{\sigma}\left(
			h,g\right)  }\right\vert >\tau_{n}\right)  .
	\end{align*}

	By Lemma \ref{lemma.uniform and weak convergence unordered}, $\sqrt{T_{n}%
	}(\widehat{\phi}-\phi)\leadsto\mathbb{G}$ and $\widehat{\sigma}\rightarrow
	\sigma$ almost uniformly, which implies that $\widehat{\sigma}\leadsto\sigma$
	by Lemmas 1.9.3(ii) and 1.10.2(iii) of \citet{van1996weak}. Consequently, by Example
	1.4.7 (Slutsky's lemma) and Theorem 1.3.6 (continuous mapping) of
	\citet{van1996weak}, we have that
	\[
	\max_{g\in\mathrm{G}_{1}}\sup_{h\in\mathrm{H}}\sqrt{T_{n}}\left\vert
	\frac{\widehat{\phi}\left(  h,g\right)  -\phi\left(  h,g\right)  }{\xi_{0}%
		\vee\widehat{\sigma}\left(  h,g\right)  }\right\vert \leadsto\max
	_{g\in\mathrm{G}_{1}}\sup_{h\in\mathrm{H}}\left\vert \frac{\mathbb{G}\left(
		h,g\right)  }{\xi_{0}\vee\sigma\left(  h,g\right)  }\right\vert .
	\]
	Since $\tau_{n}\rightarrow\infty$, we have that $\lim_{n\rightarrow\infty
	}\mathbb{P}(\mathrm{G}_{1}\setminus\widehat{{\mathrm{G}}_{1}}\neq
	\varnothing)=0$.
	
	If $\mathrm{G}_{1}={\mathrm{G}}$, then clearly $\lim_{n\rightarrow\infty
	}\mathbb{P(}\widehat{{\mathrm{G}}_{1}}\setminus\mathrm{G}_{1}\neq
	\varnothing)=0$. Suppose now $\mathrm{G}_{1}\neq{\mathrm{G}}$. Since
	${\mathrm{G}}$ is a finite set and $\widehat{\sigma}$ is uniformly bounded,
	then there is a $\delta>0$ such that
	\[
	\min_{g\in{\mathrm{G}}\setminus\mathrm{G}_{1}}\left\vert \min_{q\in\left\{
		1,\ldots,2^{J}\right\}  }\sup_{h\in\mathrm{H}_{q}}\frac{\phi\left(
		h,g\right)  }{\xi_{0}\vee\widehat{\sigma}\left(  h,g\right)  }\right\vert
	>\delta.
	\]
	By Lemma \ref{lemma.uniform and weak convergence unordered}, $\widehat{\phi
	}\rightarrow\phi$ almost uniformly. Thus, for every $\varepsilon>0$, there is a measurable set $A$ with
	$\mathbb{P}(A)\geq1-\varepsilon$ such that for sufficiently large $n$,
	\begin{align}\label{eq.consequence of almost uniform convergence phi}
	\max_{g\in\mathrm{G}}\left\vert\left\vert
	\min_{q\in\left\{  1,\ldots,2^{J}\right\}  }\sup_{h\in\mathrm{H}_{q}}%
	\frac{\widehat{\phi}\left(  h,g\right)  }{\xi_{0}\vee\widehat{\sigma}\left(
		h,g\right)  }\right\vert -\left\vert \min_{q\in\left\{  1,\ldots,2^{J}\right\}
	}\sup_{h\in\mathrm{H}_{q}}\frac{\phi\left(  h,g\right)  }{\xi_{0}\vee
		\widehat{\sigma}\left(  h,g\right)  }\right\vert \right\vert\le \frac{\delta}{2}%
	\end{align}
	uniformly on $A$. We now have that
	\begin{align*}
		&  \lim_{n\rightarrow\infty}\mathbb{P}\left(  \widehat{{\mathrm{G}}_{1}%
		}\setminus\mathrm{G}_{1}\neq\varnothing\right)  \\
		\leq &  \lim_{n\rightarrow\infty}\mathbb{P}\left(
		\begin{array}
			[c]{c}%
			\left\{  \max_{g\in\widehat{{\mathrm{G}}_{1}}\setminus\mathrm{G}_{1}%
			}\left\vert \min_{q\in\left\{  1,\ldots,2^{J}\right\}  }\sup_{h\in
				\mathrm{H}_{q}}\frac{\phi\left(  h,g\right)  }{\xi_{0}\vee\widehat{\sigma
				}\left(  h,g\right)  }\right\vert >\delta\right\}  \\
			\cap\left\{  \max_{g\in\widehat{{\mathrm{G}}_{1}}\setminus\mathrm{G}_{1}}%
			\sqrt{T_{n}}\left\vert \min_{q\in\left\{  1,\ldots,2^{J}\right\}  }\sup
			_{h\in\mathrm{H}_{q}}\frac{\widehat{\phi}\left(  h,g\right)  }{\xi_{0}%
				\vee\widehat{\sigma}\left(  h,g\right)  }\right\vert \leq\tau_{n}\right\}
			\cap A
		\end{array}
		\right)  +\mathbb{P}(A^{c})\\
		\leq &  \lim_{n\rightarrow\infty}\mathbb{P}\left(  \sqrt{\frac{T_{n}}{n}}%
		\frac{\delta}{2}<\max_{g\in\widehat{{\mathrm{G}}_{1}}\setminus\mathrm{G}_{1}%
		}\sqrt{\frac{{T_{n}}}{{n}}}\left\vert \min_{q\in\left\{  1,\ldots
			,2^{J}\right\}  }\sup_{h\in\mathrm{H}_{q}}\frac{\widehat{\phi}\left(
			h,g\right)  }{\xi_{0}\vee\widehat{\sigma}\left(  h,g\right)  }\right\vert
		\leq\frac{\tau_{n}}{\sqrt{n}}\right)  +\varepsilon=\varepsilon,
	\end{align*}
	because $\tau_{n}/\sqrt{n}\rightarrow0$ as $n\rightarrow\infty$. Here,
	$\varepsilon$ can be arbitrarily small. Thus we have that $\mathbb{P(}%
	\widehat{{\mathrm{G}}_{1}}=\mathrm{G}_{1})\rightarrow1$, because
	$\mathbb{P(}\mathrm{G}_{1}\setminus\widehat{{\mathrm{G}}_{1}}\neq
	\varnothing)\rightarrow0$ and $\mathbb{P(}\widehat{{\mathrm{G}}_{1}}%
	\setminus\mathrm{G}_{1}\neq\varnothing)\rightarrow0$.
	
	Second, suppose $\mathrm{G}_{1}=\varnothing$. This implies that
	\[
	\min_{g\in{\mathrm{G}}}\left\vert \min_{q\in\left\{  1,\ldots,2^{J}\right\}
	}\sup_{h\in\mathrm{H}_{q}}\frac{\phi\left(  h,g\right)  }{\xi_{0}\vee
		\widehat{\sigma}\left(  h,g\right)  }\right\vert >\delta
	\]
	for some $\delta>0$. Thus, with \eqref{eq.consequence of almost uniform convergence phi} we now have that
	\begin{align*}
		&  \lim_{n\rightarrow\infty}\mathbb{P}\left(  \widehat{{\mathrm{G}}_{1}}%
		\neq\varnothing\right)  \\
		\leq &  \lim_{n\rightarrow\infty}\mathbb{P}\left(
		\begin{array}
			[c]{c}%
			\left\{  \max_{g\in\widehat{{\mathrm{G}}_{1}}}\left\vert \min_{q\in\left\{
				1,\ldots,2^{J}\right\}  }\sup_{h\in\mathrm{H}_{q}}\frac{\phi\left(
				h,g\right)  }{\xi_{0}\vee\widehat{\sigma}\left(  h,g\right)  }\right\vert
			>\delta\right\}  \\
			\cap\left\{  \max_{g\in\widehat{{\mathrm{G}}_{1}}}\sqrt{T_{n}}\left\vert
			\min_{q\in\left\{  1,\ldots,2^{J}\right\}  }\sup_{h\in\mathrm{H}_{q}}%
			\frac{\widehat{\phi}\left(  h,g\right)  }{\xi_{0}\vee\widehat{\sigma}\left(
				h,g\right)  }\right\vert \leq\tau_{n}\right\}  \cap A
		\end{array}
		\right)  +\mathbb{P}(A^{c})\\
		\leq &  \lim_{n\rightarrow\infty}\mathbb{P}\left(  \sqrt{\frac{T_{n}}{n}}%
		\frac{\delta}{2}<\max_{g\in\widehat{{\mathrm{G}}_{1}}}\sqrt{\frac{{T_{n}}}%
			{{n}}}\left\vert \min_{q\in\left\{  1,\ldots,2^{J}\right\}  }\sup
		_{h\in\mathrm{H}_{q}}\frac{\widehat{\phi}\left(  h,g\right)  }{\xi_{0}%
			\vee\widehat{\sigma}\left(  h,g\right)  }\right\vert \leq\frac{\tau_{n}}%
		{\sqrt{n}}\right)  +\varepsilon=\varepsilon,
	\end{align*}
	because $\tau_{n}/\sqrt{n}\rightarrow0$ as $n\rightarrow\infty$. Here,
	$\varepsilon$ can be arbitrarily small. Thus, $\mathbb{P}(\widehat
	{{\mathrm{G}}_{1}}={\mathrm{G}}_{1})=1-\mathbb{P}(\widehat{{\mathrm{G}}_{1}%
	}\neq\varnothing)\rightarrow1$.
\end{proof}

Proposition \ref{prop.consistent G hat unordered pairwise} is also related to the contact set estimation in \citet{sun2021ivvalidity}. Since  $\mathrm{G}$ is a finite set, we can obtain the stronger result in Proposition \ref{prop.consistent G hat unordered pairwise}, that is, $\mathbb{P}(\widehat{\mathrm{G}_1}=\mathrm{G}_1)\rightarrow 1$.

\subsubsection{Definition and Estimation of $\mathscr{Z}_2$}

The definition of $\mathscr{Z}_2$ is the same as that in Appendix \ref{sec.Z2 multi ordered} because the necessary conditions provided by \citet{kedagni2020generalized} are for the exclusion and statistical independence conditions. Therefore, the estimator of $\mathscr{Z}_2$ can be constructed as in Section \ref{sec.Z2 multi ordered}.

\section{Additional Simulation Evidence and Application Results}\label{sec.additional simulations}

\subsection{Simulations with Balanced DGPs}\label{sec.simulation balanced}

The simulations in Section \ref{sec.simulation} are constructed based on the empirical example in \citet{kedagni2020discordant}. 
In this section, we modify DGP (0) in Section \ref{sec.simulation} so that the generated data are more balanced. For modified DGP (0), we specify $U\sim\mathrm{Unif}(0,1)$, let $Z=1\{U \le 0.25\}+2\times\{0.25<U \le 0.5\}+3\times\{0.5<U\le 0.75\}+4\times 1\{U>0.75\}$, and keep everything else the same as in Section \ref{sec.simulation}. Tables \ref{tab:DGP0Balanced} and \ref{tab:DGP0InferenceBalanced} present the results for DPG (0) with balanced data. For $c=0.6$, the selection rates are high and are increasing as the sample size increases. The coverage rates are converging to $95\%$.

\begin{table}[h!]
	
	\centering
	\caption{Validity Pair Set Estimation for DGP (0) with Balanced Data}
	\scalebox{0.9}{
		\begin{tabular}{  c  c  c  c  c  c  c  c  }
			\hline
			\hline
$n$  &	$c $ & (1, 2) & {(1, 3)} & {(1, 4)} 
			 & {{(2, 3)}} & {(2, 4)}
			 & {{(3, 4)}} \\
			\hline
\multirow{10}{*}{1230}   &  0.1 & 0.000  & 0.000  & 0.000  & 0.000  & 0.000  & 0.000  \\ 
    &    0.2 & 0.000  & 0.000  & 0.000  & 0.000  & 0.000  & 0.000  \\ 
    &    0.3 & 0.000  & 0.000  & 0.000  & 0.000  & 0.000  & 0.000  \\ 
    &    0.4 & 0.000  & 0.001  & 0.035  & 0.000  & 0.001  & 0.000  \\ 
    &    0.5 & 0.296  & 0.180  & 0.562  & 0.335  & 0.532  & 0.293  \\ 
    &    0.6 & 0.826  & 0.827  & 0.975  & 0.883  & 0.951  & 0.852  \\ 
    &    0.7 & 0.981  & 0.984  & 0.998  & 0.995  & 0.996  & 0.987  \\ 
    &    0.8 & 0.997  & 0.997  & 1.000  & 1.000  & 0.999  & 1.000  \\ 
    &    0.9 & 1.000  & 0.999  & 1.000  & 1.000  & 1.000  & 1.000  \\ 
    &    1 & 1.000  & 1.000  & 1.000  & 1.000  & 1.000  & 1.000  \\ 
        \hline
\multirow{10}{*}{2460}        
    &    0.1 & 0.000  & 0.000  & 0.000  & 0.000  & 0.000  & 0.000  \\ 
    &    0.2 & 0.000  & 0.000  & 0.000  & 0.000  & 0.000  & 0.000  \\ 
    &    0.3 & 0.000  & 0.000  & 0.000  & 0.000  & 0.000  & 0.000  \\ 
    &    0.4 & 0.005  & 0.002  & 0.280  & 0.010  & 0.013  & 0.003  \\ 
    &    0.5 & 0.585  & 0.552  & 0.947  & 0.650  & 0.741  & 0.602  \\ 
    &    0.6 & 0.968  & 0.958  & 0.997  & 0.981  & 0.993  & 0.972  \\ 
    &    0.7 & 1.000  & 0.995  & 1.000  & 1.000  & 0.999  & 0.998  \\ 
    &    0.8 & 1.000  & 1.000  & 1.000  & 1.000  & 1.000  & 1.000  \\ 
    &    0.9 & 1.000  & 1.000  & 1.000  & 1.000  & 1.000  & 1.000  \\ 
    &    1 & 1.000  & 1.000  & 1.000  & 1.000  & 1.000  & 1.000  \\ 
 \hline
			\hline
		\end{tabular}
	}
	
	\label{tab:DGP0Balanced}
\end{table}
\begin{table}[h!]
	
	\centering
	\caption{Coverage Rates of the Confidence Intervals for DGP (0) with Balanced Data}
	\scalebox{0.9}{
		\begin{tabular}{  c  c  c  c  c  c  c  c  }
			\hline
			\hline
$n$  &	$c $ & (1, 2) & {(1, 3)} & {(1, 4)} 
			 & {{(2, 3)}} & {(2, 4)}
			 & {{(3, 4)}} \\
			\hline
\multirow{10}{*}{1230}   
    &    0.1 & 1.000  & 1.000  & 1.000  & 1.000  & 1.000  & 1.000  \\ 
    &    0.2 & 1.000  & 1.000  & 1.000  & 1.000  & 1.000  & 1.000  \\ 
    &    0.3 & 1.000  & 1.000  & 1.000  & 1.000  & 1.000  & 1.000  \\ 
    &    0.4 & 1.000  & 0.999  & 1.000  & 1.000  & 1.000  & 1.000  \\ 
    &    0.5 & 0.990  & 0.995  & 0.980  & 0.990  & 0.970  & 0.984  \\ 
    &    0.6 & 0.969  & 0.970  & 0.960  & 0.962  & 0.945  & 0.964  \\ 
    &    0.7 & 0.957  & 0.965  & 0.958  & 0.957  & 0.945  & 0.957  \\ 
    &    0.8 & 0.956  & 0.964  & 0.958  & 0.957  & 0.945  & 0.957  \\ 
    &    0.9 & 0.956  & 0.964  & 0.958  & 0.957  & 0.945  & 0.957  \\ 
    &    1 & 0.956  & 0.964  & 0.958  & 0.957  & 0.945  & 0.957  \\ 
        
        \hline
\multirow{10}{*}{2460}        
    &    0.1 & 1.000  & 1.000  & 1.000  & 1.000  & 1.000  & 1.000  \\ 
    &    0.2 & 1.000  & 1.000  & 1.000  & 1.000  & 1.000  & 1.000  \\ 
    &    0.3 & 1.000  & 1.000  & 1.000  & 1.000  & 1.000  & 1.000  \\ 
    &    0.4 & 0.999  & 1.000  & 0.983  & 0.999  & 0.999  & 1.000  \\ 
    &    0.5 & 0.977  & 0.973  & 0.955  & 0.980  & 0.966  & 0.979  \\ 
    &    0.6 & 0.955  & 0.947  & 0.949  & 0.967  & 0.951  & 0.962  \\ 
    &    0.7 & 0.952  & 0.945  & 0.949  & 0.967  & 0.951  & 0.962  \\ 
    &    0.8 & 0.952  & 0.945  & 0.949  & 0.967  & 0.951  & 0.962  \\ 
    &    0.9 & 0.952  & 0.945  & 0.949  & 0.967  & 0.951  & 0.962  \\ 
    &    1 & 0.952  & 0.945  & 0.949  & 0.967  & 0.951  & 0.962  \\ 
 \hline
			\hline
		\end{tabular}
	}
	
	\label{tab:DGP0InferenceBalanced}
\end{table}

\subsection{Local Violations}
In this section, we present results for local violations of the testable implications. The DGPs (5) and (6) are designed based on DGP (1) in Section \ref{sec.simulation}. The degree to which the testable implications are violated decreases from DGP (1), to DGP (5), and to DGP (6).

\begin{enumerate}[start=5,label=(\arabic*):]
    \item  For $(z_1,z_2)\in\mathscr{Z}_P$, $N_{1z_1}\sim N(\mu_{(z_1,z_2)},1)$, $N_{1z_2}\sim N(0,1)$, $N_{0z_1}\sim N(0,1)$, $N_{0z_2}\sim N(\mu_{(z_1,z_2)},1)$ with $\mu_{(1,2)}=-0.6$, $\mu_{(1,3)}=-0.8$, $\mu_{(1,4)}=-1$, $\mu_{(2,3)}=-0.6$, $\mu_{(2,4)}=-0.8$, $\mu_{(3,4)}=-0.6$, $N_{dz}\sim N(0,1)$ for $d\in\{0,1\}$ and $z\in\{1,2,3,4\}\setminus \{z_1,z_2\}$, 
    $Y=\sum_{z=1}^{4}1\{Z=z\}\times(\sum_{d=0}^{1} 1\{D=d\}\times N_{dz})$

    \item  For $(z_1,z_2)\in\mathscr{Z}_P$, $N_{1z_1}\sim N(\mu_{(z_1,z_2)},1)$, $N_{1z_2}\sim N(0,1)$, $N_{0z_1}\sim N(0,1)$, $N_{0z_2}\sim N(\mu_{(z_1,z_2)},1)$ with $\mu_{(1,2)}=-0.3$, $\mu_{(1,3)}=-0.5$, $\mu_{(1,4)}=-0.7$, $\mu_{(2,3)}=-0.3$, $\mu_{(2,4)}=-0.5$, $\mu_{(3,4)}=-0.3$, $N_{dz}\sim N(0,1)$ for $d\in\{0,1\}$ and $z\in\{1,2,3,4\}\setminus \{z_1,z_2\}$, 
    $Y=\sum_{z=1}^{4}1\{Z=z\}\times(\sum_{d=0}^{1} 1\{D=d\}\times N_{dz})$
    
\end{enumerate}

Tables \ref{tab:DGP1(1)} and \ref{tab:DGP1(2)} present the RMSEs of $\sqrt{n}(\widehat{\beta}_{(k,k')}^1-{\beta}_{(k,k')}^1)$ for DGPs (5) and (6) and our preferred choice of tuning parameter with $c=0.6$. As the violations become smaller and thus harder to detect, the RMSEs are getting higher. (Note that the RMSEs of $\sqrt{n}(\widehat{\beta}_{(k,k')}^1-{\beta}_{(k,k')}^1)$ may not be decreasing in the sample size due to the rescaling by $\sqrt{n}$.) 

\begin{table}[h!]
	
	\centering
	\caption{RMSEs for DGP (5)}
	\scalebox{0.9}{
		\begin{tabular}{  c  c  c  c  c  c  c  c  c  c  c  c  c  c }
			\hline
			\hline
$n$  &	$c $ & (1, 2) & {(1, 3)} & {(1, 4)} 
			 & {{(2, 3)}} & {(2, 4)} 
			 & {{(3, 4)}} \\
			\hline
{1230}   
    &    0.6 & 0.000  & 4.147  & 1.645  & 5.005  & 1.294  & 11.080  \\ 
        
    \hline
{2460}    
    &    0.6 & 0.000  & 4.235  & 1.832  & 5.859  & 2.221  & 14.249  \\ 

 \hline
			\hline
		\end{tabular}
	}
	
	\label{tab:DGP1(1)}
\end{table}
\begin{table}[h!]
	
	\centering
	\caption{RMSEs for DGP (6)}
	\scalebox{0.9}{
		\begin{tabular}{  c  c  c  c  c  c  c  c  c  c  c  c  c  c }
			\hline
			\hline
$n$  &	$c $ & (1, 2) & {(1, 3)} & {(1, 4)} 
			 & {{(2, 3)}} & {(2, 4)} 
			 & {{(3, 4)}} \\
			\hline
{1230}  
    &    0.6 & 1.380  & 8.007  & 2.798  & 9.132  & 3.317  & 14.608  \\ 
        
    \hline
{2460}   
    &    0.6 & 4.074  & 12.944  & 4.123  & 14.471  & 4.807  & 18.312  \\ 

 \hline
			\hline
		\end{tabular}
	}
	
	\label{tab:DGP1(2)}
\end{table}

\subsection{Results from Test of IV Validity} \label{sec.validity set estimation using tests}

In Section \ref{sec.application}, we estimate the validity pair set using the proposed method in the paper. In this section, we apply the approaches of \citet{kitagawa2015test} and \citet{sun2021ivvalidity} to test the validity for each pair in $\mathscr{Z}_P$ in the empirical example of \citet{heckman2001four} and \citet{kedagni2020discordant}. We may follow \citet{kitagawa2015test} and \citet{sun2021ivvalidity} to obtain the $p$-values for each pair of the values of $Z$ based on the testable implications for every pair. 

Table \ref{tab:validity set estimation using test} shows the $p$-values from the test of \citet{sun2021ivvalidity} for each pair in $\mathscr{Z}_P$ in the empirical application with different values of trimming parameter $\xi$ ($\bar{v}_{\xi}$ is the probability
measure that assigns equal probabilities (weights) to the values of $\xi$). 
We only reject the validity of the first pair at the 10\% level. That is, the test for IV validity is less effective at removing invalid pairs than VSIV estimation in this application. One possible reason for this result is that we specifically choose $c$ based on simulation results so that VSIV estimation has a high power for excluding invalid pairs in small samples. 

\begin{table}[H]
	
	\centering
	\caption{$p$-values from Test of IV Validity}
	\scalebox{1}{
		\begin{tabular}{  c  c  c  c  c  c  c  c  c  c  c  c  c  c }
			\hline
			\hline
	$\xi $ & (1, 2) & {(1, 3)} & {(1, 4)} 
			 & {{(2, 3)}} & {(2, 4)} 
			 & {{(3, 4)}} \\
			\hline
0.07 & 0.070  & 0.730  & 1.000  & 0.996  & 1.000  & 0.998  \\ 
        0.1 & 0.081  & 0.743  & 1.000  & 0.996  & 1.000  & 0.998  \\ 
        0.13 & 0.081  & 0.743  & 1.000  & 0.996  & 1.000  & 0.998  \\ 
        0.16 & 0.081  & 0.743  & 1.000  & 0.996  & 1.000  & 0.998  \\ 
        0.19 & 0.081  & 0.743  & 1.000  & 0.996  & 1.000  & 0.998  \\ 
        0.22 & 0.081  & 0.743  & 1.000  & 0.996  & 1.000  & 0.998  \\ 
        0.25 & 0.081  & 0.743  & 1.000  & 0.996  & 1.000  & 0.998  \\ 
        0.28 & 0.081  & 0.743  & 1.000  & 0.996  & 1.000  & 0.998  \\ 
        0.3 & 0.081  & 0.743  & 1.000  & 0.996  & 1.000  & 0.998  \\ 
        1 & 0.081  & 0.743  & 1.000  & 0.996  & 1.000  & 0.998  \\ 
        $\bar{v}_{\xi}$ & 0.078  & 0.742  & 1.000  & 0.996  & 1.000  & 0.998  \\

 \hline
			\hline
		\end{tabular}
	}
	
	\label{tab:validity set estimation using test}
\end{table}

\putbib    
\end{bibunit}

\end{document}